\documentclass[12pt]{article}
\usepackage{graphicx}					
\usepackage{mathrsfs}
\usepackage{verbatimbox}
\usepackage{listofitems}
\usepackage{amsmath, amssymb}
\usepackage{wasysym} 
\usepackage{arydshln}
\usepackage{amsfonts, ifthen, amsthm, vmargin}
\usepackage[usenames]{color}
\usepackage{ulem}
\usepackage[colorlinks=true, linkcolor=blue3, 
urlcolor=gray2, citecolor=blue3, pagebackref=true]{hyperref} 
\newcommand\adj{{\displaystyle\ast}}
\newcommand\conj{{\scriptstyle\triangle}}

\newcommand{\bcdot}{\boldsymbol{\cdot}}
\newcommand{\sref}[2]{\hyperref[#2]{#1 \ref*{#2}}}
\usepackage[T1]{fontenc}
\usepackage{lmodern}
\newcommand{\id}[5]{{\frak{#1}}_{#2}^{{#3}{\tiny{\mbox{#4}}}\,{#5}}\!}
\newcommand{\idn}[5]{{\mathbf{#1}}_{#2}^{{#3}{\tiny{\mbox{#4}}}\,{#5}}\!}

\newcommand{\keywords}[1]{{\textbf{Key words and phrases.}} #1}
\newcommand{\msc}[1]{{\textbf{MSC2020 subject classifications.}} #1}
\newcommand{\dci}[1]{{\textbf{Declaration of competing interest.}} #1}
\newcommand{\aiuse}[1]{{\textbf{Declaration of AI use.}} #1}

\newcommand{\rbx}[2]{\raisebox{#1}{#2}}
\newcommand{\scbox}[2]{\scalebox{#1}{#2}}
\newcommand{\tup}{\textup}
\newcommand{\tp}[6]{{#1} \,{}_{{\rbx{#2}{#3}}}\underline{\otimes}{}_{{\rbx{#4}{#5}}}\, {#6}}

\renewcommand\labelenumi{\textup{(\roman{enumi})}}
\renewcommand\theenumi\labelenumi
\DeclareFontFamily{OT1}{rsfs}{} \DeclareFontShape{OT1}{rsfs}{m}{n}{
<-7> rsfs5 <7-10> rsfs7 <10-> rsfs10}{}
\DeclareMathAlphabet\mathcurl{OT1}{rsfs}{m}{n}

\newtheorem{theorem}{Theorem}[section]
\newtheorem{lemma}[theorem]{Lemma}
\newtheorem{corollary}[theorem]{Corollary}
\newtheorem{proposition}[theorem]{Proposition}

\theoremstyle{definition}
\newtheorem{example}[theorem]{Example}

\theoremstyle{definition}
\newtheorem{definition}[theorem]{Definition}
\theoremstyle{definition}
\newtheorem{remark}[theorem]{Remark}

\theoremstyle{remark}
\newtheorem*{cp*}{Commented Proof}

\theoremstyle{plain}
\newtheorem*{assump*}{Assumption}

\theoremstyle{plain}
\newtheorem*{lemma*}{Lemma}

\theoremstyle{plain}
\newtheorem*{cor*}{Corollary}

\theoremstyle{plain}
\newtheorem*{gt*}{The Grothendieck Inequality (Lindenstrauss-Pe{\l}czy\'{n}ski style)}

\theoremstyle{plain}
\newtheorem*{thmGI*}{The Grothendieck equality}

\theoremstyle{remark}
\newtheorem{problem}{\textup{\textbf{\normalsize P\scriptsize ROBLEM}}}
\newenvironment{problem*env}
  {\begin{problem}\label{op:test}}
  {\end{problem}}

\theoremstyle{remark}
\newtheorem*{noc*}{\textup{\textbf{\normalsize N\scriptsize O-CLONING THEOREM}}}

\theoremstyle{plain}
\newtheorem*{thmHI*}{The Haagerup equality}

\theoremstyle{plain}
\newtheorem*{obsv*}{Observation}

\theoremstyle{plain}
\newtheorem*{fact*}{Fact}

\theoremstyle{definition}
\newtheorem*{df*}{Definition}

\theoremstyle{definition}
\newtheorem*{ex*}{Example}

\theoremstyle{definition}
\newtheorem*{ack*}{Acknowledgments}

\theoremstyle{plain}
\newtheorem*{exkrv*}{Example (Krivine's constant)}

\theoremstyle{plain}
\newtheorem*{gi*}{Matrix Version of the Grothendieck equality}

\theoremstyle{plain}
\newtheorem*{guidex*}{Guiding Example}

\theoremstyle{plain}
\newtheorem*{prp*}{Proposition}

\theoremstyle{plain}
\newtheorem*{cprp*}{Proposition (correlation matrix version of GT and little GT)}

\theoremstyle{plain}
\newtheorem*{thm*}{Theorem}

\theoremstyle{remark}
\newtheorem*{thmTS*}{\textup{\textbf{\normalsize T\scriptsize HEOREM  \normalsize 
(T\scriptsize SIREL'SON, \normalsize 1980)}}}

\theoremstyle{plain}
\newtheorem*{thmgn*}{Theorem (Grothendieck, 1953; Niemi, 1983)}

\theoremstyle{plain}
\newtheorem*{c*}{Conjecture}

\theoremstyle{definition}
\newtheorem*{rem*}{Observation}

\theoremstyle{remark}
\newtheorem*{rem2*}{Remark}

\theoremstyle{plain}
\newtheorem*{arcsinex*}{Example (Stieltjes, 1889)}

\theoremstyle{plain}

\theoremstyle{plain}

\theoremstyle{plain}

\newcount\colveccount
\newcommand*\colvec[1]{
        \global\colveccount#1
        \begin{pmatrix}
        \colvecnext
}
\def\colvecnext#1{
        #1
        \global\advance\colveccount-1
        \ifnum\colveccount>0
                \\
                \expandafter\colvecnext
        \else
                \end{pmatrix}
        \fi
}

\definecolor{amaranth}{rgb}{0.9, 0.17, 0.31}
\definecolor{blue3}{rgb}{0.32, 0.09, 0.98}
\definecolor{blue4}{rgb}{0.13, 0.67, 0.8}
\definecolor{tangerine}{rgb}{1.0, 0.8, 0.0}
\definecolor{taupegray}{rgb}{0.55, 0.52, 0.54}
\definecolor{coolblack}{rgb}{0.0, 0.18, 0.39}
\definecolor{gray2}{rgb}{0.5, 0.5, 0.5}
\definecolor{cadetblue}{rgb}{0.37, 0.62, 0.63}
\definecolor{green2}{rgb}{0.0, 0.5, 0.0}
\definecolor{myred}{rgb}{0.8, 0.0, 0.0}
\definecolor{taupe}{rgb}{0.10, 0.50, 0.2} 

\definecolor{blue2}{rgb}{0.08, 0.38, 0.74}
\definecolor{mypink}{rgb}{0.858, 0.188, 0.478}
\definecolor{cite}{rgb}{1.0, 0.13, 0.32}

\newcommand\lb{\left(}
\newcommand\rb{\right)}

\newcommand\ignore[1]{}
\newcommand\A{\mathbb A}

\newcommand\C{\mathbb C}

\newcommand\R{\mathbb R}
\newcommand\T{\mathbb T}
\newcommand\M{\mathbb M}
\newcommand\N{\mathbb N}
\newcommand\Z{\mathbb Z}
\newcommand{\F}{\mathbb{F}}

\newcommand\E[2][\P]{\mathbb E_{#1}\left[ #2\right]}
\renewcommand\E{\mathbb E}
\renewcommand\P{\mathbb P}
\renewcommand\H{\mathbb H}

\renewcommand*{\Re}{\operatorname{Re}}
\renewcommand*{\Im}{\operatorname{Im}}

\newcommand\unit{1\hspace{-1.7mm}1}

\newcommand\emphas[1]{{\color{blue}#1}}

\makeatletter
\renewcommand{\@seccntformat}[1]{\csname the#1\endcsname.\hspace{1em}}%

\makeatother

\makeatletter
  \renewcommand*\env@matrix[1][*\c@MaxMatrixCols c]{%
    \hskip -\arraycolsep
    \let\@ifnextchar\new@ifnextchar
  \array{#1}}
\makeatother

\makeatletter
\def\widebreve{\mathpalette\wide@breve}
\def\wide@breve#1#2{\sbox\z@{$#1#2$}%
     \mathop{\vbox{\m@th\ialign{##\crcr
\kern0.08em\brevefill#1{0.8\wd\z@}\crcr\noalign{\nointerlineskip}%
                    $\hss#1#2\hss$\crcr}}}\limits}
\def\brevefill#1#2{$\m@th\sbox\tw@{$#1($}%
  \hss\resizebox{#2}{\wd\tw@}{\rotatebox[origin=c]{90}{\upshape(}}\hss$}
\makeatletter

\begin{document}



\title{{Beyond trace-class and Hilbert-Schmidt - Interaction between operator 
ideals and von Neumann algebras in quantum physics}}


\author{Frank Oertel  \\
	Philosophy, Logic \& Scientific Method\\
	Centre for Philosophy of Natural and Social Sciences (CPNSS)\\
	London School of Economics and Political Science\\
	Houghton Street, London WC2A 2AE, UK\\
	}

\date{} 

\maketitle

\hypersetup{linkcolor=coolblack} 
\tableofcontents
\cleardoublepage 
\pagenumbering{arabic} 
\begin{abstract}
\noindent 
Starting from a thorough analysis of the conjugate $\overline{H}$ of a complex 
Hilbert space $H$, including its significant importance regarding a 
representation of the tensor product of two complex Hilbert spaces and its 
impact to the theorem of Fr\'{e}chet-Riesz over to a revisit of applications of 
nuclear and absolutely $p$-summing operators in algebraic quantum field theory 
(AQFT) in the sense of Araki, Haag and Kastler ($p=2$) and more recently in the 
framework of general probabilistic spaces ($p=1$), we will outline that Banach 
operator ideals in the sense of Pietsch, or equivalently tensor products of 
Banach spaces in the sense of Grothendieck are even lurking in the foundations 
and philosophy of quantum physics and quantum information theory. In particular, 
we concentrate on their importance in AQFT 
(\sref{Theorem}{thm:split_property_and_2_summing_modular_op}). In doing so, we 
revisit the role of trace-class operators in quantum theory and construct 
the enveloping $\tup{C}^\adj$-algebra, corresponding to an arbitrarily given 
normed operator ideal 
(\sref{Proposition}{prop:Banach_op_ideal_components_on_H_are_Banach_ast_algebras} 
and \sref{Theorem}{thm:enveloping_C_star_alg_of_normed_op_ideal}).   
 
Applications are presented, including a purely linear algebraic description of 
the quantum teleportation process, thereby showing a link to quantum 
information theory, also due to the emergence of the Hadamard-Walsh transform 
and the controlled NOT gate (\sref{Example}{ex:quantum_teleportation_matrix}).

All Hilbert spaces discussed in this paper may be nonseparable (and hence 
infinite-dimensional). 
%
\end{abstract}
\keywords{Conjugate of a complex Hilbert space, {semilinear operator, 
antiunitary operator}, bra-ket formalism of Dirac, 
theorem of Fr\'{e}chet-Riesz}, tensor product of Hilbert spaces, Hilbert-Schmidt 
operator, {quasinormed operator ideal}, absolutely 2-summing operator, 
{$2$-nuclear operator}, no-cloning theorem, quantum teleportation, 
{algebraic quantum field theory, enveloping $\tup{C}^\adj$-algebra, von Neumann 
algebra, spatial tensor product, minimal tensor product, Tomita-Takesaki theory, 
general probabilistic theories}\\[0.20em]
\noindent
\msc{Primary 46C05, 47B10, {81T05}; 
secondary 46M05, {81P16}, 81P45.} 

\section{Introduction and brief summary}
The main theme of this work is to reveal quite surprising links between operator 
ideals in the sense of A. Pietsch, $\tup{C}^\adj$-algebras and subfields of 
quantum theory, primarily following from the very rich underlying structure of 
the tensor product of two complex Hilbert spaces, which originally emerges from 
its representation as a Hilbert-Schmidt operator between suitable Hilbert spaces. 
It is a well-known rule in quantum mechanics that the composition of two 
physical systems which are represented by Hilbert spaces $H$ and $K$, 
respectively, results in a physical system which is represented as the Hilbert 
space, modelled by the tensor product $H \otimes K$. An important example is 
given by the Fock space $\mathscr{F}(H_\pm)$ (symmetric and antisymmetric, 
respectively) that can be used to construct the Hilbert space of an unknown 
number of identical particles from a single particle Hilbert space $H_\pm$. By 
construction, it is given by the (Hilbert) direct sum of tensor products of 
copies of a complex single-particle Hilbert space $H_\pm$: $\mathscr{F}(H_\pm) 
: = \C \oplus \bigoplus\limits_{n=1}^\infty H_\pm^{\otimes n} \equiv \C \oplus 
H_\pm \oplus (H_\pm\otimes H_\pm) \oplus (H_\pm\otimes H_\pm \otimes H_\pm) 
\oplus\cdots$ (cf. \sref{Definition}{def:n_fold_tp}).
 
Possibly less well-known is the important meaning of the tensor product of 
Hilbert spaces over $\C$ as a building block of the spatial tensor product of 
von Neumann algebras and the minimal tensor product of $\tup{C}^\adj$-algebras, 
respectively, which play a significant role in algebraic quantum field theory 
in the sense of H. Araki, R. Haag and D. Kastler (cf. 
\sref{Remark}{rem:spatial_tp_and_minimal_tp}, \sref{Theorem}{thm:split_char} 
and \sref{Theorem}{thm:nuclearity_implies_split_property}). 

We reveal that the tensor product of Hilbert spaces $H \otimes K$ can 
actually be isometrically identified with the Hilbert space 
$(\id{P}{2}{}{}{}\,(\overline{H}, K), \idn{P}{2}{}{}{}\,(\cdot))$, 
where $(\id{P}{2}{}{}{}\,, \idn{P}{2}{}{}{}\,)$ denotes the 
injective, totally accessible and self-adjoint maximal Banach ideal of 
absolutely $2$-summing operators and $\overline{H}$ denotes the complex 
conjugate Hilbert space of $H$ (cf. \cite{DF1993, DJT1995, J1981, P1980}, 
\eqref{eq:structure_of_conj_HS}, \sref{Theorem}{thm:Riesz_linear_version} and 
\sref{Proposition}{prop:induced_tensor_norm_g2}). By implementing that isometric 
equality, canonical relations between separable and entangled composite quantum 
states and maps, which are known for quantum systems in finite-dimensional 
Hilbert spaces, can be generalised canonically to the more realistic situation 
of infinite-dimensional (and not necessarily separable) Hilbert spaces (cf. 
\cite[III.5.2.2]{B2006} and \cite{GKM2007}). 

From functional analysis and operator ideal theory in the sense of A. Pietsch 
it is known that absolutely $2$-summing operators - originally introduced by A. 
Grothendieck - play a fundamental role in the geometry of Banach spaces, 
implying that \sref{Proposition}{prop:induced_tensor_norm_g2} (and its natural 
extension to the modelling of a composition of multi-particle systems) 
illustrates a hidden, nontrivial geometric structure underlying the tensor 
product of Hilbert spaces and the spatial tensor product of von Neumann 
algebras in the complex case (cf. \sref{Theorem}{thm:N2_vs_P2_min}).

Firstly, we take a closer look at the conjugate Hilbert space 
$(\overline{H}, \langle\cdot, \cdot\rangle_{\overline{H}})$ of a given Hilbert 
space $(H, \langle\cdot, \cdot\rangle_H)$. By construction, the \textit{set} 
$\overline{H}$ coincides with the \textit{set} $H$, with the same definition of 
vector addition, yet with a modified scalar multiplication and a different 
inner product structure (cf. \eqref{eq:scalar_multipl_in_H_bar}, 
\eqref{eq:inner_product_on_the_conjugate_HS} and \eqref{eq:structure_of_conj_HS}). 

An implementation of the conjugate Hilbert space allows us to ``translate'' 
statements involving semilinear (or antilinear) operators, including 
antiunitary ones, into the much more familar language of \textit{linear} 
functional analysis and operator theory (including its entire structure, which 
results from the perspective of category theory). In particular, the famous 
duality of Fr\'{e}chet-Riesz, generated by a - semilinear - isometric 
isomorphism, can be \textit{canonically} transformed to a \text{linear} version, 
which therefore is independent of the choice of an underlying orthonormal basis 
(cf. \sref{Theorem}{thm:Riesz_linear_version}).

By making use of the linear version of the Fr\'{e}chet-Riesz duality 
$H^\prime \cong \overline{H}$,
we show next that $H \otimes K$ coincides \textit{linear-}isometrically 
with the Hilbert space $(\id{S}{2}{}{}{}\,(\overline{H}, K), \sigma_2)$, 
where $\id{S}{2}{}{}{}\,(\overline{H}, K)$ denotes the class of all Hilbert-Schmidt 
operators between the Hilbert spaces $\overline{H}$ and $K$ and $\sigma_2$ 
designates the Hilbert-Schmidt norm (cf. 
\sref{Theorem}{thm:Hilbert_Schmidt_and_tensor_product}, 
\sref{Corollary}{cor:char_of_tensor_product_of_HS_as_Hilbert_Schmidt} and 
\eqref{eq:HS_norm}).

With this crucial (linear) isometric identification as a starting point, an 
application of the isometric (linear) operator ideal identity 
\eqref{eq:HS_equals_P2} (cf. \cite[Proposition 11.6]{DJT1995}) instantly unfolds 
a fruitful application of the very powerful ``good old-school'' operator ideal 
machinery of Grothendieck and Pietsch\,! Equipped with a very few of their key 
results, enriched by somewhat hidden local structural accessibility properties 
of Banach operator ideals (cf. \cite{DF1993}), we recognise that the important 
role of trace-class operators (particularly that one in quantum physics - cf. 
\sref{Theorem}{thm:normal_states_on_vNAs} and \sref{Theorem}{thm:Gleason_Drisch}) 
can be linked with the much more general role of $p$-nuclear operators 
($1 \leq p < \infty)$, where the latter are defined between arbitrary Banach 
spaces- as opposed to the trace-class operators (cf. 
\sref{Theorem}{thm:a_corollary_of_Grothendieck}, 
\sref{Theorem}{thm:char_of_trace_class_ops}, 
\sref{Theorem}{thm:L2_circ_N_equals_P2_circ_P2} and
\sref{Corollary}{cor:L2_circ_N_equals_N2_circ_N2}). 

We will re-recognise that in particular $1$-nuclear operators and absolutely 
$2$-summing operators, mapping a von Neumann algebra into a Hilbert space over 
$\C$, play a significant role in algebraic quantum field theory (cf. 
\eqref{eq:nuclearity_condition_I}, \eqref{eq:nuclearity_condition_II},  
\sref{Theorem}{thm:nuclearity_implies_split_property} and 
\sref{Theorem}{thm:split_property_and_2_summing_modular_op}), and we 
sketch an implied open problem. 

In addition to an application of operator ideals in algebraic quantum field theory, 
we indicate how absolutely $1$-summing operators emerge in the investigation of 
compatibility of dichotomic measurements in general probabilistic theories 
(cf. \cite{BJN2022}). Moreover, we present a few applications of a concrete 
representation of the Hilbert space tensor product and list some of its algebraic 
properties, leading to a convenient, purely linear-algebraic description 
of quantum teleportation, which arises from a straightforward application of 
the Kronecker product of finite-dimensional Hilbert spaces and the associativity of the 
tensor product of Hilbert spaces (cf. 
\sref{Theorem}{thm:associativity_law_of_HS_tensorprod}, 
\eqref{eq:HS_tensor_product_vs_Kronecker_product} and 
\sref{Example}{ex:quantum_teleportation_matrix}).

Finally, while parts of the topics discussed in our paper are well-known to 
(some) researchers, we are not aware of any sources that thoroughly link them 
in this way, looking strongly through the window of operator ideal theory in 
the sense of Pietsch. Regarding the useful use of nonseparable Hilbert spaces 
in quantum physics (and its surprising implications), we highly recommend to 
study \cite{E2020}. Moreover, there are very recent studies regarding a 
development of a comprehensive framework for understanding the emergence of 
classical mechanics from quantum mechanics via so called 
\textit{coarse-grained measurements} (opening a new research field on its own). 
Within the scope of this research field, classicality has been shown to emerge 
in the mathematical limit of an infinite number of quantum systems, which can 
be described by a \textit{nonseparable} Hilbert space that is given by the 
tensor product of a Hilbert space family $\{H_n : n \in \N\}$; i.e., by the 
complex Hilbert space ${\bigotimes\limits_{n=1}^\infty}\,H_n$, developed in 
full detail in \cite[Chapter 2]{R2020}. A smart inductive limit construction 
of the infinite tensor product of complex Hilbert spaces is reflected in 
\cite[Exercise 11.5.29]{KR1986} (cf. also \cite{GV2023, E2020, G2025} and 
\eqref{eq:rep_of_the_n_fold_tens_prod}).

The remainder of our contribution is organised as follows. In 
\sref{Section}{sec:preliminaries_and_notation} we provide the necessary 
mathematical equipment, including a few basic facts from linear algebra and 
standard functional analysis. Already at this stage, we point to trace duality 
and the corresponding Frobenius matrix norm, where the latter coincides 
precisely with the Hilbert-Schmidt operator norm in the finite-dimensional 
case. In \sref{Section}{sec:overline_H_and_inear_version_of_Frechet_Riesz}, we 
revisit in greater depth the conjugate Hilbert space $\overline{H}$ (actually, 
a significant construction upon which all of our work is based), defined to be 
the same set as the given complex Hilbert space $H$, whereby the additive 
Abelian group structure of $H$ is maintained; in contrast to its scalar 
multiplication and its inner product. In analogy to the case of the Euclidean 
Hilbert space $\C^n$, it is possible to construct the inner product of 
$\overline{H}$ explicitly; even if the Hilbert space over $\C$ is 
infinite-dimensional and nonseparable. We will develop this construction in 
detail (cf. \eqref{eq:inner_prod_of_HS_conjugate_complex_elements}). Equipped 
with the conjugate Hilbert space, the composition of the corresponding 
antiunitary operator \eqref{eq:canonical_antiunitary} and the well-known 
semilinear Fr\'{e}chet-Riesz isomorphism $H \stackrel{\cong}{\longrightarrow} 
H^\prime, x \mapsto \langle \cdot, x \rangle_H$ between a complex Hilbert space 
$H$ and its (topological) dual $H^\prime$ naturally induces a - linear - 
canonical isometric isomorphism between the Hilbert space $H$ and the dual space 
of $\overline{H}$ (\sref{Theorem}{thm:Riesz_linear_version}). 
\sref{Section}{sec:HS_ops_and_tp_of_Hilbert_spaces} serves to provide a short 
and strongly simplified (yet still rigorous) construction of the $n$-fold 
tensor product $\bigotimes\limits_{i=1}^n H_i$ of Hilbert spaces, originating 
from an implementation of the structure of the conjugate Hilbert space. The 
inclusion of the latter namely implies that we may characterise the tensor 
product of two complex Hilbert spaces $H$ and $K$ precisely as the complex 
Hilbert space of all Hilbert-Schmidt operators between $\overline{H}$ and $K$ 
(\sref{Theorem}{thm:Hilbert_Schmidt_and_tensor_product} and 
\sref{Corollary}{cor:char_of_tensor_product_of_HS_as_Hilbert_Schmidt}). Already 
at this stage, an important link between operator ideals in the sense of 
Pietsch and the tensor product of Hilbert spaces emerges, where the former are 
encoded in the class of all Hilbert-Schmidt operators (examined in detail 
in \sref{Section}{sec:op_ideals_and_Grothendieck}). Implications of this 
crucial representation are given, including an infinite dimensional version of 
the Schmidt decomposition (\sref{Corollary}{cor:Schmidt_decomposition_revisited}) 
and an explicit description of elements $x \otimes y \in \C_2^n \otimes \C_2^m$ 
as the noncommutative Kronecker product of $x \in \C^n \cong \M_{n,1}(\C)$ and 
$y \in \C^m \cong \M_{m,1}(\C)$ (\sref{Example}{ex:Kronecker_product}). 
A straightforward application of the Kronecker product results in a purely 
linear algebraic description of quantum teleportation. Here, we characterise 
the corresponding unitary matrix in $\M_8(\C)$ and reveal its surprising 
composition structure, built by well-known factors from quantum information 
theory (\sref{Example}{ex:quantum_teleportation_matrix}). 
\sref{Section}{sec:op_ideals_and_Grothendieck} - the heart of our contribution - 
concentrates on creating new connections between operator ideal theory in the 
sense of Pietsch, the corresponding trace duality and the structure of 
$\tup{C}^\adj$-algebras (especially of that one of von Neumann algebras), with 
an application in algebraic quantum field theory (AQFT) in the sense of Araki, 
Haag and Kastler in mind 
(\sref{Theorem}{thm:enveloping_C_star_alg_of_normed_op_ideal}, 
\sref{Proposition}{prop:weak_star_functionals_on_max_BOI_components} and 
\sref{Theorem}{thm:split_property_and_2_summing_modular_op}). Actually, the 
latter theorem was coined by H. J. Borchers and R. Schumann, resulting in 
Schumann's possibly underrated doctoral thesis (cf. \cite{BS1991, S1994, S1996}). 
To improve the understanding of this result, the necessary tools are described, 
including the corresponding examples of operator ideals, generalising the 
class of Hilbert-Schmidt operators (cf. e.g., 
\sref{Proposition}{prop:Banach_op_ideal_components_on_H_are_Banach_ast_algebras}, 
\sref{Theorem}{thm:char_of_trace_class_ops}, 
\sref{Corollary}{cor:vector_state_version_of_Gleason} and 
\sref{Proposition}{prop:induced_tensor_norm_g2}), the spatial tensor product of 
two von Neumann algebras (whose construction is built on the tensor product of 
Hilbert spaces - \sref{Remark}{rem:spatial_tp_and_minimal_tp}), the split 
property of two commuting von Neumann algebras \eqref{eq:split_property_of_vNAs} 
and required key facts from Tomita-Takesaki modular theory 
(\sref{Theorem}{thm:Tomita_Takesaki_1967}). Finally, in 
\sref{Section}{sec:GPTs_and_P1}, we roughly sketch how the famous operator 
ideal of absolutely $1$-summing operators also emerge in the investigation of 
compatibility of dichotomic measurements in general probabilistic theories 
(cf. \cite{BJN2022}).             
\section{Preliminaries and notation}
\label{sec:preliminaries_and_notation}
As is usual, we denote the set of complex numbers by $\C$ and the set of real 
numbers by $\R$. $\Z$ represents the set of all integers and $\N$ stands for the 
subset of positive integers. $\N_0 : = \{0\} \cup \N$ denotes the set of all 
nonnegative integers (often also somewhat unhappily denoted as $\Z_+$). If 
$m \in \N$, we put $[m] : = \N \cap [1, m] = \{1, 2, \ldots, m\}$. We will use 
the symbol $\F$ to denote either the real field $\R$ or the complex field $\C$. 
If we wish to state a definition or a result that is satisfied for either real 
or complex numbers (i.e., if $\F \in \{\R, \C\}$), we simply will make use of 
the letter symbol $\F$.

Fix $m, n \in \N$. The set of all $m \times n$-matrices with entries in a given 
nonempty subset $S \subseteq \F$ is denoted by $\M_{m,n}(S)$. The matrix ring 
$\M_{n,n}(\F)$ is abbreviated as $\M_n(\F)$. As usual, $e_i \in \F^n$ denotes 
the column vector having a $1$ in the $i$th place and zeros elsewhere. If we 
wish to emphasize the dependence on the dimension $n$ of the vector space 
$\F^n$, then we speak of the set $\{e_1^{(n)}, e_2^{(n)},\dots, e_n^{(n)}\} 
\subseteq \F^n$. $I_n : = 
(e_1 \,\brokenvert\, e_2 \,\brokenvert\, \cdots \,\brokenvert\, e_n) \in 
\M_n(\F)$ describes the identity matrix. Initially, if not indicated 
otherwise, any vector $x \in \F^n$ in this manuscript is set as column vector, 
so that the allocated row vector is decribed by transposition ($x \mapsto 
x^\top$). Translated into Dirac's bra-ket language, it holds that $e_i = 
\vert i-1 \rangle$ and $e_i^\top = \langle i-1 \vert$ ($i \in [n]$). In 
particular, $\vert 0 \rangle = e_1$, $\vert 1 \rangle = e_2$ and $\vert n-1\rangle
\langle 1 \vert = e_n e_2^\top \in \M_{n,2}(\F)$. 

If $A \in \M_{m,n}(S)$ is given, it is sometimes very fruitful to represent the 
entries of $A$ as $A_{ij} : = e_i^\top A e_j = e_j^\top A^\top e_i = 
(A^\top)_{ji}$, so that $A = (a_{ij})$, where $a_{ij} : = A_{ij}$. 
$\overline{A} \in \M_{m,n}(\F)$ is defined as $\overline{A}_{ij} : = 
\overline{A_{ij}}$, implying that $A^\adj : = \overline{A}^\top = 
\overline{A^\top}$ and $x^\adj : = \overline{x}^\top = \overline{x^\top}$. 
Recall that the Euclidean norm is given by $\Vert x \Vert_2 : = 
\sqrt{x^\adj\,x} = \sqrt{\sum\limits_{i=1}^n \vert x_i \vert^2}$ for any $x = 
(x_1, \ldots, x_n)^\top \in \F^n$. If we equip the $n$-dimensional vector space 
$\F^n$ with the Euclidean inner product, we obtain the $n$-dimensional Hilbert 
space  $\F_2^n : = (\F^n, \langle \cdot, \cdot\rangle_2)$, where the inner 
product on $\F^n$ is given by $\langle x, y\rangle_2 : = \langle x, y\rangle_
{\F_2^n} : = y^\adj x = \sum\limits_{i=1}^n x_i \overline{y_i}$. If there is no 
risk of confusion regarding $\F$, we simply speak of the space $l_2^n$ (as 
usual). Throughout the manuscript, we also identify any linear operator 
$T : \F^n  \longrightarrow \F^m$ with its representing matrix with respect to 
the respective standard bases: $T \equiv (T_{ij})_{(i,j) \in [m] \times [n]}$. 
In particular, we have
\[
T_{ij} \equiv e_i^\top Te_j = \langle Te_j, e_i\rangle_{\F_2^m} = 
\langle e_j, T^\adj e_i\rangle_{\F_2^n} \equiv (T^\adj)_{ji} \text{ for all } 
i,j \in [m] \times [n]\,.
\]
As usual, $O(n)$ denotes the orthogonal group, consisting of all invertible 
matrices $A \in \M_n(\R)$ such that $A^{-1} = A^\top$. $U(n)$ describes the 
unitary group, consisting of all invertible matrices $A \in \M_n(\C)$ such 
that $A^{-1} = A^\adj$. $SO(n) : = \{A \in O(n) : \text{det}(A) = 1\}$ is the 
special orthogonal group, and $SU(n) : = \{A \in U(n) : \text{det}(A) = 1\}$ 
describes the special unitary group.

Throughout the manuscript all Hilbert spaces, unless otherwise specified, are 
Hilbert spaces over the field $\F$ (short: $\F$-Hilbert spaces). They may be 
nonseparable (and hence infinite-dimensional). To avoid a risk of confusion, 
we adopt the convention that is used in most mathematical contexts; namely that 
\textit{any} (well-defined) inner product $\langle \cdot, \cdot\rangle_V : V 
\times V \longrightarrow \C$ on a complex vector space $V$ is 
\textit{linear in the first argument} and 
\textit{conjugate linear in the second one}\,:
\[
\langle \lambda x, \mu y \rangle_V = \lambda \overline{\mu}\,
\langle x, y\rangle_V\,\text{ for all } \lambda, \mu \in \C \text{ and } 
x, y \in V.
\]
If $E$ is an arbitrary normed space, then $S_E : = \{x \in E : \Vert x \Vert_E = 1\}$ 
denotes its unit sphere and $B_E : = \{x \in E : \Vert x \Vert_E \leq 1\}$ its 
closed unit ball.

An important inner product on the $\F$-vector space $\M_{m,n}(\F)$, which turns 
$\M_{m,n}(\F)$ into an $mn$-dimensional Hilbert space, is the Frobenius inner 
product, which is defined as follows: if $A = (a_{ij}) \in \M_{m,n}(\F)$ and 
$B = (b_{ij}) \in \M_{m,n}(\F)$, then
\[
\langle A, B\rangle_F : = \text{tr}(A B^\adj) = \text{tr}(B^\adj A) = 
\sum\limits_{i=1}^m \sum\limits_{j=1}^n a_{ij}\,\overline{b_{ij}} = 
\overline{\langle B, A\rangle_F}\,,
\]
where 
\[
\text{tr}(C) : = \sum\limits_{i=1}^n \langle C e_i, e_i\rangle_2 = 
\sum\limits_{i=1}^n c_{ii} = \text{tr}(C^\top) = \overline{\text{tr}(C^\adj)} = 
\overline{\text{tr}(\overline{C})}
\] 
denotes the (nonnormalised) trace of a given (quadratic) matrix $C = (c_{ij}) 
\in \M_{n,n}(\F)$. Note that $\text{tr}(AD) = \sum\limits_{i=1}^m \sum\limits_
{j=1}^n a_{ij} d_{ji}$ for all $(A, D) \in \M_{m,n}(\F) \times \M_{n,m}(\F)$. 
 
We adopt the symbolic notation (commonly used by researchers in operator ideal 
theory and geometry of Banach spaces) 
to represent the set of all bounded linear operators between two normed spaces 
$(E, \Vert\cdot\Vert_E)$ and $(F, \Vert\cdot\Vert_F)$ by $\mathfrak{L}(E,F)$, 
and for the identity operator on $E$, we write $Id_E$. The collection of all 
finite rank (resp. approximable) operators from $E$ to $F$ is denoted by 
$\id{F}{}{}{}{}(E,F)$ (resp. $\overline{\id{F}{}{}{}{}}{}{}{}{}(E,F) : = 
\overline{\id{F}{}{}{}{}(E, F)}^{\Vert\cdot\Vert}$). 
The (topological) dual of a Banach space $E$ is denoted by $E^{\prime }$, and 
$E^{\prime \prime }$ denotes its bidual $(E^{\prime })^{\prime }$. If $J \in 
\mathfrak{L}(E,F)$ is an operator, we indicate that it is an isometry (or 
metric injection) by writing $J:E\stackrel{1}{\hookrightarrow }F$. If $Q \in 
\mathfrak{L}(E,F)$ is a quotient map (or metric surjection), we write 
$Q:E \stackrel{1}{\twoheadrightarrow} F$. Recall that any quotient map $Q\in 
\mathfrak{L}(E,F)$ is onto and maps the open unit ball of $E$ onto the open unit 
ball of $F$  and that $J \in \mathfrak{L}(E,F)$ is an isometry if and only if 
$J^\prime \in \mathfrak{L}(F^\prime,E^\prime)$ is a quotient map (cf., e.g., 
\cite[Chapter B.3.]{P1980}). 
Throughout the manuscript, we apply the common and very useful dual 
pairing notation 
\[
\langle x, a\rangle \equiv a(x)\,\hspace{1em}\,((x,a) \in E \times E^\prime)
\]
with angled brackets. (A confusion with inner products 
$\langle  \cdot, \cdot\rangle_H$ on a Hilbert space $H$ shouldn't arise.)
 
If $E$ is a Banach space, $M$ a finite dimensional subspace of $E$ and $K$ a 
finite codimensional subspace of $E$, then $J_M^E : M 
\stackrel{1}{\hookrightarrow }E$ denotes the canonical metric injection, 
$Q_K^E : E\stackrel{1}{\twoheadrightarrow } E\diagup K$ the canonical metric 
surjection and $j_E : E \stackrel{1}{\hookrightarrow } E^{\prime\prime}$ the 
canonical, yet very important isometric embedding from $E$ into its bidual 
$E^{\prime\prime}$, whose canonical mapping rule is given by 
\[
\langle a, j_E x\rangle : = \langle x, a\rangle
\] 
for all $(x, a) \in E \times E^\prime$. 
Finally, $T^{\prime } \in \mathfrak{L}(F^{\prime }, E^{\prime})$ denotes the 
dual operator of $T\in \mathfrak{L}(E,F)$. For the sake of completeness, we 
provide a quick and simple proof of the following fact.
\begin{proposition}\label{prop:isom_isomorphisms_between_Banach_spaces} 
Let $E, F$ be Banach spaces and $T \in \mathfrak{L}(E,F)$. Then the 
following statements are equivalent:
\begin{enumerate}
\item $T \in \mathfrak{L}(E,F)$ is an isometric isomorphism.
\item $T^\prime \in \mathfrak{L}(F^\prime,E^\prime)$ is an isometric isomorphism.
\end{enumerate}
\end{proposition}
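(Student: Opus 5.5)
The plan is to use the fact recalled just before the statement: $J \in \mathfrak{L}(E,F)$ is an isometry if and only if $J^\prime$ is a quotient map. We also use the elementary observation that a bijective quotient map is an isometric isomorphism. Indeed, if $Q$ is a bijective quotient map, it maps the open unit ball onto the open unit ball, and injectivity forces $\Vert Qx\Vert = \Vert x \Vert$. So the two directions reduce to checking bijectivity of the relevant operator.

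For (i)$\Rightarrow$(ii), let $T$ be an isometric isomorphism, so $T^{-1} \in \mathfrak{L}(F,E)$ exists. Functoriality of duality, $(ST)^\prime = T^\prime S^\prime$ and $Id_E^\prime = Id_{E^\prime}$, gives that $T^\prime$ is invertible with inverse $(T^{-1})^\prime$. Since $T$ is an isometry, $T^\prime$ is a quotient map. Being bijective, it is therefore an isometric isomorphism. Alternatively, we can estimate directly: $\Vert T^\prime b\Vert = \sup_{x \in B_E} \vert\langle Tx, b\rangle\vert = \sup_{y\in B_F}\vert \langle y,b\rangle\vert = \Vert b \Vert$, because $T(B_E) = B_F$.

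For (ii)$\Rightarrow$(i), we first show that $T$ is an isometry, using only Hahn--Banach and the fact that $T^\prime$ is onto and isometric. Fix $x \in E$ and pick $a \in S_{E^\prime}$ with $\langle x, a\rangle = \Vert x\Vert$. Write $a = T^\prime b$ with $\Vert b \Vert = \Vert a \Vert = 1$. Then
\[
\Vert x\Vert = \langle x, T^\prime b\rangle = \langle Tx, b\rangle \leq \Vert Tx\Vert \leq \Vert T\Vert\,\Vert x\Vert = \Vert T^\prime\Vert\,\Vert x \Vert = \Vert x\Vert .
\]
It follows that $T$ is an isometry, and in particular injective with closed range.

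The main obstacle is surjectivity of $T$, since this is where the injectivity of $T^\prime$ must enter. The range $T(E)$ is closed because $T$ is an isometry and $E$ is complete. If $T(E) \neq F$, Hahn--Banach yields $0 \neq b \in F^\prime$ vanishing on $T(E)$. Then $T^\prime b = 0$, contradicting the injectivity of $T^\prime$. Hence $T$ is bijective and isometric, which is (i).
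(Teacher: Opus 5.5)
Your proof is correct. In (i)$\Rightarrow$(ii) you follow the paper: you get the inverse $(T^{-1})^\prime$ and then use the same norm computation based on $T(B_E)=B_F$.

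In (ii)$\Rightarrow$(i) your route is genuinely different. The paper applies the already proven implication to $T^\prime$ and concludes that $T^{\prime\prime}$ is an isometry. It transfers this to $T$ via $\Vert Tx\Vert = \Vert j_F Tx\Vert = \Vert T^{\prime\prime} j_E x\Vert = \Vert j_E x\Vert = \Vert x\Vert$, and then gets surjectivity from the closed range theorem, using that $T^\prime$ has closed range and zero kernel. You instead lift a norming functional for $x$ through the surjective isometry $T^\prime$. For surjectivity you note that $T(E)$ is already closed, because $T$ is an isometry on a complete space. So only the elementary annihilator part of the closed range theorem is needed: a Hahn--Banach separation showing that $\ker T^\prime = \{0\}$ forces dense range.

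The paper's argument is shorter and more structural, since it recycles the first implication through biduals. Yours is self-contained and avoids $E^{\prime\prime}$ and the nontrivial half of the closed range theorem, which makes clear that this half plays no real role here. Both arguments ultimately rest on Hahn--Banach; in the paper it enters through the isometry of $j_E$.

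Incidentally, the fact you quoted at the start would also have given the isometry of $T$ in one line: a bijective isometry $T^\prime$ is in particular a quotient map, hence $T$ is an isometry.
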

\begin{proof}
(i) $\Rightarrow$ (ii): A straightforward calculation shows that $(T^\prime)^
{-1} = (T^{-1})^\prime$ is the well-defined inverse of $T^\prime$. In 
particular, $T^\prime$ is onto. 
Let $b \in S_{F^\prime}$. Since $T$ is an onto isometry, it follows that 
\[
\Vert b \Vert = \sup\limits_{y \in S_F}\vert\langle y, b\rangle\vert = 
\sup\limits_{x \in S_F}\vert\langle Tx, b\rangle\vert = \Vert T^\prime b\Vert\,.
\]
Thus, $T^\prime$ is an isometry.
\\[0.2em]
(ii) $\Rightarrow$ (i): According to what has just been proven, also 
$T^{\prime\prime}$ is an isometry. Consequently,
\[
\Vert Tx\Vert = \Vert j_FTx\Vert = \Vert T^{\prime\prime}(j_E x)\Vert =
\Vert j_E x\Vert = \Vert x\Vert \,\text{ for all }\, x \in E\,,
\]
implying that $T$ is an isometry. Since $T^\prime(F^\prime) = E^\prime$ is 
closed and $T^\prime$ has a zero kernel, a direct application of the closed 
range theorem, implies that $T$ in fact is onto (cf., e.g., 
\cite[Theorem IV.5.1]{W2011}).
\end{proof}
The ``$\mathfrak{B}$-community'', often encountered among researchers in the 
field of operator algebras, uses $\mathfrak{B}(E,F)$ instead, so that for 
example, $\mathfrak{B}(H) \equiv \mathfrak{L}(H) : = \mathfrak{L}(H,H)$, where 
$H$ is a given Hilbert space. Remember that every linear operator $T : E_0 
\longrightarrow F$ from a finite-dimensional normed space $E_0$ to an arbitrary 
normed space $F$ already is bounded. Recall that an operator $T \in 
\mathfrak{L}(H)$ is by definition \textit{positive}, if $\langle Tx, x\rangle 
\geq 0$ for all $x \in H$. 

Let us also recall that a map $T : V \longrightarrow W$ between two vector 
spaces $V$ and $W$ over $\C$ is \textit{semilinear} (also known as 
\textit{antilinear} or \textit{conjugate linear}) if $T(v+w) = T(v) + 
T(w)$ and $T(\lambda v) = \overline{\lambda}\,T(v)$ for all $(v,w) \in 
V \times W$ and $\lambda \in \C$. If $H$ and $K$ are Hilbert spaces over $\C$, 
then a semilinear operator $W : H \longrightarrow K$ is \textit{antiunitary} if 
the following conditions are satisfied:
\begin{enumerate}
\item $\langle W x, W y \rangle_K = \overline{\langle x, y \rangle_H} = 
\langle y, x\rangle_H$ for all $x, y \in H$.
\item $W$ is onto.
\end{enumerate}
In particular, $\Vert W x\Vert_K = \Vert x\Vert_H$ for 
all $x \in H$, implying that $W^{-1} : K \longrightarrow H$ is a well-defined 
antiunitary operator as well. Thus, if $L$ is a third Hilbert space over $\C$, 
and if $X : K \longrightarrow L$ is a further antiunitary operator, it is 
trivial that the (linear) composed operator $X W: H 
\stackrel{\cong}{\longrightarrow} L$ is a unitary operator (i.e., an isometric 
isomorphism). In fact, we will recognise that an application of the complex 
conjugate Hilbert space $\overline{L}$ (cf. 
\eqref{eq:structure_of_conj_HS}), together with the corresponding canonical 
antiunitary operator \eqref{eq:canonical_antiunitary} reveals that conversely 
every unitary operator $U : H \stackrel{\cong}{\longrightarrow} L$ can be 
written as a composition of two antiunitary operators, independently of the 
choice of the orthonormal basis of $H$ (namely as $U = J_{\overline{L}}
(J_L U)$). If in addition, $H = K$, and if $W$ is an \textit{involution} 
(i.e., if $W^2 = Id_H$), then the antiunitary operator $W : H \longrightarrow 
H$ is called a \textit{conjugation} (cf. \cite[Definition 2.1]{GPP2014}). It is 
well-known that the class of conjugations plays a fundamental role in the study 
of symmetries in quantum mechanics, such as time reversal (including Wigner's 
Theorem and Uhlhorn's Theorem - \cite{P2020, R2022}).
    
Throughout the manuscript, we apply the following characterisation of positive 
semidefinite (respectively positive definite) matrices, which does not depend 
on the choice of the field $\F \in \{\R, \C\}$ (cf. \cite[Corollary 1.1]{Oe2024}): 
\begin{lemma}\label{lem:uniform_psd_char}
Let $\F \in \{\R, \C\}, n \in \N$ and $A \in \M_n(\F)$. $A$ is 
positive semidefinite $($resp. positive definite$)$ in $\M_n(\F)$ if the 
following two conditions are satisfied:
\begin{enumerate}
\item $A = A^\adj$.
\item $z^\adj A z \geq 0$ $($resp. $z^\adj A z > 0$$)$ for all 
$z \in {\F^n}\setminus\{0\}$.
\end{enumerate}
In particular, $A \in \M_n(\R)$ is positive semidefinite in $\M_n(\R)$, if and 
only if $A \in \M_n(\R) \subseteq \M_n(\C)$ is positive semidefinite in $M_n(\C)$.
\end{lemma}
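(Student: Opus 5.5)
The lemma is in essence a definition: positive semidefiniteness in $\M_n(\F)$ is given by (i) and (ii). Its real content is the final claim, that a real matrix is psd in $\M_n(\R)$ if and only if it is psd in $\M_n(\C)$. I therefore take the defining conditions for $\F$ to be (i) $A = A^\adj$ and (ii) $z^\adj A z \geq 0$ (resp. $>0$) for all $z \in \F^n\setminus\{0\}$. For $\F=\R$ this reads $A = A^\top$ and $x^\top A x \geq 0$ for all $x \in \R^n\setminus\{0\}$. Note that condition (i) cannot be dropped for $\F=\R$: a nonzero antisymmetric matrix satisfies $x^\top A x = 0$ for every $x$.

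\emph{Direction ``$\Leftarrow$''.} Suppose $A \in \M_n(\R)$ is psd in $\M_n(\C)$. Then $A = A^\adj = \overline{A}^\top = A^\top$, since $A$ is real. Restricting (ii) to real vectors $z = x \in \R^n \subseteq \C^n$, where $z^\adj = x^\top$, gives $x^\top A x \geq 0$ (resp. $>0$). So $A$ is psd (resp. pd) in $\M_n(\R)$.

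\emph{Direction ``$\Rightarrow$''.} Suppose $A \in \M_n(\R)$ with $A = A^\top$ and $x^\top A x \geq 0$ for all real $x$. Since $A$ is real, $A^\adj = A^\top = A$, so (i) holds in $\M_n(\C)$. For (ii), write $z = x + \mathrm{i}y$ with $x,y \in \R^n$. Then
\[
z^\adj A z = (x^\top - \mathrm{i}y^\top)A(x+\mathrm{i}y) = x^\top A x + y^\top A y + \mathrm{i}\,(x^\top A y - y^\top A x).
\]
By symmetry of $A$, $y^\top A x = (y^\top A x)^\top = x^\top A^\top y = x^\top A y$, so the imaginary part vanishes. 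Hence $z^\adj A z = x^\top A x + y^\top A y \geq 0$.

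For the definite case, $z \neq 0$ forces $x \neq 0$ or $y \neq 0$. Each summand is nonnegative and the nonzero one is strictly positive, so $z^\adj A z > 0$.

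The main obstacle is the cancellation of the cross terms, which is exactly where the symmetry hypothesis (i) is essential. Everything else is a direct unpacking of the definitions.
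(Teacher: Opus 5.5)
Your proof is correct. Reading (i)--(ii) as the defining conditions is consistent with how the paper uses the lemma: it immediately writes the cones $\M_n(\C)^+$ and $\M_n(\R)^+$ in exactly this form.

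Your two directions establish the real/complex equivalence, including the definite case. The key step is that symmetry gives $y^\top A x = x^\top A y$, which kills the imaginary cross term in $z^\adj A z$ for $z = x + \mathrm{i}y$.

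The paper gives no argument of its own here; it imports the statement from \cite[Corollary 1.1]{Oe2024}. Its only accompanying reasoning is the subsequent remark with the antisymmetric matrix $B$, which is exactly your observation that (i) cannot be dropped over $\R$.

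A natural alternative route would factor a real symmetric positive semidefinite $A$ as $A = B^\top B$ with $B$ real, via the spectral theorem, so that $z^\adj A z = \Vert Bz\Vert_2^2$ for complex $z$. Your computation is more elementary, since it uses nothing beyond the symmetry of $A$.
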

\begin{remark}
Based on our identification of (bounded) linear operators $A \in {\mathfrak{L}}
(\F^n, \F^n)$ and matrices $A \in M_n(\F)$, it follows that $A \in {\mathfrak{L}}
(\F^n, \F^n)$ is positive semidefinite if and only if $A$ is a positive 
self-adjoint operator. Since any positive operator $A \in {\mathfrak{L}}
(\C^n, \C^n)$ already is self-adjoint, positivity coincides with positive 
semidefiniteness on ${\mathfrak{L}}(\C^n, \C^n) \equiv \M_n(\C)$, in contrast 
to positivity on ${\mathfrak{L}}(\R^n, \R^n) \equiv \M_n(\R)$; i.e., there are 
positive nonsymmetric operators $B \in {\mathfrak{L}}(\R^n, \R^n)$, such as 
$B : = \begin{pmatrix}
0 & 1\\
-1 & 0
\end{pmatrix}$, implying that these operators cannot be positive semidefinite 
in $M_n(\C)$.  
\end{remark}
In general, if $M$ is a subset of an ordered vector space $(N, \geq)$, then 
$M^+ : = M \cap N^+$, where $N^+ : = \{x \in N : x \geq 0\}$ denotes the 
positive cone of $N$, such as 
\[
M_n(\C)^+ = \{A \in M_n(\C) : z^\ast A z \geq 0 \text{ for all } z \in \C^n\}
\]
(the PSD cone of $M_n(\C)$) and
\[
\M_n(\R)^+ = \{A \in M_n(\R) : A = A^\top \text{ and } x^\top A x \geq 0 
\text{ for all } x \in \R^n\}
\]
(the PSD cone of $M_n(\R)$). 

At this point, it is worth noting that actually $\langle A, B\rangle_F = 
\langle A, B\rangle_{\id{S}{2}{}{}{}\,}$ coincides with the Hilbert-Schmidt 
inner product, defined on the Hilbert space $\mathfrak{L}(\F_2^n, \F_2^m)$ (cf. 
\eqref{eq:HS_inner_product}). One can easily verify the well-known fact that the 
set of all elementary matrices $\{e_i\,e_j^\top : (i,j) \in [m] \times [n]\}$ is 
an orthonormal basis in the $mn$-dimensional $\F$-Hilbert space 
$(\M_{m,n}(\F), \Vert \cdot \Vert_F)$ (since $(e_i\,e_j^\top)_{\alpha\beta} = 
\delta_{i\alpha}\,\delta_{j\beta}$ for all $(\alpha,\beta) \in [m] \times [n]$ 
and $\text{tr}(xy^\top) = y^\top x$ for all $x, y \in \F^n$). Let us also note 
the easy-to-prove fact that
\[
\Pi_{m,n}: (\M_{m,n}(\F), \Vert \cdot \Vert_F)  \stackrel{\cong}{\longrightarrow} 
(\M_{m,n}(\F), \Vert \cdot \Vert_F)^\prime, A \mapsto (B \mapsto \text{tr}(B A^\top))
\]
is an isometric isomorphism, whose inverse is given by
\[
\M_{m,n}(\F) \ni \Pi_{m,n}^{-1}(t) = (t(e_i e_j^\top))_{i,j} \text{ for all } t \in 
(\M_{m,n}(\F), \Vert \cdot \Vert_F)^\prime\,.
\]
Also the dual operator of $(\M_{n,m}(\F), \Vert \cdot \Vert_F) 
\stackrel{\cong}{\longrightarrow} (\M_{m,n}(\F), \Vert \cdot \Vert_F), 
M \mapsto M^\top$
\[
\Theta_{m,n} : (\M_{m,n}(\F), \Vert \cdot \Vert_F)^\prime \stackrel{\cong}{\longrightarrow}
(\M_{n,m}(\F), \Vert \cdot \Vert_F)^\prime, t \mapsto (M \mapsto \langle M^\top, 
t \rangle) 
\]
is an isometric isomorphism (cf., e.g., 
\sref{Proposition}{prop:isom_isomorphisms_between_Banach_spaces}), and the 
classical Fr\'{e}chet-Riesz theorem, characterising the dual of a Hilbert space 
(cf., e.g., \sref{Theorem}{thm:Riesz_linear_version}) implies 
that the composition $\Theta_{m,n}\Pi_{m,n}$ actually reflects the 
finite-dimensional version of \textit{trace duality}\,:
\[
\Theta_{m,n}\Pi_{m,n} : (\M_{m,n}(\F), \Vert \cdot \Vert_F)  
\stackrel{\cong}{\longrightarrow} (\M_{n,m}(\F), \Vert \cdot \Vert_F)^\prime, 
A \mapsto (B \mapsto \tup{tr}(A B))
\]
is an isometric isomorphism, whence 
\begin{align*}
\idn{P}{2}{}{}{}\,(A) &= \idn{S}{2}{}{}{}\,(A) = \Vert A \Vert_F = 
\Vert\Theta_{m,n}\Pi_{m,n}(A)\Vert\\ 
&= \sup\big\{\vert\tup{tr}(A B)\vert : B \in \M_{n,m}(\F)\,\text{ and }\,
\Vert B\Vert_F \leq 1\big\}\\
&= \max\big\{\vert\tup{tr}(A B)\vert : B \in \M_{n,m}(\F)\,\text{ and }\,
\Vert B\Vert_F \leq 1\big\}
\end{align*}
for all $A \in \M_{m,n}(\F)$ (cf. \cite[Theorem 5.30 and Theorem 6.4]{DJT1995}, 
\sref{Theorem}{thm:trace_duality_in_the_Hilbert_space_case} and 
\eqref{eq:HS_equals_P2}). 
\section{The conjugate of a complex Hilbert space and the theorem of 
Fr\'{e}chet-Riesz}
\label{sec:overline_H_and_inear_version_of_Frechet_Riesz}
Fix an arbitrary Hilbert space $(H, +, \cdot)$ over $\C$, with inner product 
$\langle \cdot, \cdot\rangle_H$. Recall from \cite[Chapter 2.6]{KR1983} 
(or \cite[Chapter 0.2]{AS2017}, resp. \cite{K1995}) the - all-important - 
construction of the conjugate Hilbert space $\overline{H}$. 
$\overline{H}$ is defined to be the \textit{same set} $H$, whereby the additive 
Abelian group structure of $H$ is maintained; in contrast to its scalar 
multiplication and its inner product. Instead, $H$ is equipped with the 
alternative scalar multiplication 
\begin{align}\label{eq:scalar_multipl_in_H_bar}
\C \times H \ni (\lambda, x) \mapsto \overline{\lambda}\,x = : \lambda \ast x\,.
\end{align}
Consequently, also
\begin{align}\label{eq:inner_product_on_the_conjugate_HS}
H \times H \ni (x,y) \mapsto 
\langle x, y\rangle_{\overline{H}} : = 
\overline{\langle x, y\rangle_H} = \langle y, x\rangle_H
\end{align}
is a well-defined inner product on $H$ (with respect to the 
scalar multiplication $\ast$, of course), which clearly satisfies 
$\Vert x \Vert_H = \Vert x \Vert_{\overline{H}}$ for all $x \in H$.
In summary, the inner product space $\overline{H}$ is the 
\textit{same set} $H$, equipped with the algebraic structure and the inner 
product, defined by the following mappings:
\begin{align}\label{eq:structure_of_conj_HS}
\begin{split}
H \times H & \longrightarrow H, (x,y) \mapsto x+y\,,\\ 
\C \times H & \longrightarrow H, (\lambda, x) \mapsto \lambda \ast x : = 
\overline{\lambda}\,x\,,\\
H \times H & \longrightarrow \C, (x,y) \mapsto \langle x, y \rangle_{\overline{H}} : = 
\langle y, x \rangle_H\,.
\end{split}
\end{align} 
$\overline{H}$ is a complex Hilbert space, too, of course. Obviously, 
$\{e_i : i \in I\}$ is an orthonormal basis of $H$ if and only if 
$\{e_i : i \in I\}$ is an orthonormal basis of $\overline{H}$. Observe also 
the simple, yet important fact that the ``original'' multiplication 
$\C \times H \ni (\alpha, x) \mapsto \alpha x$ satisfies
\[
\langle \lambda x, \mu y\rangle_{\overline{H}} = 
\langle \overline{\lambda}\ast x, \overline{\mu}\ast y\rangle_{\overline{H}} =
\overline{\lambda}\mu \langle x, y\rangle_{\overline{H}}\,\text{ for all } 
\lambda, \mu \in \C \text{ and } x, y \in \overline{H}.
\]
Note also that by construction the dual space of the Hilbert space $\overline{H}$ 
coincides with the so called ``antidual (Hilbert) space'', introduced in 
(cf. \cite[p. 116]{T1967}); an important fact which should be taken into 
account, when considering \sref{Theorem}{thm:Riesz_linear_version}.    

$\langle \cdot, \cdot\rangle_{\overline{H}}$ actually could be viewed as 
an inner product on the complex Hilbert space $H$ itself, which is linear 
in the second argument and conjugate linear in the first one. That observation 
leads directly to a further representation of inner products $\langle \cdot 
\mid \cdot\rangle_H \not= \langle \cdot, \cdot \rangle_H$, commonly used in 
physics. 

To put it another way: in general, if $(H, \langle \cdot, \cdot\rangle{_H})$ is 
a given complex Hilbert space, many physicists are actually working with the 
inner product on $\overline{H}$, where the structure of the latter is the one 
which is commonly used in mathematics:
\begin{align}\label{eq:bra_ket_vs_H_bar}
H \times H \ni (x,y) \mapsto 
\langle x \!\mid\! y \rangle_H : = \langle x, y\rangle_{\overline{H}} = 
\overline{\langle x, y\rangle_H} = \langle y , x \rangle_H\,.
\end{align}
The construction also implies that the complex conjugate of the Hilbert 
space $(\overline{H}, \langle \cdot, \cdot \rangle_{\overline{H}})$ precisely 
coincides with the Hilbert space $H$ itself. {Namely, since 
$\overline{\overline{H}} = \overline{L}$, where $L : =\overline{H}$, it follows 
that
\begin{align}\label{eq:outer_multiplication_in_H_bar_bar}
\lambda \ast_{\overline{L}} x = \overline{\lambda} \ast_L x = \lambda x
\end{align}
for all $(\lambda, x) \in \C \times H$. Moreover, $\langle \cdot, 
\cdot \rangle_{\overline{L}} = \langle \cdot, \cdot \rangle_H$.} It is also 
obvious that in the case of real Hilbert spaces $H^\R$, the whole structure of 
the conjugate Hilbert space $\overline{H^\R}$ is well-defined as well. However, 
in the real case $\overline{H^\R}$ can be completely ignored (since 
$\lambda \ast x = \lambda x$ for all $(\lambda, x) \in \R \times H^\R$ and 
$\langle \cdot, \cdot \rangle_{\overline{H^\R}} = \langle \cdot, \cdot \rangle_
{H^\R}$).
The definition of $\overline{H}$ clearly implies the trivial, yet crucial fact, 
that
\begin{align}\label{eq:canonical_antiunitary}
Id_H \not= J_H : H \longrightarrow \overline{H}, x \mapsto x
\end{align}
defines a canonical antiunitary operator, whose antiunitary inverse is given by 
$J_H^{-1} = J_{\overline{H}}$. Despite the trivially appearing mapping rule of 
$J_H$, the structure of this antiunitary operator is very important, since the  
utilisation of $J_H$ serves to clarify, whether we have to apply the inner 
product structure of $H$ or the different one of the conjugate Hilbert space 
$\overline{H}$ (cf., e.g, \eqref{eq:conjugation_operator})\,!    

In analogy to the case of the Hilbert space $\C^n$, it is possible to construct 
the inner product of $\overline{H}$ explicitly; even if the Hilbert space over 
$\C$ is infinite-dimensional and nonseparable. Roughly formulated, the inner 
product of elements $x \in \overline{H}$ and $y \in 
\overline{H}$ with respect to the Hilbert space $\overline{H}$ coincides with 
the inner product of the conjugate elements $x^\adj \in H$ and $y^\adj \in H$ 
of the original Hilbert space $H$. So, we have to construct these conjugate 
elements properly.      

To begin with, let $\mathscr{B} \equiv \{e_i : i \in I\}$ and $\mathscr{C} 
\equiv \{f_j : j \in J\}$ be two orthonormal bases of $H$. Fix $x, y \in H$. 
Then $x = \sum\limits_{i\in I}\langle x, e_i \rangle_H\,e_i = 
\sum\limits_{j\in J}\langle x, f_j \rangle_H\,f_j$. A description of the 
mathematics and the technical subtleties underlying such ``summable families'' 
in general Banach spaces (which won't be considered in this article) can be 
found e.g. in \cite{G2022, H2011, Oe2006}. To put it very briefly (and 
nonrigorously): working with these summable families in the Hilbert space case 
is - in general - comparable to working with finite sums. 
This leads us to the following important constructions:
\begin{align}\label{eq:Re_part_and_Im_part_in_H}
\Re_{\mathscr{B}}(x) : = \sum\limits_{i\in I}\Re(\langle x, e_i \rangle_H)\,e_i
\in H\,,\, \Im_{\mathscr{B}}(x): = -\Re_{\mathscr{B}}(i x) = \sum\limits_{i\in I}
\Im(\langle x, e_i \rangle_H)\,e_i \in H\,,
\end{align}
and 
\begin{align}\label{eq:conjugate_complex_element_in_H}
x^\adj_{\mathscr{B}} : = \Re_{\mathscr{B}}(x) - i \Im_{\mathscr{B}}(x) = 
\sum\limits_{i\in I}\overline{\langle x, e_i \rangle_H}\,e_i \in H\,.
\end{align}
Parseval's identity implies that $\Re_{\mathscr{B}}(x)$, $\Im_{\mathscr{B}}(x)$ 
and $x^\adj_{\mathscr{B}}$ are well-defined. However, it should be emphasized 
that these elements all depend on the choice of the orthonormal basis of $H$. 
It is trivial (yet important) that $\Re_{\mathscr{B}}(e_i) = e_i$ and 
$\Im_{\mathscr{B}}(e_i) = 0$ for all $i \in I$. Clearly, the construction also 
implies that $\Re_{\mathscr{B}}(x) + i \Im_{\mathscr{B}}(x) = x = 
\Re_{\mathscr{C}}(x) + i \Im_{\mathscr{C}}(x)$, $\Re_{\mathscr{B}}(x) = 
(x + x^\adj_{\mathscr{B}})/2$ and $\Im_{\mathscr{B}}(x) = (x - 
x^\adj_{\mathscr{B}})/{2i}$. It is also obvious that $(\alpha x)^\adj_
{\mathscr{B}} = \overline{\alpha}\,x^\adj_{\mathscr{B}}$ for all $(\alpha, x) 
\in \C \times H$ and that $(e_i)^\adj_{\mathscr{B}} = e_i$ for all $i \in I$. 
A further trivial, yet important implication of the construction is given by 
\[
\langle \Re_{\mathscr{B}}(x), e_i\rangle_H = \Re(\langle x, e_i\rangle_H)\,
\text{ and }\,\langle \Im_{\mathscr{B}}(x), e_i\rangle_H = 
\Im(\langle x, e_i\rangle_H)\,,
\]
implying that $\Re_{\mathscr{B}}(\Re_{\mathscr{B}}(x)) = \Re_{\mathscr{B}}(x)$, 
$\Re_{\mathscr{B}}(\Im_{\mathscr{B}}(x)) = \Im_{\mathscr{B}}(x)$ and 
$\Im_{\mathscr{B}}(\Im_{\mathscr{B}}(x)) = \Im_{\mathscr{B}}
(\Re_{\mathscr{B}}(x)) = 0$ for all $x \in H$. If $v \in 
\{\Re_{\mathscr{B}}(x), \Im_{\mathscr{B}}(x)\}$ and $w \in 
\{\Re_{\mathscr{B}}(y), \Im_{\mathscr{B}}(y)\}$, where $x, y \in H$, then 
\begin{align}\label{eq:inner_product_of_Re_and_Im_elements}
\langle w, v \rangle_H = \langle v,w \rangle_H \in \R\,. 
\end{align}
Hence, $x^{\adj\adj} \equiv (x^\adj_{\mathscr{B}})^\adj_{\mathscr{B}} = x$ for 
all $x \in H$, and
\begin{align}\label{eq:inner_prod_of_HS_conjugate_complex_elements}
\langle x^\adj_{\mathscr{B}}, y^\adj_{\mathscr{B}} \rangle_H 
\stackrel{\eqref{eq:inner_product_of_Re_and_Im_elements}}{=} 
\langle y, x \rangle_H = \overline{\langle x, y\rangle_H} = 
\langle x, y\rangle_{\overline{H}} \,.
\end{align}
It follows that $\langle x^\adj_{\mathscr{B}}, y^\adj_{\mathscr{B}} \rangle_H$ 
does not depend on the choice of the orthonormal basis $\mathscr{B}$ of $H$ (as 
opposed to $\Re_{\mathscr{B}}(x) \in H$ and $\Im_{\mathscr{B}}(x) \in H$), and
\begin{align}\label{eq:norm_inequalities}
\max\{\Vert\Re_{\mathscr{B}}(x)\Vert_H^2, \Vert\Im_{\mathscr{B}}(x)\Vert_H^2\} 
\leq \Vert x^\adj_{\mathscr{B}}\Vert_H^2 = \Vert x\Vert_H^2 
\stackrel{\eqref{eq:inner_product_of_Re_and_Im_elements}}{=} 
\Vert\Re_{\mathscr{B}}(x)\Vert_H^2 + \Vert\Im_{\mathscr{B}}(x)\Vert_H^2
\end{align}     
for all $i \in I$ and $x \in H$. Consequently, if we put
\[
H_{\mathscr{B}}^\R : = \{x \in H : \Im_{\mathscr{B}}(x) = 0\} 
\stackrel{\eqref{eq:Re_part_and_Im_part_in_H}}{=} 
\{\Re_{\mathscr{B}}(x) : x \in H\} \stackrel{\eqref{eq:conjugate_complex_element_in_H}}{=} 
\{x \in H : x^\adj_{\mathscr{B}} = x\}\,,
\]
it clearly follows that $H_{\mathscr{B}}^\R$ is a vector space over $\R$. Since 
$\Re_{\mathscr{B}}(x) = \Im_{\mathscr{B}}(ix)$ for all $x \in H$, it also 
follows that $H_{\mathscr{B}}^\R = \{\Im_{\mathscr{B}}(y) : y \in H\}$. It is 
obvious that 
\[
H_{\mathscr{B}}^\R \times H_{\mathscr{B}}^\R \ni (v, w) \mapsto 
\langle v, w \rangle_{H_\R} : = \langle v, w \rangle_H  
\]
is a well-defined inner product on $H_{\mathscr{B}}^\R$. Consequently, if we 
take \eqref{eq:norm_inequalities} into account, it follows that 
$(H_{\mathscr{B}}^\R, \langle \cdot, \cdot \rangle_{H_{\mathscr{B}}^\R})$ is a 
Hilbert space over $\R$. \eqref{eq:norm_inequalities} also implies that the 
complex Hilbert space $H$ coincides isometrically with the complexification of 
the real Hilbert space $H_{\mathscr{B}}^\R$\,:
\[
H_{\mathscr{B}}^\R + i H_{\mathscr{B}}^\R = H \cong  (H_{\mathscr{B}}^\R)_\C
\,,\, x \mapsto (\Re_{\mathscr{B}}(x), \Im_{\mathscr{B}}(x))\,.
\]
Finally, it should be noted the quite remarkable fact that the conjugate Hilbert 
space $\overline{H}$ emerges naturally, when we consider the complexification 
of the real Hilbert space direct sum $H_{\mathscr{B}}^{\R, 2} : = 
H_{\mathscr{B}}^\R\,\oplus\,H_{\mathscr{B}}^\R \cong 
H_{\mathscr{B}}^\R \otimes \R^2$ (cf. 
\sref{Proposition}{prop:tensor_product_as_direct_sum}). A straightforward 
calculation, including the construction of the complexification of a real 
Hilbert space namely reveals that for any $a, b, c, d \in H_{\mathscr{B}}^\R$\,,
\[
U : (H_{\mathscr{B}}^{\R, 2})_\C  
\stackrel{\cong}{\longrightarrow} H \oplus \overline{H}, 
((a,b), (c,d)) \mapsto (a+ic, \overline{b-id}) 
\]
is a well-defined (linear!) isometric isomorphism; i.e., a unitary operator. 
Equivalently, the mapping rule of $U$ can be defined as
\[
U\big((\Re_{\mathscr{B}}(x), \Re_{\mathscr{B}}(y)), (\Im_{\mathscr{B}}(x), 
\Im_{\mathscr{B}}(y))\big) : = (x, y^\adj_{\mathscr{B}}) 
\hspace{2em} (x, y \in H)\,.
\]    
\begin{example}\label{eq:conjugate_HS_in_the_separable_case}
Let $H$ be a separable complex Hilbert space. It is well-known that either 
$H \cong l_2(\C)$ (if $H$ is infinite-dimensional), or $H \cong \C_2^n$ (if 
$\textup{dim}(H) = n < \infty)$ (cf., e.g., 
\cite[Korollar V.4.10 and Korollar V.4.13]{W2011}). Then
\[
\overline{\C_2^n} \stackrel{\cong}{\longrightarrow} \C_2^n, 
(z_1, \ldots, z_n)^\top \mapsto 
(\overline{z_1}, \ldots, \overline{z_n})^\top
\]
as well as   
\[
\overline{l_2} \stackrel{\cong}{\longrightarrow} l_2, x = 
(x_i)_{i \in \N} \mapsto (\overline{x_i})_{i \in \N}
\]
are linear(!) isometric isomorphisms that, however depend on the choice of the 
orthonormal basis in $H$. Again, we recognise that in the separable Hilbert 
space case, there is an isometric isomorphism from $H$ onto its dual $H^\prime$, 
that is not canonical, though. We do not know, whether a - linear - isometric 
isomorphism from $H$ onto $H^\prime$ (or equivalently from $H$ onto 
$\overline{H}$) exists, which does \textit{not} depend on the choice of the 
orthonormal basis of $H$.   
\end{example}
If the choice of the Hilbert space $H$ is understood, we put 
$\langle x \!\mid\! y \rangle \equiv \langle x \!\mid\! y \rangle_H$. Let $H$ 
and $K$ be two Hilbert spaces. Let $\mathscr{B} \equiv \{e_i : i \in I\}$ be an 
orthonormal basis of $H$. Consider the linear(!) operator 
$C_H^{\mathscr{B}} \in \mathfrak{L}(H, \overline H)$, defined as (cf. 
\eqref{eq:conjugate_complex_element_in_H}):
\begin{align}\label{eq:conjugation_operator}
C_H^\mathscr{B} : H \longrightarrow \overline{H}, x \mapsto 
J_H x^\adj_{\mathscr{B}} = 
J_H\lb\sum_{i \in I} \overline{\langle x, e_i\rangle_H}\,e_i\rb =
\sum_{i \in I} \langle x, e_i\rangle_H\ast e_i\,.
\end{align}
In particular, $C_H^{\mathscr{B}}(\lambda x) = 
\lambda \ast C_H^{\mathscr{B}} x$ for all $(\lambda, x) \in \C \times H$. It is 
readily verifiable that $C_H^{\mathscr{B}}$ is a (linear) isometric isomorphism 
which satisfies 
$(C_H^{\mathscr{B}})^\adj = C_{\overline{H}}^{\mathscr{B}} \,
{= (C_H^{\mathscr{B}})^{-1}}$. Observe that $C_H^{\mathscr{B}}$ is not a 
canonical mapping. It namely depends on the choice of the orthonormal basis 
$\mathscr{B}$ of $H$. It should be also noted that N. J. Kalton gave a 
concrete example of a Banach space which is not a Hilbert space and which is 
not isomorphic to its complex conjugate (cf. \cite{K1995}). If the choice of 
the orthonormal basis of $H$ is understood, we denote the conjugation mapping 
on $H$ simply as $C_H$. 

Let $\mathscr{B}$ be a given orthonormal basis of $H$. The composition of the 
antiunitary operator $J_{\overline{H}} : \overline{H} 
\stackrel{\cong}{\longrightarrow} H$ and the unitary operator 
$C_H^{\mathscr{B}} : H \stackrel{\cong}{\longrightarrow} \overline{H}$ 
(cf. \eqref{eq:conjugation_operator}) instantly leads to an explicit 
construction of a conjugation which transfers the semilinear complex conjugation 
$\C_2^n \ni z \equiv (z_1, \ldots, z_n)^\top \mapsto 
(\overline{z_1}, \ldots, \overline{z_n})^\top \equiv \overline{z} \in \C_2^n$ 
to the infinite-dimensional complex Hilbert space case:
\begin{proposition}\label{prop:construction_of_a_semilin_conjugation}
Let $H$ be a complex Hilbert space and $\mathscr{B}$ be an orthonormal basis of 
$H$. Consider the semilinear invertible operator
\[
J_H^{\mathscr{B}} : = J_{\overline{H}}\,C_H^{\mathscr{B}} : H \longrightarrow H\,.  
\]
Then $J_H^{\mathscr{B}}$ satisfies the following properties:
\begin{enumerate}
\item $J_H^{\mathscr{B}}$ is an involution; i.e., 
\[
\big(J_H^{\mathscr{B}}\big)^2 = Id_H\,.
\]
\item 
\[
\langle J_H^{\mathscr{B}} x, J_H^{\mathscr{B}} y \rangle_H = 
\overline{\langle x, y \rangle_H} = \langle y, x \rangle_H\,\text{ for all } 
x, y \in H\,.
\]
\end{enumerate}
In particular, if $\mathscr{B}_n$ denotes the standard orthonormal basis in 
$\C_2^n$, then $J_{\C_2^n}^{\mathscr{B}_n} z = \overline{z}$ for all 
$z \in H_n$.
\end{proposition}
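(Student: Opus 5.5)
The plan is to first compute $J_H^{\mathscr{B}}$ explicitly and then read off (i) and (ii) from that formula together with the identities for $x^\adj_{\mathscr{B}}$ established above. By \eqref{eq:conjugation_operator}, $C_H^{\mathscr{B}} x = J_H x^\adj_{\mathscr{B}}$. Since $J_{\overline{H}} = J_H^{-1}$ is the identity on the underlying set, we get
\[
J_H^{\mathscr{B}} x = J_{\overline{H}} J_H x^\adj_{\mathscr{B}} = x^\adj_{\mathscr{B}} = \sum_{i\in I}\overline{\langle x, e_i\rangle_H}\,e_i \qquad (x \in H).
\]
So $J_H^{\mathscr{B}}$ is just the basis-dependent conjugation $x \mapsto x^\adj_{\mathscr{B}}$ of \eqref{eq:conjugate_complex_element_in_H}.

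Semilinearity follows at once from $(\alpha x)^\adj_{\mathscr{B}} = \overline{\alpha}\,x^\adj_{\mathscr{B}}$. Equivalently, it follows because $C_H^{\mathscr{B}}$ is linear and $J_{\overline{H}}$ is antiunitary: $J_{\overline{H}}(\lambda \ast v) = \overline{\lambda}\,v$.

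For (i), apply the formula twice:
\[
\big(J_H^{\mathscr{B}}\big)^2 x = (x^\adj_{\mathscr{B}})^\adj_{\mathscr{B}} .
\]
The coefficients of $x^\adj_{\mathscr{B}}$ are $\langle x^\adj_{\mathscr{B}}, e_j\rangle_H = \overline{\langle x, e_j\rangle_H}$; this uses continuity of the inner product on the (unconditionally) convergent sum together with orthonormality. Conjugating them again returns $\langle x, e_j\rangle_H$, so the result is $x$ by the Fourier expansion. This recovers the already noted identity $x^{\adj\adj} = x$. Invertibility follows as well, since $J_H^{\mathscr{B}}$ is its own inverse.

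For (ii), I would use \eqref{eq:inner_prod_of_HS_conjugate_complex_elements} directly:
\[
\langle J_H^{\mathscr{B}}x, J_H^{\mathscr{B}}y\rangle_H = \langle x^\adj_{\mathscr{B}}, y^\adj_{\mathscr{B}}\rangle_H = \langle y, x\rangle_H = \overline{\langle x,y\rangle_H}.
\]
Alternatively, and self-containedly, Parseval's identity gives
\[
\langle x^\adj_{\mathscr{B}}, y^\adj_{\mathscr{B}}\rangle_H = \sum_i \overline{\langle x,e_i\rangle_H}\,\langle y,e_i\rangle_H = \langle y,x\rangle_H .
\]
A third route is structural: $C_H^{\mathscr{B}}$ is unitary onto $\overline{H}$ and $J_{\overline{H}}$ is antiunitary, so the composition preserves inner products up to conjugation. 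Combined with (i), which gives surjectivity, this also shows that $J_H^{\mathscr{B}}$ is a conjugation.

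Finally, for $H = \C_2^n$ with the standard basis, $\langle z, e_i\rangle_2 = z_i$, so $J_{\C_2^n}^{\mathscr{B}_n} z = \sum_i \overline{z_i}\,e_i = \overline{z}$. (The ``$z \in H_n$'' in the statement should read $z \in \C_2^n$.)

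The only genuinely delicate point is the first identification $J_{\overline{H}} C_H^{\mathscr{B}} = (\cdot)^\adj_{\mathscr{B}}$. It requires careful bookkeeping of which scalar multiplication is in force: the expression $\sum \langle x,e_i\rangle_H \ast e_i$ lives in $\overline{H}$, and the sum must be read as convergent in $\overline{H}$, which has the same norm and hence the same convergence as in $H$. Once that is settled, the rest is routine.
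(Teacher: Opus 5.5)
Your proof is correct and follows the paper's route. The paper also first identifies $J_H^{\mathscr{B}}x = J_{\overline{H}}(J_H x^\adj_{\mathscr{B}}) = x^\adj_{\mathscr{B}}$, obtains (i) from $(x^\adj_{\mathscr{B}})^\adj_{\mathscr{B}} = x$, and gets (ii) from the properties of the factors (your third, structural route). Your coefficient check of $x^{\adj\adj}=x$, the Parseval computation, and the explicit $\C_2^n$ verification fill in details the paper takes for granted; the paper leaves the $\C_2^n$ case implicit, and $z \in H_n$ is indeed a typo for $z \in \C_2^n$.
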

\begin{proof}
(i) Let $x \in H$. Recall that $J_{\overline{H}} = J_H^{-1}$. It follows that
\[
J_H^{\mathscr{B}} x = J_{\overline{H}}\,C_H^{\mathscr{B}}x 
\stackrel{\eqref{eq:conjugation_operator}}{=} 
J_{\overline{H}}\lb J_H\,x_{\mathscr{B}}^\adj\rb = x_{\mathscr{B}}^\adj\,,
\]
whence
\[
\lb J_H^{\mathscr{B}}\rb^2 x = J_H^{\mathscr{B}}\,x_{\mathscr{B}}^\adj = 
J_{\overline{H}}\lb C_H^{\mathscr{B}}\,x_{\mathscr{B}}^\adj\rb 
\stackrel{\eqref{eq:conjugation_operator}}{=}
J_{\overline{H}}\lb J_H (x^\adj_{\mathscr{B}})^\adj_{\mathscr{B}}\rb = x\,,
\]
and (i) follows.\\[0.2em] 
(ii) follows instantly from the definition of the operator $J_H^{\mathscr{B}}$ 
and the corresponding properties of its factors. 
\end{proof}  
Equipped with the Hilbert space $\overline{H}$ (being the same set $H$ with the 
same definition of vector addition), the proof of \cite[Theorem V.3.6]{W2011} 
reveals that the classical duality theorem of Fr\'{e}chet-Riesz can be 
instantly ``linearised'' as follows (cf. also \cite[Chapter 2.4, p. 54]{M1990} 
and \cite[Theorem 12.2]{T1967}). Here, we strongly agree with Tr\`{e}ves, 
``\textit{that it is quite incorrect by saying that a Hilbert space is its own 
dual}'' (cf. \cite[Remark prior to Theorem 12.1 on p. 117]{T1967}). Let $H$ be 
a complex Hilbert space, and let
\[
\Delta_H : H \stackrel{\cong}{\longrightarrow} H^\prime, 
x \mapsto \langle \cdot, x \rangle_H
\]
denote the canonical isometric Fr\'{e}chet-Riesz isomorphism, \textit{which is 
semilinear}. If we also take \eqref{eq:canonical_antiunitary} into account, we 
immediately obtain 
\begin{theorem}[\textbf{Linear version of Fr\'{e}chet-Riesz}]
\label{thm:Riesz_linear_version}
Let $H$ be a Hilbert space. Then the mapping
\begin{align}\label{eq:linear_Riesz}
\Phi_H : = \Delta_{\overline{H}}\,J_H : H \stackrel{\cong}{\longrightarrow} 
\overline{H}^{\,\prime}, x \mapsto \langle \cdot, J_H x \rangle_{\overline{H}}  
\end{align}
is a - linear - isometric isomorphism, which does not depend on the choice of 
the orthonormal basis of $H$. In particular,
\[
\Phi_{\overline{H}} = \Delta_H\,\Phi_H^{-1}\,\Delta_{\overline{H}}\,.
\] 
If $(x, y, a) \in H \times H \times H^\prime$, 
then the following conditions are satisfied:
\begin{align}\label{eq:char_prop_of_Phi_overline_H}
\langle J_H y, \Phi_H x\rangle = \langle J_H y, J_H x \rangle_{\overline{H}} = 
\langle x, y\rangle_H = \langle x, \Phi_{\overline{H}}\,y\rangle
\end{align}
and 
\begin{align}\label{eq:char_prop_of_Phi_H}
\langle x, a \rangle = 
\big\langle J_H x, \Phi_{\overline{H}}^{-1}\,a\big\rangle_{\overline{H}}\,. 
\end{align}
The inner product on the Hilbert space $\overline{H}^{\,\prime}$ is given as
\begin{align}\label{eq:inner_product_on_the_dual_of_overline_H}
\langle a, b \rangle_{\overline{H}^{\,\prime}} = 
\langle \Phi_H^{-1}a , \Phi_H^{-1}b \rangle_H \,\text{ for all }\, a,b \in 
\overline{H}^{\,\prime}\,.
\end{align}
Moreover, $\Phi_{\overline{H}}\,C_H^{\mathscr{B}} : H 
\stackrel{\cong}{\longrightarrow} H^\prime, x \mapsto 
\langle \cdot, x^\adj_{\mathscr{B}}\rangle_H$ is a - linear - isometric 
isomorphism, which however depends on the choice of the orthonormal basis 
$\mathscr{B}$ of $H$.   
\end{theorem}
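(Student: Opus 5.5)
The plan is to verify each claim by unwinding the definitions of $J_H$, $\Delta_{\overline{H}}$ and the structure \eqref{eq:structure_of_conj_HS}; no deep input beyond the classical (semilinear) Fr\'{e}chet-Riesz theorem is needed.

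\textbf{Step 1: $\Phi_H$ is a linear isometric isomorphism.} By the classical theorem applied to the complex Hilbert space $\overline{H}$, the map $\Delta_{\overline{H}} : \overline{H} \to \overline{H}^{\,\prime}$, $u \mapsto \langle \cdot, u\rangle_{\overline{H}}$, is a semilinear isometric bijection with respect to the scalar multiplication $\ast$ of $\overline{H}$. The map $J_H$ is an antiunitary bijection $H \to \overline{H}$, since $J_H(\lambda x) = \lambda x = \overline{\lambda}\ast x$. A composition of two semilinear bijective isometries is linear, bijective and isometric. Concretely, one checks directly that
\[
\Phi_H(\lambda x) = \langle\cdot, \overline{\lambda}\ast J_Hx\rangle_{\overline{H}} = \lambda\,\Phi_H x .
\]
No basis enters the construction, which gives basis independence.

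\textbf{Step 2: the identities \eqref{eq:char_prop_of_Phi_overline_H} and \eqref{eq:char_prop_of_Phi_H}.} The first equality in \eqref{eq:char_prop_of_Phi_overline_H} is the definition of $\Phi_H$. The second is $\langle y, x\rangle_{\overline{H}} = \langle x, y\rangle_H$, which is \eqref{eq:structure_of_conj_HS}. For the last equality, apply Step 1 to $L := \overline{H}$. Using \eqref{eq:outer_multiplication_in_H_bar_bar} and $\langle\cdot,\cdot\rangle_{\overline{L}} = \langle\cdot,\cdot\rangle_H$, we get $\Phi_{\overline{H}}\,y = \langle \cdot, y\rangle_H = \Delta_H y$ as a map on $H$ (viewing $J_{\overline{H}}$ as the identity on sets). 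Hence $\langle x, \Phi_{\overline{H}}y\rangle = \langle x,y\rangle_H$.

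For \eqref{eq:char_prop_of_Phi_H}, set $y := \Phi_{\overline{H}}^{-1}a$, which is possible because $\Phi_{\overline{H}}$ is onto $H^\prime$. Then
\[
\langle x, a\rangle = \langle x, y\rangle_H = \langle J_Hx, J_Hy\rangle_{\overline{H}},
\]
and this is the asserted expression once we read $\Phi_{\overline{H}}^{-1}a$ as an element of $\overline{H}$, i.e. as $J_H y$.

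The same observation gives the formula relating $\Phi_{\overline{H}}$ to $\Phi_H$. Since $\Phi_{\overline{H}} = \Delta_H$ up to the set identification $J_{\overline{H}}$, the claimed formula reduces to checking that $\Phi_H^{-1}\Delta_{\overline{H}}$ acts as the set-theoretic identity $J_{\overline{H}}$. This holds because $\Phi_H = \Delta_{\overline{H}}J_H$ and $J_H^{-1} = J_{\overline{H}}$.

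\textbf{Step 3: inner product and the basis-dependent map.} Formula \eqref{eq:inner_product_on_the_dual_of_overline_H} is the standard fact that the dual of a Hilbert space is a Hilbert space whose inner product is transported by any linear isometric isomorphism, here $\Phi_H$. Linearity of $\Phi_H$ ensures the transported form is sesquilinear in the correct order, and the polarization identity shows it induces the dual norm. Finally, $\Phi_{\overline{H}}C_H^{\mathscr{B}}$ is a composition of two linear isometric isomorphisms. By \eqref{eq:conjugation_operator} and Step 2, it maps $x$ to $\Phi_{\overline{H}}(J_H x^\adj_{\mathscr{B}}) = \langle\cdot, x^\adj_{\mathscr{B}}\rangle_H$. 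Its dependence on $\mathscr{B}$ is inherited from $C_H^{\mathscr{B}}$; for instance, replacing $\mathscr{B}$ by $\{ie_i\}$ changes $x^\adj_{\mathscr{B}}$ by the factor $-1$, so the map genuinely varies with the basis.

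\textbf{Main obstacle.} The only delicate point is the bookkeeping of which scalar multiplication and which inner product act on a given element. This matters most in the identity $\Phi_{\overline{H}} = \Delta_H\Phi_H^{-1}\Delta_{\overline{H}}$, where elements of $H$, $\overline{H}$ and $\overline{\overline{H}} = H$ coincide as sets. I would handle it by evaluating both sides on a test vector and tracking each $J$ explicitly.
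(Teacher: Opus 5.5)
Your overall route (compose the semilinear Riesz map $\Delta_{\overline{H}}$ with the antiunitary $J_H$ and unwind definitions) is the one the paper has in mind. Step 1, Step 3, the chain \eqref{eq:char_prop_of_Phi_overline_H} and the formula $\Phi_{\overline{H}} = \Delta_H\Phi_H^{-1}\Delta_{\overline{H}}$ are all correct. Your concrete check with $\{ie_i\}$ for the basis dependence is also correct.

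The gap is in your derivation of \eqref{eq:char_prop_of_Phi_H}. The step $\langle x, y\rangle_H = \langle J_Hx, J_Hy\rangle_{\overline{H}}$ is false. By \eqref{eq:structure_of_conj_HS}, $\langle J_Hx, J_Hy\rangle_{\overline{H}} = \langle y, x\rangle_H = \overline{\langle x, y\rangle_H}$. This is the same rule you applied correctly one line earlier, and it gives $\langle x,y\rangle_H = \langle J_H y, J_H x\rangle_{\overline{H}}$. With $y := \Phi_{\overline{H}}^{-1}a$, your computation therefore establishes
\[
\langle x, a\rangle = \big\langle \Phi_{\overline{H}}^{-1}a,\, J_H x\big\rangle_{\overline{H}} = \overline{\big\langle J_H x,\, \Phi_{\overline{H}}^{-1}a\big\rangle_{\overline{H}}}\,,
\]
which is not the identity as displayed.

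The displayed version is in fact false, so no argument can reach it. The left-hand side $\langle x,a\rangle$ is linear in $a$ and in $x$. The map $a \mapsto \langle J_Hx, \Phi_{\overline{H}}^{-1}a\rangle_{\overline{H}}$ is conjugate-linear, because $\Phi_{\overline{H}}^{-1}$ is linear into $\overline{H}$ and $\langle\cdot,\cdot\rangle_{\overline{H}}$ is conjugate-linear in its second slot with respect to $\ast$. Similarly, $x \mapsto J_H x$ makes the right-hand side conjugate-linear in $x$. For a concrete instance, take $H = \C$, $a(x) = ix$ and $x = 1$. Then $\Phi_{\overline{H}}^{-1}a = -i$, and the right-hand side equals $\langle -i, 1\rangle_H = -i \neq i = \langle 1, a\rangle$.

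The correct statement swaps the two arguments in \eqref{eq:char_prop_of_Phi_H}, which is equivalent to conjugating the right-hand side. You flagged this kind of bookkeeping as the main obstacle. The safeguard is to test each ``read off'' equality for linearity versus semilinearity in every variable before accepting it.
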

\begin{remark}\label{rem:linear_Riesz_vs_semilinear_Riesz}
It is well-known (and simple to prove) that for any $R \in \id{L}{}{}{}{}
(H, K)$ the dual operator of $R$ can be completely characterised by the adjoint 
of $R$, and conversely:
\vspace{-0.2em}
\begin{center}
\includegraphics[width=6cm]{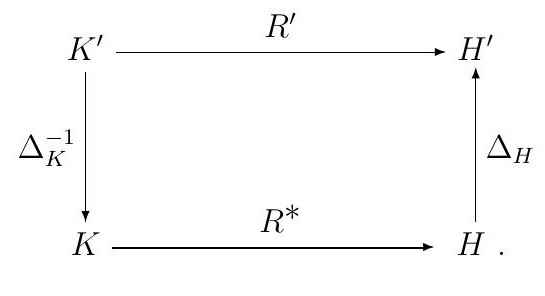} 
\end{center}
\vspace{-1em}
Similarly to \eqref{eq:inner_product_on_the_dual_of_overline_H}, it follows 
that $H^\prime$ is also a Hilbert space, and its inner product is given by 
\begin{align}\label{eq:inner_product_on_the_dual_of_H}
\langle a, b \rangle_{H^\prime} = 
\langle \Delta_H^{-1}b , \Delta_H^{-1}a \rangle_H
\end{align}
for all $a,b \in H^\prime$. Recall also that $S = j_K^{\,-1}\,
S^{\,\prime\prime}\,j_{\overline{H}}$ for all $S \in 
\id{L}{}{}{}{}(\overline{H}, K)$:
\vspace{-0.2em}
\begin{center}
\includegraphics[width=6cm]{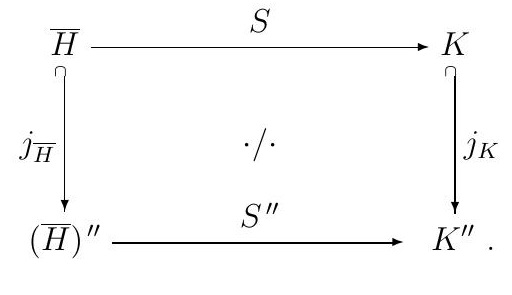} 
\end{center}
\vspace{-1em}
\end{remark}
\begin{remark}\label{rem:Dirac_revisited_I}
\sref{Theorem}{thm:Riesz_linear_version} should be taken into account, when 
P. Dirac's bra-ket formalism is applied. If namely $H$ and $K$ are two Hilbert 
spaces, and if $T \in \mathfrak{L}(H,K)$, then in the sense of Dirac, each $x 
\in H$ is purely symbolically written as ``ket (vector)'': 
$\mid\! x \rangle := x \in H$, implying that $\mid\! Tx \rangle = 
T\mid\! x \rangle$. For any $y \in {H}$, 
\begin{align}\label{eq:def_of_bra}
\langle y\!\mid \, := \langle y \mid\cdot\,\rangle_H = 
\langle \cdot, y\,\rangle_H\in H^\prime\,
\end{align} 
is actually contained in the dual space of $H$, whence $\langle y\!\mid\! S = 
\langle S^\adj y\!\mid$ for all $S \in \id{L}{}{}{}{}(K, H)$. $H^\prime$, 
however coincides \textit{linear-isometrically} with the Hilbert space 
$\overline{H}$ (via $\Phi_{\overline{H}}$)! In the sense of Dirac, 
$\langle y\!\!\mid$ is known as the so-called ``bra (vector)''.\footnote{
The wording ``bra-ket'' originates from the word ``bracket'', where the middle 
letter ``c'' is simply left out.} In other words, Dirac's ``bra'' is an element 
of the Hilbert space $\overline{H}$, whereas a ``ket'' is an element of the 
\textit{different} Hilbert space $H$ (cf. also \cite[Chapter 0.3]{AS2017}).  
\end{remark}
The great strength of Dirac's bra-ket formalism is that it can be used to perform 
complex calculations and algebraic transformations that automatically lead to 
the correct result. Nevertheless, to avoid unavoidable drawbacks of Dirac's 
notation, we should always keep in mind the rigorous mathematics underlying 
Dirac's formal language, including the use of the crucial concept of the 
conjugate Hilbert space. 
\section{Hilbert-Schmidt operators and the tensor product of Hilbert spaces}
\label{sec:HS_ops_and_tp_of_Hilbert_spaces}
This section is primarily designed to provide a short and strongly simplified 
(yet still rigorous) construction of the $n$-fold tensor product 
$\bigotimes\limits_{i=1}^n H_i$ of Hilbert spaces. However, compared to 
the technically quite demanding and very extensive introduction of R. V. Kadison 
and J. R. Ringrose, built on multilinear functional analysis 
(cf. \cite[Chapter 2.6]{KR1983}), our - recursive - method avoids the 
implementation of their so called ``weak multilinear Hilbert-Schmidt mappings''. 
In fact, we will recognise, that it suffices to apply some elementary facts 
from (linear) operator theory, instantly following from the covenient structure 
of the very well-known class of all Hilbert-Schmidt operators and the (less 
familiar) structure of the conjugate Hilbert space. The latter plays also a key 
role in \cite[Chapter 2.6]{KR1983}.    

Also in this regard, the class of all finite-rank linear operators between the 
two  Hilbert spaces $\overline{H}$ and $K$ proves to be very important; 
especially when it comes to their use in the foundations and philosophy of 
quantum mechanics and quantum information theory. In order to make this clear, 
let $H$ and $K$ be arbitrary Hilbert spaces and $(x, z) \in H \times K$. Let 
$T \in \id{L}{}{}{}{}(H, K)$ be an arbitrary rank one operator. Then $T(H) = 
[z_0]$, for some $z_0 \in K\setminus\{0\}$. Thus, if we consider $x_0 : = 
\tfrac{1}{\Vert z_0\Vert^2}\,T^\adj z_0 \in H$, it follows that 
$\langle x, x_0 \rangle_H = \tfrac{1}{\Vert z_0\Vert^2}\,\langle Tx, z_0 \rangle_K$.
Consequently, 
\[
Tx = \langle x, x_0 \rangle_H\,z_0
\]
for all $x \in H$, whence $T = \langle \cdot, x_0 \rangle_H\,z_0 = 
\langle x_0, \cdot \rangle_{\overline{H}}\,z_0$. That observation directly 
leads us to a concrete construction of the tensor product of Hilbert 
spaces, that relies heavily on the conjugate of a  Hilbert space (cf. 
\sref{Theorem}{thm:Hilbert_Schmidt_and_tensor_product}-(iii)). To this end, 
fix  Hilbert spaces $H$ and $K$ and some $(x, y) \in H \times K$. The 
conjugate of $K$ in mind, let us highlight explicitly the mapping rule of the 
(given) scalar multiplication on $K$, described by $\C \times K 
\longrightarrow K, (\lambda, x) \mapsto \lambda \diamond x$, say (such as, e.g., 
$(\lambda, x) \mapsto \lambda\ast x = \overline{\lambda}\,x$). To avoid possible ambiguities 
regarding the choice of the underlying Hilbert spaces (cf. remark below), we 
introduce a notation, enabling a concise description of the following 
elementary building blocks (the so called ``dyads'') of bounded \textit{linear} 
finite rank operators, belonging to $\id{F}{}{}{}{}(\overline{H}, K)$ (but not 
to $\id{F}{}{}{}{}(H, K)$\,(!)):  
\begin{center}
\setlength{\fboxsep}{0.05pt} 
\fbox{\parbox{0.99\columnwidth}{ 
\begin{align}\label{eq:dyad}
\tp{x}{-0.4ex}{\scbox{0.70}{$H$}}{-0.4ex}{\scbox{0.70}{$K$}}{z}
: = 
\langle \cdot, J_H x \rangle_{\overline{H}}\,\diamond z = 
\langle\cdot, \Phi_H x\rangle\,\diamond z\,\in \id{F}{}{}{}{}
(\overline{H}, K)\hspace{1.5em}((x,z) \in H \times K)
\end{align}
}}
\end{center}
\begin{remark}\label{rem:the_operator_Lambda_z}
Actually, the operator \eqref{eq:dyad} turns out to be a special case of an 
important linear operator, which plays a fundamental role in the theory of 
tensor products of Banach spaces. In our view, the use of this operator 
serves to improve the understanding of the construction \eqref{eq:dyad}, since 
it avoids the implementation of additional structure, originating from the 
conjugate of a  Hilbert space. Regarding details, see \cite{Oe2026}. 
So, let $E$ and $F$ be arbitrary (real or complex) Banach spaces. Let 
$E \otimes_{\tiny{\textup{alg}}} F$ be the algebraic tensor product of $E$ and 
$F$ and $w = \sum\limits_{i=1}^{n}x_i \otimes y_i \in 
E \otimes_{\tiny{\textup{alg}}} F$. Recall the construction of the crucial 
canonical metric injection $j_E : E \stackrel{1}{\hookrightarrow} 
E^{\prime\prime}$, whose mapping rule is given by $\langle a, j_E x\rangle : = 
\langle x, a\rangle$ for all $(x, a) \in E \times E^\prime$. The following 
operator is well-defined and independent of the tensor representation of $w$:
\begin{align}\label{eq:Lambda_z}
\Lambda_w : = \sum\limits_{i=1}^{n} \langle\bcdot,\, j_E x_i\rangle\,y_i = 
\sum\limits_{i=1}^{n} \langle x_i,\, \cdot\rangle\,y_i \in 
\id{F}{}{}{}{}(E^\prime, F)\,,  
\end{align}
implying that $\Lambda_w a = \sum\limits_{i=1}^{n} \langle x_i,\, a\rangle
\,y_i$ for all $a \in E^\prime$. The deeper meaning of \eqref{eq:Lambda_z} is 
a result of the following nontrivial representation:  
\begin{align*}
\{\Lambda_w : w \in E \otimes F\} = \{L \in \id{F}{}{}{}{}(E^\prime, F) : 
L^\prime(F^\prime) \subseteq j_E(E)\} = \id{F}{\text{w}^\ast, \text{w}\,}{}{}{}
(E^\prime, F) \varsubsetneq \id{F}{}{}{}{}(E^\prime, F)\,,
\end{align*}
where $\id{F}{\text{w}^\ast, \text{w}\,}{}{}{}(E^\prime, F)$ denotes the set 
of all weak${}^\adj$-weak continuous finite rank operators (cf. \cite{Oe2026}). 
The Hilbert space case \eqref{eq:dyad} (in terms of \eqref{eq:Lambda_z}) can be 
represented as a composed operator. Namely, due to 
\sref{Theorem}{thm:Riesz_linear_version} and the fact that $J_H$ is an 
antiunitary operator that satisfies $J_H J_{\overline{H}} = Id_{\overline{H}}$, 
it follows that 
\[
\tp{x}{-0.4ex}{\scbox{0.70}{$H$}}{-0.4ex}{\scbox{0.70}{$K$}}{z} 
\stackrel{\eqref{eq:dyad}}{=} 
\langle\cdot, J_H x \rangle_{\overline{H}}\,z = 
\langle x, J_{\overline{H}}\,\cdot\rangle_H\,z =
\langle x, \Phi_{\overline{H}}\,\cdot\rangle\,z =
\langle \Phi_{\overline{H}}\,\cdot, j_H x\rangle\,z = 
\Lambda_{x \otimes z}\Phi_{\overline{H}}
\]
for all $(x, z) \in H \otimes K$.
\end{remark}
If the choice of the Hilbert spaces $H$ and $K$ is understood, we suppress the 
Hilbert space symbols and put $x\,\underline{\otimes}\,z \equiv 
\tp{x}{-0.3ex}{\scbox{0.70}{$H$}}{-0.4ex}{\scbox{0.70}{$K$}}{z}$. However, the 
designation ``$\tp{{}}{-0.4ex}{\scbox{0.70}{$H$}}{-0.4ex}{\scbox{0.70}
{$K$}}{{}}$'' is nonnegligible! Namely, if $(x, z) \in H \times K$ is given, we 
have to distinguish carefully the four - linear - operators
\[
\tp{x}{-0.3ex}{\scbox{0.70}{$H$}}{-0.4ex}{\scbox{0.70}{$K$}}{z} \in 
\id{F}{}{}{}{}(\overline{H},K)\,\text{ and }\, 
\tp{J_H x}{-0.3ex}{\scbox{0.70}{$\overline{H}$}}{-0.3ex}{\scbox{0.70}{$K$}}{z}
\in \id{F}{}{}{}{}(H,K),
\]
together with
\[
\tp{x}{-0.4ex}{\scbox{0.70}{$H$}}{-0.5ex}{\scbox{0.70}{$\overline{K}$}}
{J_K z} \in \id{F}{}{}{}{}(\overline{H},\overline{K})\,\text{ and }\, 
\tp{J_H x}{-0.45ex}{\scbox{0.70}{$\overline{H}$}}{-0.45ex}{\scbox{0.70}
{$\overline{K}$}}{J_K z} \in \id{F}{}{}{}{}(H,\overline{K}).
\]
\eqref{eq:dyad} implies that
\begin{align}\label{eq:dyad_conj_HS_cases}
\begin{split}
\tp{x}{-0.4ex}{\scbox{0.70}{$H$}}{-0.4ex}{\scbox{0.70}{$K$}}{z} 
&= \langle\cdot, J_H x \rangle_{\overline{H}}\,z\,,\\ 
\tp{J_H x}{-0.45ex}{\scbox{0.70}{$\overline{H}$}}{-0.4ex}{\scbox{0.70}{$K$}}{z} 
&= \langle\cdot, x\rangle_H z\,,\\
\tp{x}{-0.4ex}{\scbox{0.70}{$H$}}{-0.5ex}{\scbox{0.70}{$\overline{K}$}}{J_K z} 
&= \langle\cdot, J_H x \rangle_{\overline{H}} 
\ast J_K z\,,\\ 
\tp{J_H x}{-0.45ex}{\scbox{0.70}{$\overline{H}$}}{-0.45ex}{\scbox{0.70}
{$\overline{K}$}}{J_K z} 
&= \langle\cdot, x \rangle_H\ast J_K z\,, 
\end{split}
\end{align}
where the explicit highlighting of $\diamond$ in $K$ is suppressed again, of 
course. Obviously, $\Vert \tp{x}{-0.4ex}{\scbox{0.70}{$H$}}{-0.4ex}{\scbox{0.70}
{$K$}}{z}\Vert = \Vert x\Vert_H\,\Vert z\Vert_K$, and $\tp{(\lambda x)}{-0.4ex}
{\scbox{0.70}{$H$}}{-0.4ex}{\scbox{0.70}{$K$}}{(\mu z)} 
\stackrel{\eqref{eq:dyad_conj_HS_cases}}{=} \lambda\mu
(\tp{x}{-0.4ex}{\scbox{0.70}{$H$}}{-0.4ex}{\scbox{0.70}{$K$}}{z})$ for all 
$\lambda, \mu \in \C$. If $x \not= 0$ and $z \not= 0$, then the dyad 
$\tp{x}{-0.4ex}{\scbox{0.70}{$H$}}{-0.4ex}{\scbox{0.70}{$K$}}{z}$ has rank one, 
of course.

Consequently, our symbolic notation \eqref{eq:dyad} removes the 
ambiguity and inconvenience of notation, explicitly highlighted in 
\cite[Chapter 0.3]{AS2017}. \eqref{eq:dyad_conj_HS_cases} namely reflects the 
underlying Hilbert spaces and the implied correct structure of the related dyad 
at once. Here, we should add the following clarifying observation:
\begin{remark}
In Dirac notation, where the important structure of $\overline{H}$ is not 
included (cf. \sref{Remark}{rem:Dirac_revisited_I}), the dyad $\tp{J_H x}
{-0.3ex}{\scbox{0.70}{$\overline{H}$}}{-0.3ex}{\scbox{0.70}{$K$}}{z}$ coincides 
with the ``outer product'' $\mid\!\! z\rangle\langle x\!\!\mid$. This follows 
from
\[
(\mid\! z\rangle\langle x\!\mid)\!\mid\! y\rangle : =
\langle x \mid y\rangle_H\!\mid\! z\rangle =
\langle y, x\rangle_{H}\,z = 
(\tp{J_H x}{-0.3ex}{\scbox{0.70}{$\overline{H}$}}{-0.3ex}
{\scbox{0.70}{$K$}}{z})y  
\]
for all $y \equiv \mid\! y\rangle \in H$. We do not continue to make 
use of Dirac's bra-ket symbol ``$\mid\!\! z\rangle\langle x\!\!\mid : H 
\longrightarrow K$'' in this paper, though. The ''swap'' of the building blocks 
$\langle x\!\mid \, \stackrel{\eqref{eq:def_of_bra}}{=} 
\langle x \mid\cdot\,\rangle_H = \langle\cdot, x\rangle_H \in H^\prime$ and 
$\mid\!\! z\rangle \equiv z \in K$ might namely lead to a minor confusion 
regarding the given underlying domain (in our case, $H$) and the given codomain 
(in our case, $K$). However, in the finite-dimensional complex Hilbert space 
case, the following equality in fact is satisfied for all $m, n \in \N$ and 
$(x, z) \in \C_2^n \times \C_2^m$, where $\F_2^\nu : = 
(\F^\nu, \Vert\cdot\Vert_2)$ denotes the $\nu$-dimensional 
Hilbert space over $\F \in \{\R, \C\}$, equipped with the Euclidean norm 
($\nu \in \N$); tacitly assumed that we identify $\M_{m,n}(\C)$ with 
$\id{L}{}{}{}{}(\C^n, \C^m)$:
\begin{align}\label{eq:rank_one_matrices}
\tp{J_{\C_2^n}x}{-0.4ex}{\scbox{0.70}{$\overline{\C_2^n}$}}{-0.45ex}{\scbox{0.70}
{$\C_2^m$}}{z} =
z x^\ast 
(\equiv\,\mid\!\! z\rangle\langle x\!\!\mid\,) \in \M_{m,n}(\C)\,.
\end{align}
To recognise this, we just have to calculate all image column vectors in $\C^m$: 
\[
(\tp{J_{\C_2^n}x}{-0.4ex}{\scbox{0.70}{$\overline{\C_2^n}$}}{-0.45ex}{\scbox{0.70}
{$\C_2^m$}}{z})e_j \stackrel{\eqref{eq:dyad_conj_HS_cases}}{=} 
\langle e_j, x\rangle_{\C_2^n}\,z = \overline{x_j} z = (z x^\ast)e_j
\]
for all $j \in [n]$. Similarly, it follows from \eqref{eq:dyad} that 
\begin{align}\label{eq:rank_one_matrices_II}
\tp{x}{-0.4ex}{\scbox{0.70}{$\C_2^n$}}{-0.45ex}{\scbox{0.70}{$\C_2^m$}}{z} =
zx^\top (\equiv\, \mid\!\! z\rangle\langle x\!\!\mid\,) \in 
\M_{m,n}(\C)\,.
\end{align}
\end{remark}
\noindent A straightforward calculation shows that 
\begin{align}\label{eq:adjoint_of_dyad}
(\tp{J_H x}{-0.45ex}{\scbox{0.70}{$\overline{H}$}}{-0.5ex}{\scbox{0.70}
{$K$}}{z})^\adj =
\tp{J_K z}{-0.45ex}{\scbox{0.70}{$\overline{K}$}}{-0.48ex}{\scbox{0.70}{$H$}}{x}
\end{align}
In particular,
\begin{align}\label{eq:adjoint_of_dyad_II}
(\tp{x}{-0.3ex}{\scbox{0.70}{$H$}}{-0.4ex}{\scbox{0.70}{$K$}}{z})^\adj =
\tp{J_K z}{-0.3ex}{\scbox{0.70}{$\overline{K}$}}{-0.4ex}{\scbox{0.70}
{$\overline{H}$}}{J_H x}
\end{align}
(since $J_{\overline{H}} J_H x = x$ and $H = \overline{\overline{H}}$\,). 
\noindent \eqref{eq:linear_Riesz} implies the well-known fact, that 
every \textit{bounded} finite-rank operator $L \in \id{F}{}{}{}{}(H,K)$ can be 
written as
\begin{align}\label{eq:rep_of_finite_rank_operators}
L = \sum_{i=1}^n  \langle \cdot, x_i\rangle_{H}\, z_i 
\stackrel{\eqref{eq:dyad_conj_HS_cases}}{=} \sum_{i=1}^n \tp{J_H x_i}
{-0.45ex}{\scbox{0.70}{$\overline{H}$}}{-0.4ex}{\scbox{0.70}{$K$}}{z_i}\,,
\end{align} 
for some $n \in \N$, $(x_1, \ldots, x_n) \in 
H^n$ and $(z_1, \ldots, z_n) \in K^n$ (cf., e.g., 
\cite[Theorem B.10.]{H2011}, \cite[Proposition 15.3.4]{J1981}, 
\cite[Chapter VI.5]{W2011}), whence 
$L^\adj \stackrel{\eqref{eq:adjoint_of_dyad}}{=} \sum_{i=1}^n
\tp{J_K z_i}{-0.4ex}{\scbox{0.70}{$\overline{K}$}}{-0.45ex}{\scbox{0.70}{$H$}}
{x_i}$. Moreover, we also have:
\begin{align}\label{eq:trace_of_a_dyad}
\tup{tr}(\tp{J_H x}{-0.4ex}{\scbox{0.70}{$\overline{H}$}}{-0.45ex}
{\scbox{0.70}{$H$}}{y}) = \tup{tr}(\langle \cdot, x\rangle_H\, y) = 
\langle y , x \rangle_H
\end{align}
for all $x, y \in H$, where $\tup{tr} : \id{F}{}{}{}{}(H,H) \longrightarrow \C$ 
denotes the linear (non-normalised) trace functional (cf., e.g., 
\cite[Chapter 6]{DJT1995} and Pietsch's excellent survey \cite{P2014}). In 
particular, we reobtain the well-known representation of the values of vector 
states $\omega_x$ on an arbitrarily given concrete $\tup{C}^\adj$-algebra 
$\mathcal{A} \subseteq \id{L}{}{}{}{}(H)$: 
\begin{align}\label{eq:vector_state_omega_x}
\langle T, \omega_x\rangle : = \langle Tx, x\rangle_H = 
\tup{tr}(\tp{J_H x}{-0.4ex}{\scbox{0.70}{$\overline{H}$}}{-0.45ex}
{\scbox{0.70}{$H$}}{Tx}) \stackrel{\eqref{eq:dyads_and_composition}}{=} 
\tup{tr}(T(\tp{J_H x}{-0.4ex}{\scbox{0.70}{$\overline{H}$}}{-0.45ex}
{\scbox{0.70}{$H$}}{x}))  
\end{align}
for all $x \in S_H : = \{x \in H : \Vert x\Vert_H = 1\}$ and $T \in \mathcal{A}$. 
Already at this point we should note that the finite rank operator $D_x : = 
\tp{J_H x}{-0.4ex}{\scbox{0.70}{$\overline{H}$}}{-0.45ex}{\scbox{0.70}
{$H$}}{x}$ in particular is nuclear and positive, and $\idn{N}{}{}{}{}(D_x) = 
\Vert x\Vert_H^2 = 1$ (cf. \sref{Section}{sec:op_ideals_and_Grothendieck} and 
\sref{Corollary}{cor:rep_of_nuclear_density_operators} therein). Consequently, 
since $\overline{\overline{H}} = H$, it follows that
\begin{align}\label{eq:algebraic_tp_of_two_Hilbert_spaces}
\underline{\otimes} : H \times K \longrightarrow \id{F}{}{}{}{}(\overline{H}, K), 
(x, z) \mapsto x \,{}_H\underline{\otimes}{}_K\, z
\end{align}
is a bounded \textit{bilinear} operator, with $\Vert \underline{\otimes}\Vert = 
1$. Moreover, $\id{F}{}{}{}{}(\overline{H}, K)$ coincides algebraically with the 
linear hull of its subset $\underline{\otimes}(H \times K)$, and 
\begin{align}\label{eq:trace_of_a_dyad_II}
\textup{tr}(\tp{x}{-0.45ex}{\scbox{0.70}{$H$}}{-0.5ex}{\scbox{0.70}
{$\overline{H}$}}{J_H y}) = \textup{tr}(\tp{x}{-0.8ex}{\scbox{0.70}
{$\overline{\overline{H}}$}}{-0.85ex}{\scbox{0.70}
{$\overline{H}$}}{J_H y}) \stackrel{\eqref{eq:trace_of_a_dyad}}{=}
\langle J_H y, J_H x \rangle_{\overline{H}} = \langle x, y \rangle_H
\end{align}
for all $x, y \in H$. \cite[Criterion 2.3]{DF1993} therefore implies
\begin{proposition}\label{prop:algebraic_tp_of_two_Hilbert_spaces}
Let $H$ and $K$ be  Hilbert spaces. Then 
\[
H \otimes_{\tiny{\textup{alg}}} K : = (\id{F}{}{}{}{}(\overline{H}, K), 
\underline{\otimes})
\]
is an algebraic tensor product of $H$ and $K$ $($via $\underline{\otimes}$$)$.
\end{proposition}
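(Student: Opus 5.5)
We have to show that the bilinear map $\underline{\otimes} : H \times K \longrightarrow \id{F}{}{}{}{}(\overline{H}, K)$ from \eqref{eq:algebraic_tp_of_two_Hilbert_spaces} realises $\id{F}{}{}{}{}(\overline{H}, K)$ as an algebraic tensor product of $H$ and $K$. The plan is to verify the two conditions of \cite[Criterion 2.3]{DF1993}. First, the range of $\underline{\otimes}$ must span the target space. Second, $\underline{\otimes}$ must be ``tensor-injective'': whenever $\sum_{i=1}^n x_i\,\underline{\otimes}\,z_i = 0$ in $\id{F}{}{}{}{}(\overline{H}, K)$, then $\sum_{i=1}^n \varphi(x_i)\psi(z_i) = 0$ for all $\varphi \in H^\prime$ and $\psi \in K^\prime$. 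Equivalently, the induced linear map on the abstract algebraic tensor product is bijective. Bilinearity of $\underline{\otimes}$ has already been recorded, via $(\lambda x)\,\underline{\otimes}\,(\mu z) = \lambda\mu\,(x\,\underline{\otimes}\,z)$ and additivity in each slot from \eqref{eq:dyad}.

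\textbf{Spanning.} Let $L \in \id{F}{}{}{}{}(\overline{H}, K)$. By \eqref{eq:rep_of_finite_rank_operators}, applied to the pair $\overline{H}, K$, we can write $L = \sum_{i} \langle \cdot, u_i\rangle_{\overline{H}}\, z_i$ with $u_i \in \overline{H}$. Since $J_H$ is bijective, each $u_i = J_H x_i$ for some $x_i \in H$. Then \eqref{eq:dyad_conj_HS_cases} gives $L = \sum_i x_i\,\underline{\otimes}\, z_i$.

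\textbf{Injectivity criterion.} Suppose $\sum_i x_i\,\underline{\otimes}\,z_i = 0$, and take $\varphi \in H^\prime$, $\psi \in K^\prime$. By Fréchet–Riesz, $\varphi = \langle \cdot, a\rangle_H$ for a unique $a \in H$. Evaluating the operator at $J_H a \in \overline{H}$ gives
\[
0 = \sum_i \langle J_H a, J_H x_i\rangle_{\overline{H}}\, z_i = \sum_i \langle x_i, a\rangle_H\, z_i = \sum_i \varphi(x_i)\, z_i ,
\]
using \eqref{eq:char_prop_of_Phi_overline_H} (equivalently, the definition of $\langle\cdot,\cdot\rangle_{\overline{H}}$). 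Applying the linear functional $\psi$ then yields $\sum_i \varphi(x_i)\psi(z_i) = 0$, which is the required condition.

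\textbf{Expected obstacle.} The only delicate point is the bookkeeping of scalar multiplications. We must check that evaluation at $J_H a$ produces $\varphi(x_i)$ rather than its complex conjugate. This is exactly where the identity $\langle J_H y, J_H x\rangle_{\overline{H}} = \langle x, y\rangle_H$ is used: it shows that $x \mapsto x\,\underline{\otimes}\,z$ is linear in $x$, not conjugate-linear. Once this is confirmed, \cite[Criterion 2.3]{DF1993} applies directly and gives the proposition.
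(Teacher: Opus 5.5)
Your proof is correct and follows the same route as the paper, which likewise verifies \cite[Criterion 2.3]{DF1993} through three ingredients: bilinearity of $\underline{\otimes}$, the spanning property coming from \eqref{eq:rep_of_finite_rank_operators}, and a separation condition. The only difference is that the paper gets the separating functionals from the trace identity \eqref{eq:trace_of_a_dyad_II}, whereas you evaluate at $J_H a$ and then apply $\psi \in K^\prime$, which is a slightly more direct way to produce the same bilinear forms $(x,z)\mapsto \varphi(x)\psi(z)$ (and because $H^\prime$ and $K^\prime$ separate points, this suffices for any of the usual formulations of the criterion).
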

\begin{remark}
Actually, since $\overline{H} \cong H^\prime$ and $H$ is reflexive, 
\sref{Proposition}{prop:algebraic_tp_of_two_Hilbert_spaces} is a 
particular case of a well-known result from the general tensor product theory 
of Banach spaces. If namely, $E$ and $F$ are arbitrarily given normed spaces 
(over $\F$), then
\[
\id{F}{}{}{}{}(E, F) = E^\prime \otimes_{\tiny{\textup{alg}}} F
\]
coincides with the \textit{algebraic} tensor product of the dual Banach space 
$E^\prime$ and the normed space $F$ (cf., e.g., 
\cite[Chapter 2.4, Example (5)]{DF1993}). 
\end{remark}
\noindent Because of \eqref{eq:dyad}, it can be also readily seen that
\begin{align}\label{eq:dyads_and_composition}
T(\tp{J_K z}{-0.4ex}{\scbox{0.70}{$\overline{K}$}}{-0.5ex}{\scbox{0.70}
{$H$}}{x}) = \tp{J_K z}{-0.4ex}{\scbox{0.70}{$\overline{K}$}}{-0.5ex}
{\scbox{0.70}{$L$}}{Tx}\,\text{ and }\,(\tp{J_K z}{-0.4ex}{\scbox{0.70}
{$\overline{K}$}}{-0.5ex}{\scbox{0.70}{$H$}}{x})S = \tp{J_L(S^\adj z)}
{-0.4ex}{\scbox{0.70}{$\overline{L}$}}{-0.5ex}{\scbox{0.70}{$H$}}{x}
\end{align}
for all $T \in \id{L}{}{}{}{}(H, L)$, $S \in \id{L}{}{}{}{}(L, K)$ and 
$(x, z) \in H \times K$. Consequently, if $L = \sum\limits_{i=1}^n \tp{x_i}
{-0.4ex}{\scbox{0.70}{$H$}}{-0.4ex}{\scbox{0.70}{$K$}}{z_i} \in 
\id{F}{}{}{}{}(\overline{H}, K)$ is arbitrarily chosen, then
\[
\id{F}{}{}{}{}(K, \overline{H}) \ni L^\adj = \sum\limits_{j=1}^n 
\tp{J_K z_j}{-0.45ex}{\scbox{0.70}{$\overline{K}$}}{-0.55ex}{\scbox{0.70}
{$\overline{H}$}}{J_H x_j} \stackrel{\eqref{eq:dyad_conj_HS_cases}}{=} 
\sum\limits_{j=1}^n \langle\cdot, z_j\rangle_K \ast J_H x_j = 
\sum\limits_{j=1}^n \overline{\langle \cdot, z_j\rangle}_K\,x_j\,.
\]
A straightforward calculation therefore implies 
\begin{align}\label{eq:finite_rank_ops_are_Hilbert_Schmidt}
\textup{tr}(L^\adj L) \stackrel{\eqref{eq:dyads_and_composition}}{=}
\sum\limits_{i=1}^n\textup{tr}(\tp{x_i}{-0.45ex}{\scbox{0.70}{$H$}}{-0.5ex}
{\scbox{0.70}{$\overline{H}$}}{J_H(L^\adj z_i)}) 
\stackrel{\eqref{eq:trace_of_a_dyad_II}}{=} \sum\limits_{i=1}^n
\langle x_i, L^\adj z_i \rangle_H =
\sum_{i=1}^n\sum\limits_{j=1}^n 
\langle x_i, x_j \rangle_H\,\langle z_i, z_j \rangle_K\,.
\end{align}
Dyads (and hence all bounded finite-rank operators) therefore turn out to be 
the building blocks of the larger set $\id{S}{2}{}{}{}\,(H, K)$ of all 
\textit{Hilbert-Schmidt operators} between $H$ and $K$, also known as operators 
of the \textit{2nd Schatten-von Neumann class}. A comprehensive operator 
theoretic treatment of these (compact) operators - without any use of 
multilinear functional analysis - can be found, for example, in 
\cite{C2000, DJT1995, MV1997, P2014, W2011}. (Regarding the role of 
trace-class operators, which we deliberately ignore at this moment, cf. 
\sref{Theorem}{thm:char_of_trace_class_ops}). Recall that a 
\textit{bounded} linear operator $T \in \id{S}{2}{}{}{}\,(H, K)$ if and only if 
there exists an orthonormal basis $\{e_i : i \in I\}$ of $H$, such that 
$\sum\limits_{i \in I}\Vert Te_i\Vert_K^2 < \infty$; i.e., if and only if the 
net $\big(\sum\limits_{i \in M}\Vert Te_i\Vert_K^2\big)_{M \in \mathcal{F}(I)}$ 
converges in $K$, whose limit will then be denoted as $\sum\limits_{i \in I}
\Vert Te_i\Vert_K^2$, where $\mathcal{F}(I)$ denotes the set of all finite 
subsets of $I$, directed by set inclusion. In fact, we have
\[
\sum\limits_{i \in I}\Vert Te_i\Vert_K^2 = 
\sup\big\{\sum\limits_{i \in M}\Vert Te_i\Vert_K^2 : M \in 
\mathcal{F}(I)\big\}\,.
\]
Purists recognise that for any function $f : I \longrightarrow [0, \infty)$, the 
``sum'' $\sum\limits_{i \in I} f(i)$ precisely coincides with the Lebesgue 
integral of $f$ relative to the counting measure on the index set $I$. 
Regarding an explicit canonical construction of the tensor product of 
(possibly infinite-dimensional and nonseparable) Hilbert spaces, it is 
very important to recall that for any $T \in 
\id{S}{2}{}{}{}\,(H, K)$, 
\[
\sigma_2(T) : = \big(\sum\limits_{i \in I}\Vert Te_i\Vert^2\big)^{1/2} =
(\sum\limits_{i \in I}\langle e_i, T^\adj T e_i\rangle)^{1/2} = 
\sqrt{\textup{tr}(T^\adj T)}
\] 
\textit{does not depend on the choice of the orthonormal basis 
$\{e_i : i \in I\}$ of $H$}. In fact, to see this at once, we just have to 
apply Parseval's identity twice, implying that 
\begin{align}\label{eq:T_HS_iff_T_ast_HS_and_indep_of_ONB}
\sum\limits_{i \in I}\Vert Te_i\Vert_K^2 = \sum\limits_{i \in I}\sum\limits_
{j \in J}\vert\langle Te_i, f_j\rangle\vert^2 = \sum\limits_{j \in J}
\Vert T^\adj f_j\Vert_H^2
\end{align}
for any orthonormal basis $\{f_j : j \in J\}$ of $K$ (cf., e.g., 
\cite[Proposition 18.1]{C2000}). 
It is well-known that $(\id{S}{2}{}{}{}\,(H, K), \langle \cdot, \cdot\rangle_
{\id{S}{2}{}{}{}\,})$ is a Hilbert space, where the inner product is given by
\begin{align}\label{eq:HS_inner_product}
\langle S, T\rangle_{\id{S}{2}{}{}{}\,} \equiv \langle S, T\rangle_
{{\id{S}{2}{}{}{}\,}(H,K)} : = \tup{tr}(T^\adj S) = \sum\limits_{i \in I}
\langle Se_i, Te_i \rangle_K\,.
\end{align}
The Hilbert-Schmidt norm 
\begin{align}\label{eq:HS_norm}
\mathfrak{S}_2(H, K) \ni T \mapsto \sigma_2(T) = 
\sqrt{\langle T, T\rangle_{\mathfrak{S}_2}} = \sqrt{\tup{tr}(T^\adj T)}
\end{align}
is also often denoted as $\idn{S}{2}{}{}{}\,(\cdot)$ (since Hilbert-Schmidt 
operators between  Hilbert spaces actually are components of the 
quasi-normed Banach operator ideal of $2$-approximable operators 
$(\id{S}{2}{}{}{}\,, \idn{S}{2}{}{}{}\,)$ - cf. \cite[Chapter 19.8 and Chapter 
20.2]{J1981}). If $T \in \id{S}{2}{}{}{}\,(H, K)$ and $x = \sum\limits_{i \in I} 
\langle x, e_i\rangle_H\,e_i \in H$, then Parseval's identity and the assumed 
continuity of $T$ imply that
\[
\Vert T x\Vert_K^2 = \Vert\sum\limits_{i\in I}\langle x, e_i\rangle_H\,Te_i\Vert_K^2
\leq \sum\limits_{i\in I}\vert\langle x, e_i\rangle_H\vert^2\,
\sum\limits_{i \in I}\Vert Te_i\Vert^2 = \Vert x\Vert_H^2\,\sigma_2^2\,, 
\]
whence $\Vert T \Vert \leq \sigma_2(T)$. Due to 
\eqref{eq:finite_rank_ops_are_Hilbert_Schmidt}, it follows that every bounded 
finite-rank operator $L \in {\id{F}{}{}{}{}\,}(H,K)$ is Hilbert-Schmidt, and its 
Hilbert-Schmidt norm is given by
\[
\sigma_2(L) = \Big(\sum_{i=1}^n\sum\limits_{j=1}^n 
\langle x_i, x_j \rangle_H\,\langle z_i, z_j \rangle_K\Big)^{1/2} \geq 0\,.
\]
The following reformulation of some known (yet somewhat dispersed) results 
concerning Hilbert-Schmidt operators proves to be a door opener towards an 
explicit construction of the tensor product of Hilbert spaces, 
which however is a canonical one; meaning that its explicit construction does 
not depend on the choice of the orthonormal bases of the underlying Hilbert 
spaces. For the convenience of the reader, we offer a short, yet complete proof.   
\begin{theorem}\label{thm:Hilbert_Schmidt_and_tensor_product}
Let $G$, $H$, $K$ and $L$ be  Hilbert spaces. Then the following statements 
hold:
\begin{enumerate}
\item $R \in \id{S}{2}{}{}{}\,(H, K)$ if and only if $R^\adj \in 
\id{S}{2}{}{}{}\,(K, H)$. In each of these cases, $\sigma_2(R) = 
\sigma_2(R^\adj)$.
\item $R \in \id{S}{2}{}{}{}\,(H, K)$ if and only if $R^\prime \in 
\id{S}{2}{}{}{}\,(K^\prime, H^\prime)$ if and only if $R^{\prime\prime} \in 
\id{S}{2}{}{}{}\,(H^{\prime\prime}, K^{\prime\prime})$. In each of these cases, 
$\sigma_2(R) = \sigma_2(R^\prime) = \sigma_2(R^{\prime\prime})$.
\item Let $T \in \id{S}{2}{}{}{}\,(H, K)$,  $A \in \id{L}{}{}{}{}(G, H)$ and 
$B \in \id{L}{}{}{}{}(K, L)$. Then $BTA \in \id{S}{2}{}{}{}\,(G, L)$, and 
$\sigma_2(BTA) \leq \Vert B \Vert\,\sigma_2(T)\,\Vert A \Vert$.   
\item 
Let $R \in \id{S}{2}{}{}{}\,(H, K)$. Then
\begin{align}\label{eq:key_representation}
\langle R, \tp{J_H x}{-0.45ex}{\scbox{0.70}{$\overline{H}$}}{-0.4ex}
{\scbox{0.70}{$K$}}{z} \rangle_{\id{S}{2}{}{}{}\,} = \langle Rx, z \rangle_K
\end{align}
for all $(x, z) \in H \times K$.
\item
If $x_1, x_2 \in H$ and $z_1, z_2 \in K$, then
\[
\langle \tp{x_1}{-0.45ex}{\scbox{0.70}{$H$}}{-0.5ex}{\scbox{0.70}{$K$}}{z_1}, 
\tp{x_2}{-0.45ex}{\scbox{0.70}{$H$}}{-0.5ex}{\scbox{0.70}{$K$}}{z_2}\rangle_
{\id{S}{2}{}{}{}\,} = \langle x_1, x_2\rangle_{H}\,
\langle z_1, z_2\rangle_K\,.
\]
\item The set of all finite-rank operators $\id{F}{}{}{}{}(H, K)$ is a
$\sigma_2$-dense subset of $\id{S}{2}{}{}{}\,(H, K)$. 
\item If $\{e_i : i \in I\}$ is an orthonormal basis of 
$H$ and $\{f_j : j \in J\}$ is an orthonormal basis of $K$, 
then the family $\{\tp{J_H e_i}{-0.45ex}{\scbox{0.70}{$\overline{H}$}}
{-0.5ex}{\scbox{0.70}{$K$}}{f_j} : (i,j) \in I \times J\} \subseteq 
\id{F}{}{}{}{}(H, K)$ is an orthonormal basis of $\id{S}{2}{}{}{}\,(H, K)$. In 
particular, 
\begin{align}\label{eq:HS_ONB_rep}
\id{S}{2}{}{}{}\,(H, K) = \big\{\sum_{(i,j)\in I \times J} \lambda_{ij}\, 
\tp{J_H e_i}{-0.45ex}{\scbox{0.70}{$\overline{H}$}}{-0.5ex}
{\scbox{0.70}{$K$}}{f_j} : \sum\limits_{(i,j)\in I \times J}
\vert\lambda_{ij}\vert^2 < \infty\big\}\,,
\end{align}
and if $R \in \id{S}{2}{}{}{}\,(H, K)$, then $\lambda_{ij} = 
\langle Re_i, f_j\rangle_K$ for all $(i,j) \in I \times J$, and 
\begin{align}\label{eq:HS_norm}
\sum_{(i,j)\in I \times J}\vert\lambda_{ij}\vert^2 = \sigma_2^2(R)\,.
\end{align}
\end{enumerate}
\end{theorem}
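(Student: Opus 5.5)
The plan is to prove the seven statements in order, using Parseval's identity, the basis-independence \eqref{eq:T_HS_iff_T_ast_HS_and_indep_of_ONB}, and the dyad calculus \eqref{eq:dyad_conj_HS_cases}, \eqref{eq:dyads_and_composition}.

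\textbf{Parts (i)--(iii).} Part (i) follows at once from \eqref{eq:T_HS_iff_T_ast_HS_and_indep_of_ONB}: the double sum $\sum_{i,j}\vert\langle Re_i,f_j\rangle\vert^2$ can be summed in either order, so $\sigma_2(R)=\sigma_2(R^\adj)$.

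For (ii), I will use \sref{Remark}{rem:linear_Riesz_vs_semilinear_Riesz}, which gives $R^\prime=\Delta_H R^\adj\Delta_K^{-1}$. Let $\{f_j\}$ be an orthonormal basis of $K$. Then $\{\Delta_K f_j\}$ is an orthonormal basis of $K^\prime$ by \eqref{eq:inner_product_on_the_dual_of_H}, and $\Delta_H$ is isometric. Hence
\[
\sum_j\Vert R^\prime\Delta_Kf_j\Vert^2=\sum_j\Vert R^\adj f_j\Vert^2=\sigma_2^2(R).
\]
The bidual case comes from applying this once more. Alternatively, since $R^{\prime\prime}=j_KRj_H^{-1}$ and $j_H,j_K$ are unitary in the Hilbert case, the same conclusion holds. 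The converse direction needs no separate argument, because the identities are equalities of (possibly infinite) sums.

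For (iii), I first bound $\sum_i\Vert BTe_i\Vert^2\le\Vert B\Vert^2\sigma_2^2(T)$. I then apply (i) to $TA$, writing $(TA)^\adj=A^\adj T^\adj$, which gives the factor $\Vert A\Vert$.

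\textbf{Parts (iv) and (v).} For (iv), the dyad $\tp{J_H x}{-0.45ex}{\scbox{0.70}{$\overline{H}$}}{-0.4ex}{\scbox{0.70}{$K$}}{z}$ equals $\langle\cdot,x\rangle_H z$. Its adjoint is $\langle\cdot,z\rangle_K x$ by \eqref{eq:adjoint_of_dyad}. Therefore
\[
\langle R,D\rangle_{\id{S}{2}{}{}{}\,}=\tup{tr}(D^\adj R)=\tup{tr}(\langle R\,\cdot,z\rangle_K\,x)=\langle Rx,z\rangle_K,
\]
using \eqref{eq:trace_of_a_dyad} in the form $\tup{tr}(\langle\cdot,y\rangle u)=\langle u,y\rangle$ with $y=R^\adj z$.

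For (v), I apply (iv) with $R=\tp{x_1}{-0.45ex}{\scbox{0.70}{$H$}}{-0.5ex}{\scbox{0.70}{$K$}}{z_1}$ on the domain $\overline{H}$. Because $\overline{\overline{H}}=H$, the second dyad is the $J_{\overline{H}}$-type dyad for the vector $x_2$ viewed in $\overline{H}$. This gives
\[
\langle R J_H x_2,z_2\rangle_K=\langle J_Hx_2,J_Hx_1\rangle_{\overline{H}}\langle z_1,z_2\rangle_K=\langle x_1,x_2\rangle_H\langle z_1,z_2\rangle_K.
\]
One could also read this off directly from \eqref{eq:finite_rank_ops_are_Hilbert_Schmidt} by polarization.

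\textbf{Parts (vii) and (vi).} I will prove (vii) before (vi). By (v), applied to $\overline{H}$ and using $J_H e_i$, the family of dyads is orthonormal. To show it is complete, suppose $R$ is orthogonal to every member. Then (iv) gives $\langle Re_i,f_j\rangle=0$ for all $i,j$, so $R=0$. Hence the family is an orthonormal basis.

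By the general Fourier expansion in the Hilbert space $\id{S}{2}{}{}{}\,(H,K)$, whose completeness is cited as well known, we get \eqref{eq:HS_ONB_rep}. The coefficients are $\lambda_{ij}=\langle R,\cdot\rangle=\langle Re_i,f_j\rangle$ by (iv). The norm identity then follows from Parseval's identity, or directly from the definition of $\sigma_2$ expanded via Parseval in $K$.

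Part (vi) is now immediate: the finite partial sums of this expansion are finite-rank and converge to $R$ in $\sigma_2$.

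\textbf{Main obstacle.} The main obstacle is bookkeeping rather than mathematics. The hard part is keeping track of which space ($H$ versus $\overline{H}$) each dyad lives on, especially in (v) and in the statement that $\lambda_{ij}$ is linear in $R$. I would handle this by systematically rewriting every dyad via \eqref{eq:dyad_conj_HS_cases} as $\langle\cdot,u\rangle v$ before computing any trace.
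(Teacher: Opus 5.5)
Your proposal is correct and follows the paper's proof essentially step by step. Parts (i) and (iii) go through \eqref{eq:T_HS_iff_T_ast_HS_and_indep_of_ONB} and adjoints, (iv) through the adjoint of the dyad and \eqref{eq:trace_of_a_dyad}, and (vii) through orthonormality plus the vanishing of $\langle Re_i,f_j\rangle_K$ for any $R$ orthogonal to all dyads.

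The remaining differences are cosmetic. You make the paper's sketch of (ii) explicit through $R^\prime=\Delta_H R^\adj\Delta_K^{-1}$. You obtain (v) from (iv) applied on $\overline{H}$ rather than from \eqref{eq:finite_rank_ops_are_Hilbert_Schmidt}. You deduce (vi) from the Fourier expansion in (vii), whereas the paper shows that the orthogonal complement of the $\sigma_2$-closure of $\id{F}{}{}{}{}(H,K)$ is trivial.
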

\begin{proof}
(i) follows directly from \eqref{eq:T_HS_iff_T_ast_HS_and_indep_of_ONB} and the 
operator equality $R^{\adj\adj} = R$. 
\\[0.2em]
(ii) follows from (i), the description of $\sigma_2^2(R^\prime)$ as a summable 
family, whose value does not depend on the choice of the orthonormal basis in 
the dual space $K^\prime$ and \sref{Remark}{rem:linear_Riesz_vs_semilinear_Riesz}, 
including \eqref{eq:inner_product_on_the_dual_of_H}.\\[0.2em]
(iii) Let $T \in \id{S}{2}{}{}{}\,(H, K)$,  $A \in \id{L}{}{}{}{}(G, H)$ and 
$B \in \id{L}{}{}{}{}(K, L)$. Obviously, $BT \in \id{S}{2}{}{}{}\,(H, L)$, and 
$\sigma_2(BT) \leq \Vert B \Vert\,\sigma_2(T)$. Due to (i), it follows that 
$(BT)^\adj \in \id{S}{2}{}{}{}\,(L, H)$, and $\sigma_2((BT)^\adj) = 
\sigma_2(BT)$. Hence, $(BTA)^\adj = A^\adj (BT)^\adj \in \id{S}{2}{}{}{}\,
(K, G)$, and $\sigma_2((BTA)^\adj) \leq \Vert A^\adj \Vert\,\sigma_2((BT)^\adj) = 
\sigma_2(BT)\,\Vert A \Vert$. The claim now follows from a further application 
of (i).
\\[0.2em]   
(iv) Fix $(x, z) \in H \times K$. Then
\begin{align*}
\langle R, J_H x \,\underline{\otimes}\, z \rangle_{\id{S}{2}{}{}{}\,} 
&= \tup{tr}((J_H x \,\underline{\otimes}\, z)^\adj\,R)
\stackrel{\eqref{eq:adjoint_of_dyad}}{=} \tup{tr}
((J_K z \,\underline{\otimes}\, x)\,R)\\ 
&\!\!\!\stackrel{\eqref{eq:dyads_and_composition}}{=} \tup{tr}
((J_H(R^\adj z) \,\underline{\otimes}\, x) 
\stackrel{\eqref{eq:trace_of_a_dyad}}{=} \langle x, R^\adj z \rangle_H = 
\langle Rx, z \rangle_K\,.
\end{align*} 
(v) is a special case of \eqref{eq:finite_rank_ops_are_Hilbert_Schmidt}.
\\[0.2em]
(vi) Consider the closed subspace $\{0\} \not= \id{A}{}{}{}{}(H,K) : = 
\overline{\id{F}{}{}{}{}(H, K)}^{\sigma_2}$ of the Hilbert space 
$\left({\id{S}{2}{}{}{}\,}(H,K)\right.$, $\left.\langle \cdot, \cdot\rangle_
{{\id{S}{2}{}{}{}}}\right)$. Let $T \in \id{A}{}{}{}{}(H,K)^\bot \subseteq 
\id{F}{}{}{}{}(H, K)^\bot$. Then $\langle Tx, z\rangle = 0$ for all $(x, z) \in 
H \times K$ (due to \eqref{eq:rep_of_finite_rank_operators} and 
\eqref{eq:key_representation}), whence $T = 0$. Consequently, 
${\id{S}{2}{}{}{}\,}(H,K) = \id{A}{}{}{}{}(H,K)$.\\[0.2em]
(vii) We just have to apply standard Hilbert space theory to the Hilbert 
space $\left({\id{S}{2}{}{}{}\,}(H,K)\right.$, $\left.\langle \cdot, \cdot\rangle_
{{\id{S}{2}{}{}{}}}\right)$. So, let $S \in {\id{S}{2}{}{}{}\,}(H,K)$ such that 
$\langle S, J_H e_i \,\underline{\otimes}\, f_j \rangle_
{\id{S}{2}{}{}{}\,} = 0$ for all $(i,j) \in I \times J$. 
\eqref{eq:key_representation} implies that $\langle Se_i, f_j \rangle_K = 0$ 
for all $(i,j) \in I \times J$, whence $S = 0$. 
It follows that $\{J_H e_i \,\underline{\otimes}\, f_j : (i, j) \in I \times J\}$ 
is an orthonormal basis of $S \in {\id{S}{2}{}{}{}\,}(H,K)$. Parseval's 
identity (together with the double-sum rearrangement rule) concludes the proof 
of (vii) (cf., e.g., \cite[Satz V.4.9]{W2011}).
\end{proof}
Observe that if $\{e_i : i \in I\}$ is an orthonormal basis of $H$ and 
$\{f_j : j \in J\}$ is an orthonormal basis of $K$, then
\begin{align*}
\id{S}{2}{}{}{}\,(H, K) \stackrel{\eqref{eq:HS_ONB_rep}}{=} \big\{\sum_{i\in I}
\tp{J_H e_i}{-0.45ex}{\scbox{0.70}{$\overline{H}$}}{-0.5ex}{\scbox{0.70}
{$K$}}{y_i} : y_i \in K\,\,\forall i \in I\big\} 
\stackrel{\eqref{eq:HS_ONB_rep}}{=} \big\{\sum_{j\in J} \tp{J_H x_j}
{-0.45ex}{\scbox{0.70}{$\overline{H}$}}{-0.5ex}{\scbox{0.70}{$K$}}{f_j} : 
x_j \in H\,\,\forall j \in J\big\}   
\end{align*}
\begin{remark}\label{rem:HS_ops_on_H_are_not_C_star}
[\textbf{$(\id{S}{2}{}{}{}\,(H), \sigma_2)$ is not a $\tup{C}^\adj$-algebra}]
Let $H$ be a Hilbert space over $\C$. 
\sref{Theorem}{thm:Hilbert_Schmidt_and_tensor_product} clearly implies that the 
normed $\adj$-algebra $\id{S}{2}{}{}{}\,(H) : = \id{S}{2}{}{}{}\,(H, H)$, 
together with norm $\sigma_2$ is a Banach $\adj$-algebra (cf., e.g., 
\cite[Chapter 2.1]{M1990})). The astute reader might raise the question, whether 
$\id{S}{2}{}{}{}\,(H)$, together with $\sigma_2$ even is a $\tup{C}^\adj$-algebra? 
This is not the case (unlike $(\id{L}{}{}{}{}(H), \Vert\cdot\Vert)$). In 
order to recognise this fact, fix an arbitrary orthonormal basis 
$\{e_i : i \in I\}$ of $H$ and consider the finite-rank operator $P : = 
\sum\limits_{i=1}^2 \tp{J_H e_i}{-0.45ex}{\scbox{0.70}
{$\overline{H}$}}{-0.48ex}{\scbox{0.70}{$H$}}{e_i} \in \id{S}{2}{}{}{}\,(H)$. 
\eqref{eq:adjoint_of_dyad}, together with \eqref{eq:dyads_and_composition} 
implies that $P^2 = P = P^\adj$ is a projection. Thus, 
$(\sigma_2(P))^2 = \tup{tr}(P^\adj P) = \tup{tr}(P) 
\stackrel{\eqref{eq:trace_of_a_dyad}}{=} 2$. Consequently, it follows that
\[
\sigma_2(P^\adj P) = \sigma_2(P) = \sqrt{2} < 2 = (\sigma_2(P))^2\,.
\] 
\end{remark}
\noindent Actually, \sref{Proposition}{prop:algebraic_tp_of_two_Hilbert_spaces}, 
together with \sref{Theorem}{thm:Hilbert_Schmidt_and_tensor_product} completely 
reflects the explicit structure of any given tensor product of 
Hilbert spaces. Recall that a Hilbert space tensor product of $H$ and $K$ is a 
 \textit{Hilbert space} $H \scbox{0.80}{$\boxtimes$} K$, together with a 
bilinear mapping $B_\boxtimes : H \times K \longrightarrow H \boxtimes K, (x,z) 
\mapsto B_\boxtimes(x,z)$, such that the linear hull of the image $B_\boxtimes
(H \times K)$ is $\Vert \cdot\Vert_{H \scbox{0.80}{$\boxtimes$} K}$-dense in 
$H \scbox{0.80}{$\boxtimes$} K$ and $\langle x_1 \,\otimes\, z_1, 
x_2 \,\otimes\,z_2 \rangle_{H \scbox{0.80}{$\boxtimes$} K} = 
\langle x_1, x_2\rangle_H\,\langle z_1, z_2\rangle_K$ for all $x_1, x_2 \in H$ 
and $z_1, z_2 \in K$ (cf., e.g., \cite{B2013}). 
\sref{Theorem}{thm:Hilbert_Schmidt_and_tensor_product} therefore implies that 
the completion of the  inner product space
\[
H \otimes_2 K : = (H \otimes_{\tiny{\textup{alg}}} K, 
\langle \cdot, \cdot\rangle_{\id{S}{2}{}{}{}\,})
\] 
with respect to the Hilbert-Schmidt norm $\sigma_2$ in fact is a Hilbert space 
tensor product of the  Hilbert spaces $H$ and $K$, which coincides 
precisely with the Hilbert space $(\id{S}{2}{}{}{}\,(\overline{H}, K), 
\langle \cdot, \cdot\rangle_{\id{S}{2}{}{}{}\,})$. Following the usual notation 
for Banach space tensor products of Banach spaces, we refer to this space as
\[
H \widetilde{\otimes}_2 K : = \overline{H \otimes_2 K}^{\sigma_2} = 
\id{S}{2}{}{}{}\,(\overline{H}, K)\,.
\]  
In particular, it follows that
\[
\id{S}{2}{}{}{}\,(H, K) = \overline{H} \widetilde{\otimes}_2 K\,.
\]
Observe that $H \widetilde{\otimes}_2 K$ does not just denote the algebraic 
tensor product $H \otimes_{\tiny{\textup{alg}}} K$. In general, the latter vector 
space is not complete, much less a Hilbert space. In particular, we reobtain 
\cite[Proposition 2.6.9]{KR1983}, stating that \textit{any} (Hilbert space) 
tensor product of  Hilbert spaces $H$ and $K$ can be isometrically 
identified with the class of all Hilbert-Schmidt operators from the 
\textit{conjugate Hilbert space $\overline{H} \cong H^\prime$} into $K$ - 
independent of the choice of the orthonormal bases of the underlying  
Hilbert spaces $H$ and $K$ (cf. also \cite[Exercise 18.5]{C2000}):   
\begin{corollary}\label{cor:char_of_tensor_product_of_HS_as_Hilbert_Schmidt}
Let $H$ and $K$ be two  Hilbert spaces and $H \otimes K$ be a 
given Hilbert space tensor product of the Hilbert spaces $H$ and $K$, with 
norm $t \mapsto \sigma_{H, K}(t) : = \sqrt{\langle t, t\rangle_{H \otimes K}}$. 
Then
\[
U_\otimes : H \otimes K \stackrel{\cong}{\longrightarrow} 
H \widetilde{\otimes}_2 K = \id{S}{2}{}{}{}\,(\overline{H}, K)
\] 
is a canonical, uniquely determined isometric isomorphism, satisfying
\[
U_\otimes(x \otimes y) = x \,{}_H\underline{\otimes}{}_K\, y \,
\text{ for all }\,(x, y) \in H \times K\,. 
\]
\end{corollary}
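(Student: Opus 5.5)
The plan is to build $U_\otimes$ first on the dense algebraic part and then extend it by continuity. Let $D \subseteq H \otimes K$ be the linear hull of $\{x \otimes y : (x,y) \in H \times K\}$. By definition of a Hilbert space tensor product, $D$ is dense in $H \otimes K$.

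On $D$ I define
\[
U_0\Big(\sum_{k=1}^n x_k \otimes y_k\Big) := \sum_{k=1}^n x_k \,{}_H\underline{\otimes}{}_K\, y_k \in \id{F}{}{}{}{}(\overline{H}, K).
\]
Well-definedness and isometry come from a single computation. For $t = \sum_k x_k \otimes y_k$, the defining property of $\langle\cdot,\cdot\rangle_{H\otimes K}$ and sesquilinearity give
\[
\sigma_{H,K}(t)^2 = \sum_{k,l}\langle x_k, x_l\rangle_H\,\langle y_k, y_l\rangle_K .
\]
By \sref{Theorem}{thm:Hilbert_Schmidt_and_tensor_product}-(v), applied with $\overline{H}$ in place of $H$ (so $\overline{\overline{H}} = H$), together with \eqref{eq:finite_rank_ops_are_Hilbert_Schmidt}, the same double sum equals $\sigma_2\big(\sum_k x_k\,\underline{\otimes}\,y_k\big)^2$.

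Applying this identity to differences of two representations shows two things. First, if $\sum_k x_k \otimes y_k = \sum_m x'_m \otimes y'_m$, the dyad sums have zero $\sigma_2$-distance, so $U_0$ is well defined. Second, $U_0$ is linear and isometric. Linearity uses the bilinearity of $\underline{\otimes}$ from \eqref{eq:algebraic_tp_of_two_Hilbert_spaces}, in particular $(\lambda x)\,\underline{\otimes}\,z = \lambda(x\,\underline{\otimes}\,z)$.

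Next I extend $U_0$ by uniform continuity to a linear isometry $U_\otimes : H\otimes K \to \id{S}{2}{}{}{}\,(\overline{H},K)$; the codomain is complete by \sref{Theorem}{thm:Hilbert_Schmidt_and_tensor_product}. For surjectivity I note that the range of $U_\otimes$ is closed, being the isometric image of a complete space. It also contains $\id{F}{}{}{}{}(\overline{H},K)$, because every finite-rank operator $\overline{H}\to K$ is a finite sum of dyads $x\,\underline{\otimes}\,z$ (this is \sref{Proposition}{prop:algebraic_tp_of_two_Hilbert_spaces}, or \eqref{eq:rep_of_finite_rank_operators} with $\overline{H}$ in place of $H$). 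By \sref{Theorem}{thm:Hilbert_Schmidt_and_tensor_product}-(vi), applied to the pair $(\overline{H},K)$, the finite-rank operators are $\sigma_2$-dense, so the range is everything. Being a surjective linear isometry, $U_\otimes$ preserves inner products by polarization and is therefore unitary.

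Uniqueness is immediate: any bounded linear map with $U(x\otimes y) = x\,\underline{\otimes}\,y$ agrees with $U_\otimes$ on the dense set $D$, hence everywhere. The map is canonical because no orthonormal basis enters the construction; only the bilinear maps $\otimes$ and $\underline{\otimes}$ are used.

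The main obstacle is the well-definedness step. Elements of $D$ have many representations, so everything rests on the norm identity being an identity of the same Gram-type expression on both sides. That is exactly \sref{Theorem}{thm:Hilbert_Schmidt_and_tensor_product}-(v), and the care needed is in applying it to the conjugate space: with $\overline{H}$ in place of $H$, the dyad $J_{\overline{H}}x$ read in $\overline{\overline{H}} = H$ is precisely $x\,{}_H\underline{\otimes}{}_K\,z$, so the inner products $\langle x_k,x_l\rangle_H$ appear unconjugated, as required.
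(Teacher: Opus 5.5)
Your proof is correct and follows the paper's route. The paper notes that \sref{Theorem}{thm:Hilbert_Schmidt_and_tensor_product}-(v),(vi) together with \sref{Proposition}{prop:algebraic_tp_of_two_Hilbert_spaces} make $(\mathfrak{S}_2(\overline{H},K),\underline{\otimes})$ itself a Hilbert space tensor product of $H$ and $K$, and then appeals to the uniqueness of such tensor products \cite[Proposition 2.6.9]{KR1983}; the standard proof of that uniqueness is exactly your construction (define on the span, use the Gram identity, extend, and use density), so you have written out the step the paper only cites.

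One harmless bookkeeping slip: item (v) and \eqref{eq:finite_rank_ops_are_Hilbert_Schmidt} already concern the dyads $x\,{}_H\underline{\otimes}{}_K\,z\in\mathfrak{F}(\overline{H},K)$, so no substitution of $\overline{H}$ for $H$ is needed there, and literally substituting would put you in $\mathfrak{S}_2(H,K)$. That substitution is needed only when you invoke (vi) and \eqref{eq:rep_of_finite_rank_operators}.
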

\begin{remark}\label{rem:L_H_acting_on_L_S2H}
An instant application of 
\sref{Corollary}{cor:char_of_tensor_product_of_HS_as_Hilbert_Schmidt} gives us 
a concrete representation of the faithful (and hence isometric) 
$\adj$-representation of the $\textup{C}^\adj$-algebra $\id{L}{}{}{}{}(H)$ on 
the Hilbert space $\id{S}{2}{}{}{}\,(H)$. It can be namely readily verified that 
the latter is given as 
\[
\pi_{\textup{L}} : \id{L}{}{}{}{}(H) \stackrel{1}{\hookrightarrow} 
\id{L}{}{}{}{}(\overline{H}\otimes H) \cong \id{L}{}{}{}{}
(\id{S}{2}{}{}{}\,(H)), T \mapsto Id_{\overline{H}}\otimes T \mapsto U_\otimes 
(Id_{\overline{H}}\otimes T) U_\otimes^\adj\,.
\]
Observe that by construction 
\begin{align}\label{eq:pi_L_acting_from_the_left}
\pi_L(T)(S) = TS\,\text{ for all }\,(T, S) \in \id{L}{}{}{}{}(H) \times 
\id{S}{2}{}{}{}\,(H)
\end{align}
(``composition from the left'').
\end{remark}      
A further fruitful application of 
\sref{Corollary}{cor:char_of_tensor_product_of_HS_as_Hilbert_Schmidt} arises, 
when we look for relations between the Hilbert spaces $L^2(\mu_1)$, $L^2(\mu_2)$ 
and $L^2(\mu_1 \otimes \mu_2)$, where each $\mu_i$ is a $\sigma$-finite measure 
on a measurable space $(\Omega_i, {\mathscr{F}}_i)$ ($i \in \{1,2\}$), and 
$\mu_1 \otimes \mu_2$ denotes the (thus well-defined and unique) $\sigma$-finite 
product measure on the measurable space $(\Omega_1 \times \Omega_2, 
{\mathscr{F}}_1 \otimes {\mathscr{F}}_2)$. Independent of any further 
assumptions (such as separability, resp. finite-dimensionality of the 
underlying Hilbert spaces), the following statement always holds:
\begin{proposition}\label{prop:reprs_of_L2_mu1_otimes_L2_mu_2}
Let $(\Omega_1, {\mathscr{F}}_1, \mu_1)$ and $(\Omega_2, {\mathscr{F}}_2, 
\mu_2)$ be two $\sigma$-finite measure spaces. Then the Hilbert spaces
\[
L^2(\mu_1 \otimes \mu_2) \cong \id{S}{2}{}{}{}\,
(\overline{L^2(\mu_1)}, L^2(\mu_2)) \cong L^2(\mu_1) \otimes 
L^2(\mu_2)
\]
are canonically isometric isomorphic. 
\end{proposition}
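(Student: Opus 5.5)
The second isomorphism, $\id{S}{2}{}{}{}\,(\overline{L^2(\mu_1)}, L^2(\mu_2)) \cong L^2(\mu_1)\otimes L^2(\mu_2)$, is exactly \sref{Corollary}{cor:char_of_tensor_product_of_HS_as_Hilbert_Schmidt} applied to $H = L^2(\mu_1)$ and $K = L^2(\mu_2)$. So the work is to construct a canonical unitary
\[
V : L^2(\mu_1 \otimes \mu_2) \longrightarrow \id{S}{2}{}{}{}\,(\overline{L^2(\mu_1)}, L^2(\mu_2)).
\]
The natural candidate is the integral operator with kernel $k$. For $k \in L^2(\mu_1\otimes\mu_2)$, set
\[
(V k)(g)(\omega_2) := \int_{\Omega_1} k(\omega_1,\omega_2)\,\overline{g(\omega_1)}\,d\mu_1(\omega_1)
\qquad (g \in \overline{L^2(\mu_1)}).
\]
The complex conjugate on $g$ makes $Vk$ linear on $\overline{L^2(\mu_1)}$, because $\lambda \ast g = \overline{\lambda}g$. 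No choice of orthonormal basis enters this definition, which is the sense of ``canonical''.

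First, $Vk$ is a well-defined bounded operator with operator norm at most $\Vert k\Vert_2$. By Fubini--Tonelli (this is where $\sigma$-finiteness is used), $k(\cdot,\omega_2) \in L^2(\mu_1)$ for $\mu_2$-a.e.\ $\omega_2$. Cauchy--Schwarz then gives $\Vert (Vk)g\Vert_2 \le \Vert k\Vert_2\Vert g\Vert_2$, and measurability of $\omega_2 \mapsto (Vk)(g)(\omega_2)$ again comes from Fubini.

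Second, $V$ is isometric into the Hilbert--Schmidt class. On elementary kernels $k = f\otimes h$, meaning $k(\omega_1,\omega_2) = f(\omega_1)h(\omega_2)$, one computes
\[
(Vk)(g) = \langle f, g\rangle_{L^2(\mu_1)}\,h = \langle g, J f\rangle_{\overline{L^2(\mu_1)}}\,h,
\]
so $Vk$ is the dyad of \eqref{eq:dyad}. By \sref{Theorem}{thm:Hilbert_Schmidt_and_tensor_product}(v) and Fubini, the map $V$ preserves inner products of such elementary kernels, and by linearity it is isometric on their linear span. That span is dense in $L^2(\mu_1\otimes\mu_2)$: indicators of measurable rectangles of finite measure generate the product $\sigma$-algebra, and $\sigma$-finiteness together with a monotone class (Dynkin) argument shows that their span is dense. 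Hence $V$ extends to an isometry $L^2(\mu_1\otimes\mu_2) \to \id{S}{2}{}{}{}\,(\overline{L^2(\mu_1)}, L^2(\mu_2))$. This extension agrees with the integral formula on all of $L^2$: both are continuous from $L^2$ into $\mathfrak{L}$, since $\Vert\cdot\Vert \le \sigma_2$ and by the first step, and they agree on a dense set.

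Third, surjectivity follows from density of the range. The range of $V$ is closed, being the image of an isometry, and it contains every dyad. By \sref{Theorem}{thm:Hilbert_Schmidt_and_tensor_product}(vi) (see also \eqref{eq:rep_of_finite_rank_operators}) the dyads span a $\sigma_2$-dense subspace, so $V$ is onto. Composing $V$ with the inverse of $U_\otimes$ gives the second isomorphism.

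The main obstacle is the density of elementary tensors in $L^2(\mu_1\otimes\mu_2)$ for arbitrary $\sigma$-finite measures, where no separability is available. I would handle it by Hilbert space orthogonality. Suppose $k \perp 1_{A\times B}$ for all finite-measure rectangles $A\times B$. First restrict to $\Omega_1^n\times\Omega_2^n$ with finite measure. The set function $E \mapsto \int_E k$ is a finite signed (complex) measure on that block, since $k$ is integrable there. It vanishes on the $\pi$-system of rectangles, so by Dynkin's $\pi$--$\lambda$ theorem it vanishes on the whole product $\sigma$-algebra. Hence $k = 0$ a.e.\ on each block, and therefore a.e.\ on $\Omega_1\times\Omega_2$.
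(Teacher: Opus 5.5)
Your proof is correct and shares the paper's skeleton: the same kernel operator ($T_h$ in the paper), the same Fubini/Cauchy--Schwarz bound for boundedness, product functions mapped to dyads, and the same surjectivity argument (the range of an isometry is closed and contains the $\sigma_2$-dense set of dyads).

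You genuinely differ in the isometry step. The paper computes $\sigma_2(T_h)=\Vert h\Vert_{L^2(\mu_1\otimes\mu_2)}$ directly for every $h$. It fixes an orthonormal basis $\{e_i : i\in I\}$ of $L^2(\mu_1)$, applies Parseval to each section $h(\cdot,\omega_2)$, and integrates over $\Omega_2$. This needs no density statement on the $L^2(\mu_1\otimes\mu_2)$ side. However, it interchanges an unordered sum over a possibly uncountable index set with the $\mu_2$-integral, and monotone convergence alone does not justify that. In the nonseparable case one needs an extra countability argument, for instance that almost every section $h(\cdot,\omega_2)$ lies in one fixed separable subspace of $L^2(\mu_1)$. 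The paper leaves this implicit.

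You instead check inner-product preservation only on product kernels, where it is just Fubini for an integrable product plus \sref{Theorem}{thm:Hilbert_Schmidt_and_tensor_product}(v). All the measure theory then sits in the density of finite-measure rectangle indicators, which your $\pi$--$\lambda$ argument on finite-measure blocks handles without any separability. The cost is that extra lemma, plus identifying the continuous extension with the integral formula, which you correctly get from $\Vert\cdot\Vert\le\sigma_2$. So the paper's route is shorter, while yours is more self-contained and fully rigorous in the nonseparable setting.
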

\begin{proof}
Due to \sref{Corollary}{cor:char_of_tensor_product_of_HS_as_Hilbert_Schmidt}, we 
have to look for the construction of the first isometric isomorphism only. 
It's rather surprising that such an operator can be found quite easily. To this 
end, let $h \in L^2(\mu_1 \otimes \mu_2)$ be given. Put $H_1 : = L^2(\mu_1)$, 
$H_2 : = L^2(\mu_2)$ and $H : = L^2(\mu_1 \otimes \mu_2)$. Consider 
the well-defined linear(!) operator $T_h \in \id{L}{}{}{}{}
(\overline{H_1}, H_2)$, induced by
\[
\overline{H_1} \ni J_{H_1}f \mapsto T_h J_{H_1}f(\omega_2) : = 
\overline{\langle J_{H_1} h(\cdot, \omega_2), J_{H_1}f\rangle_{H_1}} = 
\langle h(\cdot, \omega_2), f\rangle_{H_1} = \int\limits_{\Omega_1} 
h(\omega_1, \omega_2)\overline{f(\omega_1)}\mu_1(\tup{d}\omega_1)\,.
\]
It namely follows from Fubini's theorem that 
\begin{align*}
\Vert T_h J_{H_1}f\Vert_{H_2}^2 & = & \int\limits_{\Omega_2}
\big\vert\langle h(\cdot, \omega_2), f\rangle_{H_1}\big\vert^2\,
\mu_2(\tup{d}\omega_2) \leq \lb\int\limits_{\Omega_2}
\Vert h(\cdot, \omega_2)\Vert^2_{H_1}\,\mu_2(\tup{d}\omega_2)\rb
\Vert J_{H_1}f\Vert_{H_1}^2\\ 
& = & \lb\int\limits_{\Omega_2}\big(\int\limits_{\Omega_1}
\vert h(\omega_1, \omega_2)\vert^2\,\mu_1(\tup{d}\omega_1)\big)
\mu_2(\tup{d}\omega_2)\rb \Vert J_{H_1}f\Vert_
{H_1}^2 = \Vert h\Vert_H^2\,\Vert J_{H_1}f\Vert_{H_1}^2\,.
\end{align*}
In fact, $T_h$ is even a Hilbert-Schmidt operator, and $H 
\stackrel{1}{\hookrightarrow} \id{S}{2}{}{}{}\,(\overline{H_1}, H_2), h \mapsto 
T_h$ is an isometry. In order to recognise this, fix an arbitrary orthonormal 
basis $\{e_i : i \in I\}$ of $H_1$. Because of Parseval's identity, it follows 
(similarly as above) that
\[
\sum\limits_{i \in I} \Vert T_h J_{H_1}e_i\Vert_{H_2}^2 = 
\int\limits_{\Omega_2}\lb
\sum\limits_{i \in I}\big\vert\langle h(\cdot, \omega_2), e_i\rangle_{H_1}
\big\vert^2\rb\mu_2(\tup{d}\omega_2) = 
\int\limits_{\Omega_2}\Vert h(\cdot, \omega_2)\Vert^2_{H_1}\,\mu_2(\tup{d}\omega_2) = 
\Vert h\Vert_H^2\,.
\]
It remains to prove that the isometry $H \ni h \mapsto T_h \in \id{S}{2}{}{}{}\,
(\overline{H_1}, H_2)$ is onto. To this end, note first that for any $(f_1, f_2) 
\in H_1 \times H_2$, the product function $(\omega_1, \omega_2) \mapsto 
f_1(\omega_1)f_2(\omega_2) = : f_1 \boxdot f_2(\omega_1, \omega_2)$ leads to 
a well-definend element $f_1 \boxdot f_2 \in H$, which clearly satisfies the
operator equality  
\[
T_{f_1 \boxdot f_2} = \tp{f_1}{-0.45ex}{\scbox{0.70}{$H_1$}}{-0.5ex}
{\scbox{0.70}{$H_2$}}{f_2}\,.
\] 
\sref{Theorem}{thm:Hilbert_Schmidt_and_tensor_product}-(vii) and a standard 
Cauchy net-argument therefore concludes the proof (since $H \ni h \mapsto T_h 
\in \id{S}{2}{}{}{}\,(\overline{H_1}, H_2)$ is an isometry).    
\end{proof}             
Since every Hilbert-Schmidt operator $T \in \id{S}{2}{}{}{}\,
(\overline{H},K)$ in particular is compact (due to 
\sref{Theorem}{thm:Hilbert_Schmidt_and_tensor_product}-(vi)), 
a straightforward application 
\sref{Corollary}{cor:char_of_tensor_product_of_HS_as_Hilbert_Schmidt}, together 
with the well-known spectral representation theorem for compact operators (cf., 
e.g., \cite[Satz VI.3.6]{W2011}, resp. \cite[Chapter 20.1]{J1981}) and 
\sref{Theorem}{thm:char_of_trace_class_ops} directly imply the following 
generalised version of the so called ``Schmidt decomposition'' of elements of 
the tensor product of two  Hilbert spaces, which both could be 
infinite-dimensional and even nonseparable:
\begin{corollary}[\textbf{Schmidt decomposition: the general case}]
\label{cor:Schmidt_decomposition_revisited}
Let $H$ and $K$ be  Hilbert spaces. Let $z \in H \otimes K$. Then there 
exist countable $($possibly finite$)$ orthonormal sequences $(u_n)_{n \in \A}$, 
$(v_n)_{n \in \A}, (u_n, v_n) \in H \times K$, and a sequence $(s_n)_{n \in \A} 
\in c_0$, consisting of real nonnegative scalars, such that $s_1 \geq 
s_2 \geq s_3 \geq \ldots \geq 0$, 
\[
\sum\limits_{n \in \A} s_n = \idn{N}{}{}{}{}((U_\otimes z)^\adj U_\otimes z) = 
\Vert z\Vert_{H \otimes K}^2 < \infty\,,
\] 
and
\[
z = \sum\limits_{n \in \A} \sqrt{s_n}\,u_n \otimes v_n\,.
\]
Moreover, $(s_n)_{n \in \A} \in c_0$ consists of eigenvalues of the positive 
operator 
\[
\sum\limits_{n \in \A} s_n\,\tp{u_n}{-0.44ex}{\scbox{0.70}{$H$}}{-0.48ex}
{\scbox{0.70}{$\overline{H}$}}{J_H u_n} = (U_\otimes z)^\adj 
U_\otimes z \in
\id{N}{}{}{}{}(\overline{H},\overline{H})\,,
\] 
each of them repeated a finite number of times, corresponding to the dimension 
of their respective eigenspace. In particular, if $z \in S_{H \otimes K}$ is a 
unit vector in $H \otimes K$, then $\sum\limits_{n \in \A} s_n = 1$.
\end{corollary}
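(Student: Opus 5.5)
We transfer the problem to operators via \sref{Corollary}{cor:char_of_tensor_product_of_HS_as_Hilbert_Schmidt}. Put $T : = U_\otimes z \in \id{S}{2}{}{}{}\,(\overline{H}, K)$. By \sref{Theorem}{thm:Hilbert_Schmidt_and_tensor_product}-(vi), $T$ is a $\sigma_2$-limit, hence a norm limit, of finite-rank operators, so $T$ is compact. Consequently $P : = T^\adj T \in \id{L}{}{}{}{}(\overline{H})$ is a positive compact operator. By \sref{Theorem}{thm:Hilbert_Schmidt_and_tensor_product}-(iii), $P$ is even a product of two Hilbert-Schmidt operators.

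We apply the spectral theorem for compact self-adjoint operators to $P$. This gives a countable family $(s_n)_{n\in\A}$ of nonzero eigenvalues, listed decreasingly with finite multiplicities, and an orthonormal sequence $(\xi_n)_{n \in \A}$ in $\overline{H}$ with
\[
P = \sum_{n\in\A} s_n \langle \cdot, \xi_n\rangle_{\overline{H}}\,\xi_n .
\]
Since $P$ is compact, $(s_n) \in c_0$. We set $u_n : = J_{\overline{H}}\,\xi_n \in H$; the family $(u_n)$ is orthonormal in $H$ because $J_{\overline{H}}$ is antiunitary. By \eqref{eq:dyad_conj_HS_cases} (applied with $K = \overline{H}$), $\langle\cdot, J_H u_n\rangle_{\overline{H}} \ast J_{\overline{H}}\, u_n$ is the dyad $\tp{u_n}{-0.44ex}{\scbox{0.70}{$H$}}{-0.48ex}{\scbox{0.70}{$\overline{H}$}}{J_H u_n}$. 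This is where the scalar multiplication on $\overline{H}$ must be tracked, so that the bookkeeping really yields $\langle\cdot,\xi_n\rangle_{\overline{H}}\,\xi_n$ and the displayed representation of $(U_\otimes z)^\adj U_\otimes z$ holds. Next we define
\[
v_n : = s_n^{-1/2}\, T\xi_n \in K .
\]
Then $\langle v_n, v_m\rangle_K = (s_ns_m)^{-1/2}\langle P\xi_n,\xi_m\rangle_{\overline{H}} = \delta_{nm}$, so $(v_n)$ is orthonormal. On $\ker P = \ker T$ the operator $T$ vanishes, so
\[
T = \sum_n \langle \cdot, \xi_n\rangle_{\overline{H}}\, T\xi_n = \sum_n \sqrt{s_n}\, \tp{u_n}{-0.44ex}{\scbox{0.70}{$H$}}{-0.48ex}{\scbox{0.70}{$K$}}{v_n},
\]
where we used $\xi_n = J_H u_n$ and \eqref{eq:dyad}. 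By \sref{Theorem}{thm:Hilbert_Schmidt_and_tensor_product}-(v), the dyads $u_n \underline{\otimes} v_n$ are orthonormal in $\id{S}{2}{}{}{}\,(\overline{H},K)$. Since $\sum s_n = \sum_n \Vert T\xi_n\Vert^2 \le \sigma_2^2(T)$, the series converges in $\sigma_2$, and not merely pointwise. Pulling back with $U_\otimes^{-1}$, which is an isometric isomorphism with $U_\otimes^{-1}(x\underline{\otimes}y) = x\otimes y$, gives $z = \sum \sqrt{s_n}\, u_n\otimes v_n$.

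For the norm identity, we extend $(\xi_n)$ to an orthonormal basis of $\overline{H}$ by adding an orthonormal basis of $\ker P$. Computing $\sigma_2$ in this basis gives $\sigma_2^2(T) = \sum_n \Vert T\xi_n\Vert^2 = \sum_n s_n$. Hence $\sum s_n = \Vert z\Vert^2_{H\otimes K}$, which equals $1$ for unit vectors.

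The step we expect to be the main obstacle is the identification $\sum s_n = \idn{N}{}{}{}{}(P)$. Here we would invoke \sref{Theorem}{thm:char_of_trace_class_ops}, to be shown later: a positive operator is nuclear with nuclear norm equal to its trace $\sum_n s_n$. Alternatively, $P = T^\adj T$ with $T, T^\adj$ Hilbert-Schmidt, so $P$ is nuclear with $\idn{N}{}{}{}{}(P) \le \sigma_2(T)^2$. The reverse inequality follows from trace duality, since $\tup{tr}(P) = \sum s_n \le \idn{N}{}{}{}{}(P)$. Either route also gives $P \in \id{N}{}{}{}{}(\overline{H},\overline{H})$.
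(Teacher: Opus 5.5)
Your proof is correct and follows essentially the paper's route. The paper likewise identifies $z$ with the compact operator $U_\otimes z \in \mathfrak{S}_2(\overline{H},K)$, invokes the spectral (Schmidt) representation theorem for compact operators, and uses Theorem~\ref{thm:char_of_trace_class_ops} for $\mathbf{N}((U_\otimes z)^\ast U_\otimes z) = \sigma_2^2(U_\otimes z)$; you simply unpack that representation theorem by diagonalising $T^\ast T$ on $\overline{H}$ and setting $v_n = s_n^{-1/2}T\xi_n$, and you track the $\overline{H}$-scalar multiplication and the $\sigma_2$-convergence explicitly, which the paper leaves implicit.
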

\sref{Theorem}{thm:Riesz_linear_version}, together with 
\sref{Corollary}{cor:char_of_tensor_product_of_HS_as_Hilbert_Schmidt} provides 
a deeper understanding of the strongly flexible, rich structure of the tensor 
product of two  Hilbert spaces. We namely have (cf. also 
\eqref{eq:HS_equals_P2} and the powerful general representations 
\cite[Theorem 17.5 and Corollary 22.1]{DF1993}):    
\begin{theorem}\label{thm:structure_of_Hilb_space_tp}
Let $H_1, H_2$ and $H, K$ be Hilbert spaces. If $H_1 \cong H_2$, then 
$H_1 \otimes K \cong H_2 \otimes K$ and $H \otimes H_1 \cong H \otimes H_2$. 
Moreover, also the following equalities hold isometrically, linearily and 
canonically: 
\begin{enumerate}
\item
\[
H^\prime \otimes K \cong \id{S}{2}{}{}{}\,(H, K)\,.
\]
\item
\[
H \otimes K \cong K \otimes H \cong \overline{H} \otimes \overline{K} \cong 
H^\prime \otimes K^\prime\,.
\]
\item 
\[
\id{S}{2}{}{}{}\,(H, \overline{K}) \cong \overline{H} \otimes \overline{K} \cong
\overline{K \otimes H} \cong (K \otimes H)^\prime \cong (H \otimes K)^\prime\,.
\]
\end{enumerate}
\end{theorem}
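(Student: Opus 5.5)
The plan is to reduce everything to \sref{Corollary}{cor:char_of_tensor_product_of_HS_as_Hilbert_Schmidt}, which identifies $H\otimes K$ canonically with $\id{S}{2}{}{}{}\,(\overline{H},K)$. Alongside it I will use \sref{Theorem}{thm:Riesz_linear_version}, which gives the linear isometries $\Phi_H : H\cong\overline{H}^{\,\prime}$ and $\Phi_{\overline{H}} : \overline{H}\cong H^\prime$. Finally I will use \sref{Theorem}{thm:Hilbert_Schmidt_and_tensor_product}(iii) and (i): composing a Hilbert-Schmidt operator with unitaries, and passing to adjoints, preserves $\sigma_2$.

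For the preliminary claim, let $V:H_1\to H_2$ be unitary and define $S\mapsto VS\,C$ on $\id{S}{2}{}{}{}\,(\overline{H_1},K)$. Here $C:=J_{H_1}V^{-1}J_{\overline{H_2}}:\overline{H_2}\to\overline{H_1}$ is linear and unitary, being a composition of two antiunitaries and one unitary. By (iii) this map is a $\sigma_2$-isometry with inverse of the same form, and it sends dyads to dyads. Alternatively, and more canonically, one checks that $V\otimes Id_K$ is determined on elementary tensors and preserves inner products, by \sref{Theorem}{thm:Hilbert_Schmidt_and_tensor_product}(v); density then gives a unitary extension. The second variable is handled the same way.

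For (i), the preliminary claim applied to $\Phi_{\overline{H}}^{-1}:H^\prime\cong\overline{H}$ gives $H^\prime\otimes K\cong\overline{H}\otimes K$. By the Corollary applied with $\overline{H}$ in place of $H$, and using $\overline{\overline{H}}=H$, the latter equals $\id{S}{2}{}{}{}\,(H,K)$.

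For (ii), the flip $H\otimes K\cong K\otimes H$ is realised in operator form by the map $S\mapsto J_H S^\adj J_{\overline{K}}$ from $\id{S}{2}{}{}{}\,(\overline{H},K)$ onto $\id{S}{2}{}{}{}\,(\overline{K},H)$. This map is linear, since it is a composition of two antilinear maps with the antilinear adjoint operation. It is isometric by (i) of the HS theorem. By \eqref{eq:adjoint_of_dyad_II}, it sends $x\,\underline{\otimes}\,z$ to $z\,\underline{\otimes}\,x$. Next, $H\otimes K\cong\overline{H}\otimes\overline{K}$ via $S\mapsto J_K S J_{H}$. This is a linear map from $\id{S}{2}{}{}{}\,(\overline{H},K)$ to $\id{S}{2}{}{}{}\,(H,\overline{K})=\overline{H}\widetilde{\otimes}_2\overline{K}$. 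It is $\sigma_2$-preserving because the HS norm is computed by $\sum\Vert Se_i\Vert^2$, and $J$ preserves orthonormal bases and norms. Finally, $\overline{H}\otimes\overline{K}\cong H^\prime\otimes K^\prime$ follows from the preliminary claim applied with $\Phi_{\overline{H}}$ and $\Phi_{\overline{K}}$.

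For (iii), the first isomorphism is just the Corollary with $\overline{K}$ in place of $K$. For $\overline{H}\otimes\overline{K}\cong\overline{K\otimes H}$, I would note that $\overline{L}$ of a Hilbert space $L=K\otimes H$ is linearly isometric to any space receiving an antiunitary from $L$. The composite $J_{\overline{K}\otimes\overline{H}}$-type map $z\otimes x\mapsto J_Kz\otimes J_Hx$ is antiunitary from $K\otimes H$ onto $\overline{K}\otimes\overline{H}$, since it preserves conjugated inner products on elementary tensors and then extends by density. Composing it with $J_{\overline{L}}$ gives a linear unitary $\overline{K\otimes H}\cong\overline{K}\otimes\overline{H}$, and we then flip to $\overline{H}\otimes\overline{K}$ by (ii). Next, $\overline{L}\cong L^\prime$ via $\Phi_{\overline{L}}$ from \sref{Theorem}{thm:Riesz_linear_version}. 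Lastly, $(K\otimes H)^\prime\cong(H\otimes K)^\prime$ follows by dualising the flip unitary of (ii), using \sref{Proposition}{prop:isom_isomorphisms_between_Banach_spaces}.

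The main obstacle I anticipate is canonicity and well-definedness rather than isometry. Every map must be defined basis-free, and every antilinear ingredient ($J$, adjoint) must appear an even number of times, so that the result is genuinely linear. The cleanest route is to prescribe each map on elementary tensors, verify via \sref{Theorem}{thm:Hilbert_Schmidt_and_tensor_product}(v) that inner products (or their conjugates) are preserved, and then extend by density, with surjectivity coming from the density of the image's span.
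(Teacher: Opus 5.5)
\textbf{Gap in (ii).} The step $H\otimes K\cong\overline{H}\otimes\overline{K}$ via $S\mapsto J_KSJ_H$ does not work as claimed. Your parity count only shows that each operator $J_KSJ_H\colon H\to\overline{K}$ is linear. Whether the assignment $S\mapsto J_KSJ_H$ is linear is a separate question. It depends only on how a scalar multiplying $S$ is transported, and that scalar passes through the left factor $J_K$ alone. Since scalars in $\mathfrak{S}_2(H,\overline{K})$ act through $\ast$, you get $J_K(\lambda S)J_H=\overline{\lambda}\,(J_KSJ_H)$. On dyads your map is $x\,\underline{\otimes}\,z\mapsto J_Hx\,\underline{\otimes}\,J_Kz$. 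This is exactly the elementary-tensor map that you yourself, correctly, call antiunitary in (iii). Your flip $S\mapsto J_{\overline{H}}S^\adj J_{\overline{K}}$ is different: there the semilinearity of $S\mapsto S^\adj$ cancels against that of left composition with $J_{\overline{H}}$, so the flip is linear.

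No better basis-free formula can rescue this step. Suppose $\Psi\colon H\otimes K\to\overline{H}\otimes\overline{K}$ is linear and satisfies $\Psi(U\otimes V)=(\overline{U}\otimes\overline{V})\Psi$ for all unitaries, where $\overline{U}:=J_HUJ_{\overline{H}}$. Taking $U=i\,Id_H$ and $V=Id_K$ gives $\overline{U}=-i\,Id_{\overline{H}}$. Hence $i\Psi=-i\Psi$, i.e.\ $\Psi=0$. A linear unitary exists only after fixing orthonormal bases, e.g.\ $S\mapsto C_K^{\mathscr{B}'}S\,C_H^{\mathscr{B}}$ with the basis-dependent unitaries of \eqref{eq:conjugation_operator}. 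The paper's proof has the same defect. The penultimate link of its chain in (ii), $\mathfrak{S}_2(\overline{K},H)\cong\mathfrak{S}_2(H,\overline{K})$ via $S\mapsto S^\adj$, is semilinear for the same reason. So this part of (ii) holds canonically only as an antiunitary, and linearly only non-canonically.

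\textbf{The rest, and comparison with the paper.} Apart from this, your plan is sound, but two typing slips need fixing.
\begin{itemize}
\item In the preliminary claim, the map on $\mathfrak{S}_2(\overline{H_1},K)$ must be $S\mapsto SC$, since a left factor $V$ cannot act on the $K$-valued $S$. It then sends $x\,\underline{\otimes}\,z$ to $Vx\,\underline{\otimes}\,z$.
\item In the flip, the left factor must be $J_{\overline{H}}$, because $S^\adj$ takes values in $\overline{H}$.
\end{itemize}
With this correction your flip coincides with the paper's $R\mapsto\Phi_H^{-1}R^\prime\,\Phi_{\overline{K}}$; both send $x\,\underline{\otimes}\,z$ to $z\,\underline{\otimes}\,x$. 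Your (i) is the same argument as the paper's.

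In (iii) you take a different route. The paper obtains $\overline{H}\otimes\overline{K}\cong\overline{K\otimes H}$ from trace duality $\mathfrak{S}_2(H,\overline{K})\cong(\mathfrak{S}_2(\overline{K},H))^\prime$, $T\mapsto\mathrm{tr}(\,\cdot\,T)$, followed by the linear Fr\'echet--Riesz map. You instead build the antiunitary $z\otimes x\mapsto J_Kz\otimes J_Hx$ on elementary tensors, compose with $J_{\overline{K\otimes H}}$, and then flip. Your route is more elementary and keeps the linear/semilinear bookkeeping explicit. The paper's route avoids the density-and-extension argument but relies on trace duality for Hilbert--Schmidt operators.
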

\begin{proof}
Due to 
\sref{Corollary}{cor:char_of_tensor_product_of_HS_as_Hilbert_Schmidt}, a 
convenient operator theoretic proof of all statements can be provided.\\[0.2em]
(i) instantly follows from \sref{Theorem}{thm:Riesz_linear_version} and
\sref{Corollary}{cor:char_of_tensor_product_of_HS_as_Hilbert_Schmidt}.\\[0.2em]
\noindent
(ii) Because of \sref{Remark}{rem:linear_Riesz_vs_semilinear_Riesz} and 
\sref{Theorem}{thm:Hilbert_Schmidt_and_tensor_product}, a remaining simple 
calculation implies that 
\[
\id{S}{2}{}{}{}\,(\overline{H}, K) \stackrel{\cong}{\longrightarrow} 
\id{S}{2}{}{}{}\,(\overline{K}, H), R \mapsto \Phi_H^{-1}\, R^\prime\,
\Phi_{\overline{K}}
\]  
is a well-defined, canonical isometric isomorphism, whose inverse is given by
$\id{S}{2}{}{}{}\,(\overline{K}, H) \ni S \mapsto \Phi_K^{-1} S^\prime 
\Phi_{\overline{H}}$. Thus,
\[
H \otimes K \cong \id{S}{2}{}{}{}\,(\overline{H}, K) \cong 
\id{S}{2}{}{}{}\,(\overline{K}, H) \cong K \otimes H \cong 
\id{S}{2}{}{}{}\,(\overline{K}, H) \cong \id{S}{2}{}{}{}\,(H, \overline{K}) 
\cong \overline{H} \otimes \overline{K}\,.
\]
The penultimate isometric equality follows from 
\sref{Theorem}{thm:Hilbert_Schmidt_and_tensor_product}-(i), of course.\\[0.2em]
(iii) Clearly, only the second isometric equality requires an 
explicit specification of the mapping rule. Observe that it does not follow 
from (ii) (due to the semilinearity issue)! Because of the crucial trace 
duality theorem for Hilbert-Schmidt operators namely (cf. 
\cite[Theorem 20.2.6]{J1981}), it follows that 
$\id{S}{2}{}{}{}\,(H, \overline{K}) \stackrel{\cong}{\longrightarrow} 
(\id{S}{2}{}{}{}\,(\overline{K}, H))^\prime, T \mapsto \tup{tr}(\bcdot\,T)$ is 
a well-defined isometric isomorphism, which does not depend on the choice of the 
underlying orthonormal bases of $H$ and $K$, respectively. 
\sref{Theorem}{thm:Riesz_linear_version} therefore implies 
that 
\[
\overline{H} \otimes \overline{K} \cong \id{S}{2}{}{}{}\,(H, \overline{K}) \cong 
(\id{S}{2}{}{}{}\,(\overline{K}, H))^\prime \cong (K \otimes H)^\prime \cong 
\overline{K \otimes H}\,.
\]
\end{proof}
\begin{example}[\textbf{No-cloning theorem}]\label{ex:no_cloning_theorem}
Fix a Hilbert space $H$ over $\C$, and let $x, y \in S_H : = \{x \in H : 
\Vert x \Vert = 1\}$, such that $x$ is not orthogonal to $y$ and $x$ and $y$ 
are not linearly dependent. Let $e \in S_H\setminus\{x,y\}$. Is it possible to 
construct a unitary operator (i.e., an onto isometry) which ``copies'' the 
state vector $a \in S_H$ by means of the entangled quantum system $a \otimes e$ 
to the entangled quantum system $a \otimes a$, 
for all $a \in \{x,y\}$? The answer is negative. It is impossible ``to create 
an identical copy of an arbitrary quantum state''. More precisely, the 
following well-known result holds:
\begin{noc*}
Let $H$ be a Hilbert space over $\C$ and $x, y \in S_H$ be two unit vectors in 
$H$. Fix some $e \in S_H\setminus\{x,y\}$. If $x$ is not orthogonal to $y$ and 
$x$ and $y$ are not linearly dependent, then there is no unitary operator 
$U : H {\,\otimes\,} H \longrightarrow H {\,\otimes\,} H$, no 
$\zeta_{x,e} \in \T$ and no $\zeta_{y,e} \in \T$ (where $\T : = 
\{z \in \C : \vert z \vert = 1\}$), such that $U(x \otimes e) = 
\zeta_{x,e}\,x\otimes x$ and $U(y \otimes e) = \zeta_{y,e}\,y\otimes y$. 
\end{noc*} 
\begin{proof}
Assume by contradiction that such two pairs $(U, \zeta_{x,e})$ and 
$(U, \zeta_{y,e})$ exist. Then
\begin{align*}
\vert\langle x\mid y\rangle\vert^2 &= \vert\langle x\mid y\rangle^2\vert = 
\vert\langle x \otimes x \mid y \otimes y \rangle\vert = 
\vert \langle U(x \otimes e) \mid U(y \otimes e)\rangle\vert\\
&= \vert \langle x \otimes e \mid y \otimes e\rangle\vert = 
\vert\langle x\mid y\rangle \vert\,\vert\langle e\mid e\rangle\vert = 
\vert\langle x\mid y\rangle\vert.
\end{align*}
Thus, $\vert\langle x\mid y\rangle\vert \in \{0,1\}$, implying that either
$x$ is orthogonal to $y$ or $\vert\langle x\mid y\rangle\vert = 
\Vert x \Vert\,\Vert y \Vert$ (since $\Vert x \Vert = 1 = \Vert y \Vert$ by 
assumption). However, in the latter case, the Cauchy-Schwarz inequality even 
would turn into an equality, implying that in this case $x$ and $y$ would be 
linearly dependent, which contradicts the assumption.
\end{proof}
\end{example}
A nontrivial dependence on the choice of the orthonormal basis of a 
given Hilbert space is also reflected in the following known result: 
\begin{proposition}\label{prop:tensor_product_as_direct_sum}
Let $H$ and $K$ be two Hilbert spaces and $\{f_j : j \in J\}$ an 
orthonormal basis of $K$. Then 
\[
H \,{\otimes}\, K \cong \bigoplus_{j \in J}H \equiv H^J
\]
are isometrically isomorphic. The onto isometry $\psi : H \,{\otimes}\, K 
\longrightarrow H^J$ satisfies
\[
\psi(x \,{\otimes}\, z) : = \big(\langle z , f_j\rangle_K\,x\big)_{j \in J} = 
\big((J_K z \,{}_{\overline{\scriptstyle{K}}}\underline{\otimes}_
{\scriptstyle{H}}\, x)f_j \big)_{j \in J}\,\text{ for all } (x,z) \in H \times K.
\]
Its inverse is given by
\[
\psi^{-1}(w) = \sum_{j \in J} w_j \,{\otimes}\, f_j\,\text{ for all } w = 
(w_j)_{j \in J} \in H^J.
\]
In particular, if $z \in H \,{\otimes}\, K$, then there exists a uniquely determined 
$\xi \equiv (\xi_j)_{j \in J} \in H^J$, such that $z = \sum_{j \in J} 
\xi_j \otimes f_j$ and $\Vert \xi \Vert_{H^J} = \Vert z \Vert_{H \,{\otimes}\, K}$. 
Similarly, if $\{e_i : i \in I\}$ an orthonormal basis of $H$, then
\[
H \,{\otimes}\, K \cong \bigoplus_{i \in I}K \equiv K^I.
\]
\end{proposition}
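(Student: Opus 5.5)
We plan to prove \sref{Proposition}{prop:tensor_product_as_direct_sum} by working in the concrete model of \sref{Corollary}{cor:char_of_tensor_product_of_HS_as_Hilbert_Schmidt} and the swap isometry of \sref{Theorem}{thm:structure_of_Hilb_space_tp}-(ii). Since $H\otimes K\cong K\otimes H\cong\id{S}{2}{}{}{}\,(\overline{K},H)$ canonically, with $x\otimes z\mapsto J_K z\,{}_{\overline{K}}\underline{\otimes}{}_H\,x$, it suffices to build an onto isometry $\id{S}{2}{}{}{}\,(\overline{K},H)\to H^J$. Note that $\{f_j : j\in J\}$ is also an orthonormal basis of $\overline{K}$, as observed after \eqref{eq:structure_of_conj_HS}. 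We therefore define
\[
\Psi : \id{S}{2}{}{}{}\,(\overline{K},H)\longrightarrow H^J,\quad R\mapsto (Rf_j)_{j\in J}.
\]
This map is linear. It is isometric by the definition of $\sigma_2$: we have $\sigma_2^2(R)=\sum_{j}\Vert Rf_j\Vert_H^2=\Vert\Psi R\Vert_{H^J}^2$, and this is independent of the chosen basis by \eqref{eq:T_HS_iff_T_ast_HS_and_indep_of_ONB}. We then set $\psi$ to be $\Psi$ composed with the canonical identification $H\otimes K\cong\id{S}{2}{}{}{}\,(\overline{K},H)$.

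Next we verify the formula on elementary tensors. By \eqref{eq:dyad}, we have $(J_Kz\,{}_{\overline{K}}\underline{\otimes}{}_H\,x)f_j=\langle f_j,J_{\overline K}J_K z\rangle_{\overline{K}}\,x$. Unwinding the inner product of $\overline{K}$ via \eqref{eq:structure_of_conj_HS}, this equals $\langle z,f_j\rangle_K\,x$. This is the claimed mapping rule, and the only care needed is to track which space's inner product and scalar multiplication are in force.

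Surjectivity is the main step. Given $w=(w_j)\in H^J$, so that $\sum_j\Vert w_j\Vert^2<\infty$, we consider the net of finite sums $\sum_{j\in M}w_j\otimes f_j$ over $M\in\mathcal F(I)$. We apply $\psi$ to each such sum. For a single term, $\psi(w_j\otimes f_j)$ is the family $(\langle f_j,f_k\rangle w_j)_k=w_je_j$, where $e_j$ has $w_j$ in slot $j$ and zeros elsewhere. Hence $\psi(\sum_{j\in M}w_j\otimes f_j)$ is the truncation of $w$ to $M$. These truncations converge to $w$ in $H^J$, so by isometry the net is Cauchy in the complete space $H\otimes K$. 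Its limit $\sum_j w_j\otimes f_j$ is mapped to $w$ by continuity. This gives both ontoness and the stated inverse formula.

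The uniqueness statement for $\xi$ is then just $\xi=\psi(z)$, combined with injectivity and norm preservation. The version $H\otimes K\cong K^I$ follows by the same argument applied to $\id{S}{2}{}{}{}\,(\overline{H},K)$ directly, or from the symmetry $H\otimes K\cong K\otimes H$.
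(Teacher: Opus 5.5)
Your proof is correct in substance but follows a different line from the paper's. The paper takes an algebraic isomorphism from \cite[Remark 2.3]{DF1993}. It then expands $z=\sum_{(i,j)}\lambda_{ij}\,e_i\otimes f_j$ in the product basis, reads off $\psi(z)=\big(\sum_{i}\lambda_{ij}e_i\big)_{j\in J}$, and obtains $\Vert\psi(z)\Vert^2_{H^J}=\sum_{i,j}\vert\lambda_{ij}\vert^2=\Vert z\Vert^2$ by Parseval. You instead move to $\mathfrak{S}_2(\overline{K},H)$, and this buys three things:
\begin{enumerate}
\item $R\mapsto(Rf_j)_{j\in J}$ is isometric by the very definition of $\sigma_2$, with \eqref{eq:T_HS_iff_T_ast_HS_and_indep_of_ONB} giving basis independence, so no auxiliary basis of $H$ is needed.
\item $\psi$ is defined on the completed space from the start, rather than extended from $H\otimes_{\textup{alg}}K$.
\item Your Cauchy-net argument proves ontoness and the inverse formula explicitly. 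The paper leaves this implicit, and an isomorphism on the algebraic tensor product does not by itself reach all of $H^J$.
\end{enumerate}
The $K^I$ case is even more direct in your setting: apply $R\mapsto(Re_i)_{i\in I}$ on $\mathfrak{S}_2(\overline{H},K)$, with no swap.

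One bookkeeping point needs correcting.
\begin{enumerate}
\item By \eqref{eq:dyad_conj_HS_cases}, $J_Kz\,{}_{\overline{K}}\underline{\otimes}{}_H\,x=\langle\cdot,z\rangle_K\,x$ acts on $K$, not on $\overline{K}$. So it lies in $\mathfrak{S}_2(K,H)$, and $(J_Kz\,{}_{\overline{K}}\underline{\otimes}{}_H\,x)f_j=\langle f_j,z\rangle_K\,x=\overline{\langle z,f_j\rangle_K}\,x$.
\item In your evaluation you used $\langle\cdot,\cdot\rangle_{\overline{K}}$, whereas \eqref{eq:dyad} prescribes the inner product of $\overline{\overline{K}}=K$. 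That slip cancels the wrong choice of dyad, which is why you landed on the right formula.
\item The image of $x\otimes z$ under the swap $R\mapsto\Phi_H^{-1}R^\prime\Phi_{\overline{K}}$ of \sref{Theorem}{thm:structure_of_Hilb_space_tp}-(ii) is $z\,{}_K\underline{\otimes}{}_H\,x=\langle\cdot,J_Kz\rangle_{\overline{K}}\,x\in\mathfrak{S}_2(\overline{K},H)$. Evaluating this at $f_j$, regarded as an element of $\overline{K}$, does give $\langle z,f_j\rangle_K\,x$.
\end{enumerate}
The statement itself contains the same mislabelled dyad. With this substitution your argument goes through unchanged. Also, your finite index sets $M$ should range over $\mathcal{F}(J)$, not $\mathcal{F}(I)$.
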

\begin{proof}
\cite[Remark 2.3]{DF1993}, together with \eqref{eq:dyad} implies that $\psi$ is 
a well-defined (algebraic) isomorphism. So, we only have to show that $\psi$ is 
an isometry. To this end, let $z \in H \,{\otimes}\, K$. Let $\{e_i : i \in I\}$ be 
an orthonormal basis of $H$. Then $z = \sum\limits_{(i,j) \in I \times J}\lambda_{ij}\,
e_i \,{\otimes}\, f_j$, where the  numbers $\lambda_{ij}$ satisfy 
$\sum\limits_{(i,j) \in I \times J}\vert \lambda_{ij}\vert^2 = 
\Vert z \Vert_{H \,{\otimes}\, K}^2 < \infty$ (due to 
\eqref{eq:HS_norm}).
Hence, $\psi(z) = \big(\sum_{i \in I}\lambda_{ij}\,e_i\big)_{j \in J}$, and it 
follows that
\[
\Vert \psi(z) \Vert_{{H^J}}^2 = 
\sum_{j \in J}\big\langle \sum_{i \in I}\lambda_{ij}\,e_i, \sum_{\nu \in I}
\lambda_{\nu j}\,e_\nu\big\rangle_H = \sum_{i \in I}\sum_{j \in J}\vert 
\lambda_{ij}\vert^2 = \Vert z \Vert_{H \,{\otimes}\, K}^2.
\]
\end{proof}
\begin{example}[\bf{The Hilbert space $\C^m_2 \,{\otimes}\, \C^n_2$ and the Kronecker 
product}]\label{ex:Kronecker_product}
Fix $m,n \in \N$. \sref{Proposition}{prop:tensor_product_as_direct_sum} clearly 
implies the well-known isometric identity
\[
\C^n_2 \,{\otimes}\, \C^m_2 \cong \C^{nm}_2 = \C^{mn}_2 \cong \C^m_2 \,{\otimes}\, \C^n_2\,,
\]
where $\C^k_2$ denotes the finite-dimensional vector space $\C^k$ (of 
column vectors), equipped with the Euclidean norm $\Vert \cdot \Vert_2$ 
($k \in \N$). Let $(x,y) \in \C^m \times \C^n$. The construction of $\psi$ 
implies that $\psi(y \,{\otimes}\, x) = (x_1 y, \ldots, x_m y) \in (\C^n)^m$. 
Thus, the following concrete mapping rule is well-defined: 
\[
\C^m_2 \,{\otimes}\, \C^n_2 \ni y \,{\otimes}\, x \mapsto 
x \otimes_{\text{Kr}} y \equiv \text{vec}(y x^\top) \equiv
\text{vec}({\mid\! y\rangle}{\langle x\!\mid}) \in \C^{mn}, 
\]
where the - noncommutative - Kronecker product of $x \in \C^n \cong 
\M_{n,1}(\C)$ and $y \in \C^m \cong \M_{m,1}(\C)$ is given by 
\[
x \otimes_{\text{Kr}} y : = (x_1 y^\top\,\brokenvert\,x_2 y^\top\,\brokenvert\,
\ldots\brokenvert\,x_n y^\top)^\top \in \C^{mn} \cong \M_{mn,1}(\C)\,.  
\]
$\text{vec}(y x^\top) : = (x_1 y_1, \ldots, x_1 y_m,\ldots, x_n y_1, \ldots, 
x_n y_m)^\top \in \C^{mn} \cong \M_{mn,1}(\C)$ denotes the \textit{column} 
vector constructed by stacking the columns of the matrix $y x^\top \in 
\M_{m,n}(\C)$ on top of each other (cf., e.g., {\cite[Section 1.2 and 
(3.4.4)]{Oe2024}}). Observe also that 
\[
\text{vec}(\tp{x}{-0.4ex}{\scbox{0.70}{$\C_2^n$}}{-0.45ex}{\scbox{0.70}
{$\C_2^m$}}{y}) \stackrel{\eqref{eq:rank_one_matrices_II}}{=} \text{vec}
(y x^\top) = \text{vec}(y \overline{x}^\adj) \equiv x \otimes_{\text{Kr}} y 
\stackrel{\eqref{eq:rank_one_matrices}}{=} \text{vec}(\tp{J_{\C_2^n} x}{-0.4ex}
{\scbox{0.70}{$\overline{\C_2^n}$}}{-0.45ex}{\scbox{0.70}{$\C_2^m$}}{y})\,.
\]
Consequently, since
\[
\text{vec} : (\M_{m,n}(\C), \sigma_2(\cdot)) 
\stackrel{\cong}{\longrightarrow} \C_2^{mn}, A \mapsto \text{vec}(A)
\]
is an isometric isomorphism, it follows from 
\sref{Example}{eq:conjugate_HS_in_the_separable_case}, 
\eqref{eq:rank_one_matrices} and 
\sref{Corollary}{cor:char_of_tensor_product_of_HS_as_Hilbert_Schmidt} that also
\begin{align}\label{eq:HS_tensor_product_vs_Kronecker_product}
\C_2^n \otimes \C_2^m \cong \overline{\C_2^n} \otimes \C_2^m \cong 
(\M_{m,n}(\C), \sigma_2(\cdot)) \cong \C_2^{mn} \cong 
(\M_{mn,1}(\C), \sigma_2(\cdot))  
\end{align}
is a (linear) isometric isomorphism, which maps an arbitrary 
$\sum\limits_k x_k \otimes y_k \in \C_2^n \otimes \C_2^m$ onto the 
Kronecker product $\sum\limits_k x_k \otimes_{\text{Kr}}\,y_k \in 
\M_{mn,1}(\C)$ (via $\sum\limits_k x_k \otimes y_k \mapsto \sum\limits_k
J_{\C_2^n} x_k \otimes y_k \mapsto \sum\limits_k y_k \overline{x_k}^{\,\adj} = 
\sum\limits_k y_k x_k^\top \mapsto \sum\limits_k\text{vec}(y_k x_k^\top) \equiv 
\sum\limits_k x_k \otimes_{\text{Kr}}y_k$). However, 
in general $x \otimes_{\text{Kr}} y \not= y \otimes_{\text{Kr}} x$ (even if 
$m=n$). Here, we should recall the fact (for example, from 
{\cite[Section 3.4]{Oe2024}}) that
\begin{align*}
\sum\limits_k x_k \otimes_{\text{Kr}} y_k = 
K_{m,n}\sum\limits_k y_k \otimes_{\text{Kr}} x_k\,,
\end{align*}
where the so called commutation matrix $K_{m,n} : = 
\sum_{i=1}^m\sum_{j=1}^n e_i\,e_j^\top \otimes_{\text{Kr}} e_j\,e_i^\top \in 
M_{mn}(\R)$ is unitary. In summary, we recognise that in 
the finite-dimensional case we just have to compute the -  explicitly 
computable - Kronecker product of two vectors in $\C^m$ and $\C^n$ to capture 
the tensor product structure of $\C^m_2 \,{\otimes}\, \C^n_2$ - and hence of any 
$k$-fold tensor product of finite-dimensional Hilbert spaces - completely 
(cf. \sref{Definition}{def:n_fold_tp} and 
\sref{Example}{ex:quantum_teleportation_matrix}).
\end{example}
At first glance, if $H, K$ and $L$ are  Hilbert spaces, it 
seems to be a simple exercise that the  Hilbert space $H \,{\otimes}\, 
(K \,{\otimes}\, L)$ coincides isometrically with the  Hilbert space 
$(H \,{\otimes}\, K) \,{\otimes}\, L$ (because of 
\sref{Theorem}{thm:Hilbert_Schmidt_and_tensor_product}-(vii) and 
\sref{Proposition}{prop:tensor_product_as_direct_sum}). However, the 
proof of the next result shows, that an explicit description of this (true) 
result in terms of Hilbert-Schmidt operators is not as obvious as it seems 
(cf. also \cite[Exercise 4.4 and Chapter 29]{DF1993}). 
\begin{theorem}\label{thm:associativity_law_of_HS_tensorprod}
Let $H, K, L$ be  Hilbert spaces. Put $H_1 : = 
\mathfrak{S}_2(\overline{K}, L)$ and $H_2 : = \mathfrak{S}_2(\overline{H}, K)$
Then the equalities 
\[
H \,{\otimes}\, (K \,{\otimes}\, L) \cong \mathfrak{S}_2(\overline{H}, 
H_1) \cong \mathfrak{S}_2(\overline{H_2}, L) \cong (H \,{\otimes}\, K) \,
{\otimes}\, L
\]
hold isometrically and canonically. Thereby, each tensor element 
$x \otimes (z \otimes {w}) \in H \,{\otimes}\, (K \,{\otimes}\, L)$ 
is isometrically mapped onto $(x \otimes z) \otimes {w} \in 
(H \,{\otimes}\, K) \,{\otimes}\, L$.
\end{theorem}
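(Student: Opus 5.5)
The outer two isometries come directly from \sref{Corollary}{cor:char_of_tensor_product_of_HS_as_Hilbert_Schmidt}. It gives $K \otimes L \cong H_1$ via $U_\otimes$, and the first statement of \sref{Theorem}{thm:structure_of_Hil b_space_tp} then yields $H \otimes (K\otimes L) \cong H \otimes H_1 \cong \mathfrak{S}_2(\overline{H}, H_1)$. Under these maps $x\otimes(z\otimes w)$ goes to the dyad $x\,{}_H\underline{\otimes}{}_{H_1}\,(z\,{}_K\underline{\otimes}{}_L\,w)$. Symmetrically, $(H\otimes K)\otimes L \cong H_2\otimes L \cong \mathfrak{S}_2(\overline{H_2},L)$, with $(x\otimes z)\otimes w \mapsto (x\,{}_H\underline{\otimes}{}_K\,z)\,{}_{H_2}\underline{\otimes}{}_L\,w$. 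The whole task therefore reduces to constructing a canonical unitary
\[
V : \mathfrak{S}_2(\overline{H}, H_1) \longrightarrow \mathfrak{S}_2(\overline{H_2}, L)
\]
that sends these elementary tensors to each other.

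I would build $V$ by specifying it on the orthonormal basis of \sref{Theorem}{thm:Hilbert_Schmidt_and_tensor_product}-(vii) and then checking that it is basis independent. Fix orthonormal bases $\{e_i\}$, $\{f_j\}$, $\{g_k\}$ of $H,K,L$. By (vii), applied twice with $\overline{\overline{H}} = H$, the family $\{e_i\,\underline{\otimes}\,(f_j\,\underline{\otimes}\,g_k)\}_{(i,j,k)}$ is an orthonormal basis of $\mathfrak{S}_2(\overline{H},H_1)$. Likewise $\{(e_i\,\underline{\otimes}\,f_j)\,\underline{\otimes}\,g_k\}_{(i,j,k)}$ is an orthonormal basis of $\mathfrak{S}_2(\overline{H_2},L)$. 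Mapping one basis to the other defines a unitary $V$, since a bijection between orthonormal bases extends to a unitary operator.

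Basis independence is the main obstacle. To handle it, I would verify that $V$ satisfies
\[
V\big(x\,\underline{\otimes}\,(z\,\underline{\otimes}\,w)\big) = (x\,\underline{\otimes}\,z)\,\underline{\otimes}\,w
\quad\text{for all } x\in H,\ z\in K,\ w\in L.
\]
Both sides are trilinear in $(x,z,w)$ by the bilinearity of $\underline{\otimes}$ in \eqref{eq:algebraic_tp_of_two_Hilbert_spaces}. Both are continuous, because $\sigma_2$ of a dyad is the product of the norms. Expanding $x,z,w$ in the chosen bases and passing to the limit through finite partial sums therefore gives the identity. By \sref{Theorem}{thm:Hilbert_Schmidt_and_tensor_product}-(v) and (vi), the linear span of these elementary elements is dense. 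Hence any bounded operator with this property is unique, and $V$ does not depend on the bases chosen.

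The delicate point is the continuity and trilinearity of the inner dyad map $(z,w)\mapsto z\,\underline{\otimes}\,w \in H_1$, combined with the outer dyad construction. The scalar multiplication $\diamond$ of $H_1$ enters in \eqref{eq:dyad}, so it has to be tracked carefully. One could alternatively verify $V$ intrinsically, by checking via \eqref{eq:key_representation} that inner products of elementary tensors agree on both sides, both being $\langle x_1,x_2\rangle\langle z_1,z_2\rangle\langle w_1,w_2\rangle$ by (v) applied twice. That gives a fully canonical definition of $V$ on a dense subspace, which then extends by isometry. Composing the four isometries proves the theorem.
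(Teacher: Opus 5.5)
Your argument is correct, but it reaches the middle isomorphism by a different mechanism than the paper.

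The paper writes down $\Psi:\mathfrak{S}_2(\overline{H},H_1)\to\mathfrak{S}_2(\overline{H_2},L)$ by an explicit contraction formula, $\Psi(T)(J_{H_2}R)=\sum_{(i,j)}\overline{\langle RJ_He_i,f_j\rangle_K}\,(TJ_He_i)J_Kf_j$. It then proves:
\begin{itemize}
\item boundedness by a H\"older estimate;
\item isometry by expanding $\sigma_2(\Psi(T))$ in the bases $\{e_i\},\{f_j\},\{g_k\}$ via (iv) and (vii);
\item surjectivity by expanding a given $S$ in the orthonormal basis of (vii) and writing down a preimage $T$.
\end{itemize}
Basis independence is argued there only by tracking how the summands behave when basis vectors are rescaled.

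You get isometry and surjectivity at once, because a bijection between orthonormal bases extends to a unitary. You then secure canonicity by showing $V\big(x\,\underline{\otimes}\,(z\,\underline{\otimes}\,w)\big)=(x\,\underline{\otimes}\,z)\,\underline{\otimes}\,w$ for all $x,z,w$. This uses trilinearity, joint continuity (from $\sigma_2(x\,\underline{\otimes}\,t)=\Vert x\Vert\,\sigma_2(t)$) and density. Your alternative, extending the inner-product-preserving map on elementary tensors via (v), is the standard basis-free construction and also works.

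What your route buys:
\begin{itemize}
\item It is shorter.
\item It proves the asserted mapping rule for arbitrary elementary tensors, whereas the paper's proof checks the action of $\Psi$ only on basis tensors.
\item It gives a cleaner proof of basis independence.
\item It makes explicit the two outer identifications via $U_\otimes$, which the paper leaves implicit.
\end{itemize}
The paper's route, in exchange, produces a concrete formula for the associator as an operation on Hilbert--Schmidt operators. That is convenient if you want to compute $\Psi(T)$ directly.
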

\begin{proof}
\sref{Theorem}{thm:Hilbert_Schmidt_and_tensor_product} allows us to provide
an explicit construction of the required isometric isomorphism 
between $\mathfrak{S}_2(\overline{H}, H_1)$ and $\mathfrak{S}_2
(\overline{H_2}, L)$. So, let $T \in \mathfrak{S}_2(\overline{H}, H_1)$ and 
$R \in H_2 = \mathfrak{S}_2(\overline{H}, K)$. Let $\{e_i : i \in I\}$ 
be an orthonormal basis of $H = \overline{\overline{H}}$ and 
$\{f_j : j \in J\}$ an orthonormal basis of $K$. Then $\{\tp{e_i}{-0.45ex}
{\scbox{0.70}{$H$}}{-0.5ex}{\scbox{0.70}{$K$}}{f_j} : (i,j) \in I \times J\}$ 
is an orthonormal basis of the Hilbert space $H_2$ (due to 
\sref{Theorem}{thm:Hilbert_Schmidt_and_tensor_product}-(vii)). H\"{o}lder's 
inequality therefore implies that
\[
w_{R,T} : = 
\sum_{(i,j) \in I \times J}\langle J_{H_2}R, J_{H_2}(\tp{e_i}{-0.45ex}
{\scbox{0.70}{$H$}}{-0.5ex}{\scbox{0.70}{$K$}}{f_j})\rangle_{\overline{H_2}}\,
(T\,J_H e_i)J_K f_j \stackrel{\eqref{eq:key_representation}}{=}
\sum_{(i,j) \in I \times J}\overline{\langle R\,J_H e_i, f_j\rangle_K}\,
(T\,J_H e_i)J_K f_j \in L
\] 
is well-defined. Indeed, we have: 
\begin{align*}
\Vert w_{R, T}\Vert_L &\leq \big(\sum\limits_{(i,j) \in I \times J}
\vert\langle R\,J_H e_i, f_j\rangle_K\vert^2\big)^{1/2}\,
\big(\sum\limits_{(i,j) \in I \times J}\Vert(T\,J_H e_i)
J_K f_j\Vert_L^2\big)^{1/2}\\
&= \big(\sum\limits_{(i,j) \in I \times J}
\big\vert\langle R\,J_H e_i, f_j\rangle_K\big\vert^2\big)^{1/2}\,
\big(\sum\limits_{i \in I}\Vert(T\,J_H e_i)\Vert_{H_1}^2\big)^{1/2}
\stackrel{\eqref{eq:HS_norm}}{\leq} 
\Vert R\Vert_{H_2}\,\Vert T\Vert_{\mathfrak{S}_2}\,.
\end{align*} 
Moreover, $w_{R,T}$ does not depend on the choice of the orthonormal bases in 
$H$ and in $K$. \sref{Theorem}{thm:Hilbert_Schmidt_and_tensor_product}-(iii) 
namely implies on the one hand that
\begin{align}\label{eq:first_part}
\langle J_{H_2} R,\,J_{H_2}(\tp{\lambda e_i}{-0.45ex}
{\scbox{0.70}{$H$}}{-0.5ex}{\scbox{0.70}{$K$}}{\mu f_j})
\rangle_{\overline{H_2}} = 
\langle\tp{\lambda e_i}{-0.45ex}{\scbox{0.70}{$H$}}{-0.5ex}{\scbox{0.70}{$K$}}
{\mu f_j}, R \rangle_{H_2} =
\lambda\mu\langle \tp{e_i}{-0.45ex}{\scbox{0.70}{$H$}}{-0.5ex}
{\scbox{0.70}{$K$}}{f_j}, R \rangle_{H_2} 
\end{align}
for all $\lambda, \mu \in \C$ and $(i,j) \in I \times J$. On the other hand, 
the given structure of $T$ (including the definition of $H_1$) gives
\begin{align}\label{eq:second_part}
T\big(J_H(\lambda e_i)\big)\big(J_K(\mu f_j)\big) = 
T(\overline{\lambda}\ast J_H e_i)(\overline{\mu}\ast J_K f_j) = 
\overline{\lambda}\,\overline{\mu}(T J_H e_i)J_K f_j
\end{align}
for all $\lambda, \mu \in \C$ and $(i,j) \in I \times J$. Consequently, 
\[
\Psi : \mathfrak{S}_2(\overline{H}, H_1) \longrightarrow \mathfrak{S}_2
(\overline{H_2}, L), T \mapsto (R \mapsto w_{R, T})
\]
is a well-defined bounded canonical linear operator. In particular, 
\[
\Psi(T)\big(J_{H_2}(\tp{e_i}{-0.45ex}{\scbox{0.70}{$H$}}{-0.5ex}
{\scbox{0.70}{$K$}}{f_j})\big) = (T J_H e_i)J_K f_j 
\]
for all $(i,j) \in I \times J$. Next, we verify that $\Psi$ is an isometry. To 
this end, let $\{g_k : k \in K\}$ be an arbitrarily chosen orthonormal basis of 
$L$. \sref{Theorem}{thm:Hilbert_Schmidt_and_tensor_product}-(iv) implies that  
\[
\langle T\,J_H e_i, \tp{f_j}{-0.45ex}{\scbox{0.70}{$K$}}{-0.5ex}
{\scbox{0.70}{$L$}}{g_k}\rangle_{H_1} =
\big\langle(T J_H e_i)J_K f_j, g_k\big\rangle_L =
\big\langle\Psi(T)\big(J_{H_2}(\tp{e_i}{-0.45ex}{\scbox{0.70}{$H$}}{-0.5ex}
{\scbox{0.70}{$K$}}{f_j})\big), g_k\big\rangle_L
\]
for all $(i, j, k) \in I \times J \times K$. Hence,
\begin{align*}
\Vert \Psi(T)\Vert_{\mathfrak{S}_2}^2 & \stackrel{\eqref{eq:HS_norm}}{=} 
\sum_{((i,j), k) \in (I \times J) \times K}
\big\vert\big\langle\Psi(T)\big(J_{H_2}(\tp{e_i}{-0.45ex}
{\scbox{0.70}{$H$}}{-0.5ex}{\scbox{0.70}{$K$}}{f_j})\big), g_k\big\rangle_L
\big\vert^2\\ 
& \,\,\,= \sum_{(i, (j,k)) \in I \times (J\times K)} 
\big\vert\langle T J_H e_i, \tp{f_j}{-0.45ex}{\scbox{0.70}{$K$}}{-0.5ex}
{\scbox{0.70}{$L$}}{g_k}\rangle_{H_1}\big\vert^2
\stackrel{\eqref{eq:HS_norm}}{=} \Vert T\Vert_{\mathfrak{S}_2}^2\,.
\end{align*}
It remains to prove that $\Psi$ is onto. So, let $S \in \mathfrak{S}_2
(\overline{H_2}, L)$. It follows from \eqref{eq:HS_ONB_rep} that  
\[
S = \sum_{((i,j), k) \in (I \times J) \times K} \alpha_{ijk}\,
\big(\tp{e_i}{-0.45ex}{\scbox{0.70}{$H$}}{-0.5ex}
{\scbox{0.70}{$K$}}{f_j}\big)\tp{{}}{-0.4ex}{\scbox{0.70}{$H_2$}}{-0.5ex}
{\scbox{0.70}{$L$}}{g_k}\,, 
\]
where the family of  numbers $(\alpha_{ijk})_
{(i,j,k) \in I \times J \times K}$ satisfies $\Vert S\Vert_{\mathfrak{S}_2}^2 
\stackrel{\eqref{eq:HS_norm}}{=} \sum\limits_
{((i,j), k) \in (I \times J) \times K}\vert\alpha_{ijk}\vert^2 < \infty$. 
Consequently, the operator 
\[
T : = \sum_{i \in I}\sum_{(j,k) \in J \times K} \alpha_{ijk}\,\,
\tp{e_i}{-0.45ex}{\scbox{0.70}{$H$}}{-0.5ex}
{\scbox{0.70}{$H_1$}}{(\tp{f_j}{-0.45ex}{\scbox{0.70}{$K$}}{-0.5ex}
{\scbox{0.70}{$L$}}{g_k})} \in \mathfrak{S}_2(\overline{H}, H_1) 
\]
is well-defined (due to \eqref{eq:HS_ONB_rep} again). 
It namely satisfies
\[
\Vert T\Vert_{\mathfrak{S}_2}^2 = \sum\limits_{I \times (J \times K)} 
\vert\alpha_{i j k}\vert^2 = \sum\limits_{(I \times J) \times K} 
\vert\alpha_{i j k}\vert^2 = \Vert S\Vert_{\mathfrak{S}_2}^2 < \infty\,.
\] 
The construction of a dyad (cf. \eqref{eq:dyad}), together with the boundedness 
of $T$ clearly implies that for any $\nu \in I$, $T\,J_H e_\nu = 
\sum\limits_{(j,k) \in J \times K} \alpha_{\nu j k}\,
\tp{f_j}{-0.45ex}{\scbox{0.70}{$K$}}{-0.5ex}
{\scbox{0.70}{$L$}}{g_k}$.
Consequently,  
\[
\Psi(T)(\tp{e_\nu}{-0.45ex}{\scbox{0.70}{$H$}}{-0.5ex}
{\scbox{0.70}{$K$}}{f_\mu}) = (T J_H e_\nu)J_K f_\mu = \sum\limits_{k \in K}
\alpha_{\nu \mu k}\,g_k = S(\tp{e_\nu}{-0.45ex}{\scbox{0.70}{$H$}}{-0.5ex}
{\scbox{0.70}{$K$}}{f_\mu}) 
\]
for all $(\nu, \mu) \in I \times J$, and claim (vii) follows.
\end{proof}
Due to \sref{Theorem}{thm:associativity_law_of_HS_tensorprod}, the following 
recursive definition almost inevitably emerges:
\begin{definition}[\textbf{$n$-fold tensor product of Hilbert spaces}]
\label{def:n_fold_tp}
Let $n \in \N$ and $H_1, \ldots, H_n$ be Hilbert spaces. Put
\begin{align*}
{\bigotimes\limits_{i=1}^n}\, 
H_i : = \begin{cases} H_1 &\text{if } n=1\\ 
\big({\bigotimes\limits_{i=1}^{n-1}} 
\,H_i\big) \otimes H_n &\text{if } n > 1\,.\end{cases}
\end{align*}
\end{definition}   
Firstly, by means of a trivial proof by induction on $n$, it follows from 
\sref{Theorem}{thm:associativity_law_of_HS_tensorprod} at once that for any 
$n \in \N_{\geq 2}$,
\begin{align}\label{eq:rep_of_the_n_fold_tens_prod}
{\bigotimes\limits_{i=1}^n}\, 
H_i \cong H_1 \otimes \lb\bigotimes\limits_{i=2}^n\,H_i\rb 
\stackrel{(!)}{\cong} \id{S}{2}{}{}{}\lb \overline{H_1}, \,\bigotimes\limits_
{i=2}^n\,H_i\rb\,.
\end{align}
Consequently, a further proof by induction (now on the two variables $m \in \N$ 
and $n \in \N$), together with \eqref{eq:rep_of_the_n_fold_tens_prod} and a 
reiterative use of \sref{Theorem}{thm:associativity_law_of_HS_tensorprod} 
instantly leads to the ``multi-dimensional associativity rule'' for the 
$m+n$-fold tensor product (cf. \cite[Proposition 2.6.5]{KR1983}):
\begin{proposition}
Let $(m, n) \in \N \times \N$ and $H_1, \ldots, H_{m+n}$ $m+n$ Hilbert 
spaces. Then 
\[
\big({\bigotimes\limits_{i=1}^m}\, 
H_i\big) \otimes \big({\bigotimes\limits_{i=m+1}^{m+n}}\,H_i\big) \cong 
{\bigotimes\limits_{i=1}^{m+n}}\,H_i\,. 
\]
\end{proposition}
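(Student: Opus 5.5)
We plan a double induction on $(m,n)$ that uses nothing beyond \sref{Definition}{def:n_fold_tp} and \sref{Theorem}{thm:associativity_law_of_HS_tensorprod}. Two facts are used throughout. First, by \sref{Theorem}{thm:structure_of_Hilb_space_tp}, a canonical isometric isomorphism $H_1 \cong H_2$ induces canonical isometric isomorphisms $H_1 \otimes K \cong H_2 \otimes K$ and $H \otimes H_1 \cong H \otimes H_2$. Second, these isomorphisms can be chosen to map elementary tensors to elementary tensors, so all identifications stay canonical and compatible.

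For every $m \in \N$ we prove the claim by induction on $n$. For $n = 1$, \sref{Definition}{def:n_fold_tp} gives
\[
\big(\bigotimes_{i=1}^m H_i\big) \otimes H_{m+1} = \bigotimes_{i=1}^{m+1} H_i
\]
literally, so nothing needs to be shown.

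For the step $n \to n+1$, put $A := \bigotimes_{i=1}^m H_i$, $B := \bigotimes_{i=m+1}^{m+n} H_i$ and $C := H_{m+n+1}$. By \sref{Definition}{def:n_fold_tp}, applied to the family $H_{m+1}, \ldots, H_{m+n+1}$ re-indexed from $1$, we have $\bigotimes_{i=m+1}^{m+n+1} H_i = B \otimes C$. \sref{Theorem}{thm:associativity_law_of_HS_tensorprod} then yields the canonical isometry $A \otimes (B \otimes C) \cong (A \otimes B) \otimes C$, which sends $a \otimes (b \otimes c)$ to $(a \otimes b) \otimes c$. The induction hypothesis gives $A \otimes B \cong \bigotimes_{i=1}^{m+n} H_i$, and tensoring with $C$ via the first fact gives
\[
(A \otimes B) \otimes C \cong \Big(\bigotimes_{i=1}^{m+n} H_i\Big) \otimes H_{m+n+1} = \bigotimes_{i=1}^{m+n+1} H_i,
\]
where the last equality is again \sref{Definition}{def:n_fold_tp}. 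Composing these maps completes the step.

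The \textbf{main obstacle} is bookkeeping rather than analysis. The block $\bigotimes_{i=m+1}^{m+n} H_i$ is defined recursively starting at index $m+1$, and the iterated isomorphisms must be canonical. We would handle this by tracking the images of elementary tensors $x_1 \otimes \cdots \otimes x_{m+n}$, bracketed from the left, at every stage. Since the linear span of such tensors is dense (by \sref{Theorem}{thm:Hilbert_Schmidt_and_tensor_product}-(vii), iterated), the resulting isometry is uniquely determined by the rule
\[
(x_1 \otimes \cdots \otimes x_m) \otimes (x_{m+1} \otimes \cdots \otimes x_{m+n}) \mapsto x_1 \otimes \cdots \otimes x_{m+n}.
\]
This density also makes the result independent of the order in which the inductions are carried out.
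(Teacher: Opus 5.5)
Your argument is correct and is essentially the paper's proof: an induction driven by the associativity isomorphism of \sref{Theorem}{thm:associativity_law_of_HS_tensorprod} together with \sref{Definition}{def:n_fold_tp}. The paper phrases it as a double induction that also invokes \eqref{eq:rep_of_the_n_fold_tens_prod}. Your single induction on $n$ with $m$ fixed, using the compatibility of $\otimes$ with isometric isomorphisms stated in \sref{Theorem}{thm:structure_of_Hilb_space_tp}, already suffices and is slightly more economical.
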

\noindent Despite the apparent mathematical simplicity of our next example 
(\sref{Example}{ex:quantum_teleportation_matrix}), implied by an application of 
the Kronecker product, it encodes a significant interpretation of the surprising 
effect of correlations in quantum mechanics. Details on its importance in the 
foundations of quantum mechanics, including a neat derivation (which we roughly 
outline under the proof of our purely linear algebraic construction) and a very 
detailed discussion of the postulates of quantum mechanics can be found in the 
highly recommendable versions of \cite[Chapter 1.3.7 and Chapter 2.2]{NC2010}.

{In order to align our linear algebraic approach with previous methods}, 
we adopt Dirac's bra-ket formalism again. As usual in quantum information 
theory, the standard orthonormal basis $\{\vert 0 \rangle, \vert 1 \rangle, 
\ldots, \vert n-1 \rangle\} \equiv \{e_1, e_2 \ldots, e_n\}$ of $\C^n$ is called 
\textit{computational basis}. Occasionally, to emphasize the dependence on the 
dimension $n$, we put $e_i^{(n)} \equiv e_i \equiv \vert i-1 \rangle$ if $i \in [n]$.
Note the important fact that 
\begin{align}\label{eq:Kronecker_product_of_standard_ONB_vectors} 
e_{j}^{(n)} \otimes_{\text{Kr}} e_{i}^{(m)} = e_{(j-1)m+i}^{(nm)} 
\end{align}
for all $(i, j) \in [m] \times [n]$ (regarding details, cf. \cite[Chapter 3.4]
{Oe2024}). In particular,
\begin{align}\label{eq:Kronecker_product_of_standard_2_dim_ONB_vectors}
e_{i}^{(m)} \otimes_{\text{Kr}} e_{i}^{(m)} = e_{i(m+1)-m}^{(m^2)}\,,\, 
e_{1}^{(2)} \otimes_{\text{Kr}} e_{i}^{(m)} = e_i^{(2m)}\,\text{ and }\,
e_{2}^{(2)} \otimes_{\text{Kr}} e_{i}^{(m)} = e_{i+m}^{(2m)}
\end{align}
for all $i \in [m]$. Recall that in the case $n=2$ a unit (column) vector 
$(\alpha, \beta)^\top \equiv \alpha \vert 0 \rangle + \beta \vert 1 \rangle$ is 
called a \textit{qubit} (short for \textit{quantum bit}). Unlike a classic 
binary bit corresponding to exactly one of two possible states of a classical 
system (either 0 or 1), the quantum bit is a model of a coherent superposition 
of both states simultaneously; very pictorially described in the form of 
Schr\"odinger's famous cat-thought experiment: ``$\vert \text{cat} \rangle = 
\frac{1}{\sqrt{2}}\vert \text{cat is alive} \rangle + \frac{1}{\sqrt{2}}\vert 
\text{cat is dead} \rangle$''. 

Because of the well-known associativity of the Kronecker product (respectively 
\sref{Theorem}{thm:associativity_law_of_HS_tensorprod}), the common approach in 
quantum information theory to drop the tensor product sign ``$\otimes$'' in 
tensor product elements, describing ``correlated'' qubits (such as $\vert 0 1 \rangle
\vert 1 \rangle \equiv \vert 0\rangle \vert 1 1 \rangle \equiv \vert 0 1 1 \rangle 
\equiv \vert 0 \rangle \otimes \vert 1 \rangle \otimes \vert 1 \rangle \equiv 
e_1^{(2)} \otimes e_2^{(2)} \otimes e_2^{(2)}$) is quite useful. That approach 
allows the use of a rather compact symbolic language which is particularly 
helpful when the structure of correlated state vectors has to be taken into 
account (i.e., entangled state vectors - cf., e.g., \cite{NC2010}), as is the 
case for the following crucial
\begin{example}[\textbf{Quantum teleportation matrix}]
\label{ex:quantum_teleportation_matrix}
Consider the unit vector $\phi^+ : = \frac{1}{\sqrt{2}}(1, 0, 0, 1)^\top \in 
\C^4$ and
\[
T : =  \frac{1}{\sqrt{2}}
\begin{pmatrix}[cccc:cccc]
1 & 0 & 0 & 0 & \,\,\,0 & 0 & 1 & 0\\
0 & 1 & 0 & 0 & \,\,\,0 & 0 & 0 & 1\\
0 & 0 & 1 & 0 & \,\,\,1 & 0 & 0 & 0\\
0 & 0 & 0 & 1 & \,\,\,0 & 1 & 0 & 0\\ \hdashline
1 & 0 & 0 & 0 & \,\,\,0 & 0 & \!\!\!\!-1 & 0\\
0 & 1 & 0 & 0 & \,\,\,0 & 0 & 0 & \!\!\!\!-1\\
0 & 0 & 1 & 0 & \,\,\,\!\!\!\!-1 & 0 & 0 & 0\\
0 & 0 & 0 & 1 & \,\,\,0 & \!\!\!\!-1 & 0 & 0\\
\end{pmatrix}
\equiv \frac{1}{\sqrt{2}}
\begin{pmatrix}[cc:cc]
Id_2 & 0 & 0 & Id_2\\ 
0 & Id_2 & Id_2 & 0\\ \hdashline
Id_2 & 0 & 0 & -Id_2\\
0 & Id_2 & -Id_2 & 0
\end{pmatrix} \in \M_8(\C)\,.
\] 
Then 
\[
\phi^+ = \frac{1}{\sqrt{2}}(e_1^{(4)} + e_4^{(4)}) = 
\frac{1}{\sqrt{2}}(e_1^{(2)} \otimes_{\text{Kr}} e_1^{(2)} + 
e_2^{(2)} \otimes_{\text{Kr}} e_2^{(2)}) \equiv \frac{\vert 00\rangle + 
\vert 11\rangle}{\sqrt{2}}.
\]
Moreover, there do not exist $a, b \in \C^2$, such that $\phi^+ = 
a \otimes_{\text{Kr}} b$. $T \in U(8)$ is a unitary matrix, and
\begin{align}\label{eq:quantum_teleportation_equation}
T(\emphas{\xi} \otimes_{\text{Kr}} \phi^+) = 
\frac{1}{2}(\alpha, \beta, \beta, \alpha,\alpha, -\beta, -\beta, \alpha)^\top = 
\frac{1}{2}\lb\sum\limits_{i=1}^4 e_i^{(4)} \otimes_{\text{Kr}} T_i\,
\emphas{\xi}\rb\,\emphas{\text{ for all }} \emphas{\xi = (\alpha, \beta)^\top 
\in \C^2},
\end{align}
where $T_1 : = Id_2$ and $T_2, T_3, T_4 \in U(2)$ denote the unitary Pauli 
matrices 
\[
T_2 : = \sigma_x : = \begin{pmatrix}
0 & 1\\
1 & 0
\end{pmatrix}, 
T_3 : = \sigma_z : = \begin{pmatrix}
1 & 0\\
0 & -1
\end{pmatrix} \text{ and } 
T_4 : = \sigma_x\sigma_z = \frac{1}{i}\,\sigma_y : = 
\begin{pmatrix}
0 & -1\\
1 & 0
\end{pmatrix}.
\]
In particular,
\begin{align}\label{eq:solving_for_xi}
T_i^{-1}\colvec{2}{w_{2i -1}}{w_{2i}} = \xi\,\text{ for all }\,
i \in [4]\,,
\end{align}
where $w : = T(\xi \otimes_{\text{Kr}} \phi^+)$. 
Moreover,
\begin{align}\label{eq:factorisation}
T = (H_1 \otimes_{\text{Kr}} Id_4)(U_{\tup{CNOT}} \otimes_{\text{Kr}} Id_2) = 
\begin{pmatrix}[c:c]
Id_4 & Id_4\\ \hdashline
Id_4 & -Id_4
\end{pmatrix}
\begin{pmatrix}[cc:cc]
Id_2 & 0 & 0 & 0\\ 
0 & Id_2 & 0 & 0\\ \hdashline
0 & 0 & 0 & Id_2\\
0 & 0 & Id_2 & 0
\end{pmatrix},
\end{align}
where the unitary matrices $H_1 \in U(2)$ and $U_{\tup{CNOT}} \in U(4)$ are 
given by 
\[
H_1 : = 
\tfrac{1}{\sqrt{2}}\begin{pmatrix}
1 & 1\\
1 & -1
\end{pmatrix} = \tfrac{1}{\sqrt{2}}(T_2 + T_3)
\hspace{2em}(\textit{Hadamard gate})
\]
and
\[
U_{\tup{CNOT}} : =
\begin{pmatrix}[cc:cc]
1 & 0 & 0 & 0\\ 
0 & 1 & 0 & 0\\ \hdashline
0 & 0 & 0 & 1\\
0 & 0 & 1 & 0
\end{pmatrix}  
= 
\begin{pmatrix}
1 & 0\\
0 & 1
\end{pmatrix} 
\oplus
\begin{pmatrix}
0 & 1\\
1 & 0
\end{pmatrix} = Id_2\,\oplus\,U_{\tup{NOT}} \hspace{1em}
(\textit{CNOT gate -- ``controlled NOT''}).
\] 
\begin{cp*}
Because of the structure of $T$ and $\phi^+$ and the straightforward 
calculation rules for the Kronecker product (cf., e.g.,  
\cite[Subsection 1.2]{Oe2024}), the proof is just an application of some 
elementary linear algebra with matrices. Of course, the relationship between 
the  Hadamard gate $H_1$, the CNOT gate $U_{\tup{CNOT}}$ and quantum information 
per se is somewhat hidden in our purely linear algebraic approach. The meaning 
of $H_1$ and $U_{\tup{CNOT}}$, including the construction of a related circuit 
and the fundamental role of the transmission of a measurement result over a 
classical communications channel (Alice has to send classical information to 
Bob), is described in detail in \cite[Example 1.3.7]{NC2010}. A closer look at 
\cite[Example 1.3.7]{NC2010} also reveals the crucial (albeit somewhat hidden) 
role of the Kronecker product therein.    

Due to \eqref{eq:Kronecker_product_of_standard_2_dim_ONB_vectors}, the 
representation of $\phi^+$ as a sum of the two Kronecker product parts is 
straightforward, as is the case for the factorisation \eqref{eq:factorisation} 
of $T$. It has a fundamental meaning in the foundations of quantum mechanics 
and quantum information, though; namely in the sense of a so called 
\textit{Bell state vector} (also known as \textit{EPR pair}), named after Bell, 
and Einstein, Podolsky and Rosen, who first pointed out properties of such 
state vectors (cf., e.g., \cite[Example 1.3.6]{NC2010}). 

So, let us fix an arbitrarily chosen $\xi = (\alpha, \beta)^\top = 
\alpha e_1^{(2)} + \beta e_2^{(2)} \in \C^2$ (an arbitrary qubit, for example). 
A straightforward calculation of the threefold Kronecker product (including an 
application of the associativity law, together with 
\eqref{eq:Kronecker_product_of_standard_ONB_vectors} and 
\eqref{eq:Kronecker_product_of_standard_2_dim_ONB_vectors}) implies that 
\begin{align}\label{eq:qubit_otimes_Bell_state}
\xi \otimes_{\text{Kr}} \phi^+ = \frac{\alpha}{\sqrt{2}}\sum\limits_{i=1}^2 
e_i^{(4)} \otimes_{\text{Kr}} e_i^{(2)} + \frac{\beta}{\sqrt{2}}
\sum\limits_{i=1}^2 e_{i+2}^{(4)} \otimes_{\text{Kr}} e_i^{(2)} = 
\frac{1}{\sqrt{2}}(\alpha, 0, 0, \alpha,\beta, 0, 0, \beta)^\top\,.
\end{align}
Consequently, the explicit $2^3$-tuple representation 
\eqref{eq:qubit_otimes_Bell_state} of $\xi \otimes_{\text{Kr}} \phi^+$ implies 
that 
\begin{align}
\begin{split} 
T(\xi \otimes_{\text{Kr}} \phi^+) 
&\stackrel{\eqref{eq:factorisation}}{=} \frac{1}{\sqrt{2}}
\begin{pmatrix}[c:c]
Id_4 & Id_4\\ \hdashline
Id_4 & -Id_4
\end{pmatrix}
(\alpha, 0, 0, \alpha, 0, \beta, \beta, 0)^\top = 
\frac{1}{2}(\alpha, \beta, \beta, \alpha,\alpha, -\beta, -\beta, \alpha)^\top\\
&\,\,\,= \frac{1}{2}\lb(e_1^{(4)} \otimes_{\text{Kr}} \binom{\alpha}{\beta} + 
e_2^{(4)} \otimes_{\text{Kr}} \binom{\beta}{\alpha} + 
e_3^{(4)} \otimes_{\text{Kr}} \binom{\alpha}{-\beta} + 
e_4^{(4)} \otimes_{\text{Kr}} \binom{-\beta}{\alpha}\rb,
\end{split}
\end{align} 
and \eqref{eq:quantum_teleportation_equation} follows. \eqref{eq:solving_for_xi} 
is a direct implication of the concrete tupel representation of the vector 
$w \in \C^8$ in \eqref{eq:quantum_teleportation_equation}. A straightforward 
(block) matrix multiplication shows that
\[
T^\adj = T^\top = \frac{1}{\sqrt{2}}
\begin{pmatrix}[cc:cc]
Id_2 & 0 & Id_2 & 0\\ 
0 & Id_2 & 0 & Id_2\\ \hdashline
0 & Id_2 & 0 & -Id_2\\
Id_2 & 0 & -Id_2 & 0
\end{pmatrix} = T^{-1},
\]
implying that $T \in U(8)$ is a unitary matrix.\qed
\end{cp*}
\end{example}
\begin{remark}
Observe that \eqref{eq:quantum_teleportation_equation} clearly shows that 
quantum teleportation does not violate the no-cloning theorem. Bob's qubit 
$\emphas{\xi}$, ``encapsulated'' in $\frac{1}{2}\lb\sum\limits_{i=1}^4 e_i^{(4)} 
\otimes_{\text{Kr}} T_i\,\emphas{\xi}\rb$ is not a copy of Alice's one (in 
$\emphas{\xi} \otimes_{\text{Kr}} \phi^+$). To perform the transmission 
$\emphas{\xi} \otimes_{\text{Kr}} \phi^+ \stackrel{T}{\mapsto} \frac{1}{2}
\lb\sum\limits_{i=1}^4 e_i^{(4)} \otimes_{\text{Kr}} T_i\,\emphas{\xi}\rb$, 
Alice is namely required to destroy her qubit $\xi$.  

A simple proof by contradiction shows that $H_1$ \textit{can't} be written as a 
Kronecker product of a $2$-dimensional complex row (resp. column) vector and a 
$2$-dimensional complex column (resp. row) vector:
\[
H_1 \not= (a_1, a_2) \otimes_{\tup{Kr}} \colvec{2}{a_3}{a_4} = 
\lb\colvec{2}{a_1}{a_2} \otimes_{\tup{Kr}} (a_3, a_4)\rb^{\!\!\top}\, 
\text{ for all } a_1, a_2, a_3, a_4 \in \C\,. 
\] 
\end{remark}
\section{Grothendieck is lurking: normed operator ideals and their corresponding 
enveloping  $\tup{C}^\adj$-algebras}
\label{sec:op_ideals_and_Grothendieck}
\noindent To gain an improved understanding of \sref{Theorem}{thm:split_char}, 
it is necessary to sketch how the tensor product of Hilbert spaces actually is 
implemented in the construction of the spatial tensor product of two von 
Neumann algebras. A clear and concise introduction to von Neumann algebras 
(and $\tup{C}^\adj$-algebras in general) can be found in e.g. 
\cite{B2006, BR1987, C2000, M1990, P1989, P2018, W2008, W2010, Z2008}. All 
details about the spatial tensor product are provided in 
\cite[II.9.1.3 and III.1.5.4]{B2006}, \cite[Chapter 2.7.2]{BR1987} and 
\cite[Chapter 6.3]{M1990}. Firstly, recall that a \textit{von Neumann algebra}, 
acting on some Hilbert space $H$, is a unital $\adj$-subalgebra of 
$\id{L}{}{}{}{}(H)$ which is closed in the weak operator topology (WOT); i.e., 
in the topology on $\id{L}{}{}{}{}(H)$, which is generated by the seminorms 
$\id{L}{}{}{}{}(H) \ni T \mapsto \Vert T\Vert_{x,y} : = 
\vert\langle Tx, y \rangle_H\vert$ ($x, y \in H$). Equivalently (due to von 
Neumann's Bicommutant Theorem), a $\adj$-subalgebra $\mathcal{N} \subseteq 
\id{L}{}{}{}{}(H)$ is a von Neumann algebra if and only if $\mathcal{N} = 
\mathcal{N}^{\,\#\#}$, where $\mathcal{R}^{\,\#} : = \{T \in \mathfrak{L}(H) : 
TS = ST\,\text{ for all }\, S \in \mathcal{R}\}$ denotes the commutant of 
$\mathcal{R} \subseteq \id{L}{}{}{}{}(H)$. (To avoid 
confusion with the symbolic notation of the topological dual of a 
locally convex space, we adopt the notation of \cite{W2011} and do not 
make use of the symbol ``\,$\mathcal{R}^{\,\prime}$\,'', although the latter 
is rather common among physicists and researchers in the operator algebra 
community.) In particular, since the WOT is weaker than the norm 
topology, any von Neumann algebra is a unital $\tup{C}^\adj$-algebra (but not 
conversely). \cite[\textsection\,12]{W2010} sheds light on the highly non-trivial 
relationship between unital $\tup{C}^\adj$-algebras and von Neumann algebras 
in general. In any case, the seminal Gel'fand-Na\u{\i}mark-Segal construction implies 
that every $\tup{C}^\adj$-algebra $\mathcal{A}$ is isometrically 
$\adj$-isomorphic to a norm-closed $\adj$-subalgebra of $\id{L}{}{}{}{}
(H_{\mathcal{A}})$ on some ``large'' complex Hilbert space $H_{\mathcal{A}}$. If 
$\mathcal{A}$ is separable, then also the Hilbert space $H_{\mathcal{A}}$ is 
separable. These already very deep results are rounded off by the 
possibly less well-known, yet very fruitful classical cornerstone 
theorem of S. Sherman and Z. Takeda, stating that the second dual Banach space 
$\mathcal{A}^{\prime\prime}$ of any $\tup{C}^\adj$-algebra $\mathcal{A}$ can be 
identified as the enveloping von Neumann algebra of $\mathcal{A}$ (cf., e.g., 
\cite{CY1961, SP2026}):
\begin{theorem}[\textbf{Sherman-Takeda}] 
If $\mathcal{A}$ is a $C^\adj$-algebra and $\pi : \mathcal{A} 
\stackrel{1}{\hookrightarrow} \id{L}{}{}{}{}(H)$ is its universal 
Gel'fand-Na\u{\i}mark representation on some complex Hilbert space $H$, then the 
bidual $\mathcal{A}^{\prime\prime}$ is a von Neumann algebra which is 
isometrically $\adj$-isomorphic to the von Neumann algebra 
$\pi(\mathcal{A})^{\,\#\#} = \overline{\pi(\mathcal{A})}^{\tup{WOT}}$.
\end{theorem}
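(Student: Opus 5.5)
The plan is the classical route via the universal representation, weak$^\adj$ density and normal functionals. Let $\pi = \bigoplus_{\varphi} \pi_\varphi$ be the direct sum of the GNS representations $\pi_\varphi$ over all states $\varphi$ of $\mathcal{A}$, acting on $H = \bigoplus_\varphi H_\varphi$, and put $\mathcal{M} := \pi(\mathcal{A})^{\,\#\#}$. By von Neumann's Bicommutant Theorem, $\mathcal{M} = \overline{\pi(\mathcal{A})}^{\tup{WOT}}$, which also equals the $\sigma$-weak closure. The aim is to construct a linear isometric bijection $\widetilde{\pi} : \mathcal{A}^{\prime\prime} \to \mathcal{M}$ extending $\pi$ (that is, $\widetilde{\pi}\, j_{\mathcal{A}} = \pi$), and then to transport the product and involution along $\widetilde{\pi}$.

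\textbf{Step 1: identify $\mathcal{A}^\prime$ with the predual $\mathcal{M}_\adj$.} Every state $\varphi$ is a vector functional: $\varphi(a) = \langle \pi(a)\xi_\varphi, \xi_\varphi\rangle_H$, where $\xi_\varphi$ is the cyclic vector. Since each $f \in \mathcal{A}^\prime$ is a linear combination of four states (Jordan decomposition of bounded functionals on $\mathcal{A}$), each $f$ has the form $f = \omega\circ\pi$ with $\omega$ a finite sum of vector functionals. In particular $\omega$ is $\sigma$-weakly continuous, so it extends uniquely to $\widetilde{\omega} \in \mathcal{M}_\adj$. By Kaplansky's density theorem, $\Vert \widetilde{\omega}\Vert = \Vert f \Vert$. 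Conversely, restriction along $\pi$ maps $\mathcal{M}_\adj$ into $\mathcal{A}^\prime$, and it is injective by the density of $\pi(\mathcal{A})$ in $\mathcal{M}$. Hence $R : \mathcal{M}_\adj \to \mathcal{A}^\prime$, $\omega \mapsto \omega\circ\pi$, is an isometric isomorphism.

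\textbf{Step 2: pass to the biduals.} Using $(\mathcal{M}_\adj)^\prime = \mathcal{M}$ (Sakai's duality, i.e.\ the standard identification of the predual of a von Neumann algebra), define $\widetilde{\pi} := (R^\prime)^{-1}$, composed with this identification. By \sref{Proposition}{prop:isom_isomorphisms_between_Banach_spaces}, $\widetilde{\pi}$ is an isometric isomorphism $\mathcal{A}^{\prime\prime} \cong \mathcal{M}$, and unwinding the definitions shows that $\widetilde{\pi}\, j_{\mathcal{A}} = \pi$. Moreover $\widetilde{\pi}$ is a weak$^\adj$-to-$\sigma$-weak homeomorphism, because it is the adjoint of $R^{-1}$.

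\textbf{Step 3: the algebra structure.} Define the multiplication and involution on $\mathcal{A}^{\prime\prime}$ by pulling back those of $\mathcal{M}$. This requires checking that the result is canonical, i.e.\ that it agrees with the Arens product and with $F^\adj(f) := \overline{F(f^\adj)}$. The argument uses Goldstine's theorem: $j_{\mathcal{A}}(\mathcal{A})$ is weak$^\adj$ dense in $\mathcal{A}^{\prime\prime}$. Multiplication in $\mathcal{M}$ is separately $\sigma$-weakly continuous, and the Arens product is weak$^\adj$ continuous in the relevant variable. The two products agree on $j_{\mathcal{A}}(\mathcal{A})$, and an iterated-limit argument (one variable at a time) extends the agreement to all of $\mathcal{A}^{\prime\prime}$. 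The involution is handled the same way, since it is $\sigma$-weakly continuous on $\mathcal{M}$. With this, $\widetilde{\pi}$ becomes a $\adj$-isomorphism, and $\mathcal{A}^{\prime\prime}$ is a von Neumann algebra.

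\textbf{Main obstacle.} I expect Step 1 to be the hardest, specifically the isometry $\Vert\widetilde{\omega}\Vert = \Vert f\Vert$ together with the fact that every $\sigma$-weakly continuous functional on $\mathcal{M}$ arises this way. Both rest on Kaplansky density and on the polar decomposition of normal functionals. I would cite these as standard results from the references \cite{B2006, M1990} rather than reprove them.
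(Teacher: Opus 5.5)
The paper gives no proof of this theorem; it only cites \cite{CY1961, SP2026}. Your argument is the classical one behind those references, and it is correct. Steps 1--2 give the Sherman--Takeda Banach-space identification $\mathcal{A}^{\prime\prime}\cong\mathcal{M}_\ast^{\,\prime}\cong\pi(\mathcal{A})^{\,\#\#}$. Step 3 identifies the transported product with the Arens product, which is the point of \cite{CY1961}. Three small repairs are needed.

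\textbf{(a) Direction of $\widetilde{\pi}$.} With $R:\mathcal{M}_\ast\to\mathcal{A}^{\prime}$, $\omega\mapsto\omega\circ\pi$, the map you want is $\widetilde{\pi}:=\Sigma_{\mathcal{M}}^{-1}R^{\prime}:\mathcal{A}^{\prime\prime}\to\mathcal{M}$. Indeed, $\langle\omega,R^{\prime}j_{\mathcal{A}}a\rangle=\langle a,R\omega\rangle=\omega(\pi(a))$ gives $\widetilde{\pi}\,j_{\mathcal{A}}=\pi$. The operator $(R^{\prime})^{-1}=(R^{-1})^{\prime}$ that you wrote is its inverse. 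So $\widetilde{\pi}$ is, up to $\Sigma_{\mathcal{M}}$, the adjoint of $R$, not of $R^{-1}$. This is harmless, since both maps are weak$^\ast$-continuous.

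\textbf{(b) Nondegeneracy.} When $\mathcal{A}$ has no unit, two facts you use need $\pi$ to be nondegenerate: the equality $\pi(\mathcal{A})^{\,\#\#}=\overline{\pi(\mathcal{A})}^{\tup{WOT}}$, and the $\sigma$-weak density of $\pi(\mathcal{A})$ in $\mathcal{M}$ behind the injectivity of $R$. Nondegeneracy holds because every GNS summand is cyclic, but you should say so.

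\textbf{(c) Naming.} What you call Sakai's duality is the concrete predual identification $\Sigma_{\mathcal{M}}:\mathcal{M}\cong\mathcal{M}_\ast^{\,\prime}$, which the paper recalls before \sref{Theorem}{thm:normal_states_on_vNAs}. Sakai's theorem is the converse statement and is not needed.

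Your anticipated main obstacle is smaller than you expect, because no polar decomposition of normal functionals is needed. Take $\omega\in\mathcal{M}_\ast$ and decompose $\omega\circ\pi$ into four multiples of vector states. The corresponding combination $\widetilde{\omega}$ of vector functionals is $\sigma$-weakly continuous and agrees with $\omega$ on the $\sigma$-weakly dense set $\pi(\mathcal{A})$, hence $\omega=\widetilde{\omega}$. Kaplansky density is needed only for $\Vert\widetilde{\omega}\Vert\leq\Vert f\Vert$; isometry of $R$ then follows from injectivity.

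In Step 3, spell out the order of the iterated extension. For the first Arens product $\square$, only two maps are weak$^\ast$-continuous in general: $G\mapsto j_{\mathcal{A}}(a)\,\square\,G$ for $a\in\mathcal{A}$, and $F\mapsto F\,\square\,G$ for $G\in\mathcal{A}^{\prime\prime}$. So you first extend in $G$ with $F=j_{\mathcal{A}}(a)$ fixed, and then in $F$ with $G$ arbitrary. Since the transported product is separately weak$^\ast$-continuous in both variables, both steps go through. As a byproduct, $\square$ is separately weak$^\ast$-continuous, i.e.\ $\mathcal{A}$ is Arens regular.
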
  
\begin{remark}[\bf{Spatial tensor product of von Neumann algebras versus minimal 
tensor product of $\tup{C}^\adj$-algebras}]
\label{rem:spatial_tp_and_minimal_tp}
Let $\mathcal{M} \subseteq \mathfrak{L}(H)$ and $\mathcal{N} \subseteq 
\mathfrak{L}(K)$ be two von Neumann algebras, acting on the Hilbert spaces 
$H$ and $K$, respectively. Then there is a unique \textit{injective} 
$\adj$-homomorphism
\[
\pi : \mathcal{M} \otimes_\adj \mathcal{N} \longrightarrow 
\mathfrak{L}(H \otimes K)\,,
\]
satisfying $\pi(S \otimes T)(x \otimes y) = Sx \otimes Ty$, $(S, T) \in 
\mathcal{M} \times \mathcal{N}, (x, y) \in H \times K$, where $\mathcal{M} 
\otimes_\adj \mathcal{N}$ denotes the $\adj$-algebra tensor product of 
$\mathcal{M}$ and $\mathcal{N}$ (cf., e.g., \cite[Theorem 6.3.3]{M1990}).

The \textit{WOT-closure} of $\pi(\mathcal{M} \otimes_\adj \mathcal{N})$ is 
called the \textit{spatial tensor product of $\mathcal{M}$ and $\mathcal{N}$}. 
The resulting \textit{von Neumann algebra} is denoted by $\mathcal{M}\,
\overline{\otimes}\,\mathcal{N}$:
\[
\mathcal{M}\,\overline{\otimes}\,\mathcal{N} : = 
\overline{\pi(\mathcal{M} \otimes_\adj \mathcal{N})\,}^{\textup{WOT}}\,.
\]
Moreover, viewed solely through the lens of the C${}^{\,\adj}$-structure, 
the injectivity of $\pi$ implies that   
\[
\mathcal{M} \otimes_\adj \mathcal{N} \ni z \mapsto \Vert z \Vert_{\adj} : = 
\Vert \pi(z)\Vert
\] 
is a well-defined C${}^{\,\adj}$-norm on $\mathcal{M}\,\otimes_{\adj}\,
\mathcal{N}$, which satisfies $\Vert S \otimes T \Vert_{\adj} = 
\Vert S \Vert\,\Vert T \Vert$ for all $(S, T) \in \mathcal{M} \times 
\mathcal{N}$. The \textit{completion of $\mathcal{M} \otimes_{\adj} \mathcal{N}$ 
with respect to $\Vert \cdot \Vert_{\adj}$} is called the \textit{minimal tensor 
product of $\mathcal{M}$ and $\mathcal{N}$}. The resulting 
\textit{$C^\adj$-algebra} is denoted by $\mathcal{M} \otimes_{\text{min}} 
\mathcal{N}$. Observe that by construction
\[
\widetilde{\pi}(\mathcal{M} \otimes_{\text{min}} \mathcal{N}) = 
\overline{\pi(\mathcal{M} \otimes_{\adj} \mathcal{N})}^{\Vert\cdot\Vert} 
\subseteq \mathcal{M}\,\overline{\otimes}\,\mathcal{N}\,,
\]
where $\widetilde{\pi}$ denotes the completion of $\pi$ (with respect to 
$\Vert\cdot\Vert_{\adj}$, of course). However, in general, $\mathcal{M}\,
\overline{\otimes}\,\mathcal{N}$ does not coincide with 
$\widetilde{\pi}(\mathcal{M} \otimes_{\text{min}} \mathcal{N})$ (since the 
operator norm closure of $\pi(\mathcal{M} \otimes_{\tiny{\textup{alg}}} 
\mathcal{N})$ can be strictly smaller than the WOT-closure of 
$\pi(\mathcal{M} \otimes_{\tiny{\textup{alg}}} \mathcal{N})$)! A minor warning 
regarding notation: the minimal tensor product of two C${}^{\,\adj}$-
algebras is commonly referred to as spatial tensor product as well.          
\end{remark}

{Especially interested readers might raise the question whether the class 
of all Hilbert-Schmidt operators, i.e., the tensor product of Hilbert spaces 
(\sref{Theorem}{thm:structure_of_Hilb_space_tp}), plays a prominent role in 
quantum physics \textit{beyond} its well-known modelling of composite 
physical systems. To get a first answer to this question, we must examine the 
underlying operator ideal structure of Hilbert-Schmidt operators.

For the convenience of the reader let us recall the fundamental concept of an 
operator ideal, where we adopt \cite[Definition 1.4 and Definition 2.1]
{DJP2001}:

Assume that for all Banach spaces $E, F$, there is a nonempty set 
$\id{A}{}{}{}{}(E, F)$ which is contained in $\id{L}{}{}{}{}(E, F)$. Then
\[
\id{A}{}{}{}{} : = \bigcup\limits_{E, F \in\,\tup{BAN}}\id{A}{}{}{}{}(E, F)
\]
is said to be an \textit{operator ideal}, if the following conditions are
satisfied for all Banach spaces $E, F, E_1$ and $F_1$:\\[0.1em]
\begin{description}
\item[$(\textbf{OI}_1)$] $\langle\bcdot, a \rangle y \in \id{A}{}{}{}{}(E, F)$ 
for $(a, y) \in E^\prime \times F$.
\item[$(\textbf{OI}_2)$] $S + T \in \id{A}{}{}{}{}(E, F)$ for $S, T \in 
\id{A}{}{}{}{}(E, F)$.
\item[$(\textbf{OI}_3)$] $BTA \in \id{A}{}{}{}{}(E_1, F_1)$ for $T \in 
\id{A}{}{}{}{}(E, F)$, $A \in \id{L}{}{}{}{}(E_1, E)$ and $B \in \id{L}{}{}{}{}
(F, F_1)$.
\end{description}
A mapping $\idn{A}{}{}{}{} : \id{A}{}{}{}{} \longrightarrow [0, \infty)$ is 
called an \textit{operator ideal quasi-norm} if the following conditions are 
satisfied for all Banach spaces $E, F, E_1$ and $F_1$:  
\begin{description}
\item[$(\textbf{q-IN}_1)$] $\idn{A}{}{}{}{}(\langle\bcdot, a \rangle y) = 
\Vert a\Vert\,\Vert x\Vert$ for $(a, y) \in E^\prime \times F$.
\item[$(\textbf{q-IN}_2)$] There exists a constant $c \geq 1$, such that 
$\idn{A}{}{}{}{}(S+T) \leq c\,(\idn{A}{}{}{}{}(S) + \idn{A}{}{}{}{}(T)$) for 
$S, T \in \id{A}{}{}{}{}(E, F)$ (quasi-triangle inequality).
\item[$(\textbf{q-IN}_3)$] $\idn{A}{}{}{}{}(BTA) \leq \Vert B\Vert
\idn{A}{}{}{}{}(T) \Vert A\Vert$ for $T \in \id{A}{}{}{}{}(E, F)$, $A \in 
\id{L}{}{}{}{}(E_1, E)$ and $B \in \id{L}{}{}{}{}(F, F_1)$.
\end{description} 
Here, the constant $c\geq 1$ does not depend on the underlying Banach spaces 
$E$ and $F$. Let $0 < p \leq 1$. If condition $(\textbf{q-IN}_2)$ is replaced 
through
\begin{description}
\item[$(p\,\textbf{IN}_2)$] $\idn{A}{}{}{}{}(S+T)^p \leq \idn{A}{}{}{}{}(S)^p 
+ \idn{A}{}{}{}{}(T)^p$ for $S, T \in \id{A}{}{}{}{}(E, F)$ ($p$-triangle 
inequality), 
\end{description}
then $\idn{A}{}{}{}{}$ is known as \textit{operator ideal} $p$-norm. The 
$2$-tuple $(\id{A}{}{}{}{}, \idn{A}{}{}{}{})$ is said to be a 
\textit{quasi-normed operator ideal}, resp. a $p$-normed operator ideal (if 
$(\textbf{q-IN}_2)$ is replaced by $(p\,\textbf{IN}_2)$). If all components 
$\id{A}{}{}{}{}(E, F)$ of a normed ($p=1$) operator ideal  
$(\id{A}{}{}{}{}, \idn{A}{}{}{}{})$ are complete, $(\id{A}{}{}{}{}, 
\idn{A}{}{}{}{})$ is a \textit{Banach operator ideal}. Quasi-Banach operator 
ideals and $p$-Banach operator ideals are defined analogously. It is well-known 
that an operator ideal quasi-norm $\idn{A}{}{}{}{}$ need not be continuous in 
its own topology (cf. \cite[Example 6.1.9]{P1980}). However, if $(\id{A}{}{}{}{}, 
\idn{A}{}{}{}{})$ is a given quasi-normed operator ideal with constant 
$c \geq 1$, then $\idn{A}{}{}{}{}$ is equivalent to a $p_c$-norm on 
$\id{A}{}{}{}{}$, where $0 < p_c : = \ln(2)/(\ln(2)+\ln(c)) \leq 1$, and 
the latter is continuous in its own topology (\cite[Theorem 6.2.5]{P1980}). 
Conversely, every $p$-normed operator ideal ($0 < p \leq 1)$ is a quasi-normed 
one, with $1 \leq c : = 2^{1/p-1}$ (\cite[Remark 6.2.1-(1)]{P1980}).

Starting from two given quasi-normed operator ideals $(\id{A}{}{}{}{}, 
\idn{A}{}{}{}{})$ and $(\id{B}{}{}{}{}, \idn{B}{}{}{}{})$, it is possible to 
construct other quasi-normed operator ideals in various ways, including 
\begin{itemize}
\item[$-$] $(\id{A}{}{}{}{}\circ\id{B}{}{}{}{}, \idn{A}{}{}{}{}\circ
\idn{B}{}{}{}{})$\,\,\,\,(product ideal)
\item[$-$] $(\id{B}{}{}{}{}^{-1}\circ\id{A}{}{}{}{}, \idn{B}{}{}{}{}^{-1}\circ
\idn{A}{}{}{}{})$\,\,\,\,(left-quotient)
\item[$-$] $(\id{A}{}{}{}{}\circ\id{B}{}{}{}{}^{-1}, \idn{A}{}{}{}{}\circ
\idn{B}{}{}{}{}^{-1})$\,\,\,\,(right-quotient)
\end{itemize}
and
\begin{itemize}
\item[$-$] $(\id{A}{}{\tup{d}}{}{}, \idn{A}{}{\tup{d}}{}{})$\,\,\,\,(dual 
operator ideal)
\item[$-$] $(\id{A}{}{\tup{reg}}{}{}, \idn{A}{}{\tup{reg}}{}{})$\,\,\,\,
(regular hull) 
\item[$-$] $(\id{A}{}{\tup{min}}{}{}, \idn{A}{}{\tup{min}}{}{}) : = 
(\overline{\id{F}{}{}{}{}}\circ\id{A}{}{}{}{}\circ\overline{\id{F}{}{}{}{}}, 
\Vert\bcdot\Vert\circ\idn{A}{}{}{}{}\circ\Vert\bcdot\Vert)$\,\,\,\,(minimal 
kernel)
\item[$-$] $(\id{A}{}{\tup{max}}{}{}, \idn{A}{}{\tup{max}}{}{}) : = 
(\overline{\id{F}{}{}{}{}}^{\,-1}\circ\id{A}{}{}{}{}\circ
\overline{\id{F}{}{}{}{}}^{\,-1}, \Vert\bcdot\Vert^{\,-1}\circ
\idn{A}{}{}{}{}\circ\Vert\bcdot\Vert^{\,-1})$\,\,\,\,(maximal hull)
\item[$-$] $(\id{A}{}{\tup{inj}}{}{}, \idn{A}{}{\tup{inj}}{}{})$\,\,\,\,
(injective hull)
\item[$-$] $(\id{A}{}{\tup{sur}}{}{}, \idn{A}{}{\tup{sur}}{}{})$\,\,\,\,
(surjective hull)\,.
\end{itemize}
If in addition $(\id{A}{}{}{}{}, \idn{A}{}{}{}{})$ is complete, an operator 
$T \in \id{A}{}{}{}{}(E,F)$ is said to be \textit{$\id{A}{}{}{}{}$-approximable}, 
if there exists a sequence of finite rank operators $(L_n)_{n\in\N} \subseteq 
\id{F}{}{}{}{}(E,F)$, such that $\lim\limits_{n \to \infty}\idn{A}{}{}{}{}
(T - L_n) = 0$. The class of these operators is the so-called 
\textit{approximative kernel} $\id{A}{}{\bf{(a)}}{}{}$. Restricting the 
quasi-norm of $\id{A}{}{}{}{}$ to $\id{A}{}{\bf{(a)}}{}{}$, 
$\idn{A}{}{\bf{(a)}}{}{}(T) : = \idn{A}{}{}{}{}(T)$, then also 
$(\id{A}{}{\bf{(a)}}{}{}, \idn{A}{}{\bf{(a)}}{}{})$ is a quasi-Banach operator 
ideal. If $E$ and $F$ are arbitrary Banach spaces, then it is obvious that 
$\id{A}{}{\bf{(a)}}{}{}(E,F) \stackrel{1}{=} 
\overline{\id{F}{}{\bf{}}{}{}(E,F)}^{\id{A}{}{}{}{}(E;F)}$. Recall the 
nonobvious fact that also $(\id{A}{}{\tup{min}}{}{}, \idn{A}{}{\tup{min}}{}{})$ 
is a $p$-normed operator ideal if $(\id{A}{}{}{}{}, \idn{A}{}{}{}{})$ is 
$p$-normed, $0 < p \leq 1$ (cf. \cite[Corollary 3.1]{Oe2003}). Within the scope 
of this work, also the following two Banach operator ideal constructions will 
play a fundamental role:
\begin{enumerate}
\item  Let $\id{A}{}{\conj}{}{}(E,F)$ be the set of all $T\in \mathfrak{L}(E,F)$
which satisfy 
\[
\idn{A}{}{\conj}{}{}(T):=\sup \{\vert\tup{tr}(TL)\vert : L\in \mathfrak{F}(F,E),
\idn{A}{}{}{}{}(L)\leq 1\} < \infty . 
\]
Then a Banach operator ideal $(\id{A}{}{\conj}{}{}, \idn{A}{}{\conj}{}{})$ is 
obtained, known as \textit{conjugate} of $(\id{A}{}{}{}{}, 
\idn{A}{}{}{}{})$.
\item  Let $\id{A}{}{\adj}{}{}(E,F)$ be the set of all $T\in 
\id{L}{}{}{}{}(E,F)$ which satisfy 
\begin{align*}
\idn{A}{}{\adj}{}{}(T)&:=\sup\{\vert \tup{tr}(T J_M^E S Q_K^F)\vert : M \in 
\mbox{FIN}(E), K\in \mbox{COFIN}(F),\mathbf{A}(S)\leq 1\}\\
&\,\,= \sup\{\idn{A}{}{\conj}{}{}(Q_K^F T J_M^E) : M \in 
\mbox{FIN}(E), K\in \mbox{COFIN}(F)\} < \infty . 
\end{align*}
Then a Banach operator ideal $(\id{A}{}{\adj}{}{},\id{A}{}{\adj}{}{})$ is 
obtained. It is called the \textit{adjoint} of $(\id{A}{}{\adj}{}{}, 
\idn{A}{}{\adj}{}{})$.
\end{enumerate}
Their explicit construction, many examples and structural properties can be 
found in the standard works on operator ideals, including 
\cite{DF1993, DJP2001, J1981, P1981, P1987} and \cite{JO1982, P2014}, where 
\cite{JO1982} focuses on structural properties of the conjugate of 
$(\id{A}{}{}{}{}, \idn{A}{}{}{}{})$.

Let $(\mathfrak{A},\mathbf{A})$ be a $p$-normed operator ideal and 
$(\mathfrak{B},\mathbf{B})$ be a $q$-normed operator ideal $(0 < p, q \leq 1)$. 
At this point, let us emphasize that - throughout our manuscript - the 
``nonnormable'' cases $p < 1$ and $q < 1$ are explicitly included, unless 
specifically stated otherwise. We make use of the following symbolic shortcuts:

We write $\mathfrak{A}\subseteq \mathfrak{B}$ if, regardless of the choice of 
the Banach spaces $E$ and $F$, we have $\mathfrak{A}(E,F)\subseteq \mathfrak{B}
(E,F)$. If $E_0$ is a fixed Banach space, we write $\mathfrak{A}(E_0,\cdot ) 
\subseteq \mathfrak{B}(E_0,\cdot )$ if, regardless of the choice of the Banach 
space $G$ we have $\mathfrak{A}(E_0,G)\subseteq \mathfrak{B}(E_0,G)$. Similarly, 
$\mathfrak{A}(\cdot ,E_0)\subseteq \mathfrak{B}(\cdot, E_0)$ is defined. The 
metric inclusion $(\mathfrak{A},\mathbf{A})\subseteq (\mathfrak{B},\mathbf{B})$ 
is shortened by 
\[
\mathfrak{A}\stackrel{1}{\subseteq }\mathfrak{B}\,.
\]
Thus, $\mathfrak{A}\stackrel{1}{\subseteq }\mathfrak{B}$ if and only if 
$\mathfrak{A} \subseteq \mathfrak{B}$ and $\idn{B}{}{}{}{}(T) \leq 
\idn{A}{}{}{}{}(T)$ for all $T \in \id{A}{}{}{}{}$. If $\idn{B}{}{}{}{}(L) \leq 
\idn{A}{}{}{}{}(L)$ for all finite rank operators $L \in \mathfrak{F}$, we use 
the shortcut 
\[
\mathfrak{A} \stackrel{\mathfrak{F}}{\subseteq } \mathfrak{B}\,.
\]
Similarly, 
\[
\mathfrak{A} \stackrel{\mathfrak{E}}{\subseteq } \mathfrak{B}
\] 
indicates that $\idn{B}{}{}{}{}(S) \leq \idn{A}{}{}{}{}(S)$ for all elementary 
operators $S \in \mathfrak{E}$. If $p = q$, we write 
\[
\mathfrak{A}\stackrel{1}{=}\mathfrak{B}
\]
to indicate that the isometric equality $(\mathfrak{A},\mathbf{A}) = 
(\mathfrak{B},\mathbf{B})$ holds. If $s_1, s_2 \in \{\tup{min}, \tup{d}, \tup{reg}, 
\tup{inj}, \tup{sur}, \vdash, \dashv, \conj, \adj, \tup{max}\}$, and if 
$s_1 \not= s_2$, then in general $\id{A}{}{s_1\,s_2}{}{} : = 
(\id{A}{}{s_1}{}{})^{s_2}$ cannot be compared with $\id{A}{}{s_2\,s_1}{}{} = 
(\id{A}{}{s_2}{}{})^{s_1}$. Nevertheless, to facilitate readability, we often 
ignore brackets if the arrangement of $s_1$ and $s_2$ is fully understood. 
 
If $E$ is an arbitrary Banach space, it is trivial that $(\id{A}{}{}{}{}(E), 
\idn{A}{}{}{}{}(\cdot))$ is a Banach algebra for any given Banach operator 
ideal $(\id{A}{}{}{}{}, \idn{A}{}{}{}{})$. In the complex Hilbert space case, we 
can say a bit more. An application of the (left) polar decomposition of operators 
in $\id{L}{}{}{}{}(H, K)$ and their adjoints namely, where $H$ and $K$ are 
Hilbert spaces over $\C$ (cf. \cite[Theorem 3.1.5]{S2016}) implies that 
\textit{each} quasi-Banach operator ideal $(\id{A}{}{}{}{}, \idn{A}{}{}{}{})$, 
induces the Banach $\adj$-algebra $\id{A}{}{}{}{}(H)$. More precisely, we have:
\begin{proposition}
\label{prop:Banach_op_ideal_components_on_H_are_Banach_ast_algebras}
Let $(\id{A}{}{}{}{}, \idn{A}{}{}{}{})$ be a quasi-normed operator ideal and 
$H, K$ be Hilbert spaces over $\C$. Then $T \in \id{A}{}{}{}{}(H, K)$, if and 
only if $\vert T\vert \in \id{A}{}{}{}{}(H)$, if and only if $T^\adj \in 
\id{A}{}{}{}{}(K, H)$, and
\[
\idn{A}{}{}{}{}(T) = \idn{A}{}{}{}{}(\vert T\vert) = \idn{A}{}{}{}{}(T^\adj)\,
\text{ for all }\, T \in \id{A}{}{}{}{}(H, K)\,.
\]
In particular, $\id{A}{}{}{}{}(H)$ is a $\adj$-subalgebra of 
$\id{L}{}{}{}{}(H)$ as well as a $\adj$-algebra. If 
$(\id{A}{}{}{}{}, \idn{A}{}{}{}{})$ is a normed $($resp. Banach$)$ operator 
ideal, then $(\id{A}{}{}{}{}(H), \idn{A}{}{}{}{}(\cdot))$ is a normed 
$($resp. Banach$)$ $\adj$-algebra.
\end{proposition}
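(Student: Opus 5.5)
The plan is to derive everything from the ideal property $(\textbf{q-IN}_3)$ together with the polar decomposition. Let $T \in \id{L}{}{}{}{}(H,K)$ and write $T = V\vert T\vert$, where $\vert T\vert := (T^\adj T)^{1/2} \in \id{L}{}{}{}{}(H)$ and $V \in \id{L}{}{}{}{}(H,K)$ is a partial isometry. It satisfies $V^\adj V \vert T\vert = \vert T\vert$, hence $\vert T\vert = V^\adj T$, and $\Vert V\Vert \leq 1 = \Vert V^\adj\Vert$ whenever $T\neq 0$ (the case $T = 0$ is trivial). Since $\vert T\vert = V^\adj T\,Id_H$ and $T = V\,\vert T\vert\,Id_H$, $(\textbf{OI}_3)$ gives $T \in \id{A}{}{}{}{}(H,K) \iff \vert T\vert \in \id{A}{}{}{}{}(H)$. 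Moreover $(\textbf{q-IN}_3)$ gives $\idn{A}{}{}{}{}(\vert T\vert) \leq \idn{A}{}{}{}{}(T) \leq \idn{A}{}{}{}{}(\vert T\vert)$, so these two quasi-norms are equal.

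Next I treat the adjoint. Taking adjoints in $T = V\vert T\vert$ gives $T^\adj = \vert T\vert V^\adj$, and $\vert T\vert = \vert T\vert V^\adj V = T^\adj V$ since $V^\adj V$ is the projection onto $\overline{\text{ran}\,\vert T\vert}$ and $\vert T\vert$ is self-adjoint. The key identities are therefore
\[
T^\adj = Id_H\,\vert T\vert\, V^\adj \quad\text{and}\quad \vert T\vert = Id_H\, T^\adj\, V\,.
\]
They show, again via $(\textbf{OI}_3)$ and $(\textbf{q-IN}_3)$, that $T^\adj \in \id{A}{}{}{}{}(K,H) \iff \vert T\vert \in \id{A}{}{}{}{}(H)$ and $\idn{A}{}{}{}{}(T^\adj) = \idn{A}{}{}{}{}(\vert T\vert)$. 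This yields the chain of equivalences and equalities. The only delicate point, which I regard as the main obstacle, is verifying the partial isometry relations $V^\adj V\vert T\vert = \vert T\vert$ and $\vert T\vert V^\adj V = \vert T\vert$ in the possibly nonseparable setting. These follow directly from the construction of $V$ on $\overline{\text{ran}\,\vert T\vert}$ (extended by $0$ on $\ker\vert T\vert = \ker T$), as in \cite[Theorem 3.1.5]{S2016}.

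For the algebra statement, take $H = K$. By $(\textbf{OI}_2)$ and $(\textbf{OI}_3)$ with $A$ or $B$ in $\id{L}{}{}{}{}(H)$, $\id{A}{}{}{}{}(H)$ is a linear subspace that is closed under composition; it is in fact a two-sided ideal of $\id{L}{}{}{}{}(H)$. Closedness under scalar multiples follows from $(\textbf{OI}_3)$ with $B = \lambda Id$. By the above it is closed under $\adj$, so it is a $\adj$-subalgebra of $\id{L}{}{}{}{}(H)$.

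If $\idn{A}{}{}{}{}$ is a norm, submultiplicativity follows from $(\textbf{q-IN}_3)$ combined with $\Vert S\Vert \leq \idn{A}{}{}{}{}(S)$. This last inequality is standard and is obtained by testing against rank-one operators: for $x \in S_H$ and $y \in S_H$, the rank-one operator $\langle S\,\cdot,y\rangle$ evaluated appropriately is dominated via $(\textbf{q-IN}_1)$ and $(\textbf{q-IN}_3)$. It yields
\[
\idn{A}{}{}{}{}(ST) \leq \Vert S\Vert\,\idn{A}{}{}{}{}(T) \leq \idn{A}{}{}{}{}(S)\,\idn{A}{}{}{}{}(T)\,.
\]
Isometry of the involution is the equality $\idn{A}{}{}{}{}(T^\adj) = \idn{A}{}{}{}{}(T)$ just proved. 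Completeness in the Banach case is inherited from the component $\id{A}{}{}{}{}(H,H)$, so $(\id{A}{}{}{}{}(H), \idn{A}{}{}{}{}(\cdot))$ is a normed (respectively Banach) $\adj$-algebra.
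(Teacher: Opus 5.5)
Your proof is correct and takes essentially the same route as the paper: the polar decomposition $T = V\vert T\vert$ combined with the ideal inequality $(\textbf{q-IN}_3)$. The only difference is in the adjoint step. You get $\idn{A}{}{}{}{}(T^\adj) = \idn{A}{}{}{}{}(\vert T\vert)$ directly from $T^\adj = \vert T\vert V^\adj$ and $\vert T\vert = T^\adj V$, whereas the paper takes a second polar decomposition $T^\adj = U\vert T^\adj\vert$ and compares $T$ with $\vert T^\adj\vert$ via $T = \vert T^\adj\vert U^\adj$ and $TU = \vert T^\adj\vert$; you also spell out submultiplicativity via $\Vert S\Vert \leq \idn{A}{}{}{}{}(S)$, which the paper treats as evident.
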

\begin{proof}
Nothing is to show if $T = 0$. So, let $T \in \id{L}{}{}{}{}(H, K)\!\setminus\!\{0\}$, 
and $T = W \vert T\vert$ be the (left) polar 
decomposition of $T$, with uniquely given partial isometry 
$W \in \id{L}{}{}{}{}(H, K)\!\setminus\!\{0\}$. In particular, it follows that 
$\Vert W^\adj\Vert = \Vert W\Vert = 1$. Thus, $W^\adj T = \vert T\vert$ (due to 
\cite[Theorem 3.1.5-(c)]{S2016}). Hence, $T \in \id{A}{}{}{}{}(H, K)$ if and 
only if $\vert T\vert \in \id{A}{}{}{}{}(H)$, and in either case, it follows 
that
\[
\idn{A}{}{}{}{}(\vert T\vert) \leq \Vert W^\adj\Vert\,\idn{A}{}{}{}{}(T) = 
\idn{A}{}{}{}{}(T) = \idn{A}{}{}{}{}(W\,\vert T\vert) \leq \Vert W\Vert\,
\idn{A}{}{}{}{}(\vert T\vert) = \idn{A}{}{}{}{}(\vert T\vert)\,.
\]
In particular, $T^\adj \in \id{A}{}{}{}{}(K, H)$ if and only if 
$\vert T^\adj \vert \in \id{A}{}{}{}{}(K)$, and  
\[
\idn{A}{}{}{}{}(\vert T^\adj \vert) = \idn{A}{}{}{}{}(T^\adj)\,.  
\]
To conclude the proof, let now $T^\adj = U \vert T^\adj\vert$ be the (left) 
polar decomposition of the adjoint $T^\adj$, with uniquely given partial 
isometry $U \in \id{L}{}{}{}{}(K, H)\!\setminus\!\{0\}$. Then  
\[
T = T^{\adj\adj} = \vert T^\adj\vert^\adj U^\adj = \vert T^\adj\vert U^\adj\,
\text{ and }\,(TU)^\adj = U^\adj T^\adj = \vert T^\adj\vert = 
\vert T^\adj\vert^\adj = TU\,.
\]
The claim now follows at once (since also $\Vert U^\adj\Vert = \Vert U\Vert = 
1$).   
\end{proof}
\begin{remark}\label{rem:op_ideals_on_H_in_gen_not_C_star}
There exist (maximal) Banach operator ideals $(\id{A}{0}{}{}{}\,, 
\idn{A}{0}{}{}{}\,)$, such that $\id{A}{0}{}{}{}\,(H)$ is not a 
$\tup{C}^\adj$-algebra (cf. \sref{Remark}{rem:HS_ops_on_H_are_not_C_star} and 
\eqref{eq:HS_equals_P2}).
\end{remark}
A certainly nontrivial implication of 
\sref{Proposition}{prop:Banach_op_ideal_components_on_H_are_Banach_ast_algebras}, 
revealing an interesting link between normed operator ideals and 
$\tup{C}^\adj$-algebra theory, is the subject of 
\sref{Theorem}{thm:enveloping_C_star_alg_of_normed_op_ideal} below. If namely 
$\mathcal{A}$ is an arbitrary normed (not necessarily complete) $\adj$-algebra, 
a connection of \cite[\textsection\,32]{T2025} and \cite[Chapter 6.1, p. 175]
{M1990} discloses a lucid and fully self-contained construction of the so called 
\textit{enveloping $C^\adj$-algebra $\tup{C}^\adj(\mathcal{A})$} (where 
Murphy's assumed existence of a $\tup{C}^\adj$-seminorm $p$ on $\mathcal{A}$ is 
satisfied by the well-defined Gel'fand-Na\u{\i}mark seminorm 
(\cite[Definition 32.6]{T2025})). Consequently, by taking 
\cite[Theorem 33.9]{T2025} (a generalisation of the universal representation of 
Gel'fand and Na\u{\i}mark in the $\tup{C}^\adj$-algebra case) into account, we obtain    
\begin{theorem}\label{thm:enveloping_C_star_alg_of_normed_op_ideal}
Let $(\id{A}{}{}{}{}, \idn{A}{}{}{}{})$ be a normed operator ideal and $H$ be 
a Hilbert space over $\C$. Then the universal representation 
\[
\pi : (\id{A}{}{}{}{}(H), \idn{A}{}{}{}{}(\cdot)) \longrightarrow 
\id{L}{}{}{}{}(H^\pi) 
\]
of the normed $\adj$-algebra $(\id{A}{}{}{}{}(H), \idn{A}{}{}{}{}(\cdot))$ on 
the complex Hilbert space $H^\pi$ exists. In particular, 
$\Vert\pi(\cdot)\Vert : \id{A}{}{}{}{}(H) \longrightarrow [0, \infty), 
R \mapsto \Vert\pi(R)\Vert$ is a well-defined $C^\adj$-seminorm on the 
$\adj$-algebra $\id{A}{}{}{}{}(H)$. The completion of the quotient $\adj$-algebra 
$\lb\id{A}{}{}{}{}(H)/\textup{ker}(\pi), \mid\mid\mid \cdot \mid\mid\mid\rb$ 
in the $C^\adj$-norm 
\[
\id{A}{}{}{}{}(H)/\textup{ker}(\pi) \ni R + \textup{ker}(\pi) \mapsto 
\mid\mid\mid\!R + \textup{ker}(\pi)\!\mid\mid\mid : = \Vert\pi(R)\Vert
\] 
is a $C^\adj$-algebra, that is given by $\tup{C}^\adj(\id{A}{}{}{}{}(H))$, 
the enveloping $C^\adj$-algebra of $(\id{A}{}{}{}{}(H), \idn{A}{}{}{}{}
(\cdot))$$)$. 
In particular, $\tup{C}^\adj(\id{A}{}{}{}{}(H))$ is isometrically 
$\adj$-isomorphic to a norm-closed $\adj$-subalgebra of $\id{L}{}{}{}{}
(H^{\pi})$.  
\end{theorem}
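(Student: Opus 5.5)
The plan is to reduce the statement to the general construction of the enveloping $\tup{C}^\adj$-algebra of an arbitrary normed $\adj$-algebra, as in \cite[\textsection\,32--33]{T2025} and \cite[Chapter 6.1]{M1990}. The only input needed from operator ideal theory is that $(\id{A}{}{}{}{}(H), \idn{A}{}{}{}{}(\cdot))$ really is a normed $\adj$-algebra. This is exactly \sref{Proposition}{prop:Banach_op_ideal_components_on_H_are_Banach_ast_algebras}:
\begin{enumerate}
\item $\id{A}{}{}{}{}(H)$ is closed under adjoints, with $\idn{A}{}{}{}{}(T^\adj) = \idn{A}{}{}{}{}(T)$;
\item submultiplicativity $\idn{A}{}{}{}{}(ST) \leq \idn{A}{}{}{}{}(S)\,\idn{A}{}{}{}{}(T)$ follows from $(\textbf{q-IN}_3)$ together with $\Vert S\Vert \leq \idn{A}{}{}{}{}(S)$.
\end{enumerate}
The latter inequality is the standard fact that every normed operator ideal dominates the operator norm. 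I would recall it via $(\textbf{q-IN}_1)$ and $(\textbf{q-IN}_3)$ by composing $S$ with rank-one maps, namely $\Vert Sx\Vert$ realised as the $\idn{A}{}{}{}{}$-norm of a rank-one operator.

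Next, I would define the Gel'fand--Na\u{\i}mark seminorm
\[
\gamma(R) := \sup\{\Vert\pi_\varphi(R)\Vert\},
\]
where the supremum runs over all cyclic $\adj$-representations $\pi_\varphi$ obtained by the GNS construction from positive functionals $\varphi$ on $\id{A}{}{}{}{}(H)$. Equivalently, one may take the supremum over all $\adj$-representations on Hilbert spaces that are contractive (or merely bounded) for $\idn{A}{}{}{}{}$.

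The key point, and the main obstacle, is finiteness: $\gamma(R) \leq \idn{A}{}{}{}{}(R)$. For a Banach $\adj$-algebra this follows from automatic continuity of $\adj$-representations via the spectral radius, $\Vert\pi(R)\Vert^2 = \Vert\pi(R^\adj R)\Vert \leq r(R^\adj R) \leq \idn{A}{}{}{}{}(R)^2$. In the merely normed (incomplete) case, I would pass to the completion of $\id{A}{}{}{}{}(H)$, which is a Banach $\adj$-algebra since the involution is isometric. Every $\adj$-representation of $\id{A}{}{}{}{}(H)$ that is continuous for $\idn{A}{}{}{}{}$ extends to the completion, and the estimate transfers. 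This is precisely where \cite[Theorem 33.9]{T2025} is invoked. Alternatively, one may restrict from the outset to $\idn{A}{}{}{}{}$-continuous positive functionals, so that the seminorm is automatically bounded by $\idn{A}{}{}{}{}$. A further concrete lower bound is supplied by the identity representation $\id{A}{}{}{}{}(H) \hookrightarrow \id{L}{}{}{}{}(H)$; this shows $\gamma(R) \geq \Vert R\Vert$, so $\ker(\pi) = \{0\}$ in fact.

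With $\gamma$ in hand, I would form the universal representation
\[
\pi := \bigoplus_\varphi \pi_\varphi \quad\text{on}\quad H^\pi := \bigoplus_\varphi H_\varphi .
\]
The bound $\gamma \leq \idn{A}{}{}{}{}$ makes this direct sum a well-defined bounded $\adj$-representation, with $\Vert\pi(R)\Vert = \gamma(R)$. The $\tup{C}^\adj$-identity $\Vert\pi(R^\adj R)\Vert = \Vert\pi(R)\Vert^2$ holds because $\pi(R) \in \id{L}{}{}{}{}(H^\pi)$, so $\Vert\pi(\cdot)\Vert$ is a $\tup{C}^\adj$-seminorm.

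Finally, $\mid\mid\mid\cdot\mid\mid\mid$ is well defined and a $\tup{C}^\adj$-norm on $\id{A}{}{}{}{}(H)/\ker(\pi)$, since $R + \ker(\pi) \mapsto \pi(R)$ is an injective $\adj$-homomorphism into $\id{L}{}{}{}{}(H^\pi)$. Its completion is therefore isometrically $\adj$-isomorphic to the norm closure $\overline{\pi(\id{A}{}{}{}{}(H))}^{\Vert\cdot\Vert}$, a norm-closed $\adj$-subalgebra of $\id{L}{}{}{}{}(H^\pi)$. This completion satisfies Murphy's universal property \cite[p.~175]{M1990}, and so it is $\tup{C}^\adj(\id{A}{}{}{}{}(H))$.
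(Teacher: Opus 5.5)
Your proposal is correct and is essentially the paper's own argument. The paper likewise feeds the normed $\ast$-algebra structure of $\mathfrak{A}(H)$ from \sref{Proposition}{prop:Banach_op_ideal_components_on_H_are_Banach_ast_algebras} into Thill's Gel'fand--Na\u{\i}mark seminorm, universal representation and Murphy's completion of $\mathfrak{A}(H)/\ker(\pi)$; your explicit checks of submultiplicativity (via $\Vert S\Vert\le\mathbf{A}(S)$) and of $\gamma\le\mathbf{A}$ just make visible what the paper leaves to those citations.

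Your extra remark is also correct and slightly sharper than the stated theorem: the identity representation gives $\gamma(R)\ge\Vert R\Vert$, so $\ker(\pi)=\{0\}$. One further point: because $\mathfrak{A}(H)$ is an ideal in $\mathfrak{L}(H)$, you can write $\Vert R\Vert^2T^\ast T-T^\ast R^\ast RT=(ST)^\ast(ST)$ with $S=(\Vert R\Vert^2 Id_H-R^\ast R)^{1/2}$ and $ST\in\mathfrak{A}(H)$. This shows every $\ast$-representation is operator-norm contractive, so you need not restrict to $\mathbf{A}$-continuous functionals even when $\mathfrak{A}(H)$ is incomplete.
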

Next, we sketch how Banach operator ideals, or - \textit{equivalently} - 
completions of normed tensor products of Banach spaces in the sense of 
A. Grothendieck actually also play a significant role in quantum physics; 
particularly in algebraic quantum field theory (cf. \cite{BDF1987, BS1991, 
F2016, FR2020, RS2010, S1994, S1996, W1987}) and in the investigation of 
compatibility of dichotomic measurements in general probabilistic theories 
(cf. \cite{BJN2022}). Given that point of view, 
\sref{Theorem}{thm:structure_of_Hilb_space_tp} unfolds as a particular case.

In order to better categorise the role of Banach operator ideals and tensor 
products of Banach spaces without inner product structure in quantum physics 
(including the class of $p$-nuclear, $p$-integral, approximable and compact 
operators), we briefly examine the history of an essential part of the classical 
theory of Banach spaces in functional analysis, which actually emerged from 
Grothendieck's seminal article ``R\'{e}sum\'{e} de la Th\'{e}orie M\'{e}trique 
des Produits Tensoriels Topologiques'' (\cite{G1953}). In this regard, we would 
like to draw particular attention to \cite{P2015}, which provides a detailed 
overview of this remarkable development.

In the late thirties, tensor products entered the area of functional analysis, 
due to the works of F. J. Murray, J. von Neumann(!) and R. Schatten. However, it 
was Grothendieck who revealed the structural richness of tensor products of 
Banach spaces and who used their various norms to construct related 
classes of bounded linear operators. In this context, he actually established 
the origin of the ``local'' theory of Banach spaces, i.e., the study of the 
structure of Banach spaces in terms of their finite-dimensional subspaces. 
\cite{DFS2008} gives a very readable and comprehensive account of the tensor 
product theory, developed in \cite{G1953} while maintaining the symbolic 
language of Grothendieck. Actually, \cite{G1953} appeared in 1956. 

It was not until the end of the sixties when the scientific community gave 
\cite{G1953} particular recognition. The interest in Grothendieck's work revived 
when J. Lindenstrauss and A. Pe{\l}czy\'{n}ski ``translated'' its main results 
in the more traditional language of operators and matrices (cf. \cite{DFS2008, 
LP1968}). They presented important applications to the theory of absolutely 
$p$-summing operators and reformulated results, which were written in terms of 
tensor products by Grothendieck, into properties of linear operators and 
operator ideals. 

Almost at the same time, a general theory of operator ideals on the class of 
Banach spaces was developed by Pietsch and his academic school in Jena, yet 
without the use and the abstract language of Grothendieck's tensor norms. Due to 
Pietsch's seminal book ``Operator Ideals'' (\cite{P1980}), that theory became 
a central theme in Banach space theory. A comprehensive overview of Pietsch's 
theory and application of operator ideals is given in \cite{DJP2001}.

In 1993, A. Defant and K. Floret published their pathbreaking and comprehensive 
monograph ``Tensor Norms and Operator Ideals'' (\cite{DF1993}). Here, 
deep interconnections between operator ideals, the ``local'' theory of Banach 
spaces and tensor norms are revealed with a high level of attention to detail. 
They made very clear that tensor products and operator ideals are closely 
connected and showed in detail how to transform tensor products to operator 
ideals and conversely, revealing that normed tensor products of Banach spaces 
(in the sense of Grothendieck) and Banach operator ideals (in the sense of Pietsch) 
are ``two sides of the same coin''! At this point, we 
should recall that in general many norms can be constructed on the algebraic 
tensor product $E \otimes F$ if $E$ and $F$ are Banach spaces which are not both 
Hilbert spaces (including $\tup{C}^\adj$-algebras and operator spaces); 
unlike the Hilbert space case (cf. 
\sref{Corollary}{cor:char_of_tensor_product_of_HS_as_Hilbert_Schmidt}). The 
underlying functional analytic details can be found in \cite{DF1993, DJT1995, 
J1981, P1980, R2002}.

A straightforward application of ``good old-school operator theory'' 
shows us that the \textit{smallest Banach ideal} $(\id{N}{}{}{}{}, 
\idn{N}{}{}{}{})$ of nuclear operators plays a predominant role in quantum 
physics. Owing to their importance, we have to recall explicitly the 
construction of that class of bounded linear operators. Let $1 \leq p < \infty, 
\frac{1}{p} + \frac{1}{p^\ast} = 1$, $E, F$ be 
arbitrary Banach spaces and $T \in \id{L}{}{}{}{}(E, F)$. $T$ is $p$-nuclear if 
there are sequences $(a_n)_{n \in \N} \in l_p^{\small{strong}}(E^\prime)$ 
and $(y_n)_{n \in \N} \in l_{p^\ast}^{\small{weak}}(F)$, such that
\[
Tx = \sum\limits_{n=1}^\infty \langle x, a_n\rangle y_n\,\text{ for all } x \in E,
\] 
and the series converges in $\id{L}{}{}{}{}(E, F)$. $(\id{N}{p}{}{}{}\,, 
\idn{N}{p}{}{}{}\,) \stackrel{1}{=} (\id{N}{p}{\textup{min}}{}{}\,, 
\idn{N}{p}{\textup{min}}{}{}\,)$ is a minimal Banach ideal.  
\[
\idn{N}{p}{}{}{}\,(T) : = \inf\big(\Vert (a_n)\Vert_p^{\small{strong}}
\,\Vert (y_n)\Vert_{p^\ast}^{\small{weak}}\big)\hspace{1cm} (1  < p < \infty)
\]
and
\[
\idn{N}{1}{}{}{}\,(T) : = \inf\big(\sum\limits_{n=1}^\infty 
\Vert a_n\Vert\,\Vert y_n\Vert\big)
\] 
are norms, where the infimum is taken over all possible factorisations, 
respectively, $\Vert (a_n)\Vert_p^{\small{strong}}: = 
\big(\sum\limits_{n=1}^\infty \Vert a_n\Vert^p\big)^{1/p}$ 
and $\Vert (y_n)\Vert_{p^\ast}^{\small{strong}} : = 
\sup\limits_{b \in B_{F^\prime}}\big(\sum\limits_{n=1}^\infty 
\vert \langle y_n, b\rangle\vert^{p^\ast}\big)^{1/{p^\ast}}$, given that 
$1 < p < \infty$ (cf., e.g., \cite[Proposition 5.23]{DJT1995}).

Let $(\id{S}{1}{}{}{}\,, \sigma_1)$ be the complete 
\textit{quasi-normed} operator ideal of $1$-approximable operators and 
$(\id{I}{}{}{}{}, \idn{I}{}{}{}{}) \stackrel{1}{=} 
(\id{N}{}{\textup{max}}{}{}\,, \idn{N}{}{\textup{max}}{}{}\,)$ 
be the maximal Banach ideal of integral operators (cf. \cite{DF1993, DJT1995, 
J1981, P1980}). Regarding nontrivial properties of these operator ideals which 
prove to be very useful for applications in quantum mechanics, we have to refer 
to the following three known key results; namely 
\sref{Theorem}{thm:a_corollary_of_Grothendieck}, 
\sref{Theorem}{thm:char_of_trace_class_ops} (if $H = K$) and 
\sref{Theorem}{thm:trace_duality_in_the_Hilbert_space_case} (cf. 
\cite[Proposition 18.8]{C2000}, \cite[Corollary 22.2.1]{DF1993}, 
\cite[Theorem 5.30 and Theorem 5.31]{DJT1995}, \cite[Proposition 20.2.5]{J1981}, 
\cite[Propositions 16.18, 16.24, 16.26, 18.18, 
Corollary 16.4 and Lemma 16.21]{MV1997}, 
\cite[Theorem 2.4.13 and Theorem 2.4.15]{M1990}, \cite[Theorem 15.5.3]{P1980}, 
and \eqref{eq:HS_equals_P2} below). Recall also 
\sref{Proposition}{prop:Banach_op_ideal_components_on_H_are_Banach_ast_algebras}. 
The \textit{injective tensor norm} $\varepsilon$ and the 
\textit{projective tensor norm} $\pi$ play a key role in this. So, let us 
quickly recall their explicit construction:
\begin{align*}
\varepsilon(z; E, F) :&= \sup\big\{\vert (a\,\otimes\,b)z \vert : 
(a, b) \in B_{E^\prime} \times B_{F^\prime}\big\}\\
&= \sup\big\{\vert\sum\limits_{i=1}^n \langle x_i, a\rangle
\langle y_i, b\rangle\vert : (a, b) \in B_{E^\prime} \times B_{F^\prime}\big\}
\end{align*}
and
\[
\pi(z; E, F) : = \inf\big\{\sum\limits_{i=1}^n \Vert x_i \Vert\,
\Vert y_i \Vert : n \in \N, z = \sum\limits_{i=1}^n 
x_i \otimes y_i\big\},
\]
where $z = \sum\limits_{i=1}^n x_i \otimes y_i \in 
E \otimes_{\tiny{\textup{alg}}} F$.

First of all, an inclusion of \sref{Theorem}{thm:Riesz_linear_version} 
implies 
\begin{theorem}\label{thm:a_corollary_of_Grothendieck}
Let $H$ and $K$ be Hilbert spaces over $\C$. Then the following isometric 
identities are satisfied:
\[
\id{I}{}{}{}{}(H,K) \stackrel{1}{=} \id{N}{}{}{}{}(H,K) \stackrel{1}{=} 
\id{S}{1}{}{}{}\,(H,K) \cong \overline{H} \widetilde{\otimes}_{\pi} K
\]
and
\[
\id{K}{}{}{}{}(H,K) \stackrel{1}{=} \id{\overline{F}}{}{}{}{}(H,K) 
\stackrel{1}{=} \overline{H} \widetilde{\otimes}_{\varepsilon} K\,.
\]
In particular,
\begin{align}\label{eq:H_otimes_pi_K}
\id{I}{}{}{}{}(\overline{H},K) \stackrel{1}{=} \id{N}{}{}{}{}(\overline{H},K) 
\stackrel{1}{=} \id{S}{1}{}{}{}\,(\overline{H},K) \cong 
H \widetilde{\otimes}_{\pi} K
\end{align}
and
\begin{align}\label{eq:H_otimes_vareps_K}
\id{K}{}{}{}{}(\overline{H},K) \stackrel{1}{=} \id{\overline{F}}{}{}{}{}
(\overline{H},K) \cong H \widetilde{\otimes}_{\varepsilon} K\,.
\end{align} 
\end{theorem}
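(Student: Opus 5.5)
The plan is to reduce everything to standard facts from operator ideal and tensor norm theory, and then use the linear Fr\'{e}chet-Riesz isomorphism $\Phi_H$ of \sref{Theorem}{thm:Riesz_linear_version} to trade $H^\prime$ for $\overline{H}$ linearly and isometrically. The argument has three steps.

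\textbf{Step 1: the ideal identities.} For Hilbert spaces $H,K$ over $\C$ I would first recall that $\id{N}{}{}{}{}(H,K) \stackrel{1}{=} \id{S}{1}{}{}{}\,(H,K)$. This follows from the Schmidt (spectral) representation $T = \sum_n s_n \langle \cdot, u_n\rangle v_n$ of a compact $T$, with orthonormal $(u_n),(v_n)$, exactly as in the proof of \sref{Corollary}{cor:Schmidt_decomposition_revisited}.
\begin{itemize}
\item[$-$] $\idn{N}{}{}{}{}(T) \leq \sum_n s_n = \sigma_1(T)$ is immediate from the representation.
\item[$-$] The reverse inequality follows by testing trace duality against bounded operators of norm at most one: $\sum_n s_n = \tup{tr}(W^\adj T)$ for the partial isometry $W$ of the polar decomposition.
\end{itemize}
For $\id{I}{}{}{}{}(H,K) \stackrel{1}{=} \id{N}{}{}{}{}(H,K)$ I would invoke the fact that Hilbert spaces are reflexive and have the metric approximation property. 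By the Grothendieck-type result cited (\cite[Corollary 22.2.1]{DF1993}, \cite[Theorem 5.30]{DJT1995}), integral and nuclear operators then coincide isometrically. Similarly, $\id{K}{}{}{}{}(H,K) \stackrel{1}{=} \overline{\id{F}{}{}{}{}}(H,K)$ is the approximation property of Hilbert spaces: compose with orthogonal projections onto finite-dimensional subspaces, or simply truncate the Schmidt series.

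\textbf{Step 2: the tensor identification.} Here I would use the general Banach-space identities $E^\prime \widetilde{\otimes}_\pi F \cong \id{N}{}{}{}{}(E,F)$ and $E^\prime \widetilde{\otimes}_\varepsilon F \cong \overline{\id{F}}(E,F)$. These hold, via $a\otimes y \mapsto \langle\cdot,a\rangle y$, whenever $E^\prime$ or $F$ has the approximation property, which is the case here. With $E = H$, transport along the linear isometric isomorphism $\Phi_{\overline{H}} : \overline{H} \to H^\prime$ (applying \sref{Theorem}{thm:Riesz_linear_version} to $\overline{H}$, since $\overline{\overline{H}} = H$). Because $\pi$ and $\varepsilon$ are uniform tensor norms, $\Phi_{\overline{H}} \otimes Id_K$ extends to isometric isomorphisms $\overline{H}\widetilde{\otimes}_\alpha K \cong H^\prime \widetilde{\otimes}_\alpha K$ for $\alpha \in \{\pi,\varepsilon\}$. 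Checking on elementary tensors with \eqref{eq:char_prop_of_Phi_overline_H}, the composite sends $J_H x \otimes z$ to the dyad $\langle \cdot, x\rangle_H\, z = \tp{J_H x}{-0.45ex}{\scbox{0.70}{$\overline{H}$}}{-0.4ex}{\scbox{0.70}{$K$}}{z}$. This matches the convention of \sref{Corollary}{cor:char_of_tensor_product_of_HS_as_Hilbert_Schmidt}, so no basis is involved.

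\textbf{Step 3: the particular cases.} The identities \eqref{eq:H_otimes_pi_K} and \eqref{eq:H_otimes_vareps_K} follow by replacing $H$ by $\overline{H}$ and using $\overline{\overline{H}} = H$.

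The main obstacle is injectivity of the canonical map $\overline{H}\widetilde{\otimes}_\pi K \to \id{N}{}{}{}{}(H,K)$, equivalently that the nuclear norm is a quotient of the projective norm with trivial kernel. This is precisely where the approximation property enters. I would handle it by citing the Grothendieck result rather than reproving it. Alternatively, one can argue directly that the Schmidt representation realises $\pi(z) = \sigma_1(T_z)$, via orthogonal projections onto finite-dimensional subspaces together with the $\pi$-norm computation on $\ell_2^n\otimes\ell_2^m$ through singular values.
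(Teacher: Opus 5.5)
Your proposal is correct and takes essentially the paper's route. The paper also treats $\mathfrak{I}=\mathfrak{N}=\mathfrak{S}_1$, $\mathfrak{K}=\overline{\mathfrak{F}}$ and Grothendieck's representations of $H^\prime\widetilde{\otimes}_\pi K$ and $H^\prime\widetilde{\otimes}_\varepsilon K$ as known results (citing Defant--Floret, Diestel--Jarchow--Tonge, Jarchow and Pietsch), and gets the stated form by replacing $H^\prime$ with $\overline{H}$ via the linear Fr\'{e}chet--Riesz isomorphism of \sref{Theorem}{thm:Riesz_linear_version}. One small point: $E^\prime\widetilde{\otimes}_\varepsilon F \cong \overline{\mathfrak{F}}(E,F)$ holds isometrically for all Banach spaces, so the approximation property is needed only for injectivity in the $\pi$-case, which is exactly where you placed your ``main obstacle''.
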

\begin{theorem}\label{thm:char_of_trace_class_ops}
Let $H$ and $K$ be Hilbert spaces over $\C$, $T \in \id{L}{}{}{}{}(H, K)$ and 
$\vert T \vert : = (T^\adj T)^{1/2} \in \id{L}{}{}{}{}(H)^+$. Then the 
following statements are equivalent:
\begin{enumerate}
\item $T \in \id{N}{}{}{}{}\,(H, K)$.
\item $T^\adj \in \id{N}{}{}{}{}\,(K, H)$.
\item $\vert T \vert \in \id{N}{}{}{}{}\,(H)$.
\item $\vert T \vert^{1/2} \in \id{S}{2}{}{}{}\,(H)$.
\item $T$ is the composition of two Hilbert-Schmidt operators.
\item $\vert T\vert$ is the composition of two Hilbert-Schmidt operators.
\end{enumerate}
$\id{N}{}{}{}{}(H) \stackrel{1}{=} \id{S}{1}{}{}{}\,(H)$ 
coincides with the class of all trace-class operators. Moreover, if 
$\{e_i : i \in I\}$ is an arbitrarily chosen orthonormal basis in $H$, then
\begin{align}\label{eq:nuclear_norm_vs_HS_norm}
\idn{N}{}{}{}{}(S^\adj) = \idn{N}{}{}{}{}(S) = \idn{N}{}{}{}{}(\vert S\vert) = 
\sum\limits_{i \in I} \langle \vert S\vert e_i, e_i \rangle_K = 
\tup{tr}(\vert S\vert) = (\idn{S}{2}{}{}{}\,(\vert S\vert^{1/2}))^2  
\end{align}
for all $S \in \id{N}{}{}{}{}(H, K)$. 
Moreover, $T \in \id{S}{2}{}{}{}\,(H, K)$, if and only if $T^\adj T \in 
\id{N}{}{}{}{}(H)$, if and only if $\vert T\vert^2 \in \id{N}{}{}{}{}(H)$, 
if and only if $\vert T\vert \in \id{S}{2}{}{}{}\,(H)$, in which case
\begin{align*}
\idn{N}{}{}{}{}(\vert T\vert^2) = \idn{N}{}{}{}{}(T^\adj T) = 
(\idn{S}{2}{}{}{}\,(T))^2 = (\idn{S}{2}{}{}{}\,(\vert T\vert))^2\,.
\end{align*}
\end{theorem}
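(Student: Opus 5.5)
The plan is to make the polar decomposition, the spectral theorem for compact positive operators, and the already established Hilbert--Schmidt facts of \sref{Theorem}{thm:Hilbert_Schmidt_and_tensor_product} do all the work. Throughout, the statement is read with the equality $\idn{N}{}{}{}{} = \sigma_1$ on Hilbert spaces from \sref{Theorem}{thm:a_corollary_of_Grothendieck} taken as given.

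\textbf{The equivalences (i)--(iii).} These follow at once from \sref{Proposition}{prop:Banach_op_ideal_components_on_H_are_Banach_ast_algebras} applied to the Banach ideal $(\id{N}{}{}{}{}, \idn{N}{}{}{}{})$. The same proposition already gives $\idn{N}{}{}{}{}(S^\adj) = \idn{N}{}{}{}{}(S) = \idn{N}{}{}{}{}(\vert S\vert)$ in \eqref{eq:nuclear_norm_vs_HS_norm}. So everything reduces to the positive operator $P := \vert T\vert \in \id{L}{}{}{}{}(H)^+$, and the remaining work splits into two steps.

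\textbf{Step 1: (iii) $\Leftrightarrow$ (iv), with the norm formula.} If $P$ is nuclear, it is compact. The spectral theorem for compact positive operators (\cite[Satz VI.3.6]{W2011}) then gives $P = \sum_n s_n \, J_H u_n\,\underline{\otimes}\,u_n$ with an orthonormal sequence $(u_n)$ and $s_n \geq 0$. Hence $P^{1/2} = \sum_n \sqrt{s_n}\, J_H u_n\,\underline{\otimes}\,u_n$, and $\sigma_2(P^{1/2})^2 = \sum_n s_n$. For any orthonormal basis $\{e_i : i \in I\}$, we have $\sum_i \langle P e_i, e_i\rangle = \sum_i \Vert P^{1/2}e_i\Vert^2 = \sigma_2(P^{1/2})^2$, and this quantity is basis independent by \eqref{eq:T_HS_iff_T_ast_HS_and_indep_of_ONB}. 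It remains to show $\idn{N}{}{}{}{}(P) = \sum_n s_n$.
\begin{itemize}
\item[$\leq$:] This is the given nuclear representation, since each dyad has norm $s_n \Vert u_n\Vert^2 = s_n$.
\item[$\geq$:] For any nuclear representation $P = \sum_k \langle \cdot, a_k\rangle y_k$, we get $\sum_n \langle P u_n, u_n\rangle \leq \sum_k \sum_n \vert \langle u_n,a_k\rangle\vert\,\vert\langle y_k,u_n\rangle\vert \leq \sum_k \Vert a_k\Vert\,\Vert y_k\Vert$, by Cauchy--Schwarz and Bessel. Taking the infimum gives the bound.
\end{itemize}
For the converse (iv) $\Rightarrow$ (iii): since $P^{1/2}$ is Hilbert--Schmidt, it is compact, so the same spectral representation exists with $\sum_n s_n = \sigma_2(P^{1/2})^2 < \infty$. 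This absolutely summable series of dyads exhibits $P$ as nuclear.

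\textbf{Step 2: (iv) $\Rightarrow$ (vi) $\Rightarrow$ (v) $\Rightarrow$ (i).}
\begin{itemize}
\item (iv) $\Rightarrow$ (vi): write $P = P^{1/2}P^{1/2}$.
\item (vi) $\Rightarrow$ (v): write $T = W P$ with the partial isometry $W$, and use \sref{Theorem}{thm:Hilbert_Schmidt_and_tensor_product}-(iii) to absorb $W$ into one Hilbert--Schmidt factor.
\item (v) $\Rightarrow$ (i): this is the key estimate. Let $T = AB$ with $B \in \id{S}{2}{}{}{}\,(H,G)$ and $A \in \id{S}{2}{}{}{}\,(G,K)$. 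Choose an orthonormal basis $\{g_j\}$ of $G$ and expand $Bx = \sum_j \langle x, B^\adj g_j\rangle g_j$, so that $Tx = \sum_j \langle x, B^\adj g_j\rangle A g_j$. Only countably many terms are nonzero. By Cauchy--Schwarz, $\sum_j \Vert B^\adj g_j\Vert\,\Vert A g_j\Vert \leq \sigma_2(B^\adj)\sigma_2(A) = \sigma_2(B)\sigma_2(A)$, using \sref{Theorem}{thm:Hilbert_Schmidt_and_tensor_product}-(i). Hence $T$ is nuclear with $\idn{N}{}{}{}{}(AB)\leq \sigma_2(A)\sigma_2(B)$.
\end{itemize}
The identification of $\id{N}{}{}{}{}(H)$ with the trace class, and the equality $\id{N}{}{}{}{}(H) \stackrel{1}{=} \id{S}{1}{}{}{}\,(H)$, then follow from the formula $\idn{N}{}{}{}{}(S) = \tup{tr}\vert S\vert$ together with \sref{Theorem}{thm:a_corollary_of_Grothendieck}.

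\textbf{The Hilbert--Schmidt addendum.} Since $T^\adj T = \vert T\vert^2$, apply Step 1 to $\vert T\vert^2$, whose square root is $\vert T\vert$. This shows that $\vert T\vert^2$ is nuclear if and only if $\vert T\vert \in \id{S}{2}{}{}{}\,(H)$, with $\idn{N}{}{}{}{}(\vert T\vert^2) = \sigma_2(\vert T\vert)^2$. Finally, $\Vert \vert T\vert e_i\Vert = \Vert T e_i\Vert$ for every $i$, which gives $\sigma_2(\vert T\vert) = \sigma_2(T)$ and the equivalence with $T \in \id{S}{2}{}{}{}\,(H,K)$.

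\textbf{Main obstacle.} I expect the hardest point to be the lower bound $\idn{N}{}{}{}{}(P) \geq \tup{tr}(P)$ in Step 1. It requires passing from an arbitrary nuclear representation, with functionals in $H^\prime$, to inner products via $\Delta_H$. It also requires controlling the double series, which is possibly uncountable in the nonseparable case, so one restricts to the countable support of $(u_n)$. I would handle it exactly as sketched: use Bessel's inequality on the eigenvectors and interchange the sums by absolute convergence.
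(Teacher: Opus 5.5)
Your proof is correct. Its skeleton matches the paper's: \sref{Proposition}{prop:Banach_op_ideal_components_on_H_are_Banach_ast_algebras} gives (i)$\Leftrightarrow$(ii)$\Leftrightarrow$(iii) together with $\mathbf{N}(S^\adj)=\mathbf{N}(S)=\mathbf{N}(\vert S\vert)$, and the polar decomposition $T=W\vert T\vert$ links (v) and (vi).

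The difference lies in the core case $H=K$. The paper proves nothing there. It cites \cite[Proposition 18.8]{C2000}, so the identification of the Banach-space nuclear norm with the Hilbert-space trace norm is imported from the literature. You prove that core directly:
\begin{itemize}
\item The spectral series gives $\mathbf{N}(P)\le\sum_n s_n$.
\item Your Bessel/Cauchy--Schwarz estimate against an arbitrary nuclear representation gives $\sum_n s_n\le\mathbf{N}(P)$. This is exactly the link between the two notions, and it also supplies the finiteness needed for (iii)$\Rightarrow$(iv).
\item Expanding over an orthonormal basis of the middle space gives $\mathbf{N}(AB)\le\sigma_2(A)\,\sigma_2(B)$.
\end{itemize}
Each of these arguments involves only countably many nonzero terms (eigenvectors $u_n$, respectively indices with $B^\adj g_j\neq 0$). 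So your argument is valid in the nonseparable setting the paper insists on.

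The paper's route is shorter but entirely referential. Yours is self-contained and turns the norm identity \eqref{eq:nuclear_norm_vs_HS_norm} into something actually proved.

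One small point to make explicit: $\tup{tr}(\vert S\vert)$ in the statement means the continuous extension of the trace to $\mathfrak{N}(H)$, not a basis sum by definition. Since your spectral series converges in $\mathbf{N}$, with tails bounded by $\sum_{n\ge N}s_n$, continuity of the trace gives $\tup{tr}(P)=\sum_n s_n$. That identifies it with your basis-independent sum $\sum_i\langle Pe_i,e_i\rangle_H$.
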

\begin{proof}
If $H = K$, the result is well-known (cf., e.g., \cite[Proposition 18.8]{C2000}). 
The general result now follows from 
\sref{Proposition}{prop:Banach_op_ideal_components_on_H_are_Banach_ast_algebras} 
and a straightforward application of the polar decomposition of $T$, where the 
latter implies the equivalence of the conditions (v) and (vi).     
\end{proof}
If we apply the spectral representation theorem for the particular case of 
compact normal operators on a Hilbert space (cf. 
\sref{Corollary}{cor:Schmidt_decomposition_revisited} and the preceeding 
remark), together with \eqref{eq:dyad_conj_HS_cases}, we obtain another 
nonnegligible 
\begin{corollary}\label{cor:rep_of_nuclear_density_operators}
Let $H$ be a Hilbert space and $T \in \id{K}{}{}{}{}(H)$. Then the following 
statements are equivalent:
\begin{enumerate}
\item $T \in \id{N}{}{}{}{}(H), T \geq 0$ and $\idn{N}{}{}{}{}(T) = 1$.
\item $T \in \id{N}{}{}{}{}(H), T \geq 0$ and $\tup{tr}(T) = 1$.
\item There exists a countable $($possibly finite$)$ orthonormal sequence $(e_n)_
{n \in \A}$ of eigenvectors of $T$ and a sequence $(p_n)_{n \in \A} \in l_1$, 
consisting of eigenvalues of $T$, each of them repeated a finite number of 
times, corresponding to the dimension of their respective eigenspace, such that
$0 \leq p_n \leq 1$ for all $n \in \A$, $\Vert(p_n)\Vert_{l_1} = 
\sum\limits_{n \in \A} p_n = 1$, and  
\[
Tx = \sum\limits_{n \in \A} p_n\,\langle x, e_n\rangle_H\,e_n = 
\sum\limits_{n \in \A} p_n\,\lb\tp{J_H e_n}{-0.45ex}{\scbox{0.70}
{$\overline{H}$}}{-0.45ex}{\scbox{0.70}{$H$}}{e_n}\rb\!(x)\,\text{ for all }\, x 
\in H\,,
\] 
In particular,
\[
U_\otimes^{-1}(T) = \sum\limits_{n \in \A} p_n\, e_n \otimes e_n\,,
\]
where $U_\otimes : \overline{H} \otimes H \stackrel{\cong}{\longrightarrow} 
\id{S}{2}{}{}{}\,(H)$ denotes the canonical unitary operator, introduced in 
\sref{Corollary}{cor:char_of_tensor_product_of_HS_as_Hilbert_Schmidt}.
\end{enumerate} 
In each case, 
\[
\sigma_{\overline{H}, H}(U_\otimes^{-1}(T)) = 
\idn{S}{2}{}{}{}\,(T) \leq \idn{N}{}{}{}{}(T) = \tup{tr}(T) = 1\,.
\]
\end{corollary}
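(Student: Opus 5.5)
The plan is to prove (i) $\Leftrightarrow$ (ii) directly, then (ii) $\Rightarrow$ (iii) from the spectral theorem for compact self-adjoint operators, and finally (iii) $\Rightarrow$ (i) by an explicit computation. The tensor representation and the norm estimate are handled at the end.

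For (i) $\Leftrightarrow$ (ii), note that $T \geq 0$ gives $\vert T\vert = (T^\adj T)^{1/2} = T$. Then \eqref{eq:nuclear_norm_vs_HS_norm} in \sref{Theorem}{thm:char_of_trace_class_ops} yields $\idn{N}{}{}{}{}(T) = \idn{N}{}{}{}{}(\vert T\vert) = \tup{tr}(\vert T\vert) = \tup{tr}(T)$. Hence the conditions $\idn{N}{}{}{}{}(T) = 1$ and $\tup{tr}(T)=1$ coincide.

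For (ii) $\Rightarrow$ (iii), a positive operator on a complex Hilbert space is self-adjoint, hence normal, and $T$ is compact by assumption. The spectral representation theorem for compact normal operators (\cite[Satz VI.3.6]{W2011}) provides a countable orthonormal sequence $(e_n)_{n\in\A}$ of eigenvectors with nonzero eigenvalues $p_n$. Each $p_n$ is repeated according to the finite dimension of its eigenspace, and $Tx = \sum_n p_n \langle x, e_n\rangle_H e_n$. Positivity gives $p_n = \langle Te_n, e_n\rangle_H \geq 0$.

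To get the trace, I will complete $(e_n)$ to an orthonormal basis of $H$ by an orthonormal basis of $\ker(T)$. Computing the trace in this basis by \eqref{eq:nuclear_norm_vs_HS_norm} gives $\sum_n p_n = \tup{tr}(T) = 1$. This forces $(p_n) \in l_1$ and $0 \leq p_n \leq 1$. The rewriting $\langle x, e_n\rangle_H\, e_n = (J_H e_n \,{}_{\overline{H}}\underline{\otimes}{}_H\, e_n)(x)$ is just the second line of \eqref{eq:dyad_conj_HS_cases}.

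For (iii) $\Rightarrow$ (ii), the series $\sum_n p_n\, J_H e_n \,\underline{\otimes}\, e_n$ converges absolutely in the nuclear norm, since each dyad has norm $1$ and $\sum_n p_n = 1$. So $T \in \id{N}{}{}{}{}(H)$ with $\idn{N}{}{}{}{}(T) \leq 1$. Also $\langle Tx, x\rangle_H = \sum_n p_n \vert\langle x, e_n\rangle_H\vert^2 \geq 0$, so $T \geq 0$. Computing the trace in a basis extending $(e_n)$, via \eqref{eq:trace_of_a_dyad} and continuity of the trace on $\id{N}{}{}{}{}(H)$, gives $\tup{tr}(T) = \sum_n p_n = 1$.

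For the tensor representation, apply \sref{Corollary}{cor:char_of_tensor_product_of_HS_as_Hilbert_Schmidt} to the pair $(\overline{H}, H)$. Since $\overline{\overline{H}} = H$, $U_\otimes$ maps $u \otimes v$ to the dyad $u \,{}_{\overline{H}}\underline{\otimes}{}_H\, v$. Applied to $J_H e_n \otimes e_n$, this gives $U_\otimes^{-1}(T) = \sum_n p_n\, J_H e_n \otimes e_n$. The series converges in $\sigma_2$ because $\sigma_2 \leq \idn{N}{}{}{}{}$ and $U_\otimes$ is an isometry. Identifying $J_H e_n$ with the same set-element $e_n$ of $\overline{H}$ yields the displayed formula.

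The step I expect to need the most care is this bookkeeping between $H$ and $\overline{H}$. The vector $J_He_n$, written as $e_n$, lies in $\overline{H}$. The scalar $p_n$ is real, so $p_n \ast e_n = p_n e_n$ and no conjugation issue arises. I will check this against \eqref{eq:dyad_conj_HS_cases}.

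Finally, since $U_\otimes$ is isometric, $\sigma_{\overline{H},H}(U_\otimes^{-1}(T)) = \sigma_2(T)$. We have $\sigma_2(T)^2 = \sum_n p_n^2 \leq \sum_n p_n = 1$, because $p_n \in [0,1]$. Alternatively, $\sigma_2 \leq \idn{N}{}{}{}{}$ follows from \sref{Theorem}{thm:char_of_trace_class_ops}.
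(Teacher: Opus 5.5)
Your proposal is correct and follows essentially the same route as the paper. Both arguments rest on the same three ingredients: $\vert T\vert = T$ for (i)$\Leftrightarrow$(ii), the spectral theorem for compact self-adjoint operators with the trace computed in an orthonormal basis extending $(e_n)$ for (ii)$\Rightarrow$(iii), and a direct check of the converse. The differences are small: you close the loop via (iii)$\Rightarrow$(ii) using absolute convergence of $\sum_n p_n\, J_H e_n \,{}_{\overline{H}}\underline{\otimes}{}_H\, e_n$ in the nuclear norm, where the paper cites the proof of \cite[Satz VI.5.5]{W2011} for (iii)$\Rightarrow$(i); and you verify the tensor identity and the final norm chain explicitly, which the paper leaves implicit.
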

\begin{proof}
(i) $\Rightarrow$ (ii): Since $T \geq 0$, it follows that $T = \vert T\vert$, 
whence $\tup{tr}(T) = \tup{tr}(\vert T\vert) = \idn{N}{}{}{}{}(T) = 1$ (cf. 
\cite[Definition 18.3 and Definition 18.10]{C2000}).\\[0.2em]
\noindent (ii) $\Rightarrow$ (iii): This nontrivial implication follows from 
the spectral representation theorem for compact normal operators on a Hilbert 
space, applied to the nuclear positive (and hence compact normal) operator $T = 
\vert T\vert \in \id{N}{}{}{}{}(H) \subseteq \id{K}{}{}{}{}(H)$. By definition, 
each $p_n$ (the $n$-th eigenvalue of $T = \vert T\vert = (T^\adj T)^{1/2}$) 
coincides with the $n$-th singular value of $T$, for all $n \in \A$ (cf. 
\cite[proof of Theorem VI.3.2 and Satz VI.5.5]{W2011}).\\[0.2em]
\noindent (iii) $\Rightarrow$ (i): This follows from the proof of 
\cite[Satz VI.5.5]{W2011}.
\end{proof}  
Trace duality leads to the following important particular case of a general 
representation result for nonnormed operator ideals (cf. \cite{Oe2026}), whose 
proof therefore does not include any applications of tensor norms, such as in 
the maximal Banach operator ideal case, where the latter is fully reflected in 
\cite[Theorem 17.5 and Proposition 22.6]{DF1993}. Due to the reflexivity of a 
Hilbert space $K$, the inverse of the canonical metric injection $j_K : K 
\stackrel{1}{\hookrightarrow} K^{\prime\prime}$, described in 
\sref{Remark}{rem:the_operator_Lambda_z}, is well-defined, and we arrive at 
\begin{theorem}\label{thm:trace_duality_in_the_Hilbert_space_case}
Let $(\id{A}{}{}{}{}, \idn{A}{}{}{}{})$ be a $p$-normed operator ideal 
$(0 < p \leq 1)$. Let $H$ and $K$ be  Hilbert spaces. Then
\[
\Theta_{\id{A}{}{\tup{min}}{}{}} : (\id{A}{}{\tup{min}}{}{}(K, H))^\prime 
\stackrel{\cong}{\longrightarrow} \id{A}{}{\adj}{}{}(H, K), \xi \mapsto 
j_K^{-1}\,T_\xi 
\]
is an isometric isomorphism, where the operator $T_\xi \in \id{A}{}{\adj}{}{}
(H, K^{\prime\prime})$ is defined as 
\begin{align}\label{eq:construction_of_T_xi}
\langle b, T_\xi x\rangle : = \langle \langle\cdot, b\rangle\,x, \xi\rangle
\hspace{2em} ((x,b) \in H \times K^\prime)\,. 
\end{align}
The inverse is given by
\begin{align}\label{eq:the_inverse}
\Theta_{\id{A}{}{\tup{min}}{}{}}^{-1}(T) = \textup{tr}(\cdot\,T)\hspace{2em}(T \in 
\id{A}{}{\adj}{}{}(H, K))\,.
\end{align}
In particular,
\[
\langle A, \xi\rangle = \tup{tr}(A\,\Theta_{\id{A}{}{\tup{min}}{}{}}(\xi))
\]
and
\begin{align}\label{eq:splitting_role_of_Theta_Amin}
\langle \tp{J_K y}{-0.45ex}{\scbox{0.70}{$\overline{K}$}}{-0.50ex}
{\scbox{0.70}{$H$}}{x}, \xi\rangle = 
\langle \Theta_{\id{A}{}{\tup{min}}{}{}}(\xi)x, y \rangle_K\,. 
\end{align}
for all $(A, \xi) \in (\id{A}{}{\tup{min}}{}{}(K, H), 
((\id{A}{}{\tup{min}}{}{}(K, H))^\prime)$ and $(x, y) \in H \times K$. Moreover, 
if $H = K$ is finite-dimensional, then $\Theta_{\id{A}{}{\tup{min}}{}{}}^{-1}
(Id_H) = \textup{tr}$.
\end{theorem}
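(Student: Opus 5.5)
The plan is to reduce the statement to the general trace duality theorem for the minimal kernel, namely the isometric identity $(\id{A}{}{\tup{min}}{}{}(F,E))^\prime \cong \id{A}{}{\adj}{}{}(E, F^{\prime\prime})$, $\xi \mapsto T_\xi$. This is known for Banach operator ideals from \cite[Theorem 17.5 and Proposition 22.6]{DF1993}, and for $p$-normed ideals from the representation result in \cite{Oe2026}. The Hilbert space structure is then used only to remove the bidual via $j_K^{-1}$. The first step is well-definedness. For $\xi \in (\id{A}{}{\tup{min}}{}{}(K,H))^\prime$, the map $(x,b) \mapsto \langle \langle\cdot,b\rangle x, \xi\rangle$ is bilinear. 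By $(\textbf{q-IN}_1)$ it satisfies $\vert\langle \langle\cdot,b\rangle x,\xi\rangle\vert \le \Vert\xi\Vert\,\Vert b\Vert\,\Vert x\Vert$, so \eqref{eq:construction_of_T_xi} defines an operator $T_\xi \in \id{L}{}{}{}{}(H, K^{\prime\prime})$ with $\Vert T_\xi\Vert \le \Vert \xi\Vert$. Since $K$ is reflexive, $j_K$ is onto, so $j_K^{-1}T_\xi \in \id{L}{}{}{}{}(H,K)$ is well defined.

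The second step is to show that $\idn{A}{}{\adj}{}{}(j_K^{-1}T_\xi) \le \Vert\xi\Vert$. Take $M \in \mbox{FIN}(H)$, $N \in \mbox{COFIN}(K)$ and a finite rank $S$ with $\idn{A}{}{}{}{}(S)\le 1$. Write $J_M^H S Q_N^K = \sum_i \langle\cdot,b_i\rangle x_i \in \id{F}{}{}{}{}(K,H)$. Using \eqref{eq:construction_of_T_xi} and linearity of the trace gives
\[
\tup{tr}(j_K^{-1}T_\xi J_M^H S Q_N^K) = \sum_i \langle j_K^{-1}T_\xi x_i, b_i\rangle = \langle J_M^H S Q_N^K, \xi\rangle .
\]
The modulus of this is at most $\Vert \xi\Vert\, \idn{A}{}{\tup{min}}{}{}(J_M^H S Q_N^K)$. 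On finite rank operators that factor through finite-dimensional spaces, $\idn{A}{}{\tup{min}}{}{}$ is bounded by $\idn{A}{}{}{}{}(S)\le 1$. The same computation proves \eqref{eq:splitting_role_of_Theta_Amin}: for the dyad, \eqref{eq:dyad_conj_HS_cases} gives $\tp{J_K y}{-0.45ex}{\scbox{0.70}{$\overline{K}$}}{-0.50ex}{\scbox{0.70}{$H$}}{x} = \langle\cdot,y\rangle_K\,x = \langle\cdot,\Phi_{\overline{K}}y\rangle x$, and then \eqref{eq:char_prop_of_Phi_overline_H} applies.

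The third step is the converse: the map $T \mapsto \tup{tr}(\cdot\,T)$ of \eqref{eq:the_inverse} has norm at most $1$ from $\id{A}{}{\adj}{}{}(H,K)$ into $(\id{A}{}{\tup{min}}{}{}(K,H))^\prime$, and it inverts $\Theta$. I expect this to be the main obstacle. One must show that $\vert\tup{tr}(AT)\vert \le \idn{A}{}{\adj}{}{}(T)\,\idn{A}{}{\tup{min}}{}{}(A)$ for $A = BSC$ with $B,C$ approximable. Any trace formula of this kind must first be shown to be well defined on $\id{A}{}{\tup{min}}{}{}(K,H)$; it is not automatically meaningful. In Hilbert spaces this is manageable because $H$ and $K$ have the metric approximation property through orthogonal projections onto finite-dimensional subspaces. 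I would approximate $B$ and $C$ by $P_M B$ and $C P_N$ with orthogonal projections. By definition of $\idn{A}{}{\adj}{}{}$, the finite-dimensional pieces satisfy $\vert\tup{tr}(T P_M B S C P_N)\vert \le \idn{A}{}{\adj}{}{}(T)\Vert B\Vert\idn{A}{}{}{}{}(S)\Vert C\Vert$. Taking infima over factorisations and using the $p$-triangle inequality yields the bound. Continuity of $\idn{A}{}{\tup{min}}{}{}$ (via \cite[Corollary 3.1]{Oe2003}) allows passage to the limit and extends the trace functional continuously.

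Finally, the two maps are inverse to each other. Both sides are determined by their values on dyads $\langle\cdot,b\rangle x$: these are dense in $\id{A}{}{\tup{min}}{}{}(K,H)$, and operators in $\id{L}{}{}{}{}(H,K)$ are determined by the numbers $\langle Tx, y\rangle_K$. The identity on dyads is \eqref{eq:splitting_role_of_Theta_Amin} together with \eqref{eq:trace_of_a_dyad}. The stated formula $\langle A,\xi\rangle = \tup{tr}(A\,\Theta(\xi))$ follows by density. For $\dim H<\infty$, $\Theta^{-1}(Id_H) = \tup{tr}(\cdot\, Id_H) = \tup{tr}$.
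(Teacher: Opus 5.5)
Your first paragraph takes the same route as the paper. The paper gives no separate argument: it treats the statement as the Hilbert-space case of the general representation $(\mathfrak{A}^{\mathrm{min}}(F,E))'\cong\mathfrak{A}^{\ast}(E,F'')$, $\xi\mapsto T_\xi$. It cites \cite[Theorem 17.5 and Proposition 22.6]{DF1993} for Banach ideals and \cite{Oe2026} for $p$-normed ones, and removes the bidual with $j_K^{-1}$ because $K$ is reflexive. Your Steps 1--4 go further and give a self-contained Hilbert-space proof, with orthogonal projections replacing the tensor-norm/accessibility machinery. That is worthwhile, since \cite{Oe2026} is unpublished. Steps 1, 2 and 4 are fine. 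So is the finite-rank estimate $\vert\mathrm{tr}(TX)\vert\le\mathbf{A}^{\ast}(T)\,\mathbf{A}^{\mathrm{min}}(X)$ for $X\in\mathfrak{F}(K,H)$ in Step 3, together with the density of $\mathfrak{F}(K,H)$ and the continuity of the $p$-norm $\mathbf{A}^{\mathrm{min}}$.

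One step is asserted rather than proved: identifying the extended functional with the trace. Extending $X\mapsto\mathrm{tr}(XT)$ by continuity from $\mathfrak{F}(K,H)$ gives some functional $\phi_T$ of norm at most $\mathbf{A}^{\ast}(T)$ that inverts $\Theta$. The theorem claims more: $\langle A,\Theta^{-1}(T)\rangle=\mathrm{tr}(AT)$ and $\langle A,\xi\rangle=\mathrm{tr}(A\,\Theta(\xi))$ for every $A\in\mathfrak{A}^{\mathrm{min}}(K,H)$, where $\mathrm{tr}$ is the trace on $\mathfrak{N}(H)$. Your ``follows by density'' needs two facts you have not shown: $AT\in\mathfrak{N}(H)$, and $A\mapsto\mathrm{tr}(AT)$ is $\mathbf{A}^{\mathrm{min}}$-continuous. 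The continuous extension does not provide either.

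Your own estimate closes this.
\begin{enumerate}
\item Take $L\in\mathfrak{F}(H)$ with $\Vert L\Vert\le 1$. Then $\mathrm{tr}(ATL)=\mathrm{tr}(T(LA))$, because $TL$ has finite rank.
\item Since $LA\in\mathfrak{F}(K,H)$ and $\mathbf{A}^{\mathrm{min}}(LA)\le\mathbf{A}^{\mathrm{min}}(A)$, your estimate gives $\vert\mathrm{tr}(ATL)\vert\le\mathbf{A}^{\ast}(T)\,\mathbf{A}^{\mathrm{min}}(A)$.
\item Let $AT=W\vert AT\vert$ be the polar decomposition, and let $P_F$ be the orthogonal projection onto a finite-dimensional subspace $F\subseteq H$. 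Choosing $L=P_FW^{\ast}$ gives $\mathrm{tr}(\vert AT\vert P_F)\le\mathbf{A}^{\ast}(T)\,\mathbf{A}^{\mathrm{min}}(A)$ for every such $F$.
\item Hence $AT\in\mathfrak{N}(H)$ with $\mathbf{N}(AT)=\mathrm{tr}\vert AT\vert\le\mathbf{A}^{\ast}(T)\,\mathbf{A}^{\mathrm{min}}(A)$.
\end{enumerate}
It follows that $A\mapsto\mathrm{tr}(AT)$ is continuous on $\mathfrak{A}^{\mathrm{min}}(K,H)$ and agrees with $\phi_T$ on the dense subspace $\mathfrak{F}(K,H)$. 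Your Step 4, and the formula $\langle A,\xi\rangle=\mathrm{tr}(A\,\Theta(\xi))$, then hold literally.
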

\begin{remark}\label{rem:positivity_criterium}
Let $(\id{A}{}{}{}{}, \idn{A}{}{}{}{})$ be a $p$-normed operator ideal 
$(0 < p \leq 1)$. Let $H$ be  a Hilbert space. Let $\xi \in 
(\id{A}{}{\tup{min}}{}{}(H))^\prime$. If $\xi \geq 0$, then $\Theta_
{\id{A}{}{\tup{min}}{}{}}(\xi) \in \id{A}{}{\adj}{}{}(H) \cap 
\id{L}{}{}{}{}(H)^+$ is a positive operator. In order to recognise 
this at once, we just have to apply \eqref{eq:splitting_role_of_Theta_Amin} to 
$K = H$ and $y = x$, i.e., to the positive (rank one) operator 
$\tp{J_H x}{-0.45ex}{\scbox{0.70}{$\overline{H}$}}{-0.5ex}{\scbox{0.70}
{$H$}}{x} = \langle\cdot, x\rangle_H\,x \in \id{A}{}{\tup{min}}{}{}(H) \cap 
\id{L}{}{}{}{}(H)^+$. We do not know whether for any $p$-normed operator ideal 
$(\id{A}{}{}{}{}, \idn{A}{}{}{}{}) \not\!\!{\stackrel{1}{=}}\,(\id{L}{}{}{}{}, 
\Vert \cdot \Vert)$ ($0 < p \leq 1$) the converse is also satisfied. If namely 
$0 \leq \Theta_{\id{A}{}{\tup{min}}{}{}}(\xi) = : D_\xi = D_\xi^{1/2} D_\xi^{1/2} 
\in \id{A}{}{\adj}{}{}(H)$ and if $0 \leq A \in \id{A}{}{\tup{min}}{}{}(H)$, 
then $D_\xi^{1/2}(D_\xi^{1/2}A) = D_\xi A \in (\id{A}{}{\adj}{}{} \circ 
\id{A}{}{\tup{min}}{}{})(H) \stackrel{1}{\subseteq} \id{N}{}{}{}{}(H)$. However, 
we do not know whether also the - positive - operator $(D_\xi^{1/2} A)D_\xi^{1/2}$ 
is nuclear. If it were, then we could in fact apply the trace equalities 
\eqref{eq:trace_trick} below (due to \cite[Lemma 2.1]{LNRR1981})! A positive 
answer holds in the case $(\id{A}{}{}{}{}, \idn{A}{}{}{}{}) : = 
(\id{L}{}{}{}{}, \Vert \cdot \Vert)$ (cf. \eqref{eq:trace_trick} below).
\end{remark}    
Consequently, even if $H$ is infinite-dimensional 
and nonseparable, \sref{Corollary}{cor:rep_of_nuclear_density_operators} and 
\sref{Theorem}{thm:trace_duality_in_the_Hilbert_space_case} in particular imply 
that states on the $\tup{C}^\adj$-algebra of \textit{compact} operators on $H$, 
$(\id{K}{}{}{}{}(H), \Vert\cdot\Vert_{\id{L}{}{}{}{}(H)})$ - which is 
not a von Neumann algebra if $H$ is infinite-dimensional (since $Id_H$ is 
compact if and only if $\tup{dim}(H) < \infty$) - are in a one-to-one 
correspondence with positive \textit{nuclear} operators of trace one (cf. 
also \cite[Chapter 2.2]{H2003}). More precisely, if $\xi$ is a positive linear 
functional on the Banach space $\id{K}{}{}{}{}(H)$ of norm $1$ (i.e., if 
$\xi(S) \geq 0$ for all $S \in \id{K}{}{}{}{}(H)^+$ and $\Vert \xi \Vert = 
1$ - cf., e.g., \cite[Chapter 2.1, p. 19-20]{H2003} and 
\cite[Definition on p. 89]{M1990}), then there is a nuclear, positive operator 
$D_\xi \in \id{N}{}{}{}{}(H) \cap \id{L}{}{}{}{}(H)^+$, such that 
$\tup{tr}(D_\xi) = 1 = \idn{N}{}{}{}{}(D_\xi)$ and
\begin{align}\label{eq:the_compact_case}
\xi(A) = \tup{tr}(A D_\xi) = \tup{tr}(D_\xi A) \text{ for all } A \in 
\id{K}{}{}{}{}(H)\,.
\end{align}
In fact, the required operator $D_\xi$ is given by $D_\xi : = 
\Theta_{\id{L}{}{}{\tup{min}}{}}(\xi) = \Theta_{\id{\overline{F}}{}{}{}{}}
(\xi) = j_H^{-1}\,T_\xi$ (due to \eqref{eq:the_inverse}, applied to 
$(\id{A}{}{}{}{}, \idn{A}{}{}{}{}) : = (\id{L}{}{}{}{}, \Vert \cdot\Vert)$ 
here)! That construction is built on a few nontrivial results from operator 
ideal theory; namely that - in the Hilbert space case - the operator ideal 
component $\id{N}{}{}{}{}(H)$ coincides isometrically with the component of all 
integral operators $\id{I}{}{}{}{}(H) \stackrel{1}{=} \id{A}{}{\adj}{}{}(H)$ 
(cf., e.g., \cite[Theorem 5.30]{DJT1995}) 
and that the operator ideal component $\id{K}{}{}{}{}(H)$ coincides 
isometrically with the component of all $\Vert \cdot\Vert_
{\id{L}{}{}{}{}(H)}$-approximable operators, $\id{\overline{F}}{}{}{}{}(H) 
\stackrel{1}{=} \id{A}{}{\tup{min}}{}{}(H)$ (cf., e.g., \cite[Theorem 18.3.1 
and Proposition 18.5.4]{J1981}). Moreover, since the rank one operator 
$\tp{J_H x}{-0.45ex}{\scbox{0.70}{$\overline{H}$}}{-0.4ex}
{\scbox{0.70}{$H$}}{x} = \langle\cdot, x\rangle_H\,x \in \id{K}{}{}{}{}(H)^+$, 
it follows from \sref{Remark}{rem:positivity_criterium} that $D_\xi = 
\Theta_{\id{L}{}{}{\tup{min}}{}}(\xi)$ is positive, if $\xi$ is positive. 
Conversely, if the nuclear operator $D_\xi$ is positive, then also $\xi$ is 
positive. This follows from \sref{Theorem}{thm:char_of_trace_class_ops} 
and the well-known fact that $\textup{tr}(ST) = \textup{tr}(TS)$ if $S \in 
\id{S}{2}{}{}{}\,(H)$ and $T \in \id{S}{2}{}{}{}\,(H)$ both are Hilbert-Schmidt 
operators. If namely $A \in \id{L}{}{}{}{}(H)^+$ is an arbitrarily given 
positive operator (not necessarily compact\,!), then 
\begin{align}\label{eq:trace_trick}
\textup{tr}(D_\xi A) = \textup{tr}(D_\xi^{1/2}(D_\xi^{1/2}A)) = \textup{tr}
((D_\xi^{1/2}A)D_\xi^{1/2}) 
\stackrel{\eqref{eq:nuclear_norm_vs_HS_norm}}{\geq} 0
\end{align}
(since $D_\xi^{1/2} \in \id{S}{2}{}{}{}\,(H)$ and hence also $D_\xi^{1/2}A \in 
\id{S}{2}{}{}{}\,(H)$). Such an operator $D_\xi$, satisfying 
\eqref{eq:the_compact_case} is even uniquely defined. In particular, if $H$ is 
finite-dimensional, then we reobtain the well-known fact that 
\eqref{eq:the_compact_case} is satisfied for any state $\xi$ on 
$\id{L}{}{}{}{}(H)$. Note also that $\id{K}{}{}{}{}(H)$ is a unital 
$\tup{C}^\adj$-algebra (with ${\bf{\unit}} : = Id_H$) if and only if $H$ is 
finite-dimensional. 

Clearly, these observations, together with 
\sref{Theorem}{thm:trace_duality_in_the_Hilbert_space_case} imply the following 
particular important well-known trace duality result:
\begin{corollary}\label{cor:trace_duality_particular_case}
Let $H$ and $K$ be  Hilbert spaces. Then
\begin{align*}
(\id{K}{}{}{}{}(H, K))^\prime \cong \id{N}{}{}{}{}(K, H) 
\stackrel{1}{\hookrightarrow} (\id{N}{}{}{}{}(K, H))^{\prime\prime} \cong 
(\id{L}{}{}{}{}(H, K))^\prime
\end{align*}
and
\begin{align*}
(\id{N}{}{}{}{}(K, H))^\prime \cong \id{L}{}{}{}{}(H, K) \cong 
(\id{K}{}{}{}{}(H, K))^{\prime\prime}\,.
\end{align*}
In particular, if $\id{A}{}{}{}{} \in \{\id{N}{}{}{}{}, \id{L}{}{}{}{}\}$, 
then the Banach space $\id{A}{}{}{}{}(K, H)$ is 1-complemented in the bidual 
$(\id{A}{}{}{}{}(K, H))^{\prime\prime}$. 
\end{corollary}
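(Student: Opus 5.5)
We specialise \sref{Theorem}{thm:trace_duality_in_the_Hilbert_space_case} to two concrete ideals and chain the resulting isometric isomorphisms with the canonical bidual embedding.

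\textbf{First chain.} Take $(\id{A}{}{}{}{}, \idn{A}{}{}{}{}) := (\id{L}{}{}{}{}, \Vert\cdot\Vert)$, with the roles of $H$ and $K$ interchanged where necessary. On Hilbert spaces we have $\id{L}{}{\tup{min}}{}{} = \overline{\id{F}{}{}{}{}} \stackrel{1}{=} \id{K}{}{}{}{}$, and $\id{L}{}{\adj}{}{}$ is the ideal of integral operators, which coincides isometrically with $\id{N}{}{}{}{}$ on Hilbert spaces by \sref{Theorem}{thm:a_corollary_of_Grothendieck}. The theorem then gives
\[
(\id{K}{}{}{}{}(H,K))^\prime \cong \id{N}{}{}{}{}(K,H),\qquad T\mapsto \tup{tr}(\bcdot\,T).
\]
Next take $(\id{A}{}{}{}{}, \idn{A}{}{}{}{}) := (\id{N}{}{}{}{}, \idn{N}{}{}{}{})$. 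This ideal is minimal, so $\id{N}{}{\tup{min}}{}{} \stackrel{1}{=} \id{N}{}{}{}{}$. Its adjoint is $\id{N}{}{\adj}{}{} \stackrel{1}{=} \id{I}{}{\adj}{}{} \stackrel{1}{=} \id{L}{}{}{}{}$; this is the standard duality between integral and bounded operators (cf. \cite[Chapter 17]{DF1993}, \cite[Theorem 5.30]{DJT1995}). The theorem then yields
\[
(\id{N}{}{}{}{}(K,H))^\prime \cong \id{L}{}{}{}{}(H,K).
\]
Dualising the first isomorphism with \sref{Proposition}{prop:isom_isomorphisms_between_Banach_spaces} gives $(\id{K}{}{}{}{}(H,K))^{\prime\prime} \cong (\id{N}{}{}{}{}(K,H))^\prime \cong \id{L}{}{}{}{}(H,K)$, which is the second displayed line. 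Applying the same proposition to the second isomorphism gives $(\id{N}{}{}{}{}(K,H))^{\prime\prime} \cong (\id{L}{}{}{}{}(H,K))^\prime$. Combined with the metric injection $j_{\id{N}{}{}{}{}(K,H)}$, this completes the first displayed line.

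\textbf{Complementation.} To show that $\id{A}{}{}{}{}(K,H)$ is $1$-complemented in its bidual, we use a standard fact: any Banach space $X$ that is isometrically a dual space, $X\cong Y^\prime$, is $1$-complemented in $X^{\prime\prime}$ via the Dixmier projection $P := (j_Y)^\prime$, restricted appropriately. Concretely, $P\,j_X = Id_X$ and $\Vert P\Vert\le 1$.

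For $\id{A}{}{}{}{} = \id{N}{}{}{}{}$, the first chain with $H$ and $K$ swapped gives $\id{N}{}{}{}{}(K,H)\cong(\id{K}{}{}{}{}(H,K))^\prime$. For $\id{A}{}{}{}{} = \id{L}{}{}{}{}$, the second chain gives $\id{L}{}{}{}{}(K,H)\cong(\id{N}{}{}{}{}(H,K))^\prime$. In each case we transport $P$ through the isometry.

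\textbf{Expected obstacle.} The main difficulty is identifying the adjoint ideals $\id{L}{}{\adj}{}{}$ and $\id{N}{}{\adj}{}{}$ isometrically on Hilbert-space components. This must be done consistently with the trace pairing, so that the maps are the asserted ones and not merely abstract isomorphisms. A second point needing care is checking that the order of $H$ and $K$ and the correct trace pairing are matched when the isomorphisms are composed.
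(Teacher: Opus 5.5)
Your proposal is correct and follows essentially the paper's route. You specialise \sref{Theorem}{thm:trace_duality_in_the_Hilbert_space_case} to $\mathfrak{A}=\mathfrak{L}$, using $\mathfrak{L}^{\mathrm{min}}=\overline{\mathfrak{F}}\stackrel{1}{=}\mathfrak{K}$ and $\mathfrak{L}^{\ast}=\mathfrak{I}\stackrel{1}{=}\mathfrak{N}$ on Hilbert spaces (the observations preceding the corollary), and to the minimal ideal $\mathfrak{A}=\mathfrak{N}$ with $\mathfrak{N}^{\ast}\stackrel{1}{=}\mathfrak{L}$ (the map $\Theta_{\mathfrak{N}}$ the paper uses later), then dualise via \sref{Proposition}{prop:isom_isomorphisms_between_Banach_spaces}. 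Your Dixmier-projection argument for the $1$-complementation makes explicit what the paper leaves implicit: $\mathfrak{N}(K,H)\cong(\mathfrak{K}(H,K))^\prime$ and $\mathfrak{L}(K,H)\cong(\mathfrak{N}(H,K))^\prime$ are isometrically dual spaces.
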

This immediately raises a natural question: what can be said if we 
consider linear functionals $\rho \in (\id{A}{}{\adj}{}{}(K, H))^\prime \cong 
(\id{A}{}{\tup{min}}{}{}(H, K))^{\prime\prime}$ (cf. also 
\sref{Theorem}{thm:normal_states_on_vNAs} and 
\sref{Corollary}{cor:vector_state_version_of_Gleason})? A partial answer follows 
from \sref{theorem}{thm:trace_duality_in_the_Hilbert_space_case} and well-known 
properties of the weak${}^\adj$ topology. More precisely, we have:
\begin{proposition}\label{prop:weak_star_functionals_on_max_BOI_components}
Let $(\id{A}{}{}{}{}, \idn{A}{}{}{}{})$ be a $p$-normed operator ideal 
$(0 < p \leq 1)$. Let $H$ and $K$ be  Hilbert spaces and $\rho \in 
(\id{A}{}{\adj}{}{}(K, H))^\prime$. If $\Theta_{\id{A}{}{\tup{min}}{}{}}^
{\,\prime}\rho$ is weak${}^\adj$ continuous on $(\id{A}{}{\tup{min}}{}{}(H, K))^
{\,\prime}$, then there exists a unique linear operator $A_\rho \in 
\id{A}{}{\tup{min}}{}{}(H, K)$, such that $\idn{A}{}{\tup{min}}{}{}(A_\rho) = 
\Vert \rho\Vert$ and
\[
\langle T, \rho \rangle = \textup{tr}(A_\rho T)\,\text{ for all } T \in 
\id{A}{}{\adj}{}{}(K, H)\,.
\]
If $H = K$, and if the functional $\rho$ is positive, then $A_\rho \in 
\id{A}{}{\tup{min}}{}{}(H)$ is also positive. 
\end{proposition}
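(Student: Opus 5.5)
We will work entirely through the isometric isomorphism of \sref{Theorem}{thm:trace_duality_in_the_Hilbert_space_case}, applied with the roles of $H$ and $K$ interchanged:
\[
\Theta : = \Theta_{\id{A}{}{\tup{min}}{}{}} : (\id{A}{}{\tup{min}}{}{}(H, K))^\prime \stackrel{\cong}{\longrightarrow} \id{A}{}{\adj}{}{}(K, H)\,.
\]
Its dual $\Theta^{\,\prime} : (\id{A}{}{\adj}{}{}(K, H))^\prime \longrightarrow (\id{A}{}{\tup{min}}{}{}(H, K))^{\prime\prime}$ is again an isometric isomorphism by \sref{Proposition}{prop:isom_isomorphisms_between_Banach_spaces}. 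So the functional $\rho$ corresponds isometrically to $\Theta^{\,\prime}\rho$, which is an element of the bidual of $X : = \id{A}{}{\tup{min}}{}{}(H, K)$.

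The first step uses the standard fact about the weak${}^\adj$ topology that the weak${}^\adj$ continuous linear functionals on $X^\prime$ are exactly the elements of $j_X(X)$. For $p<1$, $X$ is only a $p$-Banach space. Its dual is still a Banach space, however, and the weak${}^\adj$ topology on $X^\prime$ is induced by $X$. The identification of the weak${}^\adj$ dual with $\{\langle x, \cdot\rangle : x \in X\}$ is purely algebraic: a linear functional continuous for $\sigma(X^\prime, X)$ is a finite linear combination of evaluations. So this step works verbatim.

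Hence the assumption gives a unique $A_\rho \in X$ with $\Theta^{\,\prime}\rho = j_X A_\rho$. The map $j_X$ is injective because, in a Hilbert space setting, $X^\prime$ separates points: it contains the functionals $\tup{tr}(\cdot\,T)$ for $T$ of finite rank. For $T \in \id{A}{}{\adj}{}{}(K, H)$ we then compute, using \eqref{eq:the_inverse},
\[
\langle T, \rho\rangle = \langle \Theta^{-1}T, \Theta^{\,\prime}\rho\rangle = \langle A_\rho, \Theta^{-1} T\rangle = \tup{tr}(A_\rho T)\,.
\]
Uniqueness of $A_\rho$ with this trace property follows by testing against rank one operators $T = \tp{J_K y}{-0.45ex}{\scbox{0.70}{$\overline{K}$}}{-0.5ex}{\scbox{0.70}{$H$}}{x}$. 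By \eqref{eq:trace_of_a_dyad} and \eqref{eq:dyads_and_composition}, $\tup{tr}(A_\rho T) = \langle A_\rho x, y\rangle_K$, and these values determine $A_\rho$.

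For the norm identity we need $\Vert j_X A_\rho\Vert = \idn{A}{}{\tup{min}}{}{}(A_\rho)$. This is the main obstacle when $p<1$, because $j_X$ need not be isometric for a non-normable quasi-norm: Hahn–Banach may fail. I would first try to show that, on Hilbert space components, the $p$-norm $\idn{A}{}{\tup{min}}{}{}$ is recovered by trace duality, i.e.
\[
\idn{A}{}{\tup{min}}{}{}(A) = \sup\{\vert\tup{tr}(AL)\vert : \idn{A}{}{\adj}{}{}(L) \leq 1\}\,.
\]
I would attempt this via the polar decomposition (\sref{Proposition}{prop:Banach_op_ideal_components_on_H_are_Banach_ast_algebras}) and a reduction to diagonal operators. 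If that fails, I would read the norm statement as $\Vert\rho\Vert = \Vert \Theta^{\,\prime}\rho\Vert = \Vert j_X A_\rho\Vert$, the norm of $A_\rho$ in the bidual, with equality to $\idn{A}{}{\tup{min}}{}{}(A_\rho)$ being automatic in the normed case $p=1$.

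Finally, suppose $H = K$ and $\rho \geq 0$. Choose $T = \tp{J_H x}{-0.45ex}{\scbox{0.70}{$\overline{H}$}}{-0.5ex}{\scbox{0.70}{$H$}}{x} = \langle\cdot, x\rangle_H\,x$. This operator is positive, of finite rank, and lies in $\id{A}{}{\adj}{}{}(H)$. Then
\[
\langle A_\rho x, x\rangle_H = \tup{tr}(A_\rho T) = \langle T, \rho\rangle \geq 0
\]
for every $x \in H$, exactly as in \sref{Remark}{rem:positivity_criterium}. Over $\C$ this already forces $A_\rho$ to be positive, and in particular self-adjoint.
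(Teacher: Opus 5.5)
Your argument is the paper's argument. You pass to $\Theta^{\,\prime}$, use that weak${}^\adj$-continuous functionals on $X^\prime$ are evaluations to get $\Theta^{\,\prime}\rho=j_XA_\rho$, read off $\langle T,\rho\rangle=\textup{tr}(A_\rho T)$ from \eqref{eq:the_inverse}, and obtain positivity from the positive dyads $\langle\cdot,x\rangle_H\,x$. Your two side remarks make these steps legitimate for $p<1$, whereas the paper only cites a result for normed spaces there. The first is that the weak${}^\adj$ step is purely algebraic. The second is that injectivity of $j_X$ comes from the functionals $\textup{tr}(\cdot\,T)$ with $T$ of finite rank, not from Hahn--Banach.

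The one place to change course is the norm identity. The paper's proof simply asserts $\Vert j_X A_\rho\Vert_{X^{\prime\prime}}=\mathbf{A}^{\textup{min}}(A_\rho)$. That is Hahn--Banach, and so only available for $p=1$. Your first plan, recovering $\mathbf{A}^{\textup{min}}$ on Hilbert space components by trace duality, cannot succeed, because for $p<1$ the identity is false.

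Here is a counterexample. Take $\mathfrak{A}=\mathfrak{N}_r$ with $0<r<1$, which is an $r$-normed ideal, and $H=K$.
\begin{itemize}
\item Since $\vert\textup{tr}(TL)\vert\le\Vert T\Vert\,\mathbf{N}(L)\le\Vert T\Vert\,\mathbf{N}_r(L)$, and rank-one $L$ give the reverse inequality, one has $\mathfrak{N}_r^{\adj}(H)\stackrel{1}{=}\mathfrak{L}(H)$.
\item Let $P$ be an orthogonal projection of rank $2$ and $\rho:=\textup{tr}(P\,\cdot)\in(\mathfrak{L}(H))^\prime$. Then $\Theta^{\,\prime}\rho=j_XP$ is weak${}^\adj$ continuous and $A_\rho=P$.
\item On one side, $\Vert\rho\Vert=\sup_{\Vert T\Vert\le 1}\vert\textup{tr}(PT)\vert=2$.
\item On the other side, $\mathbf{N}_r^{\textup{min}}(P)\ge\mathbf{N}_r(P)\ge\sigma_r(P)=(s_1(P)^r+s_2(P)^r)^{1/r}=2^{1/r}>2$. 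The second inequality comes from the $r$-triangle inequality of the Schatten quasi-norm $\sigma_r$ on Hilbert space.
\end{itemize}

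What holds in general is
\[
\Vert\rho\Vert=\Vert j_XA_\rho\Vert=\sup\{\vert\textup{tr}(A_\rho T)\vert:\mathbf{A}^{\adj}(T)\le 1\}\le\mathbf{A}^{\textup{min}}(A_\rho),
\]
with equality when $p=1$. So your fallback reading is the correct statement. The norm claim in the proposition, and in its proof, has to be restricted to the normed case.
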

\begin{proof}
Put $\E : = \id{A}{}{\tup{min}}{}{}(H, K)$ and $\F : = \id{A}{}{\adj}{}{}(K, H)$. 
Since by assumption, $\E^{\prime\prime} \ni \Theta_{\id{A}{}{\tup{min}}{}{}}^{\,\prime}
\rho$ is weak${}^\adj$ continuous, it follows that 
$\Theta_{\id{A}{}{\tup{min}}{}{}}^{\,\prime}\rho = j_\E A_\rho$, for a unique 
element $A_\rho \in \E$ (cf., e.g., \cite[Korollar VIII.3.4]{W2011}). 
Hence, if $Y \in \F$ is arbitrarily chosen, it follows that
\begin{align}\label{eq:weak_star_topology_and_trace_duality}
\langle Y, \rho \rangle = 
\big\langle \Theta_{\id{A}{}{\tup{min}}{}{}}
\big(\Theta_{\id{A}{}{\tup{min}}{}{}}^{-1}(Y)\big), \rho \big\rangle =
\big\langle \Theta_{\id{A}{}{\tup{min}}{}{}}^{-1}(Y), j_\E A_\rho\big\rangle =
\langle A_\rho, \Theta_{\id{A}{}{\tup{min}}{}{}}^{-1}(Y)\rangle 
\stackrel{\eqref{eq:the_inverse}}{=} \textup{tr}(A_\rho Y)\,.
\end{align}
Since also the dual operator of the isometric isomorphism 
$\Theta_{\id{A}{}{\tup{min}}{}{}} : \E^\prime \stackrel{\cong}{\longrightarrow} 
\F$ is an isometric isomorphism (cf., e.g., 
\sref{Proposition}{prop:isom_isomorphisms_between_Banach_spaces}), it follows 
in particular that
\[
\Vert \rho\Vert_{\F^\prime} = \Vert \Theta_{\id{A}{}{\tup{min}}{}{}}^{\,\prime}
\rho\Vert_{\E^{\prime\prime}} = \Vert j_\E A_\rho\Vert_{\E^{\prime\prime}} = 
\Vert A_\rho\Vert_\E\,.
\]
Finally, if $H = K$, and if $\rho \geq 0$, it follows 
that for any $x \in H$ $S_x : = \tp{J_H x}{-0.4ex}{\scbox{0.70}
{$\overline{H}$}}{-0.45ex}{\scbox{0.70}{$H$}}{x} \in \F \cap 
\id{L}{}{}{}{}(H)^+$, whence
\[
0 \leq \langle S_x, \rho \rangle 
\stackrel{\eqref{eq:weak_star_topology_and_trace_duality}}{=}
\textup{tr}(A_\rho S_x) \stackrel{\eqref{eq:vector_state_omega_x}}{=}  
\langle A_\rho x, x \rangle_H\,,
\]
and the claim follows. 
\end{proof}
\begin{remark}
Let $H$ be a fixed infinite-dimensional Hilbert space over $\C$ and 
$(\id{A}{}{}{}{}, \idn{A}{}{}{}{})$ be a Banach operator ideal. If 
$\id{A}{}{\tup{min}}{}{}(H)$ is a $\tup{C}^\adj$-algebra (where its norm 
is given by $\idn{A}{}{\tup{min}}{}{}(\cdot)$, of course), then the famous 
Gel'fand-Na\u{\i}mark-Segal construction implies that $\id{A}{}{\tup{min}}{}{}(H)$ is 
isometrically $\adj$-isomorphic to a C${}^{\,\adj}$-subalgebra of $\id{L}{}{}{}{}
(H_{\oplus})$, via a faithful representation $\pi : 
\id{A}{}{\tup{min}}{}{}(H) \stackrel{1}{\hookrightarrow} \id{L}{}{}{}{}
(H_{\oplus})$, for some ``large'' Hilbert space $H_{\oplus}$ over $\C$, which in 
general is \textit{not separable}; even if $H$ were (cf. \cite[III.5.2.2]{B2006}). 
Consequently, in general, if $\id{A}{}{\tup{min}}{}{}(H)$ is not separable, and 
if $\id{A}{}{\tup{min}}{}{}(H) \notin \{\{0\}, \id{L}{}{}{}{}(H)\}$, we cannot 
apply Calkin's result, stating that the unique proper closed ideal in 
$\id{L}{}{}{}{}(H_{\oplus})$ would be given by $\id{K}{}{}{}{}(H_\otimes) 
\stackrel{1}{=} \pi\lb\id{A}{}{\tup{min}}{}{}(H)\rb$\, \text{if} $H_{\oplus}$ 
were separable. The subtle classification of all closed ideals in 
$\id{L}{}{}{}{}(H_{\oplus})$ in the nonseparable case, which requires the 
input of ordinal and infinite cardinal numbers and a related generalisation of 
the closed Banach operator ideal of compact operators, is provided in \cite{G1967}.    
\end{remark}  
In the Hilbert space model of quantum mechanics, the states of a physical 
system are typically identified with probability measures on the complete 
orthomodular lattice $Proj(\id{L}{}{}{}{}(H)) : = \{P \in \id{L}{}{}{}{}(H) : P = 
P^\adj = P^2\}$ of all (orthogonal) projections in a Hilbert space $H$, which 
embodies the ``Hilbert space logic'' of the quantum system. The projections are 
in a one-to-one correspondence with closed subspaces of $H$ onto which they 
project. $Proj(\id{L}{}{}{}{}(H))$ is the algebraic description of the lattice of 
all closed subspaces of $H$. A significant example of a probability measure on 
projections is given by certain positive linear functionals on a 
\textit{von Neumann algebra} $\mathcal{M} \subseteq \id{L}{}{}{}{}
(H_{\mathcal{M}})$, acting on the Hilbert space $H_{\mathcal{M}}$. Two known 
deep results that are of great importance for the $\tup{C}^\adj$-algebra model of 
quantum physics in general transfers \eqref{eq:the_compact_case} to the 
representation of \textit{normal} states on arbitrary von Neumann algebras (cf., 
e.g., \cite[Theorem 2.4.21]{BR1987} and \cite[Theorem 46.4]{C2000}). Recall here 
that a $\tup{C}^\adj$-algebra $\mathcal{M}$ is faithfully representable as a von 
Neumann algebra on some Hilbert space $H_{\mathcal{M}}$, if and only if 
$\mathcal{M}$ - as Banach space - is isometrically isomorphic to the dual of 
some Banach space. In fact, it can be verified that any von Neumann algebra 
$\mathcal{M} \subseteq \id{L}{}{}{}{}(H_{\mathcal{M}})$ satisfies 
$\mathcal{M} \cong {\mathcal{M}}_\ast^{\,\prime}$, with Banach space 
$\mathcal{M}_\ast : = \id{N}{}{}{}{}(H_{\mathcal{M}})/\mathcal{M}^\perp$ (the 
predual of $\mathcal{M}$ - cf., e.g., \cite[Theorem 4.6.17]{P1989}). Here,  
\[
\mathcal{M}^\perp : = \{S \in \id{N}{}{}{}{}(H_{\mathcal{M}}) : 
\tup{tr}(ST) = 0\,\text{ for all } T \in \mathcal{M}\}
\]
denotes the $\idn{N}{}{}{}{}(H_{\mathcal{M}})$-closed subset of 
$\id{N}{}{}{}{}(H_{\mathcal{M}})$, and the isometric isomorphism is given by
\[
\Sigma_{\mathcal{M}} : \mathcal{M} \stackrel{\cong}{\longrightarrow} 
{\mathcal{M}}_\ast^{\,\prime}, T \mapsto (D + \mathcal{M}^\perp \mapsto 
\tup{tr}(DT))\,.  
\] 
In particular, it follows that $\id{N}{}{}{}{}(H_{\mathcal{M}})/
\mathcal{M}^\perp \equiv \mathcal{M}_\ast$ is a subspace of the dual Banach 
space $\mathcal{M}^\prime$. By construction, the following 
(composed) isometric embedding is namely well-defined:
\[
\Sigma_{\mathcal{M}}^{\,\prime}\,j_{\mathcal{M}_\ast} : \mathcal{M}_\ast 
\stackrel{1}{\hookrightarrow} (\mathcal{M}_\ast)^{\prime\prime} \cong \mathcal{M}^\prime\,,
\]
where the canonical metric injection $j_{\mathcal{M}_\ast} : \mathcal{M}_\ast 
\stackrel{1}{\hookrightarrow} (\mathcal{M}_\ast)^{\prime\prime}$ appears again; 
this time, as the canonical isometry from the predual of $\mathcal{M}$ into its 
bidual (cf. \sref{Remark}{rem:the_operator_Lambda_z}). The construction implies 
that
\begin{align}\label{eq:tr_of_D_cdot}
\mathcal{M}^\prime \ni \Sigma_{\mathcal{M}}^{\,\prime}\,j_{\mathcal{M}_\ast}
(D + \mathcal{M}^\perp) = \tup{tr}(D\,\cdot)\,\text{ for all }\,
D \in \id{N}{}{}{}{}(H_{\mathcal{M}})\,. 
\end{align}
Consequently, $\Sigma_{\mathcal{M}}^{\,\prime}\,j_{\mathcal{M}_\ast} : 
{\mathcal{M}}_\ast \stackrel{1}{\hookrightarrow} {\mathcal{M}}^\prime$ is an 
isometry, with range $\{\tup{tr}(D\,\cdot) : D \in \id{N}{}{}{}{}
(H_{\mathcal{M}})\}$, implying that the predual $\mathcal{M}_\ast$ in fact can 
be identified with the closed subspace of all \textit{normal} continuous linear 
functionals on $\mathcal{M}$, where the latter is defined as 
$\{\tup{tr}(D\,\cdot) : D \in \id{N}{}{}{}{}(H_{\mathcal{M}})\} \subseteq 
{\mathcal{M}}^\prime$. In particular, $\tup{tr}(D\,\cdot)$ is of norm one if 
and only if there exists a sequence of nuclear operators $(R_n)_{n \in \N} 
\subseteq \id{N}{}{}{}{}(H_{\mathcal{M}})$, such that 
\begin{align*}
\tup{tr}(R_n T) = 0\,\text{ for all }\,T \in \mathcal{M}, n \in \N\,
\text{ and }\,\lim\limits_{n \to \infty} \idn{N}{}{}{}{}(D-R_n) \leq 1
\end{align*}
(due to the definition of a quotient norm), whence $0 \leq \lim\limits_
{n \to \infty}\idn{N}{}{}{}{}(R_n) \leq \idn{N}{}{}{}{}(D) + 1 < \infty$ (due 
to the triangle inequality). Regarding a particularly thorough derivation of the 
canonical isometric isomorphism $\Sigma_{\mathcal{M}} : \mathcal{M} 
\stackrel{\cong}{\longrightarrow} {\mathcal{M}}_\ast^{\,\prime}$ and 
\sref{Theorem}{thm:normal_states_on_vNAs} below, we refer the reader to 
\cite[Vigier's Theorem 4.1.1]{M1990} and \cite[Theorem 4.2.9 and 
Theorem 4.2.10]{M1990}. Regarding nonuniqueness of the nuclear operator 
$D_\psi$ in \sref{Theorem}{thm:normal_states_on_vNAs}, cf. also 
\sref{Remark}{eq:a_further_char_of_normal_states_on_a_vNA}. Hence, if we take 
also \eqref{eq:trace_trick} into account, we arrive at 
\begin{theorem}
\label{thm:normal_states_on_vNAs}
Let $\psi$ be a positive linear functional on a von Neumann algebra 
$\mathcal{M}$ acting on the Hilbert space $H_{\mathcal{M}}$. Then the 
following statements are equivalent:
\begin{enumerate}
\item Whenever $(T_\lambda)$ is a bounded increasing 
net in $\mathcal{M}^+$, then $\psi(T_\lambda) \nearrow \psi(\sup(T_\lambda))$.
\item $\psi$, restricted to the projection lattice $Proj(\mathcal{M}) : = 
Proj(\id{L}{}{}{}{}(H_{\mathcal{M}})) \cap \mathcal{M}$, is a completely 
additive measure on $P(\mathcal{M})$; i.e., if $(E_i)_{i \in I}$ is a pairwise 
orthogonal family of projections in $\mathcal{M}$, then 
\[
\psi\big(\sum\limits_{i \in I}E_i\big) = \sum\limits_{i \in I}\psi(E_i)\,,
\]
where $\sum\limits_{i \in I}E_i$ denotes the limit in the strong operator 
topology $(SOT)$ of $\id{L}{}{}{}{}(H_{\mathcal{M}})$.
\item The linear functional $\big(\Sigma_{\mathcal{M}}^{\,\prime}\big)^{-1} \psi$ 
is weak${}^\adj$ continuous on ${\mathcal{M}}_\ast^{\,\prime}$ 
$($i.e., continuous with respect to the weak${}^\adj$ topology 
$\sigma({\mathcal{M}}_\ast^{\,\prime}, \mathcal{M}_\ast)$$)$.
\item $\psi = \Sigma_{\mathcal{M}}^{\,\prime}\,j_{\mathcal{M}_\ast}x_\psi$, for 
a unique $x_\psi \in \mathcal{M}_\ast$\,.   
\item $\psi$ is a normal bounded linear functional on $\mathcal{M}$, induced 
by some positive operator $D_\psi \in \id{N}{}{}{}{}(H_{\mathcal{M}})$; i.e., 
there is a positive operator $D_\psi \in \id{N}{}{}{}{}(H_{\mathcal{M}})$ such 
that
\[
\langle T, \psi\rangle = \psi(T) = \tup{tr}(D_\psi T) = 
\langle T, \tup{tr}(D_\psi\,\cdot)\rangle\,\text{ for all }\, T \in \mathcal{M}\,.  
\]
\end{enumerate}
In particular, $\psi$ is a normal state, if and only if $D_\psi \in 
S_{\id{N}{}{}{}{}(H_{\mathcal{M}})} \cap \id{L}{}{}{}{}(H_{\mathcal{M}})^+$, 
where $S_E$ denotes the unit sphere of a normed space $E$. 
\end{theorem}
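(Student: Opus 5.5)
The plan is to run the chain of implications (iv) $\Leftrightarrow$ (iii) $\Leftrightarrow$ (v) first, using only the predual machinery set up before the statement, and then to attach (i) and (ii) through the standard Vigier/Dixmier argument from \cite[Theorem 4.2.9 and 4.2.10]{M1990}.

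\emph{Step 1: (iii) $\Leftrightarrow$ (iv).} The map $\Sigma_{\mathcal{M}}^{\,\prime} : \mathcal{M}_\ast^{\,\prime\prime} \to \mathcal{M}^\prime$ is an isometric isomorphism by \sref{Proposition}{prop:isom_isomorphisms_between_Banach_spaces}. A functional $\varphi\in\mathcal{M}_\ast^{\,\prime\prime}$ is weak${}^\adj$ continuous on $\mathcal{M}_\ast^{\,\prime}$ if and only if $\varphi = j_{\mathcal{M}_\ast}x$ for a unique $x\in\mathcal{M}_\ast$. This is exactly the argument already used in the proof of \sref{Proposition}{prop:weak_star_functionals_on_max_BOI_components}, citing \cite[Korollar VIII.3.4]{W2011}. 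Applying it to $\varphi = (\Sigma_{\mathcal{M}}^{\,\prime})^{-1}\psi$ gives the equivalence.

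\emph{Step 2: (iv) $\Leftrightarrow$ (v).} Write $x_\psi = D + \mathcal{M}^\perp$ with $D \in \id{N}{}{}{}{}(H_{\mathcal{M}})$. Then \eqref{eq:tr_of_D_cdot} gives $\psi = \tup{tr}(D\,\cdot)$ on $\mathcal{M}$, and the converse direction is the same identity read backwards. The nontrivial point is that $D$ may be chosen \emph{positive} when $\psi\ge0$. The plan here is to use that a normal positive functional is a countable sum of vector functionals, $\psi = \sum_n \omega_{x_n}|_{\mathcal{M}}$ with $\sum\Vert x_n\Vert^2<\infty$, which is the standard consequence of normality in \cite[Theorem 4.2.10]{M1990}. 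One then sets $D_\psi := \sum_n \tp{J_H x_n}{-0.4ex}{\scbox{0.70}{$\overline{H}$}}{-0.45ex}{\scbox{0.70}{$H$}}{x_n}$. This series converges in $\idn{N}{}{}{}{}$ because each term has nuclear norm $\Vert x_n\Vert^2$ by \sref{Corollary}{cor:rep_of_nuclear_density_operators}. The operator $D_\psi$ is positive, and \eqref{eq:vector_state_omega_x} gives $\tup{tr}(D_\psi T) = \sum_n\langle Tx_n,x_n\rangle = \psi(T)$.

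\emph{Step 3: (v) $\Rightarrow$ (i) $\Rightarrow$ (ii).} For (v) $\Rightarrow$ (i), write $D_\psi = \sum p_n \langle\cdot,e_n\rangle e_n$ using \sref{Corollary}{cor:rep_of_nuclear_density_operators}, so that $\psi(T_\lambda) = \sum_n p_n\langle T_\lambda e_n,e_n\rangle$. Since $T_\lambda \nearrow \sup T_\lambda$ strongly (Vigier), each term increases to its limit. Monotone convergence for the sum with weights in $l_1$, using the bound $\sup_\lambda\Vert T_\lambda\Vert$, yields (i). For (i) $\Rightarrow$ (ii), apply (i) to the increasing net of finite partial sums $\sum_{i\in F}E_i$; these converge in SOT to $\sum_i E_i$.

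\emph{Step 4: (ii) $\Rightarrow$ (v), the main obstacle.} Following \cite[Theorem 4.2.9]{M1990}, the key step is to show that $\psi$ is a sum of vector functionals $\omega_{x}$ restricted to $\mathcal{M}$. First, for each nonzero projection $P$, find a nonzero subprojection $Q\le P$ in $\mathcal{M}$ and a vector $x$ with $\psi(\cdot\, Q)$ dominated by $\omega_x$ on suitable elements. Then, by Zorn's lemma, take a maximal orthogonal family $(Q_i)$ of such projections. Complete additivity (ii) forces $\sum_i Q_i = Id$. Finally, reassemble $\psi = \sum_i \psi(\cdot\,Q_i)$, each piece being vector-type, and feed the result into the positive-$D$ construction of Step 2.

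I expect the domination estimate $\psi(Q\,\cdot\,Q)\le\omega_x$ to be the hardest part. I would obtain it from a Cauchy–Schwarz argument together with the complete additivity applied to spectral projections. The final statement about normal states then follows because $\psi(Id)=\tup{tr}(D_\psi)=\idn{N}{}{}{}{}(D_\psi)$ for positive $D_\psi$, by \eqref{eq:nuclear_norm_vs_HS_norm}.
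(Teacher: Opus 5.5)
\textbf{There is a genuine gap, and it sits exactly where you locate the difficulty: the positivity of $D_\psi$.}

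Your Steps 1 and 3 are sound, and your overall plan matches what the paper relies on. The paper gives no argument of its own here; it assembles the predual identification \eqref{eq:tr_of_D_cdot}, \cite[Theorems 4.1.1, 4.2.9, 4.2.10]{M1990} and the trace trick \eqref{eq:trace_trick}.

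\emph{The circularity in Step 2.} You obtain a positive $D_\psi$ from (iv) by invoking ``normal positive functionals are sums $\sum_n\omega_{x_n}$''. Nothing in your chain shows that (iv) implies normality: Step 3 derives (i) only from (v) with a \emph{positive} $D_\psi$, so the argument is circular. This is easy to repair.
\begin{itemize}
\item Take an arbitrary representative $D=\sum_n s_n\langle\cdot,e_n\rangle f_n$ of $x_\psi$ (Schmidt form, $(s_n)\in l_1$). Then $\mathrm{tr}(DT_\lambda)=\sum_n s_n\langle T_\lambda f_n,e_n\rangle$.
\item Vigier plus dominated convergence then give (iv) $\Rightarrow$ (i) with no positivity needed.
\item Run the chain as (iii) $\Leftrightarrow$ (iv) $\Rightarrow$ (i) $\Rightarrow$ (ii) $\Rightarrow$ (iv), with (v) $\Rightarrow$ (iv) trivial, and treat the passage to a positive $D_\psi$ as a separate step.
\end{itemize}

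\emph{The end of Step 4 does not deliver what you feed into Step 2.} From $\psi\le\omega_x$ on $(Q\mathcal MQ)^+$ and Cauchy--Schwarz you get $|\psi(TQ)|\le\psi(Id)^{1/2}\Vert TQx\Vert$. That only gives $\psi(TQ)=\langle TQx,y\rangle$ for some $y$, so the pieces are neither positive nor of the form $\omega_z$. The reassembly also needs two facts you do not mention: the tail bound $\Vert\psi-\psi(\cdot\,Q_F)\Vert\le\psi(Id)^{1/2}\psi(Id-Q_F)^{1/2}$ for finite $F$, and the norm-closedness of the range of $\Sigma_{\mathcal M}'j_{\mathcal M_\ast}$. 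Even then, what it proves is (iv), not $\psi=\sum_n\omega_{x_n}$.

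Two smaller corrections to Step 4:
\begin{itemize}
\item $\sum_iQ_i=Id$ follows from maximality plus your local lemma applied to $Id-\sum_iQ_i$, not from complete additivity.
\item The domination $\psi\le\omega_x$ comes from an inner exhaustion, not from Cauchy--Schwarz. Pick $x$ with $\omega_x(P)>\psi(P)$, take a maximal orthogonal family of nonzero subprojections $E\le P$ with $\psi(E)\ge\omega_x(E)$, and use complete additivity of $\psi$ and $\omega_x$ to see that the remainder $Q$ is nonzero. Then approximate elements of $(Q\mathcal MQ)^+$ by spectral projections.
\end{itemize}

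\emph{The missing idea.} Passing from a positive ultraweakly continuous $\psi$ to a positive $D_\psi$ requires a Radon--Nikodym argument in the commutant; this is the content of the cited \cite[Theorem 4.2.10]{M1990}.
\begin{itemize}
\item Write $\psi(T)=\langle (T\otimes Id)\xi,\eta\rangle$ on $H_{\mathcal M}\otimes l_2$.
\item Hermiticity and polarisation give $0\le\psi\le\tfrac14\omega_{\xi+\eta}$ on positive elements.
\item Hence $\psi=\omega_z$ with $z=\tfrac12K^{1/2}(\xi+\eta)=(x_n)$, where $0\le K\le Id$ lies in the commutant of $\mathcal M\otimes Id$.
\end{itemize}
Only then does your $D_\psi=\sum_n\langle\cdot,x_n\rangle_H\,x_n$ apply.
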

\begin{remark}\label{rem:structure_of_vector_states}
\eqref{eq:vector_state_omega_x} and the structure of the associated operator 
$D_x$, together with \sref{Theorem}{thm:normal_states_on_vNAs}-(iv) clearly imply 
that for any $x \in H$, the (restricted) vector state $\omega_x\bigl\vert_
{\mathcal{M}}$ is a normal state on the von Neumann algebra ${\mathcal{M}}$. 
Taking into account the obvious fact that for any Hilbert space $H$ over $\C$ 
and a given, fixed unit vector $u_0 \in H$, every $x \in H$ satisfies $x = R_x u_0$, 
where $R_x : = \tp{J_H u_0}{-0.3ex}{\scbox{0.70}{$\overline{H}$}}{-0.3ex}
{\scbox{0.70}{$H$}}{x} \in \id{L}{}{}{}{}(H)$, it follows from 
\cite[Corollary 4.5.4]{KR1983} that the faithful GNS representation of the 
$\textup{C}^\adj$-algebra $\id{L}{}{}{}{}(H)$ itself, obtained from the (normal) 
vector state $\omega_{u_0}$ indeed is given by $Id_{\id{L}{}{}{}{}(H)}$, 
implying that the von Neumann algebra $\id{L}{}{}{}{}(H)$ acts on $H$. So, we 
may consider $(H, Id_{\id{L}{}{}{}{}(H)}, u_0)$ as the GNS triple of 
$\id{L}{}{}{}{}(H)$. A rather surprising role of vector states comes into play, 
when we consider the Hilbert space $(\id{S}{2}{}{}{}\,(H), \langle \cdot, 
\cdot\rangle_{\id{S}{2}{}{}{}\,})$ of all Hilbert-Schmidt operators on $H$ 
(cf. \sref{Remark}{rem:L_H_acting_on_L_S2H} and 
\sref{Corollary}{cor:vector_state_version_of_Gleason}).  
\end{remark}
At this point, the reader should recall the remarkable fact that \textit{any} 
positive linear functional $\psi$ on an \textit{arbitrary} $\tup{C}^\adj$-algebra 
is automatically bounded (cf., e.g., \cite[Chapter 2.1, p. 20]{H2003})! In fact, 
if that $\tup{C}^\adj$-algebra is a unital one, with unit $\bf{\unit}$, then 
$\psi \geq 0$ \textit{if and only if} $\psi$ is bounded and $\Vert \psi\Vert = 
\psi(\unit)$ (due to \cite[Corollary 3.3.4]{M1990}). Consequently, if $\tau$ is 
an arbitrarily given linear functional on a unital $\tup{C}^\adj$-algebra, then 
the following statements are equivalent
\begin{enumerate}
\item $\tau \geq 0$ and $\Vert\tau\Vert = 1$ (i.e., $\tau$ is a state).
\item $\tau \geq 0$ and $\tau(\unit) = 1$.
\item $\tau$ is bounded and $\tau(\unit) = \Vert\tau\Vert = 1$.
\end{enumerate}
\begin{remark}\label{eq:a_further_char_of_normal_states_on_a_vNA}
Consider the Banach space $\E : = \id{N}{}{}{}{}(H_{\mathcal{M}})$. If we recall 
\sref{Corollary}{cor:trace_duality_particular_case} and the construction of the 
isometric isomorphism $\Theta_{\id{N}{}{}{}{}}^{-1} : \id{L}{}{}{}{}
(H_{\mathcal{M}}) \stackrel{\cong}{\longrightarrow} \E^\prime$, it follows that 
the set of all normal bounded linear functionals on the 
von Neumann algebra $\mathcal{M}$ precisely coincides with  
\[
\lb\eta_{\mathcal{M}}(\Theta_{\id{N}{}{}{}{}}^{-1})^\prime j_{\E}\rb(\E) 
\subseteq {\mathcal{M}}^\prime\,, 
\]
where $Q_{\mathcal{M}} : (\id{L}{}{}{}{}(H_{\mathcal{M}}))^\prime 
\stackrel{1}{\twoheadrightarrow} \mathcal{M}^\prime, \xi \mapsto \xi{\vert}_
{\mathcal{M}}$ denotes the canonical surjection from the dual space 
$(\id{L}{}{}{}{}(H_{\mathcal{M}}))^\prime$ onto the dual space $\mathcal{M}^
\prime$, which in general is not injective, yet a contraction (Hahn-Banach). In 
fact, if $R \in \mathcal{M} \subseteq \id{L}{}{}{}{}(H_{\mathcal{M}})$ and 
$D \in \E$ are arbitrarily chosen, it follows from 
\sref{Theorem}{thm:trace_duality_in_the_Hilbert_space_case} that
\[
\langle R, \tup{tr}(D\,\cdot)\rangle = \tup{tr}(D R) = 
\langle D,  \Theta_{\id{N}{}{}{}{}}^{-1}(R)\rangle = 
\langle \Theta_{\id{N}{}{}{}{}}^{-1}(R), j_{\E}(D)\rangle = 
\langle R,  (\Theta_{\id{N}{}{}{}{}}^{-1})^\prime j_{\E}(D)\rangle\,.
\]
\sref{Theorem}{thm:normal_states_on_vNAs} therefore implies that $\psi$ is a 
normal positive linear functional on the von Neumann algebra $\mathcal{M}$ 
if and only if there exists an operator $D_\psi \in \E \cap 
\id{L}{}{}{}{}(H_{\mathcal{M}})^+$, such that $\psi = \tup{tr}(D_\psi\,\cdot) =
\lb Q_{\mathcal{M}}\,(\Theta_{\id{N}{}{}{}{}}^{-1})^\prime j_{\E}\rb(D_\psi) 
\stackrel{\eqref{eq:tr_of_D_cdot}}{=} \Sigma_{\mathcal{M}}^{\,\prime}\,
j_{\mathcal{M}_\ast}(D_\psi + \mathcal{M}^\perp)$. In other words, it follows 
from \sref{Theorem}{thm:normal_states_on_vNAs}-(iv) that the set of all normal 
states on $\mathcal{M}$ is also given by the range of the bounded linear 
contraction $Q_{\mathcal{M}}\,(\Theta_{\id{N}{}{}{}{}}^{-1})^\prime j_{\E} : \E 
\longrightarrow \mathcal{M}^{\,\prime}$. In any case, the following diagram 
commutes:
\vspace{-0.7em}
\begin{center}
\includegraphics[width=6.5cm]{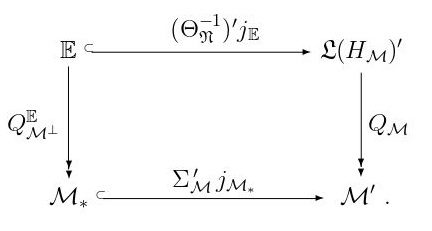} 
\end{center}
\vspace{-0.8em}
Moreover, a standard factorisation implies that $\mathcal{M}_\ast$ coincides 
isometrically with the closure of the range of the operator 
$(\Theta_{\id{N}{}{}{}{}}^{-1})^\prime j_{\E}$. In general, such an operator 
$D_\psi \in \E \cap \id{L}{}{}{}{}(H_{\mathcal{M}})^+$ doesn't have to be an 
element of the von Neumann algebra $\mathcal{M}$, implying that such an 
$D_\psi$ in general is not uniquely defined. However, if $\mathcal{M}$ were 
already the ``whole, large'' von Neumann algebra $\id{L}{}{}{}{}
(H_{\mathcal{M}}) \cong \E^\prime$ itself, we re-recognise the uniqueness of 
$D_\psi$ in this particular case (since $(\id{L}{}{}{}{}(H_{\mathcal{M}}))^\bot 
= \{0\}$, due to Hahn-Banach).   
\end{remark}  
\noindent Note that \sref{Corollary}{cor:rep_of_nuclear_density_operators} 
implies that for any normal state $\psi$, the operator $D_\psi \in 
\id{N}{}{}{}{}(H_{\mathcal{M}}) \cap \id{L}{}{}{}{}(H_{\mathcal{M}})^{+}$ can 
be written as   
\[
D_\psi = \sum\limits_n s_n(D_\psi) (\tp{e_n^\circ}{-0.45ex}{\scbox{0.70}
{$\overline{H}$}}{-0.45ex}{\scbox{0.70}{$H$}}{e_n^\circ})\,,
\]
with some countable $($possibly finite$)$ orthonormal sequence $(e_n^\circ)$ of 
eigenvectors of $D_\psi$, and with $(s_n(D_\psi)) \in l_1$ (the singular values 
of $D_\psi$), satisfying $0 \leq s_n(D_\psi) \leq 1$ for all $n$ and 
$\Vert(s_n(D_\psi))\Vert_{l_1} = \sum\limits_{n} s_n(D_\psi) = 1$. Consequently, 
an application of \eqref{eq:trace_of_a_dyad} and \eqref{eq:dyads_and_composition}, 
together with the continuity of the linear trace functional $\tup{tr} : 
\id{N}{}{}{}{}(H_{\mathcal{M}}) \longrightarrow \C$ implies that
\[
\psi(T) = \tup{tr}(D_\psi T) = 
\sum\limits_n s_n(D_\psi) \langle T e_n^\circ, e_n^\circ \rangle_H = 
\sum\limits_n s_n(D_\psi)\,\omega_n(T)\,\text{ for all }\,T \in \mathcal{M}\,,
\]
with the normal vector state $\mathcal{M} \ni T \mapsto \omega_n(T) : = 
\langle T e_n^\circ, e_n^\circ \rangle_H = 
\tup{tr}((\tp{e_n^\circ}{-0.45ex}{\scbox{0.70}{$\overline{H}$}}
{-0.45ex}{\scbox{0.70}{$H$}}{e_n^\circ})T)$ - even if the Hilbert space $H$ is 
infinite-dimensional and nonseparable.

Regarding \sref{Theorem}{thm:normal_states_on_vNAs}-(iv), note that the Riesz 
representation (of duals of the space of continuous functions on a compact 
Hausdorff space), together with an application of the famous Gel'fand-Na\u{\i}mark 
theorem to an arbitrary \textit{commutative} von Neumann algebra implies that 
the states of an arbitrary commutative von Neumann algebra $\mathcal{M} \cong 
C(\Omega_{\mathcal{M}})$ correspond to ``standard'' 
\textit{probability measures} on some measurable space 
$(\Omega_{\mathcal{M}}, \id{F}{}{}{}{})$, where $\Omega_{\mathcal{M}}$ is a 
compact Hausdorff space (cf., e.g., \cite[Theorem II.2.5 and 
Theorem IX.3.4]{W2011}).

The seminal theorem of A. Gleason, which is of particular importance in the 
foundations and philosophy of quantum mechanics, roughly states that for 
separable Hilbert spaces $H$ over $\C$, being of ``suitably large'' dimension, 
\textit{any} bounded completely additive measure on the projection lattice 
$Proj(\id{L}{}{}{}{}(H))$ actually originates from a \textit{normal} positive 
linear functional on $\id{L}{}{}{}{}(H)$. Consequently, it \textit{implies} the 
Born rule in expectation form (due to 
\sref{Proposition}{prop:weak_star_functionals_on_max_BOI_components} and 
\sref{Theorem}{thm:normal_states_on_vNAs})! 
The assumption of separability was removed by T. Drisch in 1979 (cf. 
\cite[Chapters 3 and 5]{H2003}):
\begin{theorem}[\textbf{Gleason}]
\label{thm:Gleason_Drisch}
Let $H$ be a Hilbert space over $\C$ with $\text{dim}(H) \geq 3$. Then any 
bounded completely additive measure $\mu : Proj(\id{L}{}{}{}{}(H)) 
\longrightarrow \C$ on the projection lattice $Proj(\id{L}{}{}{}{}(H))$ 
\underline{extends uniquely} to a \underline{normal} positive linear functional on 
$\id{L}{}{}{}{}(H)$. In particular, there is a unique positive operator 
$D_\mu \in \id{N}{}{}{}{}(H)$ such that
\[
\mu(E) = \tup{tr}(P D_\mu) \text{ for all } P \in Proj(\id{L}{}{}{}{}(H)).
\]
\end{theorem}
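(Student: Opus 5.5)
The plan is to reduce the statement to Gleason's classical \emph{frame function} argument, and then to translate the outcome into the trace-duality language of \sref{Theorem}{thm:trace_duality_in_the_Hilbert_space_case} and \sref{Theorem}{thm:normal_states_on_vNAs}. Throughout I read ``measure'' as a nonnegative measure, which the word ``positive'' in the conclusion requires.

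Put $f(x) : = \mu\big(\tp{J_H x}{-0.4ex}{\scbox{0.70}{$\overline{H}$}}{-0.45ex}{\scbox{0.70}{$H$}}{x}\big)$ for $x \in S_H$. The dyad here is the rank one projection $\langle\cdot, x\rangle_H\,x$ onto $[x]$. Complete additivity of $\mu$ gives
\[
\sum_{i \in I} f(e_i) = \mu(Id_H)
\]
for every orthonormal basis $\{e_i : i \in I\}$ of $H$. More generally, $\sum_i f(e_i) = \mu(P)$ for every orthonormal basis of the range of a projection $P$. Hence $f$ is a bounded nonnegative frame function of weight $\mu(Id_H)$.

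\textbf{The core step.} I must show that $f$ is \emph{quadratic}: there is a bounded self-adjoint $D_\mu \in \id{L}{}{}{}{}(H)$ with $f(x) = \langle D_\mu x, x\rangle_H$ for all $x \in S_H$. I would carry this out in the following order.
\begin{enumerate}
\item Prove Gleason's lemma: every nonnegative frame function on the unit sphere of $\R^3$ is the restriction of a quadratic form. The route is to show first that such a function is continuous, via the ``great circle'' and ``latitude'' inequalities. Then apply the classical spherical harmonic decomposition, where frame functions correspond to degree $\leq 2$ harmonics.
\item Pass to real three-dimensional subspaces of $H$, viewed inside $H_{\mathscr{B}}^\R$ with the machinery of \sref{Section}{sec:overline_H_and_inear_version_of_Frechet_Riesz}. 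Conclude that $f$ is quadratic on every real $3$-space. Since $\dim(H) \geq 3$, every complex $2$-plane embeds in a complex $3$-space, which yields complex quadraticity on each finite-dimensional subspace.
\item Define a sesquilinear form $B(x,y)$ by polarisation. On each finite-dimensional subspace it agrees with the local quadratic forms, and these glue because they coincide on intersections. Since $0 \leq f \leq \mu(Id_H)$, $B$ is bounded, nonnegative and Hermitian, so $B(x,y) = \langle D_\mu x, y\rangle_H$ with $0 \leq D_\mu$. No separability enters: quadraticity is a statement about finite-dimensional subspaces only, which is exactly Drisch's observation.
\end{enumerate}

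\textbf{Trace class.} Next, $\sum_i \langle D_\mu e_i, e_i\rangle_H = \mu(Id_H) < \infty$, so \sref{Theorem}{thm:char_of_trace_class_ops} together with \eqref{eq:nuclear_norm_vs_HS_norm} gives $D_\mu \in \id{N}{}{}{}{}(H)$ with $\tup{tr}(D_\mu) = \mu(Id_H)$. For a projection $P$, choose an orthonormal basis $\{u_k\}$ of $P(H)$ and extend it to an orthonormal basis of $H$. Complete additivity then gives
\[
\mu(P) = \sum_k \langle D_\mu u_k, u_k\rangle_H = \tup{tr}(P D_\mu),
\]
using \eqref{eq:trace_of_a_dyad} and \eqref{eq:dyads_and_composition}.

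\textbf{Extension and uniqueness.} Set $\psi(T) : = \tup{tr}(D_\mu T)$ for $T \in \id{L}{}{}{}{}(H)$. This $\psi$ is positive by \eqref{eq:trace_trick} and normal by \sref{Theorem}{thm:normal_states_on_vNAs}-(v), and it extends $\mu$. For uniqueness, the linear span of $Proj(\id{L}{}{}{}{}(H))$ is norm dense in $\id{L}{}{}{}{}(H)$ by the spectral theorem, and positive functionals are bounded, so two extensions agreeing on projections coincide. Uniqueness of $D_\mu$ then follows from \sref{Remark}{eq:a_further_char_of_normal_states_on_a_vNA}, since $(\id{L}{}{}{}{}(H))^\perp = \{0\}$.

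The genuine obstacle is step (i), the continuity and quadraticity of frame functions on $S^2 \subseteq \R^3$. I would follow Cooke–Keane–Moran's elementary version of Gleason's proof rather than the harmonic-analysis argument. Every other step is routine operator ideal bookkeeping using the results stated above.
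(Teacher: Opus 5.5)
The paper gives no proof of this theorem; it quotes Gleason and Drisch via \cite{H2003}, so your outline has to stand on its own. Your endgame is fine: trace class of $D_\mu$, $\mu(P)=\mathrm{tr}(PD_\mu)$, positivity and normality of $\mathrm{tr}(D_\mu\,\cdot)$, and uniqueness via norm density of the span of the projections. Deferring the $\R^3$ lemma to Cooke--Keane--Moran is also legitimate. The genuine gap is in step (ii).

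The $\R^3$ lemma can only be applied to \emph{totally real} subspaces $W$, i.e.\ real spans of complex orthonormal sets (e.g.\ real subspaces of $H^\R_{\mathscr B}$). Only there are all real-orthonormal bases complex-orthonormal, so that complete additivity makes $f|_{S_W}$ a frame function, with weight $\mu$ of the projection onto $W+iW$. Additivity of $\mu$ says nothing about sums of $f$ over real-orthonormal bases of, say, $\mathrm{span}_\R\{x,ix,y\}$. So ``quadratic on every real $3$-space'' is not available at this stage.

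More importantly, real quadraticity on all totally real subspaces does not by itself make $f$ Hermitian. For orthonormal $x,y$ it only yields
\[
f(\cos t\,x+\sin t\,e^{i\phi}y)=f(x)\cos^2 t+f(y)\sin^2 t+c(\phi)\sin 2t
\]
with an a priori arbitrary function $c$, whereas a Hermitian form requires $c(\phi)=\Re(\gamma e^{i\phi})$. Embedding $M:=\mathrm{span}_\C\{x,y\}$ in a complex $3$-space only makes the $\R^3$ lemma applicable to the totally real planes of $M$ (adjoin a unit $z\perp M$). The complexification is a separate, if short, step of the classical proof, and your sentence does not supply it.

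One way to close it is via the Bloch sphere. Let $P_M$ be the projection onto $M$, and send each unit $v\in M$ to its Bloch vector $n(v)\in S^2$, defined by $P_{[v]}|_M=\tfrac12\big(Id_M+n(v)\cdot\sigma\big)$. Here $P_{[v]}$ is the projection onto $[v]$ and $\sigma=(\sigma_x,\sigma_y,\sigma_z)$ is taken relative to $\{x,y\}$. Then:
\begin{enumerate}
\item Orthogonal rays go to antipodal points.
\item For every orthonormal basis $\{x',y'\}$ of $M$, the map $t\mapsto n(\cos t\,x'+\sin t\,y')$ runs along a great circle with angular parameter $2t$, and every great circle arises this way.
\item Since $f(x')+f(y')=\mu(P_M)$, quadraticity on $\mathrm{span}_\R\{x',y'\}$ reads $f(\cos t\,x'+\sin t\,y')=\tfrac12\mu(P_M)+\alpha\cos 2t+\beta\sin 2t$.
\end{enumerate}
So $g:=f-\tfrac12\mu(P_M)$, which is well defined on $S^2$ because $f$ is constant on rays, is on each great circle the restriction of a linear functional on its plane. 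Hence $G(w):=|w|\,g(w/|w|)$, $G(0):=0$, is linear on every plane through $0$. It is therefore additive and homogeneous on $\R^3$, i.e.\ $G(w)=a\cdot w$. This gives $f(v)=\langle D_Mv,v\rangle_H$ on $S_M$ with $D_M=\tfrac12\mu(P_M)Id_M+a\cdot\sigma$.

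Hermitian quadraticity on complex $2$-planes is not yet quadraticity on all finite-dimensional subspaces, because additivity of the polarised form involves three vectors. It does follow from the parallelogram law for $F(v):=\Vert v\Vert^2 f(v/\Vert v\Vert)$, since any two vectors lie in a common $2$-plane. The Jordan--von Neumann argument then gives a sesquilinear polarisation, with homogeneity available on each $2$-plane. After that, your gluing and boundedness step (iii) goes through unchanged.
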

\noindent It is remarkable that the heart of the proof is lying in the 
(unit sphere of the) three-dimensional real Hilbert space $\R^3$. A nontrivial 
generalisation of \sref{Theorem}{thm:Gleason_Drisch}, which sheds light on the 
surprising role of the dimension of the Hilbert space, follows from a deep 
result from S. Dorofeev and A. N. Sherstnev (cf. \cite[Theorem 3.3.5]{H2003}):
\begin{theorem}[\textbf{Dorofeev/Sherstnev}]
If $H$ is infinite-dimensional, then any completely additive measure $\mu$ on 
the projection lattice is bounded.
\end{theorem}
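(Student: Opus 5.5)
The plan is an argument by contradiction. Assume $\mu$ is completely additive on $Proj(\id{L}{}{}{}{}(H))$ and $\sup\{\vert\mu(P)\vert : P \in Proj(\id{L}{}{}{}{}(H))\} = \infty$. I will build a pairwise orthogonal sequence of finite-rank projections $(P_n)_{n \in \N}$ with $\vert\mu(P_n)\vert \geq n$. Complete additivity (with SOT-convergence) then gives a finite value $\mu\big(\sum_n P_n\big) = \sum_n \mu(P_n)$, where the sum is an unordered sum. This is impossible, since the terms of a summable family of complex numbers tend to $0$.

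First I reduce to real-valued measures. Both $\Re\mu$ and $\Im\mu$ are completely additive, and at least one of them is unbounded, so assume $\mu$ is real-valued. In this case a summable family of reals is absolutely summable.

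Next I show that large values can be realised on finite-rank projections. Let $P$ be a projection and $\{e_i : i \in I\}$ an orthonormal basis of $P(H)$. Write $R_i := \tp{J_H e_i}{-0.45ex}{\scbox{0.70}{$\overline{H}$}}{-0.45ex}{\scbox{0.70}{$H$}}{e_i}$ for the rank-one projection onto $[e_i]$. Complete additivity gives $\mu(P) = \sum_{i \in I}\mu(R_i)$, and this sum is absolutely convergent. So if $\vert\mu(P)\vert > c$, there is a finite $F \subseteq I$ such that the finite-rank projection $P_F := \sum_{i\in F} R_i \leq P$ satisfies $\vert\mu(P_F)\vert > c$.

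The inductive step is the main obstacle. Suppose $P_1,\dots,P_{n-1}$ have been chosen, all finite rank, and put $Q := P_1 + \cdots + P_{n-1}$. I need a projection $P \perp Q$ with $\vert\mu(P)\vert$ arbitrarily large. Then the previous step makes $P$ finite rank, and I set $P_n$ to be that finite-rank projection. The difficulty is that a projection $P$ with large $\vert\mu(P)\vert$ need not commute with $Q$, so I cannot simply split it as $PQ + P(Id_H - Q)$.

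To get around this, I use the finite-dimensional input, i.e. the Sherstnev / finite-dimensional Gleason-type fact. On a finite-dimensional subspace $L$ with $\dim L \geq 3$, every finitely additive real measure on $Proj(L)$ is bounded, with a bound $C_L$ that depends only on $\mu$ restricted to the lattice of $L$.

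Given a finite-rank $P$ with $\vert\mu(P)\vert$ huge, I work inside the finite-dimensional space $L := P(H) + Q(H) + M$. Here $M$ is a further finite-dimensional subspace orthogonal to both, chosen so that $\dim L \geq 3$ and $L \ominus Q(H)$ has dimension at least $\operatorname{rank} P$. Let $P^\perp_L$ denote the complement of $P$ inside $L$. Since $P^\perp_L$ is a projection in $Proj(L)$, the decomposition $\mu(P) = \mu(P_L) - \mu(P^\perp_L)$ shows that the large size of $\mu(P)$ must be carried by $\mu(P_L)$, where $P_L$ is the projection onto $L$.

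However, a bound obtained this way is not uniform in $L$, so the argument has to be arranged more carefully. The cleaner route I would try first is a dichotomy for a fixed finite-rank $Q$. Either $\mu$ is unbounded on $Proj((Id_H-Q)H)$, which is what we want, or it is bounded there by some $B$. In the second case, take any projection $P$ and let $L$ be the finite-dimensional space $Q(H) + PQ(H)$ (enlarged if necessary so that $\dim L \geq 3$). Decompose $P = P_1 + P_2$, where $P_1$ is the projection onto $P(H) \cap L$ and $P_2 := P - P_1 \perp L \supseteq Q(H)$. Then $\vert\mu(P_2)\vert \leq B$. Moreover $\vert\mu(P_1)\vert$ is bounded by the finite-dimensional bound for subspaces of dimension at most $2\operatorname{rank} Q$. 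That bound I make uniform using unitary invariance arguments inside the fixed finite-dimensional subspaces containing $Q(H)$, again via the Sherstnev bound. This makes $\mu$ bounded everywhere, a contradiction.

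Verifying that decomposition, and in particular that $P_2$ really is a projection orthogonal to $Q$, together with the uniformity of the finite-dimensional bound, is where I expect the real work to lie.
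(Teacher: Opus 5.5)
The paper gives no proof of this statement (it is only quoted from \cite[Theorem 3.3.5]{H2003}), so your argument has to stand on its own.

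\textbf{What is correct.} The outer frame is sound. The reduction to real-valued $\mu$, the passage to finite-rank subprojections via complete additivity, and the contradiction from an orthogonal sequence with $\vert\mu(P_n)\vert\geq n$ are all correct. The dichotomy correctly isolates what must be shown: for completely additive $\mu$ and finite-rank $Q$, $\mu$ cannot be bounded (by $B$, say) on all projections below $Id_H-Q$ while being unbounded globally. Your decomposition is also fine. Since $PQ(H)\subseteq P(H)\cap L$ and $\langle x,Qy\rangle_H=\langle x,PQy\rangle_H$ for $x\in P(H)$, the projection $P_2$ is indeed orthogonal to $Q$.

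\textbf{The gap: bounding $\mu(P_1)$.} Write $R_y$ for the rank-one projection onto $[y]$. A rank-one $P=R_y$ with $PQ\neq 0$ gives $P_1=P$. So this bound already amounts to boundedness of $\mu$ on rank-one projections, i.e.\ to the theorem itself.

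Moreover, the finite-dimensional fact you invoke is false. Let $\phi:\R\to\R$ be a discontinuous additive function (Hamel basis) and $e$ a unit vector, and set $\mu(P):=\phi(\langle Pe,e\rangle_H)$. This $\mu$ is real-valued and finitely additive, hence completely additive if $\dim H<\infty$. But $\mu(R_y)=\phi(\vert\langle y,e\rangle_H\vert^2)$ is unbounded on the unit sphere of every subspace $L\ni e$ with $\dim L\geq 2$, since $\phi$ is unbounded on $[0,1]$. This is exactly why the theorem requires $\dim H=\infty$. It is also why the paper adds boundedness as a hypothesis in its finite-dimensional remark after the vector state version of Gleason's theorem.

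\textbf{The second case cannot be closed by finite additivity.} Your second case uses only finite additivity. Take $e\in Q(H)$ in the measure above: it vanishes on every projection below $Id_H-Q$ (as $Pe=0$ there) and is nevertheless unbounded.

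Even if you had a bound on each finite-dimensional lattice, the constants $C_L$ could not be made uniform, since $\mu$ has no unitary invariance to exploit. Let $\varphi$ be a discontinuous linear functional on $H$, $e\in Q(H)$ a unit vector, and $\mu(P):=\Re\,\varphi(Pe)$. This $\mu$ is finitely additive, zero below $Id_H-Q$, and bounded on the projections of each finite-dimensional $L$. Yet it is unbounded, because $\mu(R_{(e+z)/\sqrt{2}})=\tfrac12\Re\,(\varphi(e)+\varphi(z))$ for unit $z\perp e$.

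Hence complete additivity, exploited across infinitely many orthogonal directions of the infinite-dimensional space, must enter precisely at the step you handle by finite-dimensional reasoning. That is the actual content of the Dorofeev--Sherstnev theorem, and your proposal does not supply it.
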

At this juncture, it is important to recall \sref{Remark}{rem:L_H_acting_on_L_S2H}, 
\sref{Theorem}{thm:char_of_trace_class_ops} and 
\sref{Theorem}{thm:normal_states_on_vNAs}. If namely $\mathcal{M}$ is a given 
von Neumann algebra and $\rho$ a normal state on $\mathcal{M}$, then there 
exists an operator $D_\rho \in S_{\id{N}{}{}{}{}(H_{\mathcal{M}})} \cap 
\id{L}{}{}{}{}(H_{\mathcal{M}})^+$, such that for all $T \in \mathcal{M}$, 
\[
\langle T, \rho\rangle = \textup{tr}(D_\rho T)\,.
\]
Hence, $0 \leq X_\rho : = D_\rho^{1/2} \in  K_{\mathcal{M}} : = 
\id{S}{2}{}{}{}\,(H_{\mathcal{M}})$, and
\[
\langle T, \rho\rangle = \textup{tr}(X_\rho T X_\rho) = 
\langle T X_\rho, X_\rho\rangle_{K_{\mathcal{M}}} 
\stackrel{\eqref{eq:pi_L_acting_from_the_left}}{=} 
\langle \pi_L(T)X_\rho, X_\rho\rangle_{K_{\mathcal{M}}} = 
\langle \pi_L(T), \omega_{X_\rho}\rangle\,.
\]
Consequently, it follows that 
\[
\rho = \omega_{X_\rho}\circ \pi_L\!\mid_{\mathcal{M}} = 
(\pi_L\!\mid_{\mathcal{M}})^\prime\omega_{X_\rho}\,\,(!)
\] 
In other words, \textit{any normal state} on $\mathcal{M}$ can be 
\textit{extended to a vector state} on the von Neumann algebra $\id{L}{}{}{}{}
(K_{\mathcal{M}}) \equiv \id{L}{}{}{}{}(\id{S}{2}{}{}{}\,(H_{\mathcal{M}}))$:
\vspace{-2em}
\begin{center}
\includegraphics[width=3.8cm]{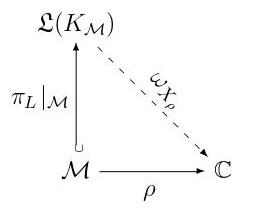} 
\end{center}
\vspace{-1em}
In fact, if we deal with $\mathcal{M} : = \id{L}{}{}{}{}(H)$ itself, we may 
choose $H_{\mathcal{M}} : = H$ (due to 
\sref{Remark}{rem:structure_of_vector_states}), and we have arrived at, 
\begin{corollary}[\textbf{Vector state version of Gleason's theorem}]
\label{cor:vector_state_version_of_Gleason}
Let $H$ be an infinite-dimensional Hilbert space over $\C$. Put $K : = 
\id{S}{2}{}{}{}\,(H)$. Then any completely additive measure $\mu : 
Proj(\id{L}{}{}{}{}(H)) \longrightarrow \C$ on the projection lattice 
$Proj(\id{L}{}{}{}{}(H))$ \underline{extends uniquely} to a vector state 
$\omega_{X_\mu}$ on $\id{L}{}{}{}{}(K)$, where $X_\mu \in S_K \cap 
\id{L}{}{}{}{}^+(H)$: 
\[
\mu(E) = \omega_{X_\mu}\circ\pi_L(P) = \langle PX_\mu, X_\mu\rangle_K 
\text{ for all } P \in Proj(\id{L}{}{}{}{}(H))\,. 
\] 
\end{corollary}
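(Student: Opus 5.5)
We combine Gleason's theorem in the Drisch form (\sref{Theorem}{thm:Gleason_Drisch}) with the boundedness result of Dorofeev/Sherstnev and the vector-state construction given just before the statement. We implicitly take $\mu$ to be a positive, normalised measure, i.e.\ $\mu(Id_H) = 1$; this is forced anyway, since a vector state $\omega_X$ with $X \in S_K$ takes the value $\Vert X\Vert_K^2 = 1$ at $Id_H$.

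First, since $H$ is infinite-dimensional, Dorofeev/Sherstnev shows that $\mu$ is bounded. Because $\dim(H) \geq 3$, \sref{Theorem}{thm:Gleason_Drisch} applies. It yields a unique positive operator $D_\mu \in \id{N}{}{}{}{}(H)$ with $\mu(P) = \tup{tr}(P D_\mu)$ for all projections $P$. Normalisation gives $\tup{tr}(D_\mu) = \mu(Id_H) = 1$. By \eqref{eq:nuclear_norm_vs_HS_norm} this means $\idn{N}{}{}{}{}(D_\mu) = 1$.

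Next we put $X_\mu := D_\mu^{1/2}$. By \sref{Theorem}{thm:char_of_trace_class_ops}(iv), $X_\mu \in \id{S}{2}{}{}{}\,(H) = K$. It is positive, and $\Vert X_\mu\Vert_K^2 = \sigma_2(D_\mu^{1/2})^2 = \tup{tr}(D_\mu) = 1$, so $X_\mu \in S_K \cap \id{L}{}{}{}{}(H)^+$. For every projection $P$ we compute
\[
\tup{tr}(PD_\mu) = \tup{tr}(D_\mu P) = \tup{tr}(X_\mu P X_\mu) = \langle P X_\mu, X_\mu\rangle_{K} = \langle \pi_L(P)X_\mu, X_\mu\rangle_K .
\]
The first equality is \eqref{eq:trace_trick}, i.e.\ commutativity of the trace for Hilbert-Schmidt factors. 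The third uses \eqref{eq:HS_inner_product} together with $X_\mu^\adj = X_\mu$. The last uses \eqref{eq:pi_L_acting_from_the_left}. Hence $\mu = \omega_{X_\mu}\circ\pi_L$ on $Proj(\id{L}{}{}{}{}(H))$. Moreover $\omega_{X_\mu}$ is a state on $\id{L}{}{}{}{}(K)$, so the measure does extend to a vector state.

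Uniqueness is the delicate step, and "extends uniquely" has to be read correctly. A vector $X$ with $\omega_X\circ\pi_L = \mu$ is not unique in general: $\tup{tr}(X^\adj P X) = \tup{tr}(P XX^\adj)$, so any $X$ with $XX^\adj = D_\mu$, for instance $X = D_\mu^{1/2}V$ with $V$ unitary, gives the same measure. We therefore state uniqueness under the constraint $X \geq 0$, which matches the requirement $X_\mu \in \id{L}{}{}{}{}(H)^+$ in the statement.

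Suppose $X \in S_K$, $X \geq 0$, and $\langle PX, X\rangle_K = \mu(P)$ for all projections $P$. Then $\tup{tr}(P X^2) = \tup{tr}(P D_\mu)$ for all $P$, with $X^2$ a positive nuclear operator. The uniqueness part of \sref{Theorem}{thm:Gleason_Drisch} then forces $X^2 = D_\mu$. Uniqueness of the positive square root gives $X = X_\mu$.

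The composite functional $\omega_{X_\mu}\circ\pi_L$ is determined on all of $\id{L}{}{}{}{}(H)$ by normality: it equals $\tup{tr}(D_\mu\,\cdot)$. This settles uniqueness at the level of the induced state on $\id{L}{}{}{}{}(H)$ as well. The main care needed is simply to keep the inner products straight via \eqref{eq:HS_inner_product}.
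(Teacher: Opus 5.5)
Your proof is correct and follows essentially the paper's own route: boundedness via Dorofeev/Sherstnev, then Gleason--Drisch to obtain the positive trace-one density operator $D_\mu$, then $X_\mu = D_\mu^{1/2} \in \id{S}{2}{}{}{}\,(H)$ with $\tup{tr}(X_\mu P X_\mu) = \langle \pi_L(P)X_\mu, X_\mu\rangle_K$. Your explicit uniqueness argument goes beyond the paper. You note that the vector is not unique in general, but is unique among positive $X \in S_K$ by uniqueness of $D_\mu$ and of positive square roots, which makes precise what the paper leaves implicit in ``extends uniquely''.
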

Of course, if the Hilbert space $H$ is finite-dimensional, and if $\textup{dim}
(H) \geq 3$, the vector state version of Gleason's theorem would also hold 
if we assume in addition that the completely additive measure $\mu : 
Proj(\id{L}{}{}{}{}(H)) \longrightarrow \C$ on the projection lattice 
$Proj(\id{L}{}{}{}{}(H))$ is bounded.

One of the most famous applications of Gleason's theorem is the proof of 
the nonexistence of dispersion-free states on the ``Hilbert space logic''. The 
dispersion free state is a state attaining only values 0 and 1. Such a state 
assigns to each projection probability 0 or 1 and hence eliminates the 
probability character of the model. More specifically, Gleason's theorem rules 
out hidden-variable models that are ``noncontextual''. In order to avoid the 
implications of Gleason's theorem, any hidden-variable model for quantum 
mechanics must include hidden variables that are not properties belonging to the 
measured system alone but also depend on the external context in which the 
measurement is made.

Actually, \sref{Theorem}{thm:char_of_trace_class_ops} is a special case of a 
deep result of Grothendieck (cf. 
\cite[Theorem 1.5.1, Theorem 1.7.15]{P1987}).} To this 
end, recall that $T \in \mathfrak{L}(E,F)$ between Banach spaces $E$ and $F$ is 
called absolutely $p$-summing ($1 \leq p < \infty$) if there exists a constant 
$c \geq 0$ such that for all $n \in \N$ and $x_1, \ldots, x_n \in E$
\begin{align}\label{eq:absolutely_p_summing_operator}
\big(\sum_{i=1}^n \Vert Tx_i\Vert^p\big)^{1/p} \leq 
c\,\sup\limits_{a \in B_{E^\prime}}
\big(\sum_{i=1}^n \vert \langle x_i, a\rangle \vert^p\big)^{1/p}.
\end{align}
The $p$-summing norm $\idn{P}{\!p}{}{}{}\,(T)$ is defined as the 
infimum of all constants $c \geq 0$ which satisfy 
\eqref{eq:absolutely_p_summing_operator} (cf., e.g., \cite[Chapter 11]{DF1993}) 
or \cite[Chapter 2]{DJT1995}). In the Hilbert space case, it is known that 
${\mathfrak{S}}_2(H,K)$ coincides with ${\mathfrak{P}}_2(H,K)$, even with equal 
$1$-Banach ideal norms; i.e.,
\begin{align}\label{eq:HS_equals_P2}
\id{S}{2}{}{}{}\,(H, K) \stackrel{1}{=} \id{P}{2}{}{}{}\,(H,K)
\end{align}
for all Hilbert spaces $H$, $K$ (cf., e.g., \cite[Theorem 1.4.5]{P1987} and 
\cite{P2014}).
\begin{theorem}\label{thm:N2_vs_P2_min}
\begin{align}\label{eq:N2_is_P2_min}
\id{P}{2}{\bf{(a)}}{}{} \stackrel{1}{=} \id{N}{2}{}{}{} \stackrel{1}{=} 
\frak{\overline{F}} \circ {\frak{P}}_2 \stackrel{1}{=} \id{P}{2}{\tup{min}}{}{}\,
\stackrel{1}{=} {\frak{P}}_2 \circ \frak{\overline{F}}\,. 
\end{align}
In particular,
\[
\id{N}{2}{}{}{}\,(E, F) \stackrel{1}{\hookrightarrow} \id{P}{2}{}{}{}\,(E, F)
\]
is an isometry. If $E$ and $F$ are arbitrarily given Banach spaces, then every 
$T \in \id{N}{2}{}{}{}\,(E, F)$ admits a factorisation
\vspace{-0.5em}
\begin{center}
\includegraphics[width=6.5cm]{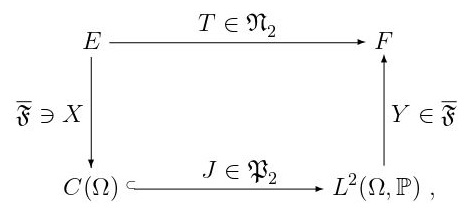} 
\end{center}
\vspace{-0.6em}
for some compact Hausdorff space $\Omega$ and some regular probability measure 
$\P$ on $\Omega$, where $J : C(\Omega) \hookrightarrow L^2(\Omega, \P)$ denotes 
the canonical injection, with $\idn{P}{2}{}{}{}\,(J) = 1$. In this 
case,
\[
\idn{N}{2}{}{}{}\,(T) = \inf(\Vert Y \Vert\,\Vert X \Vert)\,,
\]
where the infimum is taken over all possible factorisations.     
\end{theorem}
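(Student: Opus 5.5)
I would prove the chain \eqref{eq:N2_is_P2_min} by establishing two cyclic sets of inclusions and then reading off the factorisation from the Pietsch factorisation theorem. Throughout, all inclusions are between Banach ideals, so norms only need to be compared on the relevant operators.

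\textbf{Step 1: the factorisation for $2$-nuclear operators.} Let $T=\sum_n\langle\cdot,a_n\rangle y_n$ with $(a_n)\in l_2^{\textup{strong}}(E')$ and $(y_n)\in l_2^{\textup{weak}}(F)$. Normalise so that $\Vert a_n\Vert=\lambda_n\Vert(a_n)\Vert_2$ with $\sum\lambda_n^2=1$.
\begin{itemize}
\item Let $\Omega:=B_{E'}$ with the weak$^\adj$ topology and $i_E:E\to C(\Omega)$ the canonical isometry.
\item Choose $\P$ to be the discrete probability measure $\sum_n\lambda_n^2\delta_{a_n/\Vert a_n\Vert}$.
\item Define $X:L^2(\P)\to F$ by $X f:=\sum_n \lambda_n f(a_n/\Vert a_n\Vert)\,\Vert(a_n)\Vert_2\, y_n$, so that $T=XJi_E$. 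Cauchy--Schwarz against weak-$2$-summability of $(y_n)$ gives $\Vert X\Vert\le\Vert(a_n)\Vert_2\Vert(y_n)\Vert^{\textup{weak}}_2$.
\end{itemize}
Since $\idn{P}{2}{}{}{}\,(J)=1$, this yields $\idn{P}{2}{}{}{}\,(T)\le\inf\Vert X\Vert\Vert Y\Vert\le\idn{N}{2}{}{}{}\,(T)$, where $Y:=i_E$ and the infimum runs over all such factorisations. In particular $\id{N}{2}{}{}{}\stackrel1\subseteq\id{P}{2}{}{}{}$.

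\textbf{Step 2: the approximative kernel and the product ideals.} The ideal $\id{N}{2}{}{}{}$ is complete. Finite-rank operators are $\idn{N}{2}{}{}{}$-dense in it, since one truncates the series, and on finite-rank operators $\idn{P}{2}{}{}{}\le\idn{N}{2}{}{}{}$ by Step 1. Hence $\id{N}{2}{}{}{}\subseteq\id{P}{2}{\bf(a)}{}{}$.

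For the reverse inclusion, I would first show that on finite-rank operators $\idn{N}{2}{}{}{}(L)=\idn{P}{2}{}{}{}(L)$. Pietsch factorisation gives $L=\tilde X J\tilde Y$ with $J:C(\Omega)\to L^2(\P)$. Replace $L^2(\P)$ by the finite-dimensional subspace $J\tilde Y(E)$ and its orthogonal projection. Then $J$ restricted to $\tilde Y^{-1}$ of a finite-dimensional piece is approximated by a finite sum $\sum\langle\cdot,\mu_k\rangle h_k$ with $\mu_k$ discrete measures. This exhibits a $2$-nuclear representation with norm at most $\Vert\tilde X\Vert\idn{P}{2}{}{}{}(L)\Vert\tilde Y\Vert+\varepsilon$.

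Once equality holds on $\id{F}{}{}{}{}$, the two completions coincide: $\id{P}{2}{\bf(a)}{}{}\stackrel1=\id{N}{2}{}{}{}$. The same argument also gives the factorisation statement and the formula $\idn{N}{2}{}{}{}(T)=\inf\Vert Y\Vert\Vert X\Vert$.

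Next, in the factorisation of Step 1, the map $X$ is approximable: it is the norm limit of its truncations, because $(\lambda_n)\in l_2$. Also $J i_E\in\id{P}{2}{}{}{}$. Hence $\id{N}{2}{}{}{}\stackrel1\subseteq\overline{\id{F}{}{}{}{}}\circ\id{P}{2}{}{}{}$. Dually, the map $Y$ can be replaced by $E\to l_2$, $x\mapsto(\lambda_n\langle x,a_n/\Vert a_n\Vert\rangle)_n$, which is approximable. Then $T$ factors as an approximable operator followed by the $2$-summing map $(\xi_n)\mapsto\sum\xi_n\Vert(a_n)\Vert_2 y_n$ restricted appropriately. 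This gives $\id{N}{2}{}{}{}\stackrel1\subseteq\id{P}{2}{}{}{}\circ\overline{\id{F}{}{}{}{}}$.

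\textbf{Step 3: closing the cycle via the minimal kernel.} The inclusions $\overline{\id{F}{}{}{}{}}\circ\id{P}{2}{}{}{}\subseteq\id{P}{2}{\textup{min}}{}{}$ and $\id{P}{2}{}{}{}\circ\overline{\id{F}{}{}{}{}}\subseteq\id{P}{2}{\textup{min}}{}{}$ follow from the definition $\id{P}{2}{\textup{min}}{}{}=\overline{\id{F}{}{}{}{}}\circ\id{P}{2}{}{}{}\circ\overline{\id{F}{}{}{}{}}$ together with the metric inclusion $\id{P}{2}{}{}{}\circ\overline{\id{F}{}{}{}{}}\subseteq\id{P}{2}{\bf(a)}{}{}$. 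That last inclusion holds because $\idn{P}{2}{}{}{}(SA)\le\idn{P}{2}{}{}{}(S)\Vert A\Vert$ lets finite-rank approximants of $A$ pass to approximants of $SA$. Finally, $\id{P}{2}{\textup{min}}{}{}\stackrel1\subseteq\id{P}{2}{\bf(a)}{}{}$ holds for the same reason, and with Step 2 the whole chain closes isometrically.

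The main obstacle is the isometric equality $\idn{N}{2}{}{}{}=\idn{P}{2}{}{}{}$ on finite-rank operators. This is where the constant $1$ (rather than merely an isomorphism) must be achieved. I would try the discretisation of the Pietsch measure first, and otherwise invoke the known result $\id{N}{2}{}{}{}\stackrel1=\id{P}{2}{\textup{min}}{}{}$ from \cite{DF1993} as a fallback.
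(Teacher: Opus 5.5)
\textbf{There are genuine gaps.} Step 1 and the density argument are fine, but the product-ideal part of Step 2 rests on two false claims.

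\textbf{Counterexample to Step 2.} Take $E=F=\ell_2$, $a_n=2^{-n}e_n$ and $y_n=e_n$. The atoms $a_n/\Vert a_n\Vert=e_n$ of $\P$ are distinct. For the unit vector $f=\lambda_{N+1}^{-1}1_{\{e_{N+1}\}}\in L^2(\P)$, your $X$ gives $(X-X_N)f=\Vert(a_n)\Vert_2\,e_{N+1}$. The weights $\lambda_n$ are exactly cancelled by the normalisation of $L^2(\P)$, so $X$ is a multiple of a unitary and is not compact. Likewise $(\xi_n)\mapsto\Vert(a_n)\Vert_2\sum_n\xi_n y_n$ is here a multiple of $Id_{\ell_2}$, which is not $2$-summing, not even on the dense range of your approximable map.

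\textbf{How to repair it.} The $\ell_2$-decay of $(\Vert a_n\Vert)$ must be shared between the factors. Pick $\rho_n\ge 1$ with $\rho_n\to\infty$ and $\sum_n\rho_n^4\Vert a_n\Vert^2\le\Vert(a_n)\Vert_2^2+\varepsilon$. Write $T=BPA$ through $E\to c_0\to\ell_2\to F$ with $Ax=(\rho_n^{-1}\langle x,a_n\rangle/\Vert a_n\Vert)_n$, $P\xi=(\rho_n^2\Vert a_n\Vert\,\xi_n)_n$ and $B\eta=\sum_n\rho_n^{-1}\eta_n y_n$. Then $A$ and $B$ are approximable, $\Vert A\Vert\le 1$, $\Vert B\Vert\le\Vert(y_n)\Vert_2^{\mathrm{weak}}$, and $\mathbf{P}_2(P)\le\mathbf{N}_2(P)\le(\sum_n\rho_n^4\Vert a_n\Vert^2)^{1/2}$. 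This gives $\mathfrak{N}_2\stackrel{1}{\subseteq}\mathfrak{P}_2^{\mathrm{min}}$.

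\textbf{Step 3 is reversed.} The definition $\mathfrak{P}_2^{\mathrm{min}}=\overline{\mathfrak{F}}\circ\mathfrak{P}_2\circ\overline{\mathfrak{F}}$ yields only $\mathfrak{P}_2^{\mathrm{min}}\stackrel{1}{\subseteq}\overline{\mathfrak{F}}\circ\mathfrak{P}_2$ and $\mathfrak{P}_2^{\mathrm{min}}\stackrel{1}{\subseteq}\mathfrak{P}_2\circ\overline{\mathfrak{F}}$, not the inclusions you assert. The cycle that actually closes, all metrically, is $\mathfrak{N}_2\subseteq\mathfrak{P}_2^{\mathrm{min}}\subseteq\overline{\mathfrak{F}}\circ\mathfrak{P}_2$ (resp.\ $\mathfrak{P}_2\circ\overline{\mathfrak{F}}$) $\subseteq\mathfrak{P}_2^{(\mathrm{a})}=\mathfrak{N}_2$.

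\textbf{The decisive step $\mathbf{N}_2(L)=\mathbf{P}_2(L)$ for finite-rank $L$ is not proved.} $J\tilde Y(E)$ is in general infinite-dimensional. The finite-dimensional space is $H_1:=\overline{J\tilde Y(E)}\cap(\ker\tilde X)^\perp$, of dimension $m\le\mathrm{rank}\,L$, and $L=\tilde X PJ\tilde Y$ with $P$ the orthogonal projection onto $H_1$. Even then, expanding $PJ\tilde Y$ in an orthonormal basis of $H_1$ gives only $\mathbf{N}_2(PJ\tilde Y)\le\sqrt{m}$. The constant $1$ needs a genuine discretisation of the Pietsch measure.

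For a finite Borel partition $\mathcal{P}$ of $\Omega=B_{E'}$ (with $\tilde Y=i_E$), the conditional expectation satisfies $E_{\mathcal{P}}J\tilde Y=\sum_{A\in\mathcal{P},\,\P(A)>0}\langle\cdot,\P(A)^{1/2}c_A\rangle\,\P(A)^{-1/2}1_A$. Here $c_A\in B_{E'}$ is the functional $x\mapsto\P(A)^{-1}\int_A\langle x,\omega\rangle\,d\P(\omega)$, so $\mathbf{N}_2(E_{\mathcal{P}}J\tilde Y)\le 1$. Since $E_{\mathcal{P}}\to Id$ strongly and $P$ has rank $m$, $\Vert P-PE_{\mathcal{P}}\Vert\to 0$. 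Every $D\in\mathfrak{L}(E,H_1)$ satisfies $\mathbf{N}_2(D)\le\sqrt{m}\,\Vert D\Vert$. Hence $\mathbf{N}_2(PJ\tilde Y)\le 1$ and $\mathbf{N}_2(L)\le\Vert\tilde X\Vert\le\mathbf{P}_2(L)$.

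Two smaller points: your bound $\Vert\tilde X\Vert\,\mathbf{P}_2(L)\,\Vert\tilde Y\Vert$ counts $\mathbf{P}_2(L)$ twice, and falling back on $\mathfrak{N}_2\stackrel{1}{=}\mathfrak{P}_2^{\mathrm{min}}$ assumes part of what is to be shown.

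\textbf{Comparison with the paper.} The paper bypasses these computations. Self-adjointness of $\mathfrak{P}_2$ gives $\mathfrak{N}_2^{\mathrm{max}}=\mathfrak{N}_2^{\ast\ast}=\mathfrak{P}_2^{\ast}=\mathfrak{P}_2$, which is exactly $\mathbf{N}_2=\mathbf{P}_2$ on finite-dimensional spaces. Taking minimal kernels gives $\mathfrak{N}_2\stackrel{1}{=}\mathfrak{P}_2^{\mathrm{min}}$, and total accessibility of $\mathfrak{P}_2$ together with \cite[Proposition 25.2]{DF1993} gives the two product identities. With the repairs above, your route becomes a self-contained elementary alternative that needs only Pietsch's factorisation theorem.
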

\begin{proof}
Our short proof is primarily based on an application of fundamental results of 
Pietsch and highly nontrivial local structural properties of quasi-normed 
operator ideals, known as ``accessibility conditions'', where the latter are 
listed and discussed in depth in \cite{DF1993, Oe2026}. 

Firstly, since $(\id{P}{2}{}{}{}\,, \idn{P}{2}{}{}{}\,)$ is self-adjoint (cf., 
e.g., \cite[Theorem 19.2.14]{P1980}), it follows from 
\cite[Theorem 9.3.1 and Theorem 18.2.5]{P1980} (applied to $r=2=r^\ast$) that 
the (normed\,!) Banach operator ideal $(\id{N}{2}{}{}{}\,, \idn{N}{2}{}{}{}\,)$ satisfies
\[
\id{N}{2}{\tup{max}}{}{} \stackrel{1}{=} \id{N}{2}{\adj\adj}{}{} \stackrel{1}{=} 
\id{P}{2}{\adj}{}{} \stackrel{1}{=} \id{P}{2}{}{}{}
\]
Thus, if we form the minimal kernel on both sides of that isometric equality, 
\cite[Theorem 18.1.4]{P1980} implies that
\[
\id{N}{2}{}{}{} \stackrel{1}{=} \id{N}{2}{\tup{min}}{}{} \stackrel{1}{=} 
(\id{N}{2}{\tup{max}}{}{})^{\tup{min}} \stackrel{1}{=} 
\id{P}{2}{\tup{min}}{}{}.
\]
Since $(\id{P}{2}{}{}{}\,,{\Vert \cdot \Vert}_{\id{P}{2}{}{}{}})$ is (totally) 
accessible (cf. \cite[Theorem 21.5]{DF1993}), we may apply 
\cite[Proposition 25.2]{DF1993}, and it follows that $\frak{\overline{F}} \circ 
{\frak{P}}_2 \stackrel{1}{=} \id{P}{2}{\tup{min}}{}{} \stackrel{1}{=} 
{\frak{P}}_2 \circ \frak{\overline{F}}$. Hence, since $\id{N}{2}{}{}{} 
\stackrel{1}{=} \id{P}{2}{\tup{min}}{}{}$, a resulting application of Pietsch's 
$\id{P}{2}{}{}{}\,$-factorisation theorem (cf., e.g., 
\cite[Theorem 1.5.1]{P1987}), together with \cite[Theorem 1.7.14]{P1987} 
concludes the proof.
\end{proof}
\begin{theorem}[\textbf{Grothendieck, Jarchow/Ott}]
\label{thm:L2_circ_N_equals_P2_circ_P2}
Let $E$ and $F$ be arbitrary Banach spaces and $H$ be an arbitrary 
Hilbert space $($over $\F$$)$. 
\begin{enumerate}
\item A bounded operator $T \in 
\id{L}{}{}{}{}(E, F)$ can be represented as a composition of two absolutely 
$2$-summing operators if and only if there exist a Hilbert space $K$, a nuclear 
operator $S \in \id{N}{}{}{}{}(E, K)$ and a bounded operator 
$R \in \id{L}{}{}{}{}(K, F)$, such that $T = RS$. In this case,
\[
\idn{P}{2}{}{}{}\,\circ \idn{P}{2}{}{}{}\,(T) = 
\inf(\Vert R \Vert\,\idn{N}{}{}{}{}(S))\,,
\]
where the infimum is taken over all possible factorisations. In particular, 
the composition of two absolutely 2-summing operators is nuclear, and
\[
\idn{N}{}{}{}{}(T) \leq \idn{P}{2}{}{}{}\,\circ
\idn{P}{2}{}{}{}\,(T) 
\]
for all $T \in \id{P}{2}{}{}{}\circ\id{P}{2}{}{}{}\,(E,F)$.
\item 
\[
\id{P}{2}{}{}{}\circ\id{P}{2}{}{}{}\,(E, H) \stackrel{1}{=} \id{N}{}{}{}{}(E, H).
\]
\end{enumerate}  
\end{theorem}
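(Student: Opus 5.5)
The plan is to run everything through Pietsch's factorisation theorem for absolutely $2$-summing operators (\sref{Theorem}{thm:N2_vs_P2_min} and its proof), combined with \eqref{eq:HS_equals_P2} and \sref{Theorem}{thm:char_of_trace_class_ops}.

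For (i), direction ``$\Rightarrow$'': let $T = BA$ with $A \in \id{P}{2}{}{}{}\,(E,G)$ and $B \in \id{P}{2}{}{}{}\,(G,F)$. Pietsch factorisation of $A$ gives $A = \widehat{A}\,J\,I_E$, where $I_E : E \hookrightarrow C(B_{E'})$ is the canonical isometry, $J : C(B_{E'}) \to L^2(\mu)$ is the formal injection with $\idn{P}{2}{}{}{}\,(J)=1$, and $\widehat{A} : K_0 := \overline{J I_E(E)} \to G$ has $\Vert \widehat{A}\Vert \leq \idn{P}{2}{}{}{}\,(A)$. Then $B\widehat{A} : K_0 \to F$ is absolutely $2$-summing on a Hilbert space. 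Applying Pietsch factorisation to it again gives $B\widehat{A} = R\,P$, where $P : K_0 \to K$ is $2$-summing between Hilbert spaces, hence Hilbert-Schmidt by \eqref{eq:HS_equals_P2}, and $\Vert R\Vert \leq \idn{P}{2}{}{}{}\,(B)$ up to the $\epsilon$ lost in the infimum.

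The nuclear factor is then $S := P\,J I_E : E \to K$. The map $J I_E$ is $2$-summing into the Hilbert space $K_0$, so the task reduces to a lemma: if $U \in \id{P}{2}{}{}{}\,(E,K_0)$ and $V \in \id{S}{2}{}{}{}\,(K_0,K)$, then $VU$ is nuclear with $\idn{N}{}{}{}{}(VU) \leq \sigma_2(V)\,\idn{P}{2}{}{}{}\,(U)$. I would prove this by writing $V = \sum_i \langle \cdot, e_i\rangle\,Ve_i$ along an orthonormal basis of $K_0$. This gives $VU = \sum_i \langle \cdot, U^\prime \Phi e_i\rangle\,Ve_i$, and Cauchy–Schwarz yields
\[
\sum_i \Vert U^\prime e_i\Vert\,\Vert Ve_i\Vert \leq \Big(\sum_i \Vert U^\prime e_i\Vert^2\Big)^{1/2}\sigma_2(V).
\]
The sum $\sum_i \Vert U^\prime e_i\Vert^2$ is bounded by $\idn{P}{2}{}{}{}\,(U)^2$, because $U^\prime$ restricted to finite orthonormal families is controlled by the $2$-summing norm of $U$ through duality with weakly $2$-summable sequences. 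I expect this lemma, with the sharp constant $\idn{N}{}{}{}{}(S) \leq \idn{P}{2}{}{}{}\,(A)$, to be the main obstacle. If the direct estimate fails, my fallback is to cite \cite[Theorem 5.30]{DJT1995}, namely $\id{P}{2}{\adj}{}{} \circ$ duality together with $\id{P}{2}{\adj}{}{}\stackrel{1}{=}\id{P}{2}{}{}{}$, and derive the estimate via trace duality in the form of \sref{Theorem}{thm:trace_duality_in_the_Hilbert_space_case}. Multiplying the bounds gives $\Vert R\Vert\,\idn{N}{}{}{}{}(S) \leq \idn{P}{2}{}{}{}\,(B)\,\idn{P}{2}{}{}{}\,(A)$ up to $\epsilon$.

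For (i), direction ``$\Leftarrow$'': let $T = RS$ with $S \in \id{N}{}{}{}{}(E,K)$. Write $S = \sum_n \langle \cdot, a_n\rangle y_n$ with $\Vert a_n\Vert = \Vert y_n\Vert = \lambda_n^{1/2}$, where $\sum_n \lambda_n \leq \idn{N}{}{}{}{}(S)+\epsilon$. Then $S = S_2 D S_1$, with $S_1 = (\langle\cdot,a_n\rangle/\Vert a_n\Vert)_n : E \to l_\infty$, the diagonal operator $D = \mathrm{diag}(\lambda_n)$ split as $D_2 D_1$ with $D_1 = \mathrm{diag}(\lambda_n^{1/2}) : l_\infty \to l_2$ and $D_2 = \mathrm{diag}(\lambda_n^{1/2}) : l_2 \to l_2$, and $S_2 e_n = y_n/\Vert y_n\Vert$. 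Here $D_1$ is $2$-summing with norm $(\sum \lambda_n)^{1/2}$, and $D_2$ is Hilbert-Schmidt with the same norm, hence $2$-summing by \eqref{eq:HS_equals_P2}. Thus $T = (R S_2 D_2)(D_1 S_1)$ and $\idn{P}{2}{}{}{}\circ\idn{P}{2}{}{}{}\,(T) \leq \Vert R\Vert(\idn{N}{}{}{}{}(S)+\epsilon)$. This gives the equality of norms, and taking $R$ with $\Vert R \Vert$ small relative to the factorisation, or using $\Vert R S\Vert_{\id{N}{}{}{}{}} \leq \Vert R\Vert\,\idn{N}{}{}{}{}(S)$ in (q-IN$_3$), yields $\idn{N}{}{}{}{}(T) \leq \idn{P}{2}{}{}{}\circ\idn{P}{2}{}{}{}\,(T)$.

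For (ii), when $F = H$ is a Hilbert space, one can take $K = H$ and $R = Id_H$, $S = T$ in the converse direction. This gives $\id{N}{}{}{}{}(E,H) \subseteq \id{P}{2}{}{}{}\circ\id{P}{2}{}{}{}\,(E,H)$ with $\idn{P}{2}{}{}{}\circ\idn{P}{2}{}{}{}\,(T) \leq \idn{N}{}{}{}{}(T)$. Combined with the reverse inequality from (i), this proves the isometric equality.
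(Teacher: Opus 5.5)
Your reduction by two Pietsch factorisations is sound, and so is (ii). The converse is also sound up to a slip noted below. The paper itself gives no argument: it imports the result from \cite[Theorem 1.7.15]{P1987}.

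The genuine gap is the lemma $\mathbf{N}(VU)\le\sigma_2(V)\,\mathbf{P}_2(U)$ for $U\in\mathfrak{P}_2(E,K_0)$, $V\in\mathfrak{S}_2(K_0,K)$. Given the factorisations, this lemma \emph{is} Grothendieck's theorem, and your proof of it rests on a false inequality. Orthonormal families test the $2$-summing norm of operators defined on a Hilbert space. So the bound $\sum_i\Vert U'e_i\Vert^2\le\mathbf{P}_2(U)^2$ amounts to $\mathbf{P}_2(U')\le\mathbf{P}_2(U)$, i.e.\ to symmetry of $\mathfrak{P}_2$, which fails (self-adjointness $\mathfrak{P}_2^{\ast}=\mathfrak{P}_2$ is a different property). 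The $2$-summing norm of $U$ controls $\sum_i\Vert Ux_i\Vert^2$ for weakly $2$-summable $(x_i)$ in $E$. It does not control $\sum_i\sup_{\Vert x\Vert\le 1}\vert\langle Ux,e_i\rangle\vert^2$.

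Concretely, let $U=J:C[0,1]\to L^2[0,1]$ be the canonical inclusion ($\mathbf{P}_2(J)=1$) and $e_n(t)=e^{2\pi i n t}$, $n\in\mathbb{Z}$. Then $U'e_n$ is the measure $\overline{e_n}\,dt$, of norm $\Vert e_n\Vert_{L^1}=1$, so your sum is infinite. For the Hilbert--Schmidt diagonal operator $Ve_n=e_n/(1+\vert n\vert)$ one even gets $\sum_n\Vert U'e_n\Vert\,\Vert Ve_n\Vert=\infty$.

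No other basis helps. A finite $\sum_i\Vert J'e_i\Vert^2$ would make $J=\sum_i\langle\cdot,J'e_i\rangle\,e_i$ an operator-norm limit of finite-rank operators, hence compact. But $J$ maps the bounded sequence $(\sqrt{2}\sin(2\pi n\,\cdot))_n$ onto an orthonormal one.

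Your fallback does not close the gap either. Theorem~\ref{thm:trace_duality_in_the_Hilbert_space_case} only concerns operators between Hilbert spaces, while $S$ has the arbitrary Banach domain $E$. Abstract trace duality with $\mathfrak{P}_2^{\ast}=\mathfrak{P}_2$ would at best give integrality of $S$. You would then still need that integral operators into a Hilbert space are nuclear with the same norm.

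What is missing is the use of the concrete first Pietsch factor rather than an abstract $2$-summing $U$. Let $Q$ be the orthogonal projection of $L^2(\mu)$ onto $K_0$ and put $\widetilde P:=PQ$; then $S=\widetilde P J I_E$ and $\sigma_2(\widetilde P)=\sigma_2(P)$. For an orthonormal basis $(e_i)$ of $L^2(\mu)$ one has $\widetilde Pf=\int f\,\kappa\,d\mu$ with kernel $\kappa:=\sum_i\overline{e_i}\,\widetilde Pe_i\in L^2(\mu;K)$ and $\Vert\kappa\Vert_{L^2(\mu;K)}=\sigma_2(P)$.

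For simple $\kappa=\sum_k\chi_{A_k}z_k$, the operator $\phi\mapsto\sum_k\big(\int_{A_k}\phi\,d\mu\big)z_k$ on $C(B_{E'})$ has nuclear norm at most $\sum_k\mu(A_k)\Vert z_k\Vert$. By density and completeness of $\mathfrak{N}$ you get $\mathbf{N}(\widetilde PJ)\le\Vert\kappa\Vert_{L^1(\mu;K)}\le\Vert\kappa\Vert_{L^2(\mu;K)}=\sigma_2(P)$; the middle step holds because $\mu$ is a probability measure. Hence $\mathbf{N}(S)\le\sigma_2(P)$, which is exactly the estimate your bookkeeping requires. The point is to measure the kernel in $L^1(\mu;K)=L^1(\mu)\widetilde{\otimes}_\pi K$ instead of along one basis expansion.

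The smaller slip is in the converse. $S_2:\ell_2\to K$, $e_n\mapsto y_n/\Vert y_n\Vert$, need not be bounded (the $y_n$ could all be parallel), so you cannot invoke the ideal property for $RS_2D_2$. Instead observe directly that $S_2D_2:e_n\mapsto y_n$ is Hilbert--Schmidt with $\sigma_2(S_2D_2)^2=\sum_n\Vert y_n\Vert^2=\sum_n\lambda_n$. This is precisely where the Hilbert space structure of $K$ enters.
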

\begin{proof}
The statements follow directly from an application of the fundamental
\cite[Theorem 1.7.15]{P1987}.
\end{proof}
In fact, if we apply nontrivial facts about the composition of accessible 
operator ideals (cf. \cite[Chapter 25]{DF1993}), the latter result can be 
improved. To this end, we have to recall the construction of the far-reaching 
maximal Banach operator ideal of $2$-factorable operators $(\id{L}{2}{}{}{}\,,
\idn{L}{2}{}{}{}\,)$. Let $E$ and $F$ be $\F$-Banach spaces 
and $T \in {\mathfrak{L}}(E, F)$. By definition, $T \in \id{L}{2}{}{}{}\,(E, F)$ 
if and only if there exist an Hilbert space $H$ and bounded linear 
operators $R \in \mathfrak{L}(H, F)$, $S \in \mathfrak{L}(E, H)$, such that 
$T = RS$. $T \in \id{L}{2}{}{}{}\,(E,F)$ is said to be \textit{2-factorable} 
(cf., e.g., \cite[Corollary 18.6.2]{DF1993}).
$(\id{L}{2}{}{}{}\,,\idn{L}{2}{}{}{}\,)$ is an injective, maximal Banach 
operator ideal which is totally accessible. The norm is defined as
\[
\idn{L}{2}{}{}{}\,(T) : = \inf(\Vert R\Vert\,\Vert S\Vert),
\]
where the infimum is taken over all possible factorisations.
\begin{corollary}\label{cor:L2_circ_N_equals_N2_circ_N2}
The product ideal $(\id{L}{2}{}{}{}\,\circ\id{N}{}{}{}{}\,,
\idn{L}{2}{}{}{}\,\circ\idn{N}{}{}{}{})$ is minimal, regular and injective, 
totally accessible and $\frac{1}{2}$-normed. Moreover,
\[
\id{L}{2}{}{}{}\,\circ\id{N}{}{}{}{} \stackrel{1}{=} \id{N}{2}{}{}{}\,\circ 
\id{N}{2}{}{}{}\, \stackrel{1}{=} \id{P}{2}{}{}{}\,\circ\id{P}{2}{}{}{}\,\,.
\]
In particular, the composition of two $2$-nuclear operators is nuclear.
\end{corollary}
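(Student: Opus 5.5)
Since $\id{N}{2}{}{}{}\stackrel{1}{=}\id{P}{2}{\tup{min}}{}{}$ (\sref{Theorem}{thm:N2_vs_P2_min}), the plan is to reduce everything to the minimal-kernel calculus for product ideals and to the Grothendieck--Jarchow/Ott factorisation of \sref{Theorem}{thm:L2_circ_N_equals_P2_circ_P2}. The order of steps is as follows.

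\textbf{First}, establish $\id{L}{2}{}{}{}\circ\id{N}{}{}{}{}\stackrel{1}{=}\id{P}{2}{}{}{}\circ\id{P}{2}{}{}{}$. By \sref{Theorem}{thm:L2_circ_N_equals_P2_circ_P2}-(i), $T\in\id{P}{2}{}{}{}\circ\id{P}{2}{}{}{}(E,F)$ if and only if $T=RS$ with $S\in\id{N}{}{}{}{}(E,K)$, $K$ a Hilbert space and $R\in\id{L}{}{}{}{}(K,F)$. The norm is $\inf\Vert R\Vert\,\idn{N}{}{}{}{}(S)$. Conversely, for $T=AB$ with $A\in\id{L}{2}{}{}{}(G,F)$ and $B\in\id{N}{}{}{}{}(E,G)$, factor $A=R S_0$ through a Hilbert space $K$. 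Then $S_0B\in\id{N}{}{}{}{}(E,K)$ with $\idn{N}{}{}{}{}(S_0B)\le\Vert S_0\Vert\,\idn{N}{}{}{}{}(B)$. Taking infima gives the isometric equality. The reverse inclusion is trivial, since $\Vert R\Vert\ge\idn{L}{2}{}{}{}(R)$ for $R$ defined on a Hilbert space.

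\textbf{Second}, compare $\id{N}{2}{}{}{}\circ\id{N}{2}{}{}{}$ with $\id{P}{2}{}{}{}\circ\id{P}{2}{}{}{}$. Using \eqref{eq:N2_is_P2_min}, write $\id{N}{2}{}{}{}\circ\id{N}{2}{}{}{}\stackrel{1}{=}(\overline{\id{F}{}{}{}{}}\circ\id{P}{2}{}{}{})\circ(\id{P}{2}{}{}{}\circ\overline{\id{F}{}{}{}{}})$. This is contained in $\overline{\id{F}{}{}{}{}}\circ(\id{P}{2}{}{}{}\circ\id{P}{2}{}{}{})\circ\overline{\id{F}{}{}{}{}}=(\id{P}{2}{}{}{}\circ\id{P}{2}{}{}{})^{\tup{min}}$, with norms decreasing. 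For the reverse direction, I would show that $\id{P}{2}{}{}{}\circ\id{P}{2}{}{}{}$ is already minimal. By the first step it equals $\id{L}{2}{}{}{}\circ\id{N}{}{}{}{}$, and $\id{N}{}{}{}{}$ is minimal. \cite[Chapter 25]{DF1993} asserts that a product $\id{A}{}{}{}{}\circ\id{B}{}{}{}{}^{\tup{min}}$ with $\id{A}{}{}{}{}$ totally accessible is minimal; apply it with $\id{L}{2}{}{}{}$ totally accessible. This yields $\id{L}{2}{}{}{}\circ\id{N}{}{}{}{}\stackrel{1}{=}(\id{L}{2}{}{}{}\circ\id{N}{}{}{}{})^{\tup{min}}$.

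It remains to produce, for a given $T=RS$ ($S$ nuclear into a Hilbert space), a factorisation through two $2$-nuclear maps with controlled norms. Write $S$ via $\sum\lambda_n\langle\cdot,a_n\rangle k_n$ with $\Vert a_n\Vert=\Vert k_n\Vert=1$ and $\sum\lambda_n$ near $\idn{N}{}{}{}{}(S)$. Split $S=S_2S_1$, where $S_1=\sum\sqrt{\lambda_n}\langle\cdot,a_n\rangle e_n:E\to l_2$ and $S_2=\sum\sqrt{\lambda_n}\langle\cdot,e_n\rangle k_n$. Then $S_1$ is $2$-nuclear, since $(\sqrt{\lambda_n}a_n)\in l_2^{\rm strong}$ and $(e_n)$ is weakly $2$-summable with norm $1$. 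Likewise $RS_2$ is $2$-nuclear, with $\idn{N}{2}{}{}{}(RS_2)\le\Vert R\Vert(\sum\lambda_n)^{1/2}$, because $(R k_n)$ is weakly $2$-summable... This last claim needs care: $(e_n)$ is strongly $2$-summable only after weighting. So I would instead put the weights $\sqrt{\lambda_n}$ on the $e_n$-functionals and use that $(k_n)$ is weakly $2$-summable, which is false in general. This is precisely the main obstacle. If it fails, I would fall back on the abstract route: the minimality above, combined with $\id{P}{2}{}{}{}\circ\id{P}{2}{}{}{}\stackrel{1}{\subseteq}\id{P}{2}{}{}{}\circ\id{P}{2}{}{}{}$ restricted to minimal kernels, and the Defant--Floret rule $(\id{A}{}{}{}{}\circ\id{B}{}{}{}{})^{\tup{min}}\stackrel{1}{=}\id{A}{}{\tup{min}}{}{}\circ\id{B}{}{\tup{min}}{}{}$ for totally accessible maximal $\id{A}{}{}{}{},\id{B}{}{}{}{}$ (both equal to $\id{P}{2}{}{}{}$). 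Together these give $(\id{P}{2}{}{}{}\circ\id{P}{2}{}{}{})^{\tup{min}}\stackrel{1}{=}\id{N}{2}{}{}{}\circ\id{N}{2}{}{}{}$.

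\textbf{Finally}, collect the structural properties. The ideal is $\tfrac12$-normed, since a product of two normed ideals is $\tfrac12$-normed \cite[Chapter 7]{P1980}. Injectivity follows from injectivity of $\id{P}{2}{}{}{}$, whence of $\id{P}{2}{}{}{}\circ\id{P}{2}{}{}{}$ (only the left factor matters). Regularity comes from $\id{P}{2}{}{}{}$ being regular, applying the pattern of \cite[Chapter 25]{DF1993}. Total accessibility comes from the DF results on compositions of totally accessible ideals. The last sentence of the corollary then follows from \sref{Theorem}{thm:L2_circ_N_equals_P2_circ_P2}-(i).
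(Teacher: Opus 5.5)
Your proof of the isometric equalities ends up being the paper's. Step~1 is exactly how the paper reads off $\mathfrak{L}_2\circ\mathfrak{N}\stackrel{1}{=}\mathfrak{P}_2\circ\mathfrak{P}_2$ from \sref{Theorem}{thm:L2_circ_N_equals_P2_circ_P2}. Your minimality argument, followed by \eqref{eq:N2_is_P2_min}, is the paper's chain: write $\mathfrak{N}\stackrel{1}{=}\overline{\mathfrak{F}}\circ\mathfrak{N}\circ\overline{\mathfrak{F}}$, then move the first $\overline{\mathfrak{F}}$ from the right of the totally accessible $\mathfrak{L}_2$ to its left via \cite[Proposition 25.2]{DF1993}.

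However, you misjudged what remains after minimality. The ``containment'' at the start of your Step~2 is already an isometric equality, because the product quasi-norm is associative:
\[
\mathfrak{N}_2\circ\mathfrak{N}_2\stackrel{1}{=}(\overline{\mathfrak{F}}\circ\mathfrak{P}_2)\circ(\mathfrak{P}_2\circ\overline{\mathfrak{F}})\stackrel{1}{=}\overline{\mathfrak{F}}\circ(\mathfrak{P}_2\circ\mathfrak{P}_2)\circ\overline{\mathfrak{F}}=(\mathfrak{P}_2\circ\mathfrak{P}_2)^{\tup{min}}.
\]
So nothing is left once minimality is known, and your ``fallback rule'' is just this identity.

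Moreover, the factorisation you abandoned works; you never need $(k_n)$ to be weakly $2$-summable. Since $S_2e_n=\sqrt{\lambda_n}\,k_n$, the operator $S_2$ is Hilbert--Schmidt with $\sigma_2(S_2)^2=\sum_n\lambda_n$. A Schmidt decomposition $S_2=\sum_m s_m\langle\cdot,u_m\rangle v_m$ with orthonormal $(u_m)$, $(v_m)$ gives $RS_2=\sum_m\langle\cdot,s_mu_m\rangle\,Rv_m$. By Bessel's inequality, $(Rv_m)$ is weakly $2$-summable with norm at most $\Vert R\Vert$. Hence $\mathbf{N}_2(RS_2)\,\mathbf{N}_2(S_1)\le\Vert R\Vert\sum_n\lambda_n$, and taking infima gives $\mathfrak{L}_2\circ\mathfrak{N}\stackrel{1}{\subseteq}\mathfrak{N}_2\circ\mathfrak{N}_2$. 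Together with the trivial $\mathfrak{N}_2\circ\mathfrak{N}_2\stackrel{1}{\subseteq}\mathfrak{P}_2\circ\mathfrak{P}_2$ and Step~1, this proves both equalities without any accessibility argument. Minimality then comes for free, since a product of two minimal ideals is itself a minimal kernel. That route is more elementary than the paper's, which relies on \cite[Proposition 25.2]{DF1993} to commute $\overline{\mathfrak{F}}$ past $\mathfrak{L}_2$.

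In the structural part, your parenthetical ``only the left factor matters'' is false in general. For every ideal $\mathfrak{B}$ we have $\mathfrak{L}\circ\mathfrak{B}\stackrel{1}{=}\mathfrak{B}$, and $\mathfrak{L}$ is injective, while e.g.\ $\mathfrak{I}$ is not: its injective hull is $\mathfrak{P}_1$, and $\mathfrak{P}_1(\ell_2)=\mathfrak{S}_2(\ell_2)\neq\mathfrak{S}_1(\ell_2)=\mathfrak{I}(\ell_2)$. Given $JT=AB$ with $J:F\stackrel{1}{\hookrightarrow}F_0$, you must first corestrict $B$ to $G_1:=\overline{B(E)}$, which uses injectivity of the \emph{right} factor. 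Only then can you restrict $A$ to $G_1$, whose image lies in the closed subspace $J(F)$, and corestrict it using injectivity of the left factor. Both factors of $\mathfrak{P}_2\circ\mathfrak{P}_2$ are the injective ideal $\mathfrak{P}_2$, so your conclusion survives.

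Regularity then follows at once by applying injectivity to $j_F$, not from ``the pattern of Chapter 25''. Total accessibility is only gestured at, by you and equally by the paper, which cites \cite[Lemma 5.1]{Oe1998} and \cite{Oe2026}. For an injective ideal it reduces to the domain-side condition, which you can verify directly on $\mathfrak{L}_2\circ\mathfrak{N}$. Project the intermediate Hilbert space onto the finite-dimensional $(\ker R)^\perp$, and take a nearly optimal finite nuclear representation of the resulting operator into that finite-dimensional space.
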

\begin{proof}
From \sref{Theorem}{thm:L2_circ_N_equals_P2_circ_P2} (respectively 
\cite[Theorem 1.7.15]{P1987}), it follows that 
$\id{L}{2}{}{}{}\,\circ\id{N}{}{}{}{} \stackrel{1}{=} \id{P}{2}{}{}{}\,\circ 
\id{P}{2}{}{}{}\,\,$. Consequently, since both, $(\id{L}{2}{}{}{}\,,
\idn{L}{2}{}{}{}\,)$ and $(\id{P}{2}{}{}{}\,,\idn{P}{2}{}{}{}\,)$ are 
(totally) accessible, and since $(\id{N}{}{}{}{}\,,\idn{N}{}{}{}{}\,)$ is 
minimal, an application of \cite[Proposition 25.2]{DF1993} together with 
\eqref{eq:N2_is_P2_min} implies that
\[
\id{L}{2}{}{}{} \circ \id{N}{}{}{}{} \stackrel{1}{=} \id{L}{2}{}{}{} \circ 
\id{\overline{F}}{}{}{}{} \circ \id{N}{}{}{}{} \circ \id{\overline{F}}{}{}{}{}
\stackrel{1}{=} \id{\overline{F}}{}{}{}{} \circ \id{L}{2}{}{}{} \circ 
\id{N}{}{}{}{} \circ \id{\overline{F}}{}{}{}{} \stackrel{1}{=}
\id{\overline{F}}{}{}{}{} \circ \id{P}{2}{}{}{} \circ 
\id{P}{2}{}{}{} \circ \id{\overline{F}}{}{}{}{} \stackrel{1}{=}
\id{N}{2}{}{}{} \circ \id{N}{2}{}{}{}\,\,.
\]
\cite[Lemma 5.1]{Oe1998} and \cite{Oe2026} conclude the proof. 
\end{proof}
\sref{Corollary}{cor:char_of_tensor_product_of_HS_as_Hilbert_Schmidt}, 
together with \cite[Theorem 5.30]{DJT1995} now closes the loop and implies 
the following - nontrivial - link between absolutely $2$-summing operators, 
$2$-nuclear operators and the tensor product of  Hilbert spaces, which 
should be compared with \eqref{eq:H_otimes_pi_K} and 
\eqref{eq:H_otimes_vareps_K}:
\begin{proposition}\label{prop:induced_tensor_norm_g2}
Let $H$ and $K$ be two  Hilbert spaces. Then
\begin{align}\label{eq:HS_P2_N2_between_Hilbert_Spaces}
{\mathfrak{P}}_2(\overline{H},K) \stackrel{1}{=} {\mathfrak{N}}_2^{}
(\overline{H},K) \cong H \otimes K
\end{align}
holds isometrically and canonically. In particular, the norm on the Hilbert 
space $H \otimes K$ coincides isometrically with the totally accessible tensor 
norm $g_2 = g_2^t = g_2^\adj = g_2^\prime$:
\[
H \otimes K \cong H \widetilde{\otimes}_{g_2} K.
\]
Moreover, also
\begin{align}\label{eq:HS_P2_N2_between_Hilbert_Spaces_II}
{\mathfrak{P}}_2(H_1 \otimes H_2, K_1 \otimes K_2) \cong 
{\mathfrak{P}}_2(H_1, \overline{K}_1) \otimes {\mathfrak{P}}_2(H_2, K_2) \cong 
{\mathfrak{P}}_2(H_1, K_2) \otimes {\mathfrak{P}}_2(H_2, \overline{K}_1)
\end{align}
holds isometrically and canonically for all Hilbert spaces $H_1, H_2, K_1$ 
and $K_2$.
\end{proposition}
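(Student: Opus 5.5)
The first identity \eqref{eq:HS_P2_N2_between_Hilbert_Spaces} is assembled from results already available. By \eqref{eq:HS_equals_P2} applied to $\overline{H}$ and $K$, we have $\id{P}{2}{}{}{}\,(\overline{H},K) \stackrel{1}{=} \id{S}{2}{}{}{}\,(\overline{H},K)$. \sref{Corollary}{cor:char_of_tensor_product_of_HS_as_Hilbert_Schmidt} then gives the canonical unitary $U_\otimes : H\otimes K \to \id{S}{2}{}{}{}\,(\overline{H},K)$. For $\id{N}{2}{}{}{}$ we use \sref{Theorem}{thm:N2_vs_P2_min}, which states $\id{N}{2}{}{}{} \stackrel{1}{=} \id{P}{2}{\tup{min}}{}{} \stackrel{1}{=} \overline{\id{F}{}{}{}{}}\circ\id{P}{2}{}{}{}$, so $\id{N}{2}{}{}{}$ embeds isometrically into $\id{P}{2}{}{}{}$. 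It then suffices to show that every $T\in\id{P}{2}{}{}{}\,(\overline{H},K)$ is $2$-nuclear.

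Write $T\in\id{S}{2}{}{}{}\,(\overline{H},K)$ in the orthonormal expansion of \sref{Theorem}{thm:Hilbert_Schmidt_and_tensor_product}-(vii). Its finite partial sums converge to $T$ in $\sigma_2=\idn{P}{2}{}{}{}$, so $T$ lies in the approximative kernel $\id{P}{2}{\bf{(a)}}{}{}(\overline{H},K)$. By \eqref{eq:N2_is_P2_min} this kernel equals $\id{N}{2}{}{}{}(\overline{H},K)$ isometrically. (Alternatively, one can cite \cite[Theorem 5.30]{DJT1995} directly.) This gives
\[
\id{P}{2}{}{}{}\,(\overline{H},K)\stackrel{1}{=}\id{N}{2}{}{}{}\,(\overline{H},K)\cong H\otimes K .
\]

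For the tensor norm statement, recall from \cite{DF1993} that $g_2$ is the finitely generated tensor norm associated with the maximal ideal $\id{P}{2}{}{}{}$ (equivalently with $\id{N}{2}{}{}{}$ as its minimal kernel). Via the linear Fr\'{e}chet--Riesz identification $\overline{H}^{\,\prime}\cong H$ (\sref{Theorem}{thm:Riesz_linear_version}), the embedding $H\otimes_{g_2}K\hookrightarrow \id{N}{2}{}{}{}\,(\overline{H},K)$ is isometric on finite tensors, because the minimal kernel is represented by the completed $g_2$-tensor product. Total accessibility of $\id{P}{2}{}{}{}$, together with the density of $\id{F}{}{}{}{}$ from \sref{Theorem}{thm:Hilbert_Schmidt_and_tensor_product}-(vi), extends this to the completion, giving $H\widetilde{\otimes}_{g_2}K\cong H\otimes K$. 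The symmetries $g_2=g_2^t=g_2^\adj=g_2^\prime$ are standard consequences of the self-adjointness of $\id{P}{2}{}{}{}$ and are quoted from \cite{DF1993}.

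For \eqref{eq:HS_P2_N2_between_Hilbert_Spaces_II}, combine the first part with \sref{Theorem}{thm:structure_of_Hilb_space_tp} and \sref{Theorem}{thm:associativity_law_of_HS_tensorprod}:
\[
\id{P}{2}{}{}{}\,(H_1\otimes H_2,K_1\otimes K_2)\cong \overline{H_1\otimes H_2}\otimes(K_1\otimes K_2)\cong \overline{H_1}\otimes\overline{H_2}\otimes K_1\otimes K_2 .
\]
Regrouping by associativity and commutativity yields
\[
(\overline{H_1}\otimes \overline{K_1}^{\,\prime\!\!}) \otimes (\overline{H_2}\otimes K_2),
\]
and then $\id{P}{2}{}{}{}(H_1,\overline{K_1})\otimes\id{P}{2}{}{}{}(H_2,K_2)$, using $K_1\cong\overline{\overline{K_1}}$. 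The second regrouping, pairing $H_1$ with $K_2$ and $H_2$ with $\overline{K_1}$, is analogous.

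\textbf{Main obstacle.} The delicate step is bookkeeping the conjugations: the identification $\overline{H\otimes K}\cong\overline{H}\otimes\overline{K}$ must be made \emph{linearly} and canonically, which the proof of \sref{Theorem}{thm:structure_of_Hilb_space_tp}-(iii) warns is not automatic. I would first check the claimed placement of $\overline{K}_1$ on elementary tensors $x_1\otimes x_2\,\underline{\otimes}\,(y_1\otimes y_2)$, and then extend by density and isometry.
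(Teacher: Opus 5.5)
Your proof of \eqref{eq:HS_P2_N2_between_Hilbert_Spaces} is fine and is essentially the paper's. The paper combines Corollary~\ref{cor:char_of_tensor_product_of_HS_as_Hilbert_Schmidt} with \cite[Theorem 5.30]{DJT1995}; your route through Theorem~\ref{thm:N2_vs_P2_min} and the $\sigma_2$-density of finite-rank operators gives the same result.

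The gap is in \eqref{eq:HS_P2_N2_between_Hilbert_Spaces_II}, precisely at the step you flag. Associativity and the flip only permute the four factors of $\overline{H_1}\otimes\overline{H_2}\otimes K_1\otimes K_2$. Canonically they therefore give $(\overline{H_1}\otimes K_1)\otimes(\overline{H_2}\otimes K_2)\cong\mathfrak{P}_2(H_1,K_1)\otimes\mathfrak{P}_2(H_2,K_2)$ and $(\overline{H_1}\otimes K_2)\otimes(\overline{H_2}\otimes K_1)\cong\mathfrak{P}_2(H_1,K_2)\otimes\mathfrak{P}_2(H_2,K_1)$. On elementary tensors this is $A\otimes B\mapsto$ the operator $A\otimes B$, in the second case followed by the flip $K_2\otimes K_1\to K_1\otimes K_2$.

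The claimed right-hand side is $\mathfrak{P}_2(H_1,\overline{K}_1)\otimes\mathfrak{P}_2(H_2,K_2)\cong\overline{H_1}\otimes\overline{K_1}\otimes\overline{H_2}\otimes K_2$. It carries $\overline{K_1}$ where the left-hand side carries $K_1$. The identity $K_1=\overline{\overline{K_1}}$ cannot create that bar; it only returns $\overline{H_1}\otimes\overline{\overline{K_1}}\cong\mathfrak{S}_2(H_1,K_1)$.

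What is missing is a linear unitary $K_1\to\overline{K_1}$. One exists, e.g. $C^{\mathscr{B}}_{K_1}$, but it depends on $\mathscr{B}$, and no unitarily natural one exists. Indeed, if a linear $\eta:H\to\overline{H}$ commuted with all unitaries of $H$, the antilinear map $J_{\overline{H}}\eta$ would commute with $e^{it}Id_H$, which forces $\eta=0$. So the barred identities hold isometrically but not canonically, and your chain does not prove them as stated. The canonical statement is the unbarred one.

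Appealing to Theorem~\ref{thm:structure_of_Hilb_space_tp}-(ii), $H\otimes K\cong\overline{H}\otimes\overline{K}$, does not help either. It is the only tool in the paper's one-line proof that could move a bar, but its proof passes through $R\mapsto R^\adj$, which is conjugate-linear.

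A smaller issue: self-adjointness of $\mathfrak{P}_2$ gives only $g_2=g_2^\adj$, i.e. $g_2'=g_2^t$. The remaining equality $g_2=g_2^t$, and hence $g_2=g_2'$, fails for tensor norms on Banach spaces. It would force $\mathfrak{P}_2^{\textup{d}}=\mathfrak{P}_2$, yet the inclusion $\ell_1\hookrightarrow\ell_2$ is $2$-summing (Grothendieck) while its dual $\ell_2\hookrightarrow\ell_\infty$ is not, since it is not completely continuous. So this cannot be quoted from \cite{DF1993}.

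What is true, and all that $H\otimes K\cong H\widetilde{\otimes}_{g_2}K$ needs, is that on Hilbert spaces all four norms reduce to $\sigma_2$. The cleanest check is direct. Take a finite tensor $z=\sum_i x_i\otimes y_i$ with associated $L_z\in\mathfrak{F}(\overline{H},K)$. Factoring $L_z$ through $\ell_2^n$ gives $\sigma_2(L_z)\le(\sum_i\Vert x_i\Vert^2)^{1/2}\sup_{b\in B_{K^\prime}}(\sum_i\vert\langle y_i,b\rangle\vert^2)^{1/2}$, with equality for a Schmidt representation. Hence $g_2(z;H,K)=\sigma_2(L_z)$ without any appeal to accessibility.
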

\begin{proof}
It's a straightforward exercise to verify that 
\eqref{eq:HS_P2_N2_between_Hilbert_Spaces_II} instantly follows from 
\sref{Theorem}{thm:structure_of_Hilb_space_tp} and 
\sref{Theorem}{thm:associativity_law_of_HS_tensorprod}. We leave the details 
to the reader.
\end{proof}

At this juncture, we would like to shed light on the role of 
Hilbert-Schmidt operators in quantum mechanics. It is well-known that 
composite systems are mathematically described as tensor products of Hilbert 
spaces over $\C$. Hence, they can be equivalently represented as linear operators 
via \eqref{eq:HS_P2_N2_between_Hilbert_Spaces}. By implementing the first 
isometric equality of \eqref{eq:HS_P2_N2_between_Hilbert_Spaces} canonical 
relations between separable and entangled composite quantum states and maps, 
which are known for quantum systems in finite-dimensional Hilbert spaces, can be 
generalised canonically to the more realistic situation of infinite-dimensional 
(and not necessarily separable) Hilbert spaces. Moreover, the use of the first isometric 
equality of \eqref{eq:HS_P2_N2_between_Hilbert_Spaces} also allows a proof of an 
infinite-dimensional version of Choi's theorem (cf. \cite{GKM2007}).

Even algebraic quantum field theory (AQFT), originally coined by Araki, Haag
and Kastler exhibits a highly impressive, surprising and beautiful emergence of 
Banach operator ideals in the sense of Pietsch!

Listing details about AQFT (which is a large research field on its own) would 
be - far - beyond the scope of our contribution. Astute readers will find a 
brief pedagogical outline of the basic formalism of algebraic quantum field 
theory (including the crucial ``split property'') in \cite{FR2020}, 
\cite[Chapter 11.1]{H2003}, in the excellent survey articles 
\cite{BF2023, HM2006, V2025}, and in the second edition of Haag's seminal, 
comprehensive book \cite{H1996}. These sources should be accompanied by the 
standard reference \cite{BR1987} on operator algebras applied to 
mathematical physics. A concise step-by-step description of the fundamental 
Araki-Haag-Kastler axioms, included related figures and notations can be found 
in \cite{L2022}.  

Very roughly formulated, AQFT is a theory in which the mathematical machinery, 
based on the theory of ``rings of operators'', which J. von Neumann established 
in 1935-1940 (partially in collaboration with F. J. Murray), made it possible 
to express in a mathematically rigorous manner the physical intuition about 
field theory formulated by A. Einstein. Firstly, according to Einstein's general 
relativity theory, every observation depends on its ``location'' in 
Minkowski spacetime. 
It is therefore necessary to define observables ``locally'' in relation to 
bounded regions in Minkowski spacetime only. Secondly, according to Einstein's 
principle of causality, the observables associated to causally disjoint regions 
should be independent. The split property could be viewed as a 
sharpening of Einstein's locality principle, describing the possibility for 
experimenters in spacelike separated regions 
in Minkowski spacetime $\M : = (\R^4, g)$ (or more generally in smooth, globally 
hyperbolic spacetime manifolds) to perform \textit{both}, 
preparations and measurements independently. 

At this point, let us recall (e.g., from \cite[Chapter 11.4]{H2003}) that a pair 
$(\mathcal{M}, \mathcal{N})$ of two commuting von Neumann algebras $\mathcal{M}$ 
and $\mathcal{N}$, both acting on some common Hilbert space $H$, is 
\textit{split} if there exists a type I (von Neumann) factor $\mathcal{F}$, such 
that
\begin{align}\label{eq:split_property_of_vNAs}
\mathcal{M} \subseteq \mathcal{F} \subseteq \mathcal{N}^{\,\#}\,.
\end{align}
A very readable introduction to the type classification 
of von Neumann algebras can be found in \cite{HM2006}. At this point we only 
recall that type I factors were completely classified by F. J. Murray and J. 
von Neumann: up to $\adj$-isomorphism, every factor of type I coincides with the 
von Neumann algebra $\id{L}{}{}{}{}(H_0)$, for some complex Hilbert space $H_0$, 
which could be finite-dimensional (cf., e.g., \cite[Kapitel IX.5]{W2011}). 
Observe that the double commutant theorem (applied to the von Neumann algebra 
$\mathcal{F}$) implies that \eqref{eq:split_property_of_vNAs} is equivalent to 
\[
\mathcal{N} \subseteq 
\mathcal{F}^{\,\#}\cong \C\,Id_{H_0} \subseteq 
\mathcal{M}^{\,\#}
\]
(since $\mathcal{N} \subseteq \mathcal{N}^{\,\#\#} \subseteq \mathcal{F}^{\,\#}$ 
and $\mathcal{M} \subseteq \mathcal{M}^{\,\#\#}$).

Unfortunately, since the term ``split'' is not defined univocally 
(cf., e.g., \cite{BF2023, F2016, FR2020}), one has to be particularly careful 
to avoid confusion, when topics in relation to the split property are discussed. 
A brief - yet valuable - summary of the related von Neumann algebra terminology 
is listed in \cite{HR2023} (without proofs, though). Regarding introductory 
details, we recommend to study \cite[Chapter 4]{M1990}.

To the best of our knowledge, H. J. Borchers and R. Schumann were the first 
authors who implemented a famous result of Grothendieck into the framework of 
AQFT in their paper \cite{BS1991} (namely 
\sref{Theorem}{thm:a_corollary_of_Grothendieck}, but without inclusion of the 
conjugate Hilbert space). In the case of charged superselection sectors, where 
no vacuum vector is available, Grothendieck's result enabled them to link the 
split property of pairs of von Neumann algebras, satisfying the 
Araki-Haag-Kastler-axioms 
(explicitly described in \cite[Section 3]{BS1991}), with a suitable family of 
\textit{nuclear operators} $\Theta_{\psi_0, \beta} \in \id{N}{}{}{}{}
(\mathcal{A}(O), H)$ (mapping such a von Neumann algebra ``of 
Araki-Haag-Kastler-type'' $\mathcal{A}(O)$, acting on a Hilbert space $H$, into 
$H$). The physical context, underlying the extension of the Araki-Haag-Kastler-
axioms by that ``nuclearity condition'' is discussed in \cite[Section 1]{BS1991} 
(and the related references therein).

More precisely, Borchers and Schumann assume that for any given open, 
bounded double cone $O \in \M$ (in Minkowski spacetime), there is a cyclic and 
separating vector $\psi_0 \in H$ with bounded energy support, where $H$ is the 
underlying Hilbert space of the von Neumann algebra $\mathcal{A}(O)$ and $\H$ is 
the related self-adjoint Hamiltonian, such that the following conditions are 
satisfied: there are constants $\beta_0 > 0, c > 0$ and $n \in \N$ (which only 
depend on the choice of $O \in \M$), such that for all $\beta \in (0, \beta_0]$ 
\begin{align}\label{eq:nuclearity_condition_I}
\Theta_{\psi_0, \beta} \in \id{N}{}{}{}{}(\mathcal{A}(O), H)
\end{align}
and 
\begin{align}\label{eq:nuclearity_condition_II}
\idn{N}{}{}{}{}(\Theta_{\psi_0, \beta}) \leq \exp(c\,\beta^{-n}),
\end{align}
where
\[
\mathcal{A}(O) \ni A \mapsto \Theta_{\psi_0, \beta}(A) : = 
\exp(-\beta\,\H) A\exp(\beta\,\H)\psi_0\,.
\]
Regarding statement (ii) in the following seminal, purely mathematical result 
(without any relation to applications in physics), a direct comparison 
indicates that Schumann's definition of ``strong statistical independence of a 
pair of mutually commuting von Neumann algebras'' in \cite{S1996} in fact 
coincides with the definition of ``$W^\adj$-independence in the spatial product 
sense of a pair of mutually commuting von Neumann algebras'', introduced by 
M. Redei and J. S. Summers in \cite{RS2010} (cf. also \cite{BW1986, DL1983, F2016, 
HM2006, S1994, S1996}). 
\begin{theorem}\label{thm:split_char}
Let $\mathcal{M}$ and $\mathcal{N}$ be commuting von Neumann algebras, acting 
on a complex Hilbert space $H$. Then the following statements are equivalent:  
\begin{enumerate}
\item The pair $(\mathcal{M}, \mathcal{N})$ is split.
\item The von Neumann algebras $\mathcal{M} \vee \mathcal{N}$ $($i.e., the von 
Neumann algebra generated by sums and products of elements in $\mathcal{M}$ and 
$\mathcal{N}$$)$ and $\mathcal{M}\,\overline{\otimes}\,\mathcal{N}$ are 
spatial-unitarily equivalent; i.e., there is a unitary operator 
$U : H \stackrel{\cong}{\longrightarrow} H \otimes H$, such that
\[
U(AB)U^{\adj} = A \otimes B\,\text{ for all } (A, B) \in \mathcal{M} \times 
\mathcal{N}
\]
and
\[
U(\mathcal{M} \vee \mathcal{N})U^{\adj} = \mathcal{M}\,\overline{\otimes}\,
\mathcal{N}\,. 
\]
\end{enumerate}
\end{theorem}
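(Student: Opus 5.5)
The plan is to prove the two implications separately. The direction (ii) $\Rightarrow$ (i) is a direct construction; the direction (i) $\Rightarrow$ (ii) rests on the structure theory of type I factors together with the tensor product calculus of Section~\ref{sec:HS_ops_and_tp_of_Hilbert_spaces}.

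\textbf{(ii) $\Rightarrow$ (i).} Let $U$ be as in (ii). Setting $B = Id_H$ gives $UAU^\adj = A \otimes Id_H$, and setting $A = Id_H$ gives $UBU^\adj = Id_H \otimes B$. I would then put
\[
\mathcal{F} : = U^\adj\big(\id{L}{}{}{}{}(H) \otimes Id_H\big)U .
\]
This is spatially isomorphic to $\id{L}{}{}{}{}(H)$, hence a type I factor, and it contains $\mathcal{M}$. Its commutant is $\mathcal{F}^{\,\#} = U^\adj(Id_H \otimes \id{L}{}{}{}{}(H))U$, which follows from $(\id{L}{}{}{}{}(H)\,\overline{\otimes}\,\C)^{\#} = \C\,\overline{\otimes}\,\id{L}{}{}{}{}(H)$. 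Since $\mathcal{F}^{\,\#}$ contains $\mathcal{N}$, we get $\mathcal{F} = \mathcal{F}^{\,\#\#} \subseteq \mathcal{N}^{\,\#}$, which is exactly \eqref{eq:split_property_of_vNAs}.

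\textbf{(i) $\Rightarrow$ (ii).} Let $\mathcal{M} \subseteq \mathcal{F} \subseteq \mathcal{N}^{\,\#}$ with $\mathcal{F}$ a type I factor. By the Murray--von Neumann classification, there is a unitary $V : H \to H_1 \otimes H_2$ with $V\mathcal{F}V^\adj = \id{L}{}{}{}{}(H_1) \otimes Id_{H_2}$ and $V\mathcal{F}^{\,\#}V^\adj = Id_{H_1} \otimes \id{L}{}{}{}{}(H_2)$. Since $\mathcal{M} \subseteq \mathcal{F}$ and $\mathcal{N} \subseteq \mathcal{F}^{\,\#}$, this yields von Neumann algebras $\mathcal{M}_1 \subseteq \id{L}{}{}{}{}(H_1)$ and $\mathcal{N}_2 \subseteq \id{L}{}{}{}{}(H_2)$ with
\[
V A V^\adj = A_1 \otimes Id_{H_2}, \qquad V B V^\adj = Id_{H_1} \otimes B_2 .
\]
Hence $V(AB)V^\adj = A_1 \otimes B_2$, and taking WOT-closures gives $V(\mathcal{M}\vee\mathcal{N})V^\adj = \mathcal{M}_1\,\overline{\otimes}\,\mathcal{N}_2$. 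It remains to compare this with $\mathcal{M}\,\overline{\otimes}\,\mathcal{N}$ acting on $H \otimes H \cong (H_1\otimes H_2)\otimes(H_1\otimes H_2)$. By Theorem~\ref{thm:associativity_law_of_HS_tensorprod} and Theorem~\ref{thm:structure_of_Hilb_space_tp}, I would regroup this space canonically as $H_1 \otimes (H_2 \otimes H_1) \otimes H_2$, so that $A\otimes B$ becomes $A_1 \otimes Id_{H_2\otimes H_1} \otimes B_2$. The required $U$ would then be $V$ followed by a unitary $H_1\otimes H_2 \to H_1\otimes(H_2\otimes H_1)\otimes H_2$ intertwining $A_1\otimes B_2$ with $A_1\otimes Id\otimes B_2$.

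\textbf{Main obstacle.} The hardest step is this final multiplicity matching. The amplification $A_1 \otimes B_2 \mapsto A_1 \otimes Id_{K} \otimes B_2$, with $K = H_2\otimes H_1$, is spatially implementable only if the multiplicities of $\mathcal{M}_1\,\overline{\otimes}\,\mathcal{N}_2$ and of its amplification agree, for instance when both $\mathcal{M}_1$ and $\mathcal{N}_2$ have properly infinite commutants of sufficiently large dimension. My first attempt would be to absorb the factor $K$ using the freedom in choosing $H_1, H_2$, replacing $\mathcal{F}$ by a suitably amplified intermediate type I factor. Failing that, I would invoke the standing AQFT hypotheses (a cyclic and separating vector as in the Borchers--Schumann setting). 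These make both commutants properly infinite, so that two normal isomorphisms with the same kernel are spatial (cf.~\cite[Chapter~4]{M1990}, \cite{S1996}, \cite{RS2010}). I expect the statement as written to require such a standard hypothesis, and I would make it explicit at exactly this point.
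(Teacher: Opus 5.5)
The paper does not prove this theorem; it is quoted from the literature, so your proposal has to stand on its own. Your direction (ii)$\Rightarrow$(i) is correct and complete.

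The gap is the final step of (i)$\Rightarrow$(ii), and it cannot be closed, because that implication is false as stated. Take $\dim H\geq 2$, $\mathcal{M}:=\mathfrak{L}(H)$ and $\mathcal{N}:=\C\,Id_H$. These commute and are split via $\mathcal{F}:=\mathfrak{L}(H)$. But a unitary as in (ii) would give $U\mathfrak{L}(H)U^{\adj}=\mathfrak{L}(H)\otimes Id_H$, and taking commutants yields $\C\,Id_{H\otimes H}=Id_H\otimes\mathfrak{L}(H)$, which is absurd.

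More generally, (ii) forces $U\mathcal{M}^{\,\#}U^{\adj}=\mathcal{M}^{\,\#}\,\overline{\otimes}\,\mathfrak{L}(H)$ and $U\mathcal{N}^{\,\#}U^{\adj}=\mathfrak{L}(H)\,\overline{\otimes}\,\mathcal{N}^{\,\#}$. These are constraints on $\mathcal{M}$, $\mathcal{N}$ and the fixed space $H$ alone, so your first remedy (re-choosing or amplifying the intermediate type I factor) cannot work.

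What your reduction does prove unconditionally is the non-spatial form of (ii). Conjugation by $V$, followed by the amplification $A_1\otimes B_2\mapsto A_1\otimes Id_{H_2\otimes H_1}\otimes B_2$ and your regrouping unitary, is a normal $*$-isomorphism $\Phi$ of $\mathcal{M}\vee\mathcal{N}$ onto $\mathcal{M}\,\overline{\otimes}\,\mathcal{N}$ with $\Phi(AB)=A\otimes B$.

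Spatiality therefore needs an additional hypothesis. It has to be added to the statement, not used as a step in the proof. The mechanism you name is also not the right one: a cyclic and separating vector does not make commutants properly infinite. For example, $\M_n(\C)\otimes Id_{\C^n}$ on $\C^n\otimes\C^n$ has such a vector, yet its commutant is finite. The correct tool is uniqueness of the standard form:
\begin{enumerate}
\item Assume $\mathcal{M}$, $\mathcal{N}$ and $\mathcal{M}\vee\mathcal{N}$ each admit a cyclic and separating vector in $H$.
\item If $\xi,\eta$ serve for $\mathcal{M},\mathcal{N}$, then $\xi\otimes\eta$ is cyclic for $\mathcal{M}\,\overline{\otimes}\,\mathcal{N}$. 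It is also cyclic for $\mathcal{M}^{\,\#}\,\overline{\otimes}\,\mathcal{N}^{\,\#}\subseteq(\mathcal{M}\,\overline{\otimes}\,\mathcal{N})^{\#}$, hence separating for $\mathcal{M}\,\overline{\otimes}\,\mathcal{N}$.
\item A $*$-isomorphism between von Neumann algebras that both admit cyclic and separating vectors is unitarily implemented.
\item Hence $\Phi$ is implemented by a unitary $U:H\to H\otimes H$, which gives both identities in (ii).
\end{enumerate}
This is essentially the standard split setting of Doplicher and Longo. The counterexample above violates this hypothesis, since $\mathfrak{L}(H)$ has no separating vector when $\dim H\geq 2$.
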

Consequently, within the axiomatic framework of AQFT in the sense of 
Araki, Haag and Kastler, \cite[Theorem 4.1]{BS1991}, together with 
\sref{Theorem}{thm:split_char} and \cite[Section 7]{FR2020}, immediately leads 
to  
\begin{theorem}
\label{thm:nuclearity_implies_split_property}
Let $O_1 \in \M$ and $O_2 \in \M$ be open, bounded double cones in Minkowski 
spacetime, such that their closures $\overline{O_1}$ and $\overline{O_2}$ are 
compact. Assume that $O_1$ and $O_2$ are spacelike separated. If the nuclearity 
conditions \eqref{eq:nuclearity_condition_I} and 
\eqref{eq:nuclearity_condition_II} are satisfied for $\mathcal{A}(O_2)$, then 
the pair $(\mathcal{A}(O_1), \mathcal{A}(O_2))$ is split.
\end{theorem}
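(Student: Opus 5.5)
The plan is to obtain \sref{Theorem}{thm:nuclearity_implies_split_property} by chaining the Borchers--Schumann result \cite[Theorem 4.1]{BS1991} with \sref{Theorem}{thm:split_char}, following \cite[Section 7]{FR2020} for the geometry.

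\textbf{Step 1 (geometry).} Since $\overline{O_1}$ and $\overline{O_2}$ are compact and spacelike separated, their causal distance is strictly positive. Hence I would choose a slightly enlarged open, bounded double cone $\widehat{O}_2 \supseteq \overline{O_2}$ that is still spacelike to $O_1$. More precisely, $O_1$ is then contained in the causal complement $\widehat{O}_2^{\,\prime}$, with a positive ``safety margin'' around $O_2$.

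\textbf{Step 2 (commutation and the commutant).} Einstein causality (locality) gives
\[
\mathcal{A}(O_1) \subseteq \mathcal{A}(\widehat{O}_2^{\,\prime}) \subseteq \mathcal{A}(O_2)^{\,\#}.
\]
So $(\mathcal{A}(O_1), \mathcal{A}(O_2))$ is a pair of commuting von Neumann algebras on $H$, and \sref{Theorem}{thm:split_char} is applicable. By isotony and locality, the split property of this pair reduces to finding a type I factor $\mathcal{F}$ with $\mathcal{A}(O_2) \subseteq \mathcal{F} \subseteq \mathcal{A}(O_1)^{\,\#}$. Equivalently, it suffices to prove that $(\mathcal{A}(O_2), \mathcal{A}(\widehat{O}_2)^{\,\#})$ is split. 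This uses the symmetry of \eqref{eq:split_property_of_vNAs} under commutants, observed after that definition.

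\textbf{Step 3 (the analytic core).} This is where the nuclearity hypotheses \eqref{eq:nuclearity_condition_I} and \eqref{eq:nuclearity_condition_II} for $\mathcal{A}(O_2)$ enter, via \cite[Theorem 4.1]{BS1991}. The operators $\Theta_{\psi_0,\beta}$ are nuclear, with nuclear norms of order at most $\exp(c\beta^{-n})$. Combined with the bounded energy support of $\psi_0$ and a Reeh--Schlieder/analyticity argument in the margin $\widehat{O}_2 \setminus \overline{O_2}$, this should yield the following: the bilinear form $AB' \mapsto \langle AB'\psi_0,\psi_0\rangle$ on $\mathcal{A}(O_2) \times \mathcal{A}(\widehat{O}_2)^{\,\#}$ is dominated by a product-type normal state, i.e.\ it is $\pi$-bounded in the sense of the projective tensor norm. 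By \sref{Theorem}{thm:a_corollary_of_Grothendieck}, nuclear maps into $H$ correspond exactly to $\pi$-tensor elements, so the bound transfers.

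This yields a normal product state on $\mathcal{A}(O_2) \vee \mathcal{A}(\widehat{O}_2)^{\,\#}$, giving an injective extension of $AB' \mapsto A\otimes B'$ and hence condition (ii) of \sref{Theorem}{thm:split_char}. The equivalence (ii)$\Rightarrow$(i) then produces the type I factor.

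I expect Step 3 to be the main obstacle. The delicate point is that the nuclearity is assumed only for the fixed region $O_2$ with a fixed vector $\psi_0$, whereas split requires control relative to a strictly larger region. The exponential growth bound in $\beta$ is precisely what is needed to make the Laplace/Fourier argument of Borchers--Schumann converge. I would cite that theorem rather than reproduce it, and verify only that its hypotheses (cyclic and separating $\psi_0$ with bounded energy support, double cone geometry) match ours. Finally, since $\mathcal{F}$ lies between $\mathcal{A}(O_2)$ and $\mathcal{A}(O_1)^{\,\#}$, taking commutants gives the split of $(\mathcal{A}(O_1),\mathcal{A}(O_2))$ as stated.
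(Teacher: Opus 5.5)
Your proposal follows the paper's route: the paper's entire justification is the chain \cite[Theorem 4.1]{BS1991} $+$ \sref{Theorem}{thm:split_char} $+$ the geometry of \cite[Section 7]{FR2020}, and your Steps 1--2 (enlarging $O_2$ to $\widehat{O}_2$ and reducing to the split of $(\mathcal{A}(O_2),\mathcal{A}(\widehat{O}_2)^{\,\#})$) make that reduction explicit, while Step 3 is, as in the paper, a citation of Borchers--Schumann (the heuristic you attach there is dispensable, and \sref{Theorem}{thm:a_corollary_of_Grothendieck} could not support it anyway, since it only treats operators between Hilbert spaces, whereas $\Theta_{\psi_0,\beta}$ is defined on $\mathcal{A}(O)$). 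The one thing to make explicit is the hypothesis behind Step 1: a positive margin requires $\overline{O_1}$ and $\overline{O_2}$ to be spacelike separated, which does not follow from $O_1$, $O_2$ being spacelike separated (the double cones $\{\vert t\vert+\Vert\mathbf{x}\Vert<1\}$ and $\{\vert t\vert+\Vert\mathbf{x}-2\mathbf{e}_1\Vert<1\}$ are spacelike separated but their closures meet at $(0,\mathbf{e}_1)$), so you should state that you read the theorem with strict spacelike separation --- the only reading under which it can hold, since for tangent double cones the split property fails in standard models (Summers--Werner).
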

A natural question instantly arises: is it possible to reverse the 
latter statement? In other words, if we assume the split property, could then 
a ``nuclearity condition'' be derived? To tackle this problem, also Schumann 
made use of the modular structure, arising from Tomita-Takesaki modular theory 
(as was previously accomplished by Buchholz, D'Antoni and Longo - \cite{BDL1990, 
BF2023}). A quick introduction to Tomita-Takesaki modular theory can be found in 
\cite{HM2006, S2005, V2025} (without proofs, though). A comprehensive review, built on 
the polar decomposition of closed, densely defined, unbounded operators, 
including complete proofs, is contained in \cite[Chapter 6.1 and Chapter 9.2]
{KR1986}. Particularly, \cite[Theorem 7.20]{W1980} and 
\cite[Proposition 5.3.18]{P1989} establish an extension of the polar 
decomposition of bounded operators to the unbounded operator case in a concise 
and elegant manner. A strongly simplified approach to the Tomita-Takesaki theory 
is revealed in \cite[Chapter 18.3]{P2018} (built on the approach of 
\cite{RvD1977}), including the proof of Theorem 8.3.14 therein. 
\begin{theorem}
\label{thm:Tomita_Takesaki_1967} 
Let $v_0$ be a cyclic and separating vector for a von Neumann algebra 
$\mathcal{M} \subseteq \id{L}{}{}{}{}(H)$ on a complex Hilbert space $H$. Then 
there are a positive, self-adjoint $($in general unbounded$)$, densely defined 
and invertible linear operator $\Delta : D(\Delta) (\subseteq H) 
\stackrel{\approx}{\longrightarrow} R(\Delta) = D(\Delta)$, $($the so called 
modular operator associated with the pair $(\mathcal{M}, v_0)$$)$, and a 
semilinear isometric isomorphism $J: H \stackrel{\cong}{\longrightarrow} H$ with 
$J^2 = Id_H$ $($the so called modular conjugation$)$, such that
\[
J\,\mathcal{M}\,J = \mathcal{M^{\,\#}}\,\text{ and }\,
\Delta^{it}\,\mathcal{M}\,\Delta^{-it} = \mathcal{M}
\]
for every $t \in \R$ $($where $\Delta^{\pm it} \equiv e^{\pm it\log(\Delta)}$ - 
Borel-function calculus$)$. Moreover, $\Delta^{-1/2} = J\Delta^{1/2}J$, 
$Jv_0 = \Delta v_0 = v_0$ and $\mathcal{M}v_0 \subseteq 
D(\Delta^{1/2})$, with
\[
J \Delta^{1/2}A v_0 = A^\ast v_0\,\text{ for all }\,A \in \mathcal{M}\,.
\]
In particular, $\mathcal{M^\#}v_0 \subseteq D(\Delta^{-1/2})$, with
\[
J \Delta^{-1/2}B v_0 = B^\ast v_0\,\text{ for all }\,B \in \mathcal{M^\#}\,.
\]
\end{theorem}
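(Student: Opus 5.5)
The plan follows the standard Tomita--Takesaki route, with the technical heart in Tomita's fundamental theorem.

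\textbf{Step 1: the Tomita operator.} Define $S_0$ on the dense subspace $\mathcal{M}v_0$ by $S_0(Av_0) := A^\ast v_0$. It is well defined because $v_0$ is separating: $Av_0 = 0$ forces $A = 0$. Its domain is dense because $v_0$ is cyclic. Similarly define $F_0(Bv_0) := B^\ast v_0$ on $\mathcal{M}^{\#}v_0$, which is dense because a vector separating for $\mathcal{M}$ is cyclic for $\mathcal{M}^{\#}$. For $A \in \mathcal{M}$ and $B \in \mathcal{M}^{\#}$ one checks $\langle S_0Av_0, Bv_0\rangle = \langle F_0Bv_0, Av_0\rangle$. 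Hence the semilinear operators $S_0$ and $F_0$ are mutually adjoint up to conjugation, and both are closable. Let $S := \overline{S_0}$. Then $S$ is a closed, densely defined, semilinear involution on its range, with $S = S^{-1}$ and $S^\ast = \overline{F_0}$.

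\textbf{Step 2: polar decomposition.} Apply the polar decomposition of closed, densely defined unbounded operators (cf.\ \cite[Theorem 7.20]{W1980}, \cite[Proposition 5.3.18]{P1989}) to $S$, adapted to the semilinear case by composing with the antiunitary $J_H$ or a conjugation as in Proposition \ref{prop:construction_of_a_semilin_conjugation}. This gives $S = J\Delta^{1/2}$ with $\Delta := S^\ast S$ positive, self-adjoint and injective, and $J$ a semilinear partial isometry. Since $S$ is injective with dense range, $J$ is an antiunitary.

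The identity $S = S^{-1}$ gives $J\Delta^{1/2} = \Delta^{-1/2}J^{-1}$. Uniqueness of the polar decomposition then yields $J^2 = Id_H$ and $J\Delta^{1/2}J = \Delta^{-1/2}$. From $S v_0 = v_0$ we get $Jv_0 = v_0$ and $\Delta v_0 = v_0$. The relations $J\Delta^{1/2}Av_0 = A^\ast v_0$ and $\mathcal{M}v_0 \subseteq D(\Delta^{1/2})$ are immediate from the construction. The companion statement for $\mathcal{M}^{\#}$ follows by the same argument applied to $S^\ast = \overline{F_0} = \Delta^{1/2}J = J\Delta^{-1/2}$.

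\textbf{Step 3: Tomita's theorem (main obstacle).} It remains to prove $J\mathcal{M}J = \mathcal{M}^{\#}$ and $\Delta^{it}\mathcal{M}\Delta^{-it} = \mathcal{M}$. I would follow the simplified Rieffel--van Daele approach (\cite{RvD1977}, \cite[Chapter 18.3]{P2018}).

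First, reduce to real subspaces. Consider the closed real subspaces $K := \overline{\{Av_0 : A = A^\ast \in \mathcal{M}\}}$ and $K' := \overline{\{Bv_0 : B = B^\ast \in \mathcal{M}^{\#}\}}$. Show that $K \cap iK = \{0\}$, that $K + iK$ is dense, and that $K'$ is the symplectic complement of $K$ with respect to $\Im\langle\cdot,\cdot\rangle_H$.

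Second, obtain the modular objects from projections. Using the projections $P$ onto $K$ and $Q$ onto $iK$, construct $\Delta$ and $J$ by functional calculus from $P + Q$ and $P - Q$. Then show that $\Delta^{it}$ and $J$ leave $K$ invariant and map $K$ onto $K'$, respectively.

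Third, lift back to the algebras. This step needs a resolvent or bounded interpolation argument: for $B \in \mathcal{M}^{\#}$ and $\lambda > 0$, represent $(\Delta + \lambda)^{-1}$ applied to suitable vectors as $A'v_0$ with $A' \in \mathcal{M}^{\#}$. Then use the Fourier transform $\int \Delta^{it}(\cdots)\,\frac{\lambda^{it}}{\cosh \pi t}\,dt$ to deduce $\Delta^{it}\mathcal{M}\Delta^{-it} \subseteq \mathcal{M}$ and $J\mathcal{M}J \subseteq \mathcal{M}^{\#}$. This lifting is where I expect the real difficulty, since it requires careful control of unbounded domains and analytic continuation in the strip $0 \le \Im z \le 1/2$.

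Finally, equality in both inclusions follows by symmetry. Interchanging the roles of $\mathcal{M}$ and $\mathcal{M}^{\#}$ replaces $\Delta$ by $\Delta^{-1}$ and keeps $J$; this gives the reverse inclusions, and the bicommutant theorem $\mathcal{M}^{\#\#} = \mathcal{M}$ closes the argument.
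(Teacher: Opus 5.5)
Your route is the one the paper points to: polar decomposition of $S=\overline{S_0}$ for the elementary relations, and Rieffel--van Daele for the commutation theorem. The paper itself gives no argument for this theorem, only references (\cite{KR1986}, Weidmann/Pedersen for unbounded polar decomposition, and \cite{RvD1977} via \cite{P2018}). Measured against that, your sketch has the following concrete problems.

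\textbf{Steps 1 and 2.} The identities $Jv_0=v_0$ and $\Delta v_0=v_0$ do not follow from $Sv_0=v_0$, which only gives $J\Delta^{1/2}v_0=v_0$. Every $Av_0$ with $A=A^\ast\in\mathcal{M}$ is $S$-fixed but in general not $\Delta$-fixed. For example, take $\mathcal{M}=\M_n(\C)$ acting by left multiplication on the Hilbert--Schmidt space, with $v_0=\rho^{1/2}$. Then $\Delta X=\rho X\rho^{-1}$, so $\Delta(A\rho^{1/2})=A\rho^{1/2}$ only if $A\rho=\rho A$. You also need $S^\ast v_0=v_0$, which comes from $F_0v_0=v_0$ and $F_0\subseteq S^\ast$. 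Then $\Delta^{1/2}v_0=Jv_0$ and $\Delta^{1/2}Jv_0=v_0$ give $\Delta v_0=v_0$, hence $\Delta^{1/2}v_0=v_0=Jv_0$.

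Relatedly, your adjoint computation in Step 1 yields only $\overline{F_0}\subseteq S^\ast$, not $S^\ast=\overline{F_0}$. The inclusion suffices for the last assertion of the theorem. The equality, however, is equivalent to your Step 3 claim that $K'$ is the whole symplectic complement of $K$, and it needs its own argument. For $x\in D(S^\ast)$ the map $Av_0\mapsto Ax$ is closable and affiliated with $\mathcal{M}^{\#}$. Truncating its polar decomposition by spectral projections gives $B_n\in\mathcal{M}^{\#}$ with $B_nv_0\to x$ and $B_n^\ast v_0\to S^\ast x$.

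\textbf{Step 3.} The real gap is the lifting step, which is where Tomita's theorem actually lives. The facts $\Delta^{it}K=K$ and $JK=K'$ concern real subspaces of vectors. By themselves they say nothing about $\Delta^{it}A\Delta^{-it}$ or $JAJ$ as operators.

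What is needed is the following lemma. For $A\in\mathcal{M}$ and $\lambda>0$ there is $A'_\lambda\in\mathcal{M}^{\#}$ equal to a weak integral of $t\mapsto\lambda^{it}\Delta^{it}JAJ\Delta^{-it}/\cosh(\pi t)$ (or the same with $\mathcal{M}$ and $\mathcal{M}^{\#}$ exchanged). The integral is trivially bounded; the content is membership in $\mathcal{M}^{\#}$, so $A'_\lambda$ must be constructed independently and then identified with the integral. Once this holds for all $\lambda>0$, uniqueness of Fourier transforms, applied to normal functionals vanishing on $\mathcal{M}^{\#}$, gives $\Delta^{it}J\mathcal{M}J\Delta^{-it}\subseteq\mathcal{M}^{\#}$. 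Both assertions then follow by symmetry and the bicommutant theorem.

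Your description has the bookkeeping wrong: input $B\in\mathcal{M}^{\#}$ and output $A'\in\mathcal{M}^{\#}$, whereas the construction must pass from an algebra to its commutant. More seriously, it gives no reason why the vector you obtain is of the form $A'v_0$ with $A'\in\mathcal{M}^{\#}$ at all. In general $\mathcal{M}^{\#}v_0$ is a proper dense subspace, and neither lying in $K'$ nor lying in the range of a resolvent of $\Delta$ places a vector in it.

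Supplying this needs a Radon--Nikodym/commutant argument, together with an estimate extracted from $\Delta$. One example of such an argument: a hermitian functional on one algebra dominated by a multiple of $\omega_{v_0}$ is implemented by a bounded self-adjoint element of the commutant. You should also record that the $\Delta$ and $J$ built from $P\pm Q$ coincide with those of Step 2, since $\overline{S_0}$ is exactly $\xi+i\eta\mapsto\xi-i\eta$ on $K+iK$. Without this lemma, Step 3 is a pointer to the literature rather than a proof, which is also the status of this theorem in the paper.
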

Actually - taking \sref{Theorem}{thm:split_char} 
properly into account - Schumann constructed a related \textit{absolutely 
$2$-summing operator}, \textit{mapping a von Neumann algebra into a Hilbert 
space} (cf. \cite[Theorem 6.7]{S1996}). However, his short proof seems to be 
incomplete and is built on a possibly empty assumption (cf. 
\sref{Remark}{rem:flaw}). Nevertheless, for the convenience of the reader, we 
provide a detailed and complete proof of Schumann's result. \textit{If} that 
operator could be even represented as a composition of two absolutely 
$2$-summing operators, it would be already a nuclear one (due to 
\sref{Theorem}{thm:L2_circ_N_equals_P2_circ_P2}). Concretely formulated, we have:
\begin{theorem}\label{thm:split_property_and_2_summing_modular_op}
Let $\mathcal{M} \subseteq \id{L}{}{}{}{}(H)$ and $\mathcal{N} \subseteq 
\id{L}{}{}{}{}(H)$ be commuting von Neumann algebras, acting on a complex 
Hilbert space $H$. Assume that the pair $(\mathcal{M}, \mathcal{N})$ is split. Let 
$U \in \id{L}{}{}{}{}(H, H \otimes H)$ be the corresponding unitary operator. 
Let $u, v \in S_H$ be unit vectors in $H$, such that $w_0 : = U^\adj
(u \otimes v) \in H$ is a cyclic and separating vector for the von Neumann 
algebra $\mathcal{N}^{\#}$. Let $\Delta$ be the modular operator, associated 
with the pair $(\mathcal{N}^{\#}, w_0)$. Then the linear operator 
$\Theta_{w_0} \in \id{L}{}{}{}{}(\mathcal{M}, H)$, defined as
\[
\Theta_{w_0} : \mathcal{M} \longrightarrow H, A \mapsto \Delta^{1/4} A w_0\,,
\]
is absolutely $2$-summing, and
\[
\idn{P}{\!2}{}{}{}\,(\Theta_{w_0}) = \Vert\Theta_{w_0}\Vert = 1\,.
\] 
\end{theorem}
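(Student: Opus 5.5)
The plan is to prove the inequality $\Vert\Theta_{w_0}\Vert\leq\idn{P}{\!2}{}{}{}\,(\Theta_{w_0})\leq 1$ and then show that the constant $1$ is attained. Attainment is immediate. By \sref{Theorem}{thm:Tomita_Takesaki_1967} we have $\Delta w_0=w_0$, hence $\Delta^{1/4}w_0=w_0$, so $\Theta_{w_0}(Id_H)=w_0$ with $\Vert w_0\Vert=\Vert u\otimes v\Vert=1$ and $\Vert Id_H\Vert=1$. It also helps to record first that $\Theta_{w_0}$ is well defined. The pair is split and $B=Id_H$ is admissible in \sref{Theorem}{thm:split_char}, so every $A\in\mathcal{M}$ commutes with $\mathcal{N}$, i.e. $\mathcal{M}\subseteq\mathcal{N}^{\#}$. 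Therefore $\mathcal{M}w_0\subseteq D(\Delta^{1/2})\subseteq D(\Delta^{1/4})$.

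The basic estimate comes from the modular relations. For $A\in\mathcal{M}\subseteq\mathcal{N}^{\#}$, the spectral calculus, $\Vert J\,\cdot\Vert=\Vert\cdot\Vert$ and $J\Delta^{1/2}Aw_0=A^\adj w_0$ give
\[
\Vert\Delta^{1/4}Aw_0\Vert^2=\langle\Delta^{1/2}Aw_0,Aw_0\rangle_H\leq\Vert A^\adj w_0\Vert\,\Vert Aw_0\Vert .
\]
Summing over $A_1,\ldots,A_n\in\mathcal{M}$ and applying Cauchy--Schwarz and then $ab\leq\frac12(a^2+b^2)$ yields
\[
\sum_{i=1}^n\Vert\Theta_{w_0}A_i\Vert^2\leq\tfrac12\sum_{i=1}^n\big(\omega(A_i^\adj A_i)+\omega(A_iA_i^\adj)\big),
\]
where $\omega:=\omega_{w_0}\vert_{\mathcal{M}}$. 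With $n=1$ this already gives $\Vert\Theta_{w_0}\Vert\leq1$.

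To upgrade this to the $2$-summing inequality \eqref{eq:absolutely_p_summing_operator} with $c=1$, I will use the split structure. By \sref{Theorem}{thm:split_char} we have $UAU^\adj=A\otimes Id_H$ for $A\in\mathcal{M}$, hence $Aw_0=U^\adj(Au\otimes v)$ and $\omega(X)=\langle Xu,u\rangle_H$. Likewise $U\mathcal{N}U^\adj=Id_H\otimes\mathcal{N}$, so $U\mathcal{N}^{\#}U^\adj$ is the commutant of $Id_H\otimes\mathcal{N}$ and the modular data of $(\mathcal{N}^{\#},w_0)$ factorise along $U$. The aim is to rewrite $\Vert\Theta_{w_0}A\Vert^2$, via the product vector $u\otimes v$, as a weighted sum $\sum_k\mu_k\vert\varphi_k(A)\vert^2$ with $\varphi_k\in B_{\mathcal{M}^\prime}$ and $\sum_k\mu_k\leq1$. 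The natural candidates are matrix-coefficient functionals $A\mapsto\langle Ae_j,e_k\rangle_H$ in an eigenbasis of the density associated with $\omega$ on the type~I factor, with $\Delta^{1/4}$ contributing geometric-mean weights $(\lambda_j\lambda_k)^{1/2}$, in analogy with the identification $X\mapsto D^{1/4}XD^{1/4}$ in $\id{S}{2}{}{}{}\,$ of \sref{Remark}{rem:L_H_acting_on_L_S2H}. Once such a representation is in hand, Pietsch domination, i.e. the probability measure $\sum_k\mu_k\delta_{\varphi_k}$ on $B_{\mathcal{M}^\prime}$, gives $\sum_i\Vert\Theta_{w_0}A_i\Vert^2\leq\sup_{\varphi\in B_{\mathcal{M}^\prime}}\sum_i\vert\varphi(A_i)\vert^2$, and hence $\idn{P}{\!2}{}{}{}\,(\Theta_{w_0})\leq1$.

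The main obstacle is exactly this last step. The bound $\omega(A^\adj A)$ is not of the form $\vert\varphi(A)\vert^2$, and the weights $(\lambda_j\lambda_k)^{1/2}$ sum to $(\sum_j\lambda_j^{1/2})^2$, which is not automatically at most $1$. So the precise interplay between the factorisation of $\Delta$ under $U$ and the product form of $u\otimes v$ must be computed carefully. In particular, it must be checked whether $w_0$ can actually be cyclic and separating for $U^\adj(L(H)\,\overline{\otimes}\,\mathcal{N}^{\#})U$ with a product vector; this is the delicate point also flagged in \sref{Remark}{rem:flaw}. I would first try to verify the weighted representation directly in finite dimensions, where $\Delta$ is explicit, and then pass to the general case using the spectral theorem for $\Delta$.
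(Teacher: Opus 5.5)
You have a genuine gap, and it sits exactly where you locate it. The domination $\sum_i\Vert\Theta_{w_0}A_i\Vert^2\leq\sup_{\varphi\in B_{\mathcal M^\prime}}\sum_i\vert\varphi(A_i)\vert^2$ is the essential content of the theorem, and your proposal does not establish it.

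The route you sketch is also not viable. The Cauchy--Schwarz step $\langle\Delta^{1/2}Aw_0,Aw_0\rangle_H\leq\Vert A^\ast w_0\Vert\,\Vert Aw_0\Vert$ discards the structure that makes the theorem work. It leaves you with $\omega(A^\ast A)$ and $\omega(AA^\ast)$, which admit no Pietsch domination in general; for instance, $A\mapsto Au$ is not absolutely $2$-summing on $\mathfrak{L}(H)$ when $\dim H=\infty$. The density-matrix weighting does not repair this. As you observe, the weights $(\lambda_j\lambda_k)^{1/2}$ have total mass $(\sum_j\lambda_j^{1/2})^2$, and this is strictly larger than $1$ for every non-pure density with $\sum_j\lambda_j=1$.

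The missing idea is to compute $\langle\Delta^{1/2}Aw_0,Aw_0\rangle_H$ exactly instead of estimating it. This number is real. \sref{Theorem}{thm:Tomita_Takesaki_1967} together with $Jw_0=w_0$ gives $\Delta^{1/2}Aw_0=JA^\ast w_0=(JA^\ast J)w_0$. Moreover $(JA^\ast J)^\ast=JAJ\in J\mathcal N^{\#}J=\mathcal N$, so $JAJ$ commutes with $A\in\mathcal M$. You therefore get
\[
\Vert\Theta_{w_0}A\Vert^2=\langle Aw_0,(JA^\ast J)w_0\rangle_H=\langle A\,(JAJ)\,w_0,w_0\rangle_H=\langle Au,u\rangle_H\,\langle JAJv,v\rangle_H\,,
\]
where the last step is the split relation $U(AB)U^\ast=A\otimes B$ with $B=JAJ$ and $Uw_0=u\otimes v$.

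Now set $a_1(A):=\langle Au,u\rangle_H$ and $a_2(A):=\overline{\langle JAJv,v\rangle_H}$. Both are norm-one \emph{linear} functionals on $\mathcal M$; $a_2$ is linear because $A\mapsto JAJ$ is conjugate linear. Hence
\[
\Vert\Theta_{w_0}A\Vert^2=\vert a_1(A)\vert\,\vert a_2(A)\vert\leq\tfrac12\big(\vert a_1(A)\vert^2+\vert a_2(A)\vert^2\big).
\]
This is Pietsch domination by $\frac12(\delta_{a_1}+\delta_{a_2})$ and gives $\mathbf{P}_2(\Theta_{w_0})\leq1$; it is the paper's argument.

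Whether suitable $u,v$ exist is a matter of the hypothesis, not something your proof must check. It enters only through the modular data supplied by \sref{Theorem}{thm:Tomita_Takesaki_1967}. Your suspicion is nonetheless justified. Putting $A=Id_H$ in the split relation gives $Bw_0=U^\ast(u\otimes Bv)$ for all $B\in\mathcal N$, so $\mathcal Nw_0\subseteq U^\ast(u\otimes H)$. This subspace is not dense if $\dim H\geq2$, so $w_0$ cannot be separating for $\mathcal N^{\#}$ (cf.\ \sref{Remark}{rem:flaw}).
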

\begin{proof}
An application of \sref{Theorem}{thm:Tomita_Takesaki_1967} to the von Neumann 
algebra $\mathcal{N}^{\#}$ leads to the existence of the modular operator 
$\Delta$, associated with the pair $(\mathcal{N}^{\#}, w_0)$ and the corresponding 
modular conjugation $J$. Thus, $\mathcal{M} \ni T \mapsto \langle T, a_1\rangle : = 
\langle Tu, u \rangle_H$ and $\mathcal{M} \ni T \mapsto \langle T, a_2\rangle : = 
\overline{\langle JTJ v, v \rangle_H}$ are well-defined bounded linear 
functionals on the von Neumann algebra $\mathcal{M}$, satisfying 
$\Vert a_1\Vert = \Vert a_2\Vert = 1$ (since $J^2 = Id_H \in \mathcal{M}$). 
Since the pair $(\mathcal{M}, \mathcal{N})$ is split (by assumption), it 
follows from \sref{Theorem}{thm:split_char}-(ii) and the construction of $w_0 \equiv 
U^{\adj}(u \otimes v)$ that
\begin{align}\label{eq:split_property_applied}
\langle AB w_0, w_0\rangle_H = 
\big\langle (A \otimes B)(u \otimes v), u \otimes v\big\rangle_{H \otimes H} =
\langle A u, u \rangle_H\,\langle B v, v \rangle_H 
\end{align}
for all $(A, B) \in \mathcal{M} \times \mathcal{N}$. 
\sref{Theorem}{thm:Tomita_Takesaki_1967} therefore implies that 
\eqref{eq:split_property_applied} in particular is satisfied for the 
operators $A \in \mathcal{M}$ and $B : = JAJ \in J \mathcal{M} J \subseteq 
J \mathcal{N}^{\#} J \stackrel{(!)}{=} \mathcal{N}^{\#\#} = \mathcal{N}$, where 
$A \in \mathcal{M}$ is arbitrarily chosen. Hence, 
\begin{align}
\begin{split}
\Vert \Theta_{w_0}(A)\Vert_H^2 & = 
\langle \Delta^{1/4} A w_0, \Delta^{1/4} A w_0\rangle_H =
\langle A w_0, \Delta^{1/2} A w_0\rangle_H = 
\langle J(J A w_0), J (A^\ast w_0)\rangle_H\\ 
&= \langle A^\ast w_0, J A w_0 \rangle_H = \langle w_0, A(J A J)w_0 \rangle_H =
\overline{\langle A (J A J) w_0, w_0\rangle_H}\\
&= \langle A (J A J) w_0, w_0\rangle_H 
\stackrel{\eqref{eq:split_property_applied}}{=} 
\langle A, a_1 \rangle\,\overline{\langle JAJ, a_2 \rangle}
\end{split} 
\end{align}
(since $J w_0 = w_0$). Thus, $\Vert\Theta_{w_0}\Vert = 1$, and an application 
of the inequality of arithmetic and geometric means further implies that 
\[
\Vert\Theta_{w_0}(A)\Vert_H^2 = \vert \langle A, a_1 \rangle\vert\,
\vert \langle JAJ, a_2 \rangle \vert \leq \frac{1}{2}
(\vert \langle A, a_1 \rangle\vert^2 + \vert\langle JAJ, a_2\rangle \vert^2)\,.
\]
for all $A \in \mathcal{M}$. Consequently, if $n \in \N$ and $A_1, \ldots, A_n \in 
\mathcal{M}$, it follows that
\[
\sum\limits_{i=1}^n \Vert \Theta_{w_0}(A_i)\Vert_H^2 \leq \frac{1}{2}
\lb \sum\limits_{i=1}^n \vert \langle A_i, a_1 \rangle\vert^2 + 
\sum\limits_{i=1}^n \vert \langle A_i, a_2 \rangle\vert^2\rb \leq 
\sup\limits_{a \in B_{\mathcal{M}^\prime}}
\big(\sum_{i=1}^n \vert \langle A_i, a\rangle \vert^2\big)\,, 
\]
implying that $\Theta_{w_0} \in \id{P}{\!2}{}{}{}\,(\mathcal{M}, H)$, and 
$1 = \Vert\Theta_{w_0}\Vert \leq \idn{P}{\!2}{}{}{}\,(\Theta_{w_0}) 
\stackrel{\eqref{eq:absolutely_p_summing_operator}}{\leq} 1$.
\end{proof}
\begin{remark}\label{rem:flaw}
Unfortunately, there seems to be a flaw regarding Schumann's assumptions, 
listed in \sref{Theorem}{thm:split_property_and_2_summing_modular_op}. 
It is not clear to us namely whether there is always a pair $(u, v) \in S_H 
\times S_H$, such that the vector $U^\adj(u \otimes v) \in H$ is a cyclic and 
separating vector for the von Neumann algebra $\mathcal{N}^{\#}$. This 
assumption is crucial. It namely implies the presence of the corresponding 
- decisive - modular conjugation $J$, as the proof of 
\sref{Theorem}{thm:split_property_and_2_summing_modular_op} clearly shows\,!

However, observe that if $\varphi \in (\mathcal{N}^{\#})^\prime$ is a 
faithful normal state on $\mathcal{N}^{\#}$ with faithful GNS representation 
$(\pi_\varphi, H_\varphi, w_\varphi)$, then $\mathcal{N}^{\#}$ can be 
isometrically identified with the von Neumann algebra $\pi_\varphi
\big(\mathcal{N}^{\#}\big)$ (cf. \cite[Theorem 7.5.3]{KR1986}). 
Consequently, $w_\varphi$ is a cyclic and separating vector for 
$\mathcal{N}^{\#} \cong \pi_\varphi\big(\mathcal{N}^{\#}\big)$.
\end{remark}
\section{General probabilistic theories and the Banach operator ideal 
$(\id{P}{1}{}{}{}\,, \idn{P}{1}{}{}{}\,)$}
\label{sec:GPTs_and_P1}
Just prior to the end of our work, we want to point out a further nontrivial 
application of Banach operator ideals, yet without diving into technical details 
here. In relation to the investigation of compatibility of dichotomic 
measurements in general - finite-dimensional - probabilistic theories (GPTs), 
the operator ideal of absolutely $1$-summing operators plays a key role; very 
similarly to its emergence in context of the famous Grothendieck inequality and 
its connection to Euclidean norms of real- and complex Gaussian random vectors 
and Gaussian hypergeometric functions (cf. \cite[Proposition 9.5]{BJN2022} 
and \cite[Remark 4.2]{Oe2024})!

Any GPT - including finite-dimensional quantum theory as a particular case 
(\cite[Example 4.2]{BJN2022}) - is built on basic operational notions of states 
(or preparation procedures) and effects (or dichotomic measurements) of the 
theory, which are identified with certain elements in a dual pairing of ordered 
finite-dimensional vector spaces, $(V, V^+)$ (``state space'') and $(A, A^+) : = 
(V^\prime, (V^+)^\prime)$ (``effect space''). 

More specifically, from a bird's eye view, 
a GPT is a particular case of a triple $(V, C, u) \equiv ((V, C), u)$, where 
$(V, C)$ is a real ordered topological vector space (which could be 
infinite-dimensional), with \textit{closed} generating cone $C \subseteq V$; 
i.e., $C$ is assumed to be a closed and generating wedge, which satisfies $C 
\cap (-C) = \{0\}$ (cf. \cite[Definition 1.1 and Definition 1.2]{AT2007}), and 
$u \in C^\prime$ is assumed to be an existing order unit of $C^\prime$, where 
$C^\prime : = \{a \in V^\prime : \langle x, a\rangle \geq 0\,\text{ for all }\, 
x \in C\}$ denotes the dual wedge; i.e., $u \in C^\prime$ satisfies
\[
V^\prime = (V^\prime)_u : = \{a \in V^\prime : \exists\,\lambda > 0\,\text{ such 
that }\, a \in [-\lambda u, \lambda u]\} = \bigcup\limits_{n=1}^\infty n[-u,u]\,.
\]
In geometric terms, the dual wedge consists of all linear functionals in 
$V^\prime$ that support the wedge $C$ at the origin. The order relation on $V$ 
is given as $y \geq x :\Leftrightarrow y - x \in C$, whence $C = V^+ : = 
\{x \in V : x \geq 0\}$. If $(V, V^+, u)$ is a GPT, it is assumed by definition 
that $V$ (and hence $V^\prime$) is finite-dimensional (cf. \cite{BJN2022}). In 
this case, an order unit of $C^\prime$ always exists and coincides with an 
interior point of $C^\prime$.
Even in the finite-dimensional case, the assumption of the closedness of the 
cone $C$ is a nontrivial task which is deeply linked with Farkas' duality lemma 
in linear programming and hence cannot be ignored (cf., e.g., 
\cite[Appendix]{OeO2009}). Moreover, lexicographical cones in a vector space of 
dimension $n \geq 2$ are never closed. Since by assumption $C = V^+$ is a 
generating wedge, the dual wedge $C^\prime$ is an Archimedean 
$\text{weak}^\adj$-closed cone of $V^\prime$.
Consequently, the Minkowski functional
\[
\Vert \cdot \Vert_u : V^\prime \longrightarrow [0, \infty),
\]  
defined via $\Vert a \Vert_u : = \inf\{\rho > 0 : a \in \rho\,[-u, u]\}$ 
is a norm on $V^\prime = (V^\prime)_u$, and the ($\Vert \cdot \Vert_u$-closed) 
unit ball of the normed space $(V^\prime, \Vert \cdot \Vert_u)$ coincides with 
the order interval $[-u, u]$; i.e., 
\[
\{a \in V^\prime : \Vert a \Vert_u \leq 1\} = [-u,u].
\]
The states of the physical system, modelled by $(V, V^+, u)$, are represented 
by vectors in the closed convex set 
\[
\Omega_u : = \{\omega \in V^+ : \langle \omega, u \rangle 
= 1\} = V^+ \cap u^{-1}(\{1\}).
\]
Since $u$ is an order unit of $C^\prime$, it clearly follows that the 
dual cone $C^\prime$ is generating and that $u$ is $C$-strictly 
positive; i.e., $\langle x, u\rangle > 0$ for all $x \in V^+\setminus\{0\}$. 
Hence, $\Omega_u$ is a base for the cone $V^+$. 
Let 
\begin{align*}
U \equiv U_{\Omega_u} :&= \text{cx}(\Omega_u \cap (-\Omega_u))\\ 
&= \{\lambda \omega_1 -(1-\lambda)\omega_2 : 0 \leq \lambda \leq 1
\text{ and } \omega_1, \omega_2 \in \Omega_u\}.
\end{align*}
be the convex hull of $\Omega_u \cap (-\Omega_u)$.
Then the Minkowski functional of the convex, circled, and absorbing set $U$ 
\[
V \ni x \mapsto p_U(x) : = \inf\{\rho > 0 : x \in \rho\,U\}
\]
is a semi-norm. If $V$ is a Riesz space, then the base $\Omega_u$ is 
linearly compact and $\Vert \boldsymbol{\cdot} \Vert_U : = p_U$ defines a norm 
on $V$.
In particular, $(V, V^+, \Vert\boldsymbol{\cdot} \Vert_U)$ is a base norm space 
(with base $\{x \in V^+ : \langle x, u \rangle = 1\}$). $\Omega_u$ is 
$\Vert\boldsymbol{\cdot} \Vert_U$-closed, if and only if $V^+$ is 
$\Vert\boldsymbol{\cdot} \Vert_U$-closed. 
Consequently, {\textit{if} $(V, V^+, \Vert\boldsymbol{\cdot} \Vert_U)$ 
is even a Banach lattice, and if the cone $V^+$ is 
$\Vert\boldsymbol{\cdot}\Vert_U$-closed}, we may apply a result of T. And\^{o} 
(cf. \cite[Theorem 2.42]{AT2007}), implying that the dual cone $C^\prime = 
(V^+)^\prime$ is normal. Moreover, 
it can be shown that in fact $(V^\prime, \Vert \cdot\Vert_u) = 
(V, \Vert\cdot\Vert_U)^\prime$. 
It follows that
\begin{align*}
\Vert x \Vert_{U} &= \inf\{r > 0 : x \in r U\} = 
\sup\{\vert\langle x, a\rangle\vert\ : a \in [-u, u]\}\\
&= \inf\{\langle x_+ + x_-, u\rangle : x = x_+ - x_-\,,\, x_\pm \geq 0\}.  
\end{align*}
This approach {also can be adopted if it is assumed that
$(V, V^+, \Vert\boldsymbol{\cdot} \Vert_U)$ is a reflexive Banach space}.

Conversely, 
every base norm space $(V, V^+, \Vert\boldsymbol{\cdot} \Vert)$ with closed, 
generating cone $V^+$ and base $\{x \in V^+ : \langle x, \psi \rangle = 1\}$ 
(for some strictly positive functional $\psi \in V^\prime$) induces the 
triple $(V, V^+, \psi)$, which is by definition a GPT if $V$ is assumed to be 
finite-dimensional. \textit{If} $V$ is finite-dimensional, then both, 
$\Omega_u $ and $U = \text{cx}(\Omega_u \cap (-\Omega_u))$ are 
compact subsets of $V$.

In particular, the triple
\begin{align}\label{eq:GPT_modelling_a_fin_dim_quantum_system}
(\H_n, \H_n^+, \textup{tr}) 
\end{align}
is a GPT, which models a finite-dimensional quantum system with $n$ levels, 
where $\H_n$ denotes the set of all Hermitian $n \times n$-matrices and $\H_n^+ 
\equiv \M_n(\C)^{+}$ the cone of all positive-semidefinite matrices. The base 
is given by state space $\{D \in \H_n^+ : \langle D, \textup{tr} \rangle\ = 
\textup{tr}(D) = 1\}$. In fact, since for any Hermitian matrix $A = (a_{kl}) \in 
\H_n$, 
\[
A = \sum_{k=1}^n a_{kk}E_{kk} + \sum_{k<l}^n 
\Re(a_{kl})(E_{kl}+E_{lk}) + \sum_{k<l}^n \Im(a_{kl})
(i\,(E_{kl}-E_{lk})),
\] 
where $E_{lk} : = e_l e_k^\top$, we reobtain the well-known fact that $\H_n$ is 
an $n^2$-dimensional real vector space (since $n + 2(\frac{n^2-n}{2}) = n^2$). 

Elements of the order interval $[0, u] = \{f \in V^\prime: 0 \leq f \leq u\}$ 
are called \textit{effects}. Based on the view of \cite{BJN2022} that the GPT 
formalism allows probabilistic predictions of the outcomes of measurements with 
a finite number of outcomes $1, 2, \ldots, k$, performed on a certain state, a 
single measurement is modelled as an element $(f_1, \ldots, f_k) \in [0,u]^k$, 
such that $\sum_{j=1}^k f_j = u$, and $\langle\omega, f_j \rangle \in [0,1]$ 
represents the probability that the measurement gives the outcome $j$, if the 
system is in the state $\omega \in \Omega_u$ (in accordance with the Born rule). 
If $k=2$, the measurement is a binary one, leading to either ``yes'' or ``no'' 
outcomes. Obviously, any single measurement $f \equiv (f_1, \ldots, f_k) \in 
[0,u]^k$ induces the linear mapping $\Psi_f : V \longrightarrow \R^k$, defined 
as $\Psi_f(v) : = (\langle v, f_1\rangle, \ldots,\langle v, f_k\rangle)^\top$ 
which satisfies $\Psi_f(\Omega_u) \subseteq \Delta_k$, where $\Delta_k : = 
\{ p \in [0,1]^k : \sum\limits_{j=1}^k p_i = 1\}$ denotes the probability 
simplex (actually without having to make an explicit use of the 
extremal points of the latter one).

In relation to the GPT \eqref{eq:GPT_modelling_a_fin_dim_quantum_system}, the 
self-duality of the PSD cone, 
together with the representation of the smallest and largest eigenvalue of a 
Hermitian matrix (via Courant-Fischer, respectively Raleigh-Ritz) implies that 
effects with respect to the 
GPT \eqref{eq:GPT_modelling_a_fin_dim_quantum_system} actually can be 
isometrically identified with positive operator-valued measures (POVMs).

A series of $g$ measurements $f^{(1)}, \ldots, f^{(g)}$, $f^{(i)} \equiv 
(f^{(i)}_1, \ldots, f^{(i)}_{k_i}) \in [0, u]^{k_i}$ ($i \in [g]$) are said to 
be \textit{compatible}, if all $f^{(i)}$ are marginals of a single joint 
measurement; i.e., for all $i \in [g]$ and $j \in [k_i]$, $f^{(i)}_{j} \in 
[0,u]$ can be represented as
\[
f^{(i)}_{j} = \sum\limits_{\widehat{m_i} \in \A_i} h_{m_1, \ldots, m_{i-1}, j, m_{i+1}, 
\ldots, m_g}
\]
for some series of joint measurements $\big\{h_m : m \in \prod\limits_{\nu=1}^g 
[k_\nu]\big\} \subseteq [0,u]$, where $\widehat{m_i} : = (m_1, \ldots, m_{i-1},\\ 
m_{i+1}, \ldots, m_g) \in \A_i : = \prod\limits_{\nu \in [g]\setminus\{i\}} 
[k_\nu]$ (cf. \cite[Section 4.4]{BJN2022}). Compatibility of dichotomic 
measurements can thus be understood as the existence of a series of joint 
measurements $\{h_m : m \in \{1,2\}^g\}$ for a given series of $g$ single 
measurements $f^{(1)}, \ldots, f^{(g)}$, such that all $f^{(i)}$s are marginals, 
where the outcome set of each single measurement $f^{(i)}$ is a binary set 
$\{f^{(i)}_1, f^{(i)}_2\}$.

Any noncompatible $g$-tuple of measurements $(f^{(1)}, \ldots, f^{(g)})$ can 
be made compatible if one adds ``enough random white noise''. The amount of noise 
needed for this is the basis for the commonly used definition of a so called 
``compatibility region'', described by the largest probability parameter $s \in 
[0,1]$ of the added ``white noise''. The numerical value (for given $g \in \N$), 
such that this maximal ``compatibility probability'' $s \in [0,1]$ can be 
applied to \textit{all} collections of $g$-tuples of single measurements with 
outcomes in the set $\prod_{i=1}^g \{f^{(i)}_1,f^{(i)}_2\}$, implies the 
existence of the number $\gamma(g; V, V^+, u) \in [0,1]$; the so called 
\textit{compatibility degree} of the GPT $(V, V^+, u)$. All details of its 
construction is given in \cite[Section 4]{BJN2022}.

A partial and surprising result of \cite[Proposition 9.5]{BJN2022}, which 
is built on quotients of norms of tensor products of finite-dimensional 
Banach spaces (or equivalently on quotients of norms of the minimal Banach 
operator ideals of nuclear and approximable operators), states that the set
$\{\gamma(g; V, V^+, u) : g \in \N\} \subseteq [0,1]$ is also uniformly bounded 
from below: 
\[
1 \geq \gamma(g; V, V^+, u) \geq \frac{1}{\idn{P}{1}{}{}{}\,(Id_V)} > 0\,
\text{ for all } g \in \N\,.
\]
\begin{ack*}
I want to express my gratitude to Dr Robin Schumann for forwarding his 
highly fascinating dissertation \cite{S1994} in the past to me. I also would 
like to thank the anonymous non-virtual assistant for a very helpful electronic 
correspondence, which in particular has expanded the list of references to 
include further significant sources (unknown to me before) including 
\cite{G2025}.
\end{ack*}
\noindent\dci{We declare that we do have no known competing financial interests 
or personal relationships that could have appeared to influence the work 
presented in this article.}\\[0.5em]
\noindent\aiuse{We declare that we have not used AI-assisted technologies in 
creating this article.}


\begin{thebibliography}{10}

\bibitem{AT2007}
C. D. Aliprantis and R. Tourky.
\newblock\textit{Cones and duality}.
\newblock Graduate Studies in Mathematics 84. Providence, RI: American 
Mathematical Society (2007).

\bibitem{AS2017}
G. Aubrun and S. J. Szarek. 
\newblock\textit{Alice and Bob meet Banach. The interface of asymptotic geometric analysis 
and quantum information theory}. 
\newblock Mathematical Surveys and Monographs 223. Providence, RI: American Mathematical 
Society. xxi (2017).

\bibitem{B2013}
S. K. Berberian.
\newblock Tensor product of Hilbert spaces. 
\url{https://web.ma.utexas.edu/mp_arc/c/14/14-2.pdf}. 
\newblock\textit{Unpublished} (2013).

\bibitem{B2006}
B. Blackadar
\newblock\textit{Operator algebras. Theory of $C^\adj$-algebras and von 
Neumann algebras}.
\newblock Encyclopaedia of Mathematical Sciences 122. Operator Algebras and 
nonCommutative Geometry III. Springer, Berlin, Heidelberg, New York (2006).
\newblock Revised and corrected version (2017), available via  
\url{https://bruceblackadar.com/Mathematics/Cycr.pdf}.

\bibitem{BJN2022}
A. Bluhm, A. Jen\v{c}ov\'{a}, and I. Nechita.
\newblock Incompatibility in general probabilistic theories, generalized 
spectrahedra, and tensor norms.
\newblock\textit{Commun. Math. Phys. 393, No. 3, 1125-1198} (2022).

\bibitem{GV2023}
M. Van Den Bossche and P. Grangier. 
\newblock Contextual unification of classical and quantum physics. 
\newblock\textit{Found. Phys. 53, No. {\bf{2}}, Paper No. 45, 24 p.} (2023). 

\bibitem{BS1991}
H. J. Borchers and R. Schumann. 
\newblock A nuclearity condition for charged states. 
\newblock\textit{Lett. Math. Phys. 23, No. \textbf{1}, 65-77} (1991).

\bibitem{BR1987}
{
O. Bratteli and D. W. Robinson.
\newblock\textit{Operator Algebras and Quantum Statistical Mechanics 1. 
$\tup{C}^\adj$- and $\tup{W}^\adj$-Algebras, Symmetry Groups, Decomposition of 
states. 2nd ed}.
Texts and Monographs in Physics. Springer, Berlin-Heidelberg-New York (1987).
}

\bibitem{BDF1987}
{
D. Buchholz, C. D'Antoni, and K. Fredenhagen.
\newblock The universal structure of local algebras. 
\newblock\textit{Comm. Math. Phys. 111, 123-135} (1987).
}

\bibitem{BDL1990}
{
D. Buchholz, C. D'Antoni, and R. Longo.
\newblock Nuclear maps and modular structures II: Applications to quantum field theory. 
\newblock\textit{Commun. Math. Phys. 129, 115-138} (1990).
}

\bibitem{BF2023}
D. Buchholz and K. Fredenhagen.
\newblock Algebraic quantum field theory: objectives, methods, and results.
\newblock\textit{\url{https://arxiv.org/abs/2305.12923}} (2023).

\bibitem{BW1986}
D. Buchholz and E. H. Wichmann.
\newblock Causal Independence and the Energy-Level Density of States in Local 
Quantum Field Theory.
\newblock\textit{Commun. Math. Phys. 106, 321-344 } (1986).

\bibitem{CY1961}
P. Civin and B. Yood.
\newblock The second conjugate space of a Banach algebra as an algebra. 
\newblock\textit{Pac. J. Math. 11, 847-870} (1961).

\bibitem{C2000}
J. B. Conway.
\newblock\textit{A course in operator theory}.
Graduate Studies in Mathematics 21. Providence, RI: American Mathematical 
Society (2000)

\bibitem{DL1983}
{
C. D'Antoni and R. Longo.
\newblock Interpolation by Type I Factors and the Flip Automorphism.
\newblock\textit{J. Funct. Anal. 51, 361-371} (1983).
}

\bibitem{DF1993}
A. Defant and K. Floret.
\newblock\textit{Tensor norms and operator ideals}.
\newblock North-Holland Mathematics Studies 176. North-Holland, Amsterdam (1993).

\bibitem{DFS2008}
{
J. Diestel, J. Fourie, and J. Swart.
\newblock\textit{The metric theory of tensor products. Grothendieck's 
r\'{e}sum\'{e} revisited}.
\newblock Providence, RI: American Mathematical Society (2008).
}

\bibitem{DJP2001}
{
J. Diestel, H. Jarchow, and A. Pietsch.
\newblock Operator ideals.
\newblock\textit{Handbook of the geometry of Banach spaces, Volume 1, 
Elsevier, Amsterdam, 437-496} (2001).
} 

\bibitem{DJT1995}
J. Diestel, H. Jarchow, and A. Tonge.
\newblock\textit{Absolutely Summing Operators}. 
\newblock Cambridge University Press, Cambridge (1995).




\bibitem{E2020}
J. Earman.
\newblock Quantum Physics in Non-Separable Hilbert Spaces.
\newblock\textit{\url{https://philsci-archive.pitt.edu/18363/}} (2020).

\bibitem{F2016}
C. J. Fewster. 
\newblock The split property for quantum field theories in flat and curved 
spacetimes. 
\newblock\textit{Abh. Math. Semin. Univ. Hambg. 86, No. \textbf{2}, 153-175} (2016).

\bibitem{FR2020}
C. J. Fewster and K. Rejzner.
\newblock Algebraic quantum field theory. An introduction.
\newblock\textit{Finster, Felix (ed.) et al., Progress and visions in quantum 
theory in view of gravity: bridging foundations of physics and mathematics. 
Selected talks presented at the seventh international conference, Leipzig, 
Germany, October 1-5, 2018. Cham: Birkh\"{a}user}, 1-61 (2020).


\bibitem{G2025}
M. Gallego. 
\newblock\textit{Quantum theory at the macroscopic scale}.
\newblock Ph.D. thesis, Faculty of Physics, University of Vienna, Austria 
(2025).

\bibitem{GPP2014}
S. R. Garcia, E. Prodan, and M. Putinar.
\newblock Mathematical and physical aspects of complex symmetric operators.
\newblock\textit{J. Phys. A, Math. Theor. 47, No. \textbf{35}, Article ID 353001}, 
54 p. (2014).

\bibitem{G2022}
P. Garrett.
\newblock Discrete Fubini-Tonelli.
\newblock\textit{\url{https://www-users.cse.umn.edu/~garrett/m/real/notes_2022-23/03a_discrete_Fubini-Tonelli.pdf}} (2022).


\bibitem{GKM2007}
J. Grabowski, M. Ku\'{s}, and G. Marmo.
\newblock On the relation between states and maps in infinite dimensions.
\newblock\textit{Open Syst. Inf. Dyn. 14, No. \textbf{4}, 355-370} (2007). 

\bibitem{G1967}
B. Gramsch. 
\newblock Eine Idealstruktur Banachscher Operatoralgebren (in German). 
\newblock\textit{J. Reine Angew. Math. 225, 97-115} (1967).

\bibitem{G1953}
A. Grothendieck. 
\newblock R\'{e}sum\'{e} de la Th\'{e}orie M\'{e}trique des Produits Tensoriels 
Topologiques.  
\newblock{\it Bol. Mat. Sao Paulo}, No. {\bf{8}}, 1-79 (1953/1956).\newline
Reprinted in \newblock{\it Resen. Inst. Mat. Estat. Univ. Sao Paulo 2}, No. 
\textbf{4}, 401-480 (1996). 


\bibitem{H1996}
R. Haag.
\newblock\textit{Local Quantum Physics. Fields, Particles, Algebras. 2nd., rev. 
and enlarged ed}.
Texts and Monographs in Physics. Springer, Berlin-Heidelberg-New York. (1996).

\bibitem{HM2006}
H. Halvorson and M. M\"{u}ger.
\newblock Algebraic Quantum Field Theory.
\newblock\textit{\url{https://arxiv.org/abs/math-ph/0602036}} (2006).

\bibitem{H2003}
J. Hamhalter. 
\newblock\textit{Quantum Measure Theory}. 
Fundamental Theories of Physics, Vol. 134. Kluwer Academic Publishers, 
Dordrecht (2003)

\bibitem{H2011}
C. Heil. 
\newblock\textit{A Basis Theory Primer: Expanded Edition}.
\newblock Springer, New York (2011).

\bibitem{HR2023}
{
S. Hollands and A. Ranallo.
\newblock Channel Divergences and Complexity in Algebraic QFT.
\newblock\textit{Commun. Math. Phys. 404, 927-962} (2023).
}

\bibitem{J1981}
H. Jarchow.
\newblock\textit{Locally convex spaces}.
\newblock Mathematische Leitf\"{a}den. Stuttgart: B. G. Teubner (1981).

\bibitem{JO1982}
H. Jarchow and R. Ott.
\newblock On trace ideals. 
\newblock\textit{Math. Nachr. 108, 23-37} (1982).

\bibitem{KR1983}
R. V. Kadison and J. R. Ringrose.
\newblock\textit{Fundamentals of the theory of operator algebras. Vol. 1: 
Elementary theory}. 
\newblock Pure and Applied Mathematics, 100. New York-London etc.: Academic 
Press. XV (1983).

\bibitem{KR1986}
R. V. Kadison and J. R. Ringrose.
\newblock\textit{Fundamentals of the theory of operator algebras. Vol. 2: 
Advanced theory}. 
\newblock Pure and Applied Mathematics, 100. New York-London etc.: Academic 
Press. XV (1986).

\bibitem{K1995}
N. J. Kalton.
\newblock An elementary example of a Banach space not isomorphic to its complex 
conjugate.
\newblock\textit{Can. Math. Bull. 38, No. {\textbf{2}}, 218-222} (1995).

\bibitem{LNRR1981}
C. Laurie, E. Nordgren, H. Radjavi, and P. Rosenthal. 
\newblock On triangularization of algebras of operators. 
\newblock\textit{J. Reine Angew. Math. 327, 143-155} (1981).

\bibitem{L2022}
Sascha Lill. 
\newblock\textit{Time Dynamics in Quantum
Field Theory Systems}.
\newblock Ph.D. thesis, Mathematisch-Naturwissenschaftliche Fakult\"{a}t der 
Eberhard-Karls-Universit\"{a}t T\"{u}bingen (2022).

\bibitem{LP1968}
{
J. Lindenstrauss and A. Pe{\l}czy\'{n}ski.
\newblock Absolutely summing operators in $L_p$-spaces and their applications.
\newblock\textit{Stud. Math. 29, 275-326} (1968). 
}

\bibitem{MV1997}
{
R. Meise and D. Vogt.
\newblock\textit{Introduction to Functional Analysis. Transl. from the German 
by M. S. Ramanujan}.
Oxford Graduate Texts in Mathematics. Clarendon Press. Oxford (1997).
}

\bibitem{M1990}
{
G. J. Murphy.
\newblock\textit{$C^\adj$-algebras and operator theory}.
Academic Press, Inc., Boston, MA etc. (1990)
}

\bibitem{NC2010}
M. A. Nielsen and I. L. Chuang.
\newblock\textit{Quantum computation and quantum information - 10th Anniversary 
Edition}. 
\newblock Cambridge University Press, Cambridge (2010).

\bibitem{Oe1998}  
F. Oertel.
\newblock Local properties of accessible injective operator ideals. 
\newblock\textit{Czech. Math. J. 48, No. \textbf{1}, 119-133} (1998).

\bibitem{Oe2003}  
F. Oertel.
\newblock On normed products of operator ideals which contain $\mathfrak{L}_2$ 
as a factor. 
\newblock\textit{Arch. Math. 80, 61-70} (2003).

\bibitem{Oe2006}
F. Oertel.
\newblock On random measures, unordered sums and discontinuities of the first 
kind.
\newblock\textit{\url{https://arxiv.org/abs/math/0609395}} (2006).

\bibitem{OeO2009}
F. Oertel and M. P. Owen.
\newblock Geometry of polar wedges in Riesz spaces and super-replication 
prices in incomplete financial markets.
\newblock\textit{Positivity 13, No. \textbf{1}, 201-224} (2009).

\bibitem{Oe2024}
F. Oertel.
\newblock\textit{Upper bounds for Grothendieck constants, quantum correlation matrices 
and CCP functions}.
\newblock Lecture Notes in Mathematics 2349. Springer, Cham (2024).

\bibitem{Oe2026}
F. Oertel.
\newblock\textit{Local structures in quasi-normed operator ideals and
trace duality: a unifying framework}.
\newblock Work in Progress.



\bibitem{P2020}
M. Pankov.
\newblock\textit{Wigner-type theorems for Hilbert Grassmannians}. 
\newblock London Mathematical Society Lecture Note Series 460. Cambridge 
University Press, Cambridge (2020).


\bibitem{P2015}
P. Pajot.
\newblock La revanche d'un th\'{e}or\`{e}me oubli\'{e} (in French).
\newblock\textit{\hspace{-0.2cm}
\url{https://www.larecherche.fr/la-revanche-dun-th\'{e}or\`{e}me-oubli\'{e}}
} (2015).

\bibitem{P1989}
G. K. Pedersen.
\newblock\textit{Analysis now}. 
\newblock Graduate Texts in Mathematics 118. Springer, New York (1989).

\bibitem{P2018}
G. K. Pedersen.
\newblock\textit{$C^\adj$-Algebras and Their Automorphism Groups. Edited 
by S. Eilers and D. Olesen. 2nd edition}. 
\newblock Pure and Applied Mathematics. Elsevier/Academic Press, Amsterdam (2018).

\bibitem{P1980}
A. Pietsch.
\newblock\textit{Operator ideals}. 
\newblock North-Holland Mathematical Library 20. North-Holland, Amsterdam (1980).

\bibitem{P1981}
A. Pietsch.
\newblock Operator Ideals with a Trace.
\newblock\textit{Math. Nachr. 100, 61-91} (1981).

\bibitem{P1987}
A. Pietsch.
\newblock\textit{Eigenvalues and s-numbers}.
\newblock Cambridge Studies in Advanced mathematics, 13. Cambridge University 
Press, Cambridge (1987).

\bibitem{P2014}
A. Pietsch.
\newblock Traces of operators and their history. 
\newblock\textit{Acta Comment. Univ. Tartu. Math. 18, No. 1, 51-64} (2014).

\bibitem{RS2010}
M. R\'{e}dei and S. J. Summers. 
\newblock When Are Quantum Systems Operationally Independent?
\newblock\textit{Int. J. Theor. Phys. 49, 3250-3261} (2010).

\bibitem{RvD1977}
M. A. Rieffel and A. van Daele.
\newblock A bounded operator approach to Tomita-Takesaki theory. 
\newblock\textit{Pacific J. Math. 69, 187-221} (1977).

\bibitem{R2022}
B. W. Roberts.
\newblock\textit{Reversing the arrow of time}.
\newblock Cambridge University Press, Cambridge - open access (2022).

\bibitem{R2020}
R. R\"{u}{\ss}mann. 
\newblock\textit{Tensor Product of Hilbert Spaces}.
\newblock M.Sc. thesis, Faculty of Mathematics, TU Kaiserslautern, Germany 
(2020).

\bibitem{R2002}
{
R. A. Ryan. 
\newblock\textit{Introduction to tensor products of Banach spaces}.
\newblock Springer Monographs in Mathematics. Springer, London (2002).
}

\bibitem{S1994}
R. Schumann. 
\newblock Operatorenideale und die statistische Unabh\"{a}ngigkeit in der 
Quantenfeldtheorie (in German). 
\newblock Dissertation, Georg-August-Universit\"{a}t zu G\"{o}ttingen (1994).

\bibitem{S1996}
R. Schumann. 
\newblock Operator ideals and the statistical independence in quantum field 
theory. 
\newblock\textit{Lett. Math. Phys. 37, No. \textbf{3}, 249-271} (1996).

\bibitem{SP2026}
L. K. Singh and A. Peperko. 
\newblock Sherman-Takeda type theorems for locally $\tup{C}^\adj$-algebras.
\newblock\textit{\url{https://arxiv.org/abs/2601.00717}} (2026).

\bibitem{S2005}
S. J. Summers. 
\newblock Tomita-Takesaki Modular Theory. 
\newblock\textit{\url{https://arxiv.org/abs/math-ph/0511034}} (2005).

\bibitem{S2016}
V. S. Sunder.
\newblock\textit{Operators on Hilbert space}.
\newblock Texts and Readings in Mathematics 71. Springer Science $+$ Business 
Media Singapore 2016 and Hindustan Book Agency (2016).

\bibitem{T2025}
M. Thill.
\newblock Introduction to Normed $\adj$-Algebras and their Representations, 7th 
ed.
\newblock\textit{\url{https://arxiv.org/abs/0807.4242}} (2025)

\bibitem{T1967}
F. Tr\`{e}ves.
\newblock\textit{Topological vector spaces, distributions and kernels}.
\newblock Pure and Applied Mathematics (Academic Press) 25. New York-London: 
Academic Press (1967).

\bibitem{V2025}
R. Verch.
\newblock Lecture Notes on Operator Algebras and Quantum Field Theory. EMS-IAMP 
Spring School ``Symmetries and Measurement in Quantum Field Theory - April 
7-11, 2025''.
\newblock\textit{\url{https://arxiv.org/abs/2507.00900}} (2025).

\bibitem{W2008}
G. Warner.
\newblock Positivity. 
\url{http://www.math.washington.edu/~warner/Positivity_Warner.pdf}. 
\newblock\textit{Unpublished} (2008).

\bibitem{W2010}
G. Warner.
\newblock $\tup{C}^\adj$-algebras. 
\url{http://www.math.washington.edu/~warner/C-star.pdf}. 
\newblock\textit{Unpublished} (2010).

\bibitem{W1980}
J. Weidmann.
\newblock\textit{Linear operators in Hilbert spaces. Transl. by Joseph Sz\"{u}cs}.
\newblock Graduate Texts in Mathematics, Vol. 68. Springer, New York (1980).

\bibitem{W2011}
D. Werner.
\newblock\textit{Functional analysis. 7th revised edition} (in German).
\newblock Springer, Berlin (2011).

\bibitem{W1987}
R. Werner. 
\newblock Local preparability of states and the split property in quantum field 
theory. 
\newblock\textit{Lett Math Phys 13, 325-329} (1987). 

\bibitem{Z2008}
C. Zwarich.
\newblock\textit{Von Neumann Algebras for Abstract Harmonic Analysis}.
\newblock M.Sc. thesis, University of Waterloo, Canada (2008).
\newblock Available via 
\url{https://uwspace.uwaterloo.ca/bitstreams/4b497e8c-779f-4c75-b6cf-b397c5f5c315/download}.

\end{thebibliography}
\end{document}